\documentclass[acmsmall,screen,natbib=false,review=false, anonymous=false,nonacm]{acmart}
\usepackage[utf8]{inputenc} 
\usepackage{amsmath}
\usepackage{mathtools}
\usepackage{graphicx} %
\usepackage{url}
\usepackage{enumitem} %

\usepackage{doi}
 \usepackage{bm} %
\usepackage{stmaryrd} %

\usepackage{upgreek} %
\usepackage{bbding} %

\usepackage{listings} %

\usepackage{tabularx} %
\usepackage{booktabs} %
\usepackage{multirow}
\usepackage{hhline} %
\usepackage{diagbox}
\usepackage{pgfplotstable} %

\usepackage{xcolor,colortbl}    %

\usepackage{tikz} %
\usetikzlibrary{matrix}
\usetikzlibrary{calc}
\usetikzlibrary{math}
\usetikzlibrary{arrows.meta,positioning}
\usetikzlibrary{intersections, pgfplots.fillbetween}

\usepackage{easybmat} %

\usepackage{adjustbox}
\def\eox{\unskip\kern 10pt{\unitlength1pt\linethickness{.4pt}$\diamondsuit${}}} %

\usepackage{rotating}		%

\sloppy %

\usepackage{xspace} %

\usepackage{subcaption} %
\captionsetup[figure]{labelfont=bf} %
\captionsetup[subfigure]{font+=small}
\captionsetup[table]{labelfont=bf} %

	\captionsetup[figure]{aboveskip=2pt}
	\captionsetup[subfigure]{aboveskip=1pt, belowskip=1pt}		%
	\captionsetup[figure]{belowskip=-3pt} %

\newcommand{\hide}[1]{}

\usepackage[ruled,noend,linesnumbered]{algorithm2e} %

\DontPrintSemicolon     %
\SetNlSty{}{}{}                %
\SetAlgoInsideSkip{smallskip}   %

\SetAlFnt{\small}			%
\SetAlCapFnt{\small}		%
\SetAlCapNameFnt{\small}

\usepackage{hyperref}   %
\hypersetup{                                                            %
	pdfpagemode=UseNone,               %
    pdfstartview=Fit,
    pdfpagelayout=SinglePage,       %
    bookmarks=false,
    bookmarksnumbered=true,
    bookmarksopen=false,             %
    pdftex=true,
    unicode=true,
    colorlinks=true,       %
    linkcolor=blue,          %
    citecolor=cyan,  %
    filecolor=magenta,      %
     urlcolor=blue           %
}

\usepackage[capitalise,nameinlink]{cleveref} %

\usepackage{aliascnt}  	

\newtheorem{theorem}{Theorem} %

\newaliascnt{corollary}{theorem}

\aliascntresetthe{corollary}

\newaliascnt{example}{theorem}
\newtheorem{example}[example]{Example}
\aliascntresetthe{example}

\newaliascnt{definition}{theorem}
\newtheorem{definition}[definition]{Definition}
\aliascntresetthe{definition}

\newaliascnt{proposition}{theorem}
\newtheorem{proposition}[proposition]{Proposition}
\aliascntresetthe{proposition}

\newaliascnt{lemma}{theorem}
\newtheorem{lemma}[lemma]{Lemma}
\aliascntresetthe{lemma}

\newaliascnt{conjecture}{theorem}

\aliascntresetthe{conjecture}

\newtheorem{observation}{Observation}

\newtheorem{questionW}{Question}
\newtheorem{resultW}{Result}

\setcounter{questionW}{1}

{\end{itshape}
}
\setcounter{resultW}{1}

\AtEndPreamble{%
    \hypersetup{colorlinks,
      linkcolor=purple,
      citecolor=blue,
      urlcolor=ACMDarkBlue,
      filecolor=ACMDarkBlue}}

\usepackage{tcolorbox}		%
\tcbuselibrary{breakable,skins}		%

\tcbset{examplestyle/.style={
		enhanced jigsaw,	%
		colback=blue!08,	%
		colframe=blue!08,	%
		arc=3mm,
		boxrule=0pt,		%
		left=1mm,
		right=1mm,
		left skip= 0mm,  %
		right skip= 0mm, %
		top=1mm,		%
		bottom=1mm,		%
		breakable,		%
		parbox = false,		%
		before={\par\pagebreak[0]\vspace{1mm}\parindent=0pt},		
		after={\par\pagebreak[0]\vspace{1mm}\parindent=0pt},				
		bottomrule = 0mm,
		boxsep = 0mm,					%
		topsep at break=0pt,			%
		bottomsep at break=0pt,			%
		pad at break=0mm,
		pad before break=1mm,		
		pad after break=1mm,		
		bottomrule at break=0mm,
		toprule at break=0mm,		
		}}

\usepackage{footnote}

\tcolorboxenvironment{example}{examplestyle}
\newcommand{\resultbox}[1]{
\begin{tcolorbox}[
	enhanced jigsaw,		%
	colback=red!5,
	colframe=red!75!black,	
	arc=0mm,
	left skip=-1mm,
	right skip=-1mm,	
	left=0mm,
	topsep at break=1mm,			%
	right=0mm,
	top=0mm,
	bottom=0mm,		%
	breakable,		%
	parbox = false		%
]
\emph{#1}
\end{tcolorbox}
}

\makeatletter
\DeclareRobustCommand*\uell{\mathpalette\@uell\relax}
\newcommand*\@uell[2]{
  \setbox0=\hbox{$#1\ell$}
  \setbox1=\hbox{\rotatebox{10}{$#1\ell$}}
  \dimen0=\wd0 \advance\dimen0 by -\wd1 \divide\dimen0 by 2
  \mathord{\lower 0.1ex \hbox{\kern\dimen0\unhbox1\kern\dimen0}}
}
\makeatother

\setcounter{dbltopnumber}{5}%

\marginparwidth 25pt             	%
\usepackage{marginnote}

\newcommand{\smallsection}[1]{\vspace{2mm}\noindent\textbf{#1.}} %
\newcommand{\introparagraph}[1]{\textbf{#1.}} %

\renewcommand{\epsilon}{\varepsilon} %

\newcommand{\define}{:=} %

\newcommand{\datarule}{{\,:\!\!-\,}} %
\renewcommand{\vec}[1]{\boldsymbol{\mathbf{#1}}}

\newcommand{\N}{\mathbb{N}} %
\newcommand{\R}{{\mathbb{R}}} %
\newcommand{\bigO}{{\mathcal{O}}} %
\renewcommand{\O}{{\mathcal{O}}} %

\usepackage{scalerel}
\usepackage{tikz}
\usetikzlibrary{svg.path}

\definecolor{orcidlogocol}{HTML}{A6CE39}
\tikzset{
  orcidlogo/.pic={
    \fill[orcidlogocol] svg{M256,128c0,70.7-57.3,128-128,128C57.3,256,0,198.7,0,128C0,57.3,57.3,0,128,0C198.7,0,256,57.3,256,128z};
    \fill[white] svg{M86.3,186.2H70.9V79.1h15.4v48.4V186.2z}
                 svg{M108.9,79.1h41.6c39.6,0,57,28.3,57,53.6c0,27.5-21.5,53.6-56.8,53.6h-41.8V79.1z M124.3,172.4h24.5c34.9,0,42.9-26.5,42.9-39.7c0-21.5-13.7-39.7-43.7-39.7h-23.7V172.4z}
                 svg{M88.7,56.8c0,5.5-4.5,10.1-10.1,10.1c-5.6,0-10.1-4.6-10.1-10.1c0-5.6,4.5-10.1,10.1-10.1C84.2,46.7,88.7,51.3,88.7,56.8z};
  }
}

\RequirePackage{etoolbox}
\DeclareRobustCommand\orcidicon[1]{\href{https://orcid.org/#1}{\mbox{\scalerel*{
\begin{tikzpicture}[yscale=-1,transform shape]
\pic{orcidlogo};
\end{tikzpicture}
}{|}}}}

\newenvironment{proofoutline}{%
  \proof}{\endproof}

\def\polylog{\operatorname{polylog}} %

\newcommand{\serialw}{w_{\textrm{sd}}}

\newcommand{\minel}{\min}
\newcommand{\smonotone}{s-monotone\xspace}
\newcommand{\smonotonicity}{s-monotonicity\xspace}
\newcommand{\ssmonotone}{ss-monotone\xspace}
\newcommand{\ssmonotonicity}{ss-monotonicity\xspace}

\newcommand{\diam}{\texttt{diam}}
\newcommand{\sparseBMM}{\textsc{sparseBMM}}
\newcommand{\hyperclique}{\textsc{Hyperclique}}
\newcommand{\witness}{\texttt{wit}} 
\newcommand{\out}{r}

\newcommand{\dom}{{\mathtt{dom}}}

\newcommand{\nodenum}{N}
\newcommand{\SortedSuf}{\mathtt{SortedSuff}}
\newcommand{\aggr}{+}
\newcommand{\aggrsum}{\sum}

\newcommand{\Choices}{\mathtt{Choices}}
\newcommand{\suc}{\mathtt{suc}}
\newcommand{\Suc}{\mathtt{Suc}}
\newcommand{\stages}{\ell}

\newcommand{\sgiter}{i}

\newcommand{\Dec}{E}
\newcommand{\DecR}{\mathbb{E}}
\newcommand{\Sset}{V}
\newcommand{\SsetR}{\mathbb{V}}

\newcommand{\concat}{\circ}
\newcommand{\Ch}{C}
\newcommand{\parent}{pr}
\newcommand{\chnum}{d}
\newcommand{\solW}{\pi}
\newcommand{\sol}{\Pi}
\newcommand{\weight}{w}	%
\newcommand{\tree}{\concat}	

\newcommand{\calH}{\mathcal{H}}

\newcommand{\QUICK}{\textsc{Quick}\xspace}
\newcommand{\PSQL}{\textsc{PSQL}\xspace}
\newcommand{\SYSX}{\textsc{System X}\xspace}

\newcommand{\BATCH}{\textsc{JoinFirst}\xspace}
\newcommand{\LAZY}{\textsc{Lazy}\xspace}
\newcommand{\EAGER}{\textsc{Eager}\xspace}

\newcommand{\MIN}{\textsc{All}\xspace}
\newcommand{\HEAP}{\textsc{Take2}\xspace}

\newcommand{\ANYKREC}{\textsc{anyK-rec}\xspace}
\newcommand{\ANYKPART}{\textsc{anyK-part}\xspace}
\newcommand{\ANYKPARTP}{\textsc{anyK-part+}\xspace}

\newcommand{\Bitcoin}{\textsc{Bitcoin}\xspace}
\newcommand{\Twitter}{\textsc{Twitter}\xspace}
\newcommand{\Friendship}{\textsc{Friendship}\xspace}
\newcommand{\Foodweb}{\textsc{Foodweb}\xspace}

\usepackage{booktabs} %
\graphicspath{{./figs/}}
\usepackage{numprint}
\usepackage{multirow}
\usepackage{makecell}

\usepackage{balance}
\usepackage{tikz}
\usepackage{tikz,tikz-qtree,tikz-qtree-compat,tikz-3dplot}
\usetikzlibrary{shapes,decorations.markings,arrows.meta,fit}

\definecolor{dkgreen}{rgb}{0,0.6,0}
\newcommand{\cc}{olive}
\newcommand{\algocomment}[1]{\textcolor{\cc}{{//#1}}}

\newcommand{\TTL}{\ensuremath{\mathrm{TTL}}\xspace}
\newcommand{\TT}{\ensuremath{\mathrm{TT}}}
\newcommand{\Del}{\ensuremath{\mathrm{Delay}}}
\newcommand{\MEM}{\ensuremath{\mathrm{MEM}}}

\newcommand{\Cand}{\mathtt{Cand}} %
\newcommand{\free}{\texttt{free}}

\newcommand{\Union}{\mathtt{Union}} %

\newcommand{\minweight}{min-weight-projection\xspace} 
\newcommand{\Minweight}{Min-weight-projection\xspace} 
\newcommand{\allweights}{all-weight-projection\xspace}
\newcommand{\Allweights}{All-weight-projection\xspace}

\newcommand{\varset}{\textup{\texttt{var}}}

\newcommand{\0}{\bar{0}}
\newcommand{\1}{\bar{1}}

\newcommand{\subw}{\textsf{subw}\xspace}

\makeatletter
\newcommand{\markZwicky}[1][]{\pgfutil@ifnextchar({\mark@Zwicky{#1}}{\mark@Zwicky{#1}()}}
\def\mark@Zwicky#1(#2)#3{%
   \tikz[every Zwicky picture,#1]{%
     \node[every Zwicky node,draw=none,inner sep=+\z@,outer sep=+\z@] {#3};
     \def\tikz@Mark@name{#2}%
     \ifx\tikz@Mark@name\pgfutil@empty\else
       \tikzset{every Zwicky node/.append style={name={#2}}}%
     \fi
     \node[every Zwicky node,overlay] {\phantom{#3}};
   }%
}
\newcommand{\tikzZwicky}[1][]{%
  \def\tikz@Zwicky@args{#1}%
  \let\tikz@Zwicky@list\pgfutil@gobble
  \let\tikz@Zwicky@first\pgfutil@empty
  \pgfutil@ifnextchar(\tikz@Zwicky@collect\tikz@Zwicky@finish
}
\def\tikz@Zwicky@collect(#1){%
  \ifx\tikz@Zwicky@first\pgfutil@empty
    \edef\tikz@Zwicky@first{#1}%
  \else
    \edef\tikz@Zwicky@list{\tikz@Zwicky@list,#1}%
  \fi
  \pgfutil@ifnextchar(\tikz@Zwicky@collect\tikz@Zwicky@finish
}
\def\tikz@Zwicky@finish{%
  \tikz[remember picture,overlay]
    \draw[every Zwicky connector,/expanded=\tikz@Zwicky@args]
      (\tikz@Zwicky@first) [/expanded={@Zwicky@list/.list={\tikz@Zwicky@list}}] [every Zwicky connect finish/.try];
}
\pgfkeys{/expanded/.code={\edef\pgfkeys@temp{{#1}}\expandafter\pgfkeysalso\pgfkeys@temp}}
\makeatother
\tikzset{
  @Zwicky@list/.style={insert path={to[every Zwicky connector how/.try] (#1)}},
  every Zwicky picture/.style={
    baseline,
    remember picture,
  },
  every Zwicky node/.style={
    remember picture,
    anchor=base,
    inner sep=+2pt
  },
  every Zwicky connector/.style={
    ultra thick,
    red!80!black,
    draw opacity=.5,
    line cap=round,
    line join=round
  }
}

\newtheorem{hypothesis}{Hypothesis}

\settopmatter{printccs=false, printacmref=false, printfolios=true}	%
\renewcommand\footnotetextcopyrightpermission[1]{} %
	\setcopyright{none}
	\pagestyle{plain} %

\usepackage[style=trad-abbrv,backend=bibtex,doi=true,isbn=false,url=true,eprint=false,maxbibnames=99]{biblatex} 
\addbibresource{bibliography-anyk.bib}

\begin{document}

\title[Any-k Algorithms for Enumerating Ranked Answers to Conjunctive Queries]{Any-k Algorithms for Enumerating Ranked Answers to Conjunctive Queries}
 
\author{Nikolaos Tziavelis}
\affiliation{%
    \institution{Northeastern University}
    \city{Boston}
    \state{Massachusetts}
    \country{USA}
}
\email{tziavelis.n@northeastern.edu}
\orcid{0000-0001-8342-2177}

\author{Wolfgang Gatterbauer}
\affiliation{%
    \institution{Northeastern University}
    \city{Boston}
    \state{Massachusetts}
    \country{USA}
}
\email{w.gatterbauer@northeastern.edu}
\orcid{0000-0002-9614-0504}

\author{Mirek Riedewald}
\affiliation{%
    \institution{Northeastern University}
    \city{Boston}
    \state{Massachusetts}
    \country{USA}
}
\email{m.riedewald@northeastern.edu}
\orcid{0000-0002-6102-7472}

\begin{abstract} 
We study \emph{ranked enumeration for Conjunctive Queries} (CQs) where the answers are ordered by a given ranking function (e.g., an ORDER BY clause in SQL).
We develop ``\emph{any-k}'' algorithms, which, without knowing the
number $k$ of desired answers,
\emph{push down the ranking into joins} by carefully ordering the computation of intermediate tuples and avoiding materialization of join answers until they are needed. 
For this to be possible, the ranking function needs to obey a particular type
of monotonicity.
Supported ranking functions include the common sum-of-weights, where answers are compared by the
sum of input-tuple weights, as well as any commutative selective dioid.
Our results extend 
a well-known
unranked-enumeration
dichotomy,
which states that only free-connex CQs are tractable
(under certain hardness hypotheses and for CQs without self-joins).
For this class of queries and with $n$ denoting the size of the input, the data complexity of our ranked enumeration approach 
for the time to the $k^\textrm{th}$ CQ answer 
is $\O(n + k \log k)$,
which is
only a logarithmic factor slower than
the $\O(n + k)$ unranked-enumeration
guarantee.

A core insight of our work is that ranked enumeration for CQs is closely related to 
Dynamic Programming
and
the fundamental task of \emph{path enumeration in a weighted DAG}.
We uncover a previously unknown tradeoff, both 
for this problem and
for CQs, under the lens of combined complexity where the query size is not considered a constant:
one any-$k$ algorithm has lower complexity when the number of returned answers
is small, the other when their number is large. 
This tradeoff is eliminated under a stricter monotonicity property that we define and exploit 
for a \emph{novel algorithm that asymptotically dominates all previously known alternatives},
including Eppstein's algorithm for sum-of-weights path enumeration.
We empirically demonstrate the findings of our theoretical analysis in an experimental study
that highlights the superiority of our approach over the join-then-rank approach that existing database systems follow.

\end{abstract}

\maketitle

\section{Introduction}

Joins are an essential building block of queries in relational and graph databases.
They are notoriously critical for performance, as they can
produce huge intermediate or final results.
Enumeration~\cite{bagan07constenum,carmeli21ucqs,DBLP:journals/sigmod/Segoufin15}
is a query-answering paradigm that circumvents this by returning
the answers as a stream
as quickly as possible, even if the full output is too large to compute. 
Among other key problems at the forefront of recent research~\cite{tziavelis22tutorial},
such as 
worst-case optimal joins~\cite{ngo2018worst}, 
(hyper)tree decompositions~\cite{GottlobGLS:2016}, 
and factorized representations~\cite{olteanu16record},
\emph{ranked enumeration} has been identified as an important open problem \cite{dagstuhl19enumeration}:
enumerate answers in an order determined by a given ranking function.
This augments the enumeration paradigm by imposing an order on the output; the top-ranked answers are returned first, 
followed by lower-ranked ones in quick succession.

Ranked enumeration shares a similar motivation with top-$k$ queries~\cite{ilyas08survey}, yet has two crucial differences:
(1) For top-$k$, the number of desired answers, $k$, is provided in advance.
In contrast, a ranked enumeration (or ``\emph{any-$k$}'') algorithm needs to perform well
for all possible values of $k$.
Any-$k$ algorithms can conceptually be seen as a fusion of
top-$k$~\cite{ilyas08survey} and
anytime algorithms \cite{Zaimag96}
that gradually improve their results over time.
(2) Algorithms for top-$k$ joins, including the celebrated
Threshold Algorithm~\cite{fagin03},
were developed for a ``middleware'' cost model that 
accounts only for accesses to external data sources,
but not for memory accesses and intermediate-result size~\cite{tziavelis20tutorial}.
Our goal are strong guarantees 
for time and space complexity in the RAM model of computation, for every value $k$.
We refer to those as $\TT(k)$ (i.e., Time-To-$k$) and $\MEM(k)$ respectively.

\begin{example}[4-path query]\label{ex:4path}
Let $R(A, B, W_1)$, $S(B, C, W_2)$,
$T(C, D, W_3)$, and $U(D, E, W_4)$
be relations where the last column stores tuple weights.
The following query joins the relations and orders the answers by ascending sums of weights.
\begin{verbatim}
  SELECT   *, R.W1 + S.W2 + T.W3 + U.W4 AS SUMW
  FROM     R, S, T, U
  WHERE    R.B = S.B AND S.C = T.C AND T.D = U.D
  ORDER BY SUMW ASC
\end{verbatim}
If the relations have $n$ tuples each, 
the query can produce $\Omega(n^3)$ answers
in the worst case~\cite{AGM}.
Thus, any ``join-then-rank'' algorithm that first joins the relations and
then applies the ranking (e.g., by sorting)
needs $\Omega(n^3)$ time to find even the first answer.
This is due to the sheer number of possible answers that have to be compared
and is true for any join strategy, such
as sort-merge, hash-join, or the optimal Yannakakis algorithm~\cite{DBLP:conf/vldb/Yannakakis81}. 
Our any-$k$ algorithms push down the ranking into the join
and achieve $\TT(k) = \O(n + k \log k)$.
This means that the top-ranked answer is returned in linear time,
and thereafter, for $k$ returned answers, 
the additional cost is the same as sorting $k$ elements.
One may wonder how the situation changes if instead of a 4-path query, we have a 4-cycle (i.e., `\texttt{AND R.A = U.E}'),
or if we project away the A attribute,
or if we replace the $\textup{sum}$ ($+$) with $\max$.
Our work answers all these questions in a principled way.
\end{example}

\introparagraph{Contributions}
(1) 
We develop a theory of \emph{ranked enumeration for Dynamic Programming} (DP) that applies to CQs.
It reveals the deeper common foundations between isolated prior works that 
only partially address the problem: 
$k$-shortest paths~\cite{eppstein1998finding,jimenez99shortest,lawler72}, 
ranked retrieval of graph patterns~\cite{chang15enumeration,yang2018any}, and 
ranked enumeration for CQs~\cite{deep21,KimelfeldS2006}.
While interesting in its own right, this general framework allows us to 
apply our ranked enumeration techniques not only to these problems but also to
any problem whose top-ranked solution can be found via DP, such as DNA sequence alignment~\cite{dpv08book}
or Viterbi decoding~\cite{seshadri94viterbi}.
The ranking function needs to have certain properties to allow efficient ranked enumeration,
which is not surprising given that DP does not apply to all optimization problems.
We identify \emph{subset-monotonicity} 
(\smonotonicity)
as a sufficient condition 
and show how it is related to
algebraic structures called commutative selective dioids~\cite{GondranMinoux:2008:Semirings}, 
which are a special case of the more well-known semirings~\cite{green07semirings}.

(2) For a large class of CQs, our any-$k$ algorithms achieve $\TT(k) = \O(n + k \log k)$,
in data complexity~\cite{DBLP:conf/stoc/Vardi82}, where $n$ is the size of the database
and query size is treated as a constant.
This is close to the $\Omega(n + k)$
lower bound and the additional logarithmic factor is expected since the 
$k$ answers are returned sorted.
We mainly focus on \emph{acyclic} CQs because
cyclic CQs can be handled by first decomposing them to acyclic ones~\cite{GottlobGLS:2016} and then applying our techniques.
Since acyclic CQs can have a more general tree structure (compared to the more restrictive path structure of standard DP),
we first extend our Dynamic Programming framework
to a class of problems that we call \emph{Tree-DP (T-DP)}. 
and apply it to \emph{full} acyclic CQs.
When the query has projections, an established 
unranked-enumeration
dichotomy~\cite{bagan07constenum} states that, 
under plausible assumptions in fine-grained complexity, 
the only (self-join-free) CQs 
that admit $\TT(k) = \O(n + k)$ are those that are free-connex.\footnote{The original dichotomy is phrased in terms of preprocessing and delay.}
We establish that the frontier of tractability remains the same with an \smonotone ranking function
(minus logarithmic factors);
free-connex CQs can be handled with $\TT(k) = \O(n + k \log k)$ and no other (self-join-free) CQ admits this running time.

(3) We compare different any-$k$ algorithms in a more refined analysis of \emph{combined complexity},
where query size is not treated as a constant, and propose 
a new algorithm that asymptotically dominates all others under a stronger monotonicity property. 
In particular, we find that the best algorithm based on the Lawler-Murty procedure~\cite{lawler72},
which we refer to as \ANYKPART,
is in general asymptotically better 
than approaches based on the Recursive Enumeration Algorithm (REA) \cite{deep21,jimenez99shortest},
which we refer to as \ANYKREC.
However, \ANYKREC smartly reuses comparisons 
and this can pay off as $k$ increases:
there exist inputs for which it produces the full
sorted output with lower time complexity than \ANYKPART, and even faster than sorting the materialized output.
We propose \ANYKPARTP, an algorithm that combines the best of both worlds.
However, it requires a stricter monotonicity property from the ranking function,
which we call \emph{strong-subset-monotonicity}
(\ssmonotonicity).
We give examples of ranking functions with this property,
and show
how it follows from elementary algebraic properties like cancellation~\cite{GondranMinoux:2008:Semirings}.
\ANYKPARTP is asymptotically faster than Eppstein's algorithm~\cite{eppstein1998finding} for path enumeration in a DAG, 
assuming that the returned paths are in an explicit listing representation
(i.e., the path size is proportional to the number of its edges)
instead of an implicit representation that Eppstein leverages.

(4) 
We provide the \emph{first empirical study} that directly compares these ranked-enumeration algorithms.
We show a tradeoff between \ANYKPART and \ANYKREC, 
with each one winning for different queries or values of $k$.
As in theory, \ANYKPARTP combines their best
qualities and is in most cases close or better than the fastest of the other two.
Importantly, our study clearly illustrates the advantage of any-$k$ for queries with large join output
over the join-then-rank approach.
Existing database systems follow the latter, and our algorithms outperform them by orders of magnitude.

\introparagraph{Conference version}
This article is an extended version of an earlier conference paper \cite{tziavelis20vldb},
significantly extending its scope and depth.
($i$) First, we propose the \ANYKPARTP algorithm that combines the best features of the two previous algorithms (\Cref{sec:part_plus}).
($ii$) Second, we include the study of CQs with projections (\Cref{sec:projections}) and
give the corresponding dichotomy result.
($iii$) Third, we provide an analysis of the supported ranking functions, their properties, and how they relate to other monotonicity definitions and algebraic structures (\Cref{sec:ranking_functions}).
($iv$) Finally, this article makes several other additions, such as a more extensive experimental study that better evaluates the scalability of the approach
and includes a comparison with a commercial database system,
a more detailed review of the literature that includes
a comparison to unranked and lexicographic order enumeration,
pseudocode for the algorithms, and proofs of correctness.

The project web page at \url{https://northeastern-datalab.github.io/anyk/}
contains code, slides, videos, and further information.

\section{Preliminaries}
\label{sec:prelim}

We use
$[m]$ to denote the set of integers $\{1, \ldots, m\}$
and $[m]_0$ for $\{0, \ldots, m\}$.

\subsection{Basic Notions}
\label{sec:def}

\introparagraph{Graphs and Hypergraphs}
A directed graph $G(V, E)$ is weighted if a weight function $w: E \rightarrow W$ assigns weights from a domain $W$ to 
its edges.
The graph size $|G| = |V| + |E|$ is the number of its nodes and edges.
A \emph{path} (or walk) $p$ of length $\lambda$ is a sequence of nodes $\langle v_0, \ldots, v_\lambda \rangle$ such that
$(v_{i-1}, v_i) \in E, i \in [\lambda]$.
Note that the nodes or edges do not have to be distinct.
Similarly, a path of length $\lambda$ in a hypergraph $\calH(V, E)$
is a sequence $\langle v_0, e_0, v_1, \ldots, e_\lambda, v_\lambda \rangle$
such that
$v_{i-1}, v_i \in e_i, e_i \in E, \forall i \in [\ell]$.
The distance between two nodes $u, v$ is the minimum length of any path with $u = v_0, v = v_\lambda$.
The \emph{diameter} of a hypergraph is the maximum distance between any pair of nodes.
If we fix a source node $s$ and a target node $t$ we call a \emph{prefix} of a node $v_{i}$ any path
$\langle s, v_1, \ldots, v_{i} \rangle$ to $v_{i}$
that starts at $s$ and \emph{suffix} of $v_{i}$ any path 
$\langle v_i, v_{i+1}, \ldots, v_{\lambda}, t \rangle$ from $v_{i}$
that ends in $t$.
We use $\concat$ as a concatenation operator for paths.
Given a prefix and a suffix,
$\langle s, v_1, \ldots, v_{i} \rangle \concat \langle v_i, v_{i+1}, \ldots, v_{\lambda}, t \rangle$ is
$\langle s, v_1, \ldots, v_{\lambda}, t \rangle$.
We also write $v_i \concat \langle v_{i+1}, \ldots, v_{\lambda}, t \rangle$ for some $(v_i, v_{i+1}) \in E$ to mean $\langle v_i, v_{i+1}, \ldots, v_{\lambda}, t \rangle$.
When $s$ and $t$ are fixed and it is clear from the context, we sometimes omit $s, t$ from prefixes or suffixes, i.e., we write $\langle v_1, \ldots, v_\lambda \rangle$ 
instead of $\langle s, v_1, \ldots, v_\lambda, t \rangle$.

\introparagraph{Conjunctive Queries (CQs)}
A CQ $Q$ is a first-order formula 
$\exists \vec Y (R_1(\vec{X_1}) \wedge \ldots \wedge R_\ell(\vec{X_\ell}))$, 
written as
$Q(\vec Z) \datarule R_1(\vec{X_1}) ,\ldots, R_\ell(\vec{X_\ell})$ 
in Datalog notation, 
where $\vec{Y}, \vec{Z}, \vec{X}_i, i \in [\ell]$ are lists of {variables} where
$\vec{Y} \subseteq \bigcup_{i \in [\ell]} \vec{X}_i$ and
$\vec{Z} = \bigcup_{i \in [\ell]} \vec{X}_i \setminus \vec Y$
if interpreted as sets.
Each \emph{atom} $R_i(\vec{X_i})$ refers to a relation with $|\vec{X_i}|$ columns (or attributes).
The variables in $\mathbf{Z}$ 
are called \emph{free} and denoted by $\free(Q)$,
while the rest of the variables $\mathbf{Y}$ are called \emph{existential}.
A \emph{Boolean} CQ has no free variables (i.e., $\mathbf{Z} = \emptyset$)
and only asks for the satisfiability of the formula.
In a \emph{full} CQ, all variables are free, i.e., $\mathbf{Z} = \bigcup_{i \in [\ell]} \vec{X}_i$.
The occurrence of the same variable in different atoms encodes an \emph{equi-join} condition,
implying that the values of the corresponding attributes in a query answer need to be equal. 
Different atoms can refer to the same relation, in which case we have a
\emph{self-join}.
A \emph{self-join-free} query has no self-joins.
Without loss of generality, for our algorithms we
can assume that (1) CQs are self-join-free 
since tables can be copied,
and (2) selections on individual relations (like $R(x,1)$ or $R(x,x)$)
have been removed in a linear-time preprocessing step.

\introparagraph{Query semantics}
CQs are evaluated over a database $D$ of relations that
draw values from a domain $\dom$, such as $\N$.
An \emph{output tuple} or query \emph{answer} to $Q$ 
is a mapping $\free(Q) \to \dom$
such that the first-order formula is satisfied. 
The set of all query answers is $Q(D)$, and we use $q \in Q(D)$ for a query answer.
A semantic evaluation strategy
to compute $Q(D)$ is to
($i$) materialize the Cartesian product of the $\ell$ relations,
($ii$) select tuples that satisfy the equi-joins,
and ($iii$) project on the $\vec{Z}$ attributes. 
A \emph{witness}~\cite{buneman01provenance} 
of a query answer is a 
size-$\ell$ vector of input tuples, 
one from the relation of each atom, 
that can join to produce the answer.
We denote the set of witnesses of an answer $q$ by $\witness(q)$.
For full CQs, we can equivalently represent an answer by its unique witness.

\introparagraph{Join trees}
A CQ is associated with a hypergraph where variables form the nodes and
atoms form the hyperedges.
We say that a CQ is \emph{acyclic} if its hypergraph is alpha-acyclic~\cite{baron16acyclic},
which means that we can construct a \emph{join tree}.
A join tree is a rooted tree where the atoms form the nodes
and the \emph{running intersection property} holds:
For every variable $x$, all tree nodes 
containing $x$
form a connected subtree.
The acyclicity of a CQ can be tested, and a corresponding join tree can be constructed in
linear time in the query size by the 
GYO reduction \cite{tarjan84acyclic,yu79gyo}.
A CQ that is not acyclic is called \emph{cyclic}.
The diameter $\diam(Q)$ of a CQ $Q$ is the diameter of its hypergraph.

\begin{example}[$\ell$-path and $\ell$-cycle queries]
\label{ex:path_cycle}
Let $E(\texttt{FROM}, \texttt{TO})$ 
be a relation that stores the directed edges of a graph.
The following $\ell$-path CQ computes paths of length $\ell$ between any pair of nodes in the graph:
$Q_{P\stages }(\vec x) \datarule E(x_0, x_1), E(x_1, x_2), \ldots, E(x_{\stages-1}, x_{\stages})$.
These CQs are acyclic; to construct a join tree, we can organize the atoms in a path with $E(x_0, x_1)$ as the root.

Length-$\stages$ cycles can be expressed as
$Q_{C\stages }(\vec x) \datarule E(x_1, x_2), E(x_2, x_3), \ldots, E(x_\stages, x_1)$.
These CQs are cyclic, and no join tree exists; 
for example, the path with $E(x_1, x_2)$ as the root does not have the
running intersection property;
the first and the last atom both contain $x_1$ but are disconnected.

\hide{	The corresponding SQL queries are
	\begin{verbatim}
	    SELECT R.V1, R.V2, S.V2, T.V2, W.V2
	    FROM R, S, T, W
	    WHERE R.V2=S.V1 AND S.V2=T.V1 AND T.V2=W.V1
	
	    SELECT R.V1, R.V2, S.V2, T.V2
	    FROM R, S, T, W
	    WHERE R.V2=S.V1 AND S.V2=T.V1
	      AND T.V2=W.V1 AND W.V2 = R.V1
	\end{verbatim}
}
\end{example}

\subsection{Ranked Enumeration}
\label{sec:ranked_enumeration}

Ranked enumeration assumes a given \emph{ranking function} that orders the query answers
by mapping them to a domain $W$ equipped 
with a total order $\preceq$.
We denote by $\minel(S)$ the smallest element in $S \subseteq W$ according to $\preceq$, where $S$ is allowed to be a multiset.

\introparagraph{Weight aggregation}
In this work, we focus on \emph{aggregate ranking functions} where weights are assigned to input tuples and
the mapping to $W$ is done by aggregating the weights of input tuples forming the witness of an output tuple.
A common example is the \emph{sum-of-weights} case as in \cref{ex:4path}:
Real-valued weights are assigned to input tuples and
the weight of a (full) CQ answer is computed 
by adding up 
the weights of tuples in its witness.
More generally, we are given an input-weight function $w_I$ that associates each input tuple with some weight
in $W$
and an aggregate function $w_A: \N^W \rightarrow W$ 
that returns a unique weight from a multiset of elements in $W$.\footnote{For simplicity, we assume that the input weights and the answer weights have the same domain, but a generalization is straightforward.}
Aggregate functions are not sensitive to the order that the input weights are provided, 
which is captured by the fact that their input is a 
multiset~\cite{jesus15aggregation}.
We further assume that $w_A(\emptyset) = \minel(W)$.

For CQs with projections, 
a query answer can have multiple witnesses with differing weights.
In that case, we adopt the \emph{\minweight} semantics: its weight is the minimum (according to $\preceq$) among the witnesses.
Formally, the weight of $q \in Q(D)$ is
$w(q) = \minel_{(t_1,\ldots, t_\stages) \in \witness(q)}(w_A(\{w_I(t_i) | i \in [\ell]\}))$.
In \Cref{sec:projections}, we discuss other possible semantics.
For the remainder of this paper,
we refer to aggregate ranking functions simply as
ranking functions,
and, when it is clear from the context,
we may use $w$ instead of $w_I$ or $w_A$.

\introparagraph{Monotonicity}
Let $\uplus$ be the multiset union operator. Our algorithms exploit the following monotonicity properties of common ranking functions:

\begin{definition}[Subset-Monotonicity~\cite{KimelfeldS2006}]
\label{def:smonotone}
A ranking function $w$ is \smonotone if 
$w_A(Y_1) \preceq w_A(Y_2) \Rightarrow w_A(X \uplus Y_1) \preceq w_A(X \uplus Y_2)$ for all $X, Y_1, Y_2$
$\in \N^W$.
\end{definition}

Intuitively, \smonotonicity allows us to infer the ranking of complete solutions from the ranking of
partial solutions.
Next, we define a stronger notion that allows us to
rank complete solutions from
other related complete solutions.
This stronger property allows us to develop an algorithm with the best known asymptotic guarantees in \cref{sec:part_plus}.

\begin{definition}[Strong-Subset-Monotonicity]
\label{def:strongsubset}
A ranking function $w$ is \ssmonotone if 
$w_A(X_1 \uplus Y_1) \preceq w_A(X_1 \uplus Y_2)
\wedge
w_A(X_1) \preceq w_A(X_2)
\Rightarrow
w_A(X_2 \uplus Y_1) \preceq w_A(X_2 \uplus Y_2)$ for all 
$X_1, X_2, Y_1, Y_2 \in \N^W$.
\end{definition}

Notice that \ssmonotonicity implies \smonotonicity by setting $X_1 = \emptyset$
(recall that $w_A(\emptyset) = \minel(W)$).
The common sum-of-weights case satisfies both properties.
An example of a ranking function
that is \smonotone but not
\ssmonotone
is
given in \cref{ex:weaksubsetmonotone}.
Also, note that the condition $w_A(X_1) \preceq w_A(X_2)$ in our definition does not restrict the set of ranking functions that satisfy
\ssmonotonicity, but expands it.
For example, min/max ranking is \ssmonotone:
Although $\max \{2, x\} \leq \max \{1, x \}$ is not true for all values of $x$,
it is true for $x \geq 3$.

\introparagraph{Incremental aggregation}
Another issue related to the ranking function is whether we can compute $w_A(X \uplus Y)$
in $\O(1)$ given two partial weights $w_A(X)$ and $w_A(Y)$ 
(i.e., without performing the aggregation from scratch).
Such an aggregate ranking function has been called \emph{algebraic}~\cite{DBLP:journals/datamine/GrayCBLRVPP97}
or \emph{decomposable}~\cite{jesus15aggregation}.
We adopt the former term.
Examples include SUM, COUNT, MIN, and AVERAGE.
In contrast,
\emph{holistic} functions like MEDIAN require $\omega(1)$ space for intermediate results.

\introparagraph{Problem definition}
The central problem in this paper is the following:

\begin{definition}[Ranked enumeration for CQs]
\label{def:ranked_enum_cqs}
Given a CQ $Q$ over a database $D$ and a 
ranking function,
ranked enumeration returns
the query answers $Q(D)$ one-at-a-time
in ascending $\preceq$ order without duplicates.
\end{definition}

As we will see, the above problem 
can be modeled as an instance of ranked enumeration for Dynamic Programming
(DP),
which is in turn intimately related to the problem of \emph{path enumeration} in a DAG.
Thus, we also consider the problem of ranked enumeration for such paths.
The ranking function there
is defined naturally
over the edges forming a path $p$:
$w(p) = w_A(\{ w_I(e) | e \in p\})$.
Crucially, we consider the variant of the problem where paths have to be enumerated \emph{explicitly}
by returning the list of edges of each path.
It is known that just enumerating the weights
(or the weights and an implicit pointer representation of the paths)
can remove the dependency on the path length from the running time, as shown in the seminal work of Eppstein~\cite{eppstein1998finding}.

\begin{definition}[Ranked enumeration of paths]
\label{def:ranked_enum_paths}
Given a weighted DAG $G$ with
a source node $s$ and target node $t$, and a ranking function,
(explicit) ranked enumeration
returns the paths from $s$ to $t$ one-at-a-time
in ascending $\preceq$ order 
represented as lists of edges
and without duplicates.
\end{definition}

In \Cref{sec:ranking_functions}, we delve deeper into issues related to the ranking function including the relationship of monotonicity properties to algebraic structures.
Until then, we phrase our algorithms using the sum-of-weights model, which is commonly used in standard DP formalism and $k$-shortest paths.

\subsection{Complexity Measures and Hypotheses}
\label{sec:complexity_measures}

\introparagraph{Model of computation}
We consider in-memory computation only and
analyze all algorithms in the standard
Random Access Machine (RAM)
model with uniform cost measure
where every operation or memory access costs $\O(1)$.
We do not assume any given indexes or sorting in the database relations.
In line with previous work~\cite{berkholz19submodular,carmeli21ucqs,GottlobGLS:2016,ngo2018worst},
we also assume that it is possible to build a lookup table in linear
time to support tuple lookups in constant time. 
In practice, this is virtually guaranteed by hashing.

\introparagraph{Problem parameters}
We use $n$ to refer to the maximum cardinality
of any database relation referenced in $Q$
and $\out$ for $|Q(D)|$.
The \emph{size} of a CQ is measured in terms of $\ell$ (i.e., the number of atoms) 
and also the maximum arity among atoms, which we denote by $\alpha$.
We use two different notions of complexity for CQs~\cite{DBLP:conf/stoc/Vardi82}.
The first is \emph{data complexity}, where the size of the query
is treated as a constant.
The second is the more detailed \emph{combined complexity}, where query size is
treated as a variable.
We apply the latter analysis to full acyclic CQs, revealing interesting differences between our algorithms
and aiding in the explanation of our experimental results.
For this analysis, we further assume that the ranking function is algebraic.
This is not a strict requirement and
our analysis can easily be extended to cases where this does not hold by keeping track of all the elements even after aggregating them.

\introparagraph{Measures of success}
We measure the complexity of enumeration by
the time required until the $k^\textrm{th}$ answer 
is returned ($\TT(k)$
for Time-To-$k$)
for all values of $k$.
For a lower bound, note that it takes
$\Omega(n)$ just to look at each input tuple and $\Omega(k)$ to return $k$ output tuples.
Sorting $k$ independent items takes $\Omega(k \log k)$ with a comparison-based algorithm.
We denote the special case of the \emph{Time-To-Last} $\TT(\out)$ by $\TTL$.
Additionally, we measure the space complexity $\MEM(k)$.

\begin{figure*}[tb]
\centering
\hfill
\begin{subfigure}[t]{.45\linewidth}
    \centering
    \includegraphics[height=3.5cm]{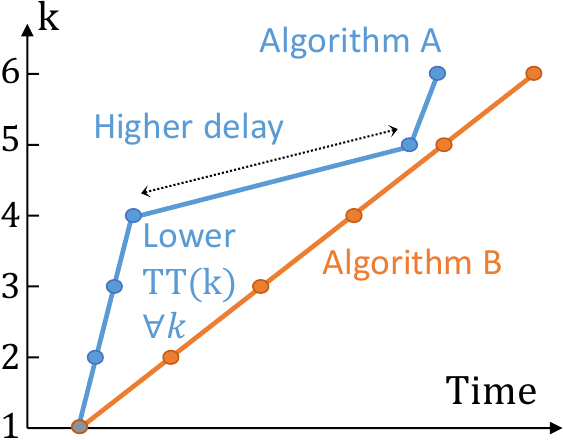}
    \caption{}
    \label{fig:ttk_vs_delay_1}
    \end{subfigure}%
\hfill
\begin{subfigure}[t]{.45\linewidth}
    \centering
    \includegraphics[height=3.5cm]{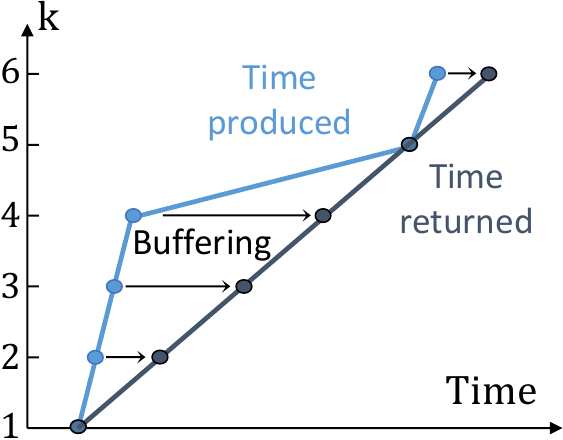}
    \caption{}
    \label{fig:ttk_vs_delay_2}
\end{subfigure}
\hfill
\caption{Time-to-$k$ ($\TT(k)$) is a more practical measure of success than delay.
\Cref{fig:ttk_vs_delay_1} illustrates how an algorithm $A$ with minimal $\TT(k)$ can return each answer faster than an algorithm $B$ even if the maximum delay of $A$ is higher (between answers 4-5 in the figure).
\Cref{fig:ttk_vs_delay_2} shows that the maximum delay of an algorithm can be lowered by buffering the answers and returning them later, but this slows down the algorithm.}
\label{fig:ttk_vs_delay}
\end{figure*}

\introparagraph{Time-to-$k$ versus delay}
The goal of enumeration is to provide each answer as quickly as possible,
which is why we adopt $\TT(k)$ as our measure of success.
A large body of work on enumeration
\cite{bagan07constenum,DBLP:journals/mst/CarmeliK20,idris20dynamic_theta,olteanu15dtrees,DBLP:journals/sigmod/Segoufin15}
has instead focused on measuring the \emph{delay} between answers after a preprocessing phase, typically striving for constant delay after linear-time preprocessing $\O(n)$
(in data complexity).
This is desirable because it guarantees that every $k^\textrm{th}$ answer is returned in time
$\TT(k) = \O(n + k)$, which is optimal (without ranking).
However, low delay is \emph{sufficient but not necessary} to achieve low $\TT(k)$.
As \Cref{fig:ttk_vs_delay_1} shows, a lower delay does not necessarily imply a faster algorithm.

Given enough buffer space (not necessarily main memory),
the delay of an algorithm 
can be made uniform across all $k$ values, e.g., by buffering the answers
and returning them at regular intervals.
We illustrate this in \Cref{fig:ttk_vs_delay_2}.
Techniques in that spirit have been developed~\cite{capelli19delay,carmeli21ucqs,deep21projections,deep22ranked} to guarantee low delay for algorithms that already have low $\TT(k)$.
From a practical point of view, this is usually undesirable~\cite{capelli21delay,carmeli21ucqs} because it slows down the algorithm.
Aiming for low delay might still be relevant if (1) a downstream application requires uniform interarrival times and 
(2) at the same time
there is not enough space to buffer the answers. 
We are currently not aware of any such scenario in practice.

\introparagraph{Hardness hypotheses}
Works in the wider area of enumeration~\cite{bagan07constenum,carmeli21ucqs,carmeli23direct,brault13thesis}
have used certain hypotheses on the hardness of problems to prove lower bounds 
and achieve dichotomies for CQs.
In this work, we extend some of these results to the case of ranked enumeration using the same set of hypotheses.
We list these below and we refer the reader to Berkholz et al.~\cite{Berkholz20tutorial} for an excellent discussion on their plausibility.

\begin{hypothesis}[\sparseBMM{}~\cite{Berkholz20tutorial}]
	Two Boolean matrices $A$ and $B$, represented as lists of non-zeros, 
	cannot be multiplied in time $m^{1+o(1)}$\footnote{An example of a function that is in $m^{1+o(1)}$ is $m^{1+1/m}$. Note that this is higher than $\O(m \polylog m)$.}, 
	where $m$ is the number of non-zeros in $A$, $B$, and $AB$.
\end{hypothesis}

The hypothesis states that the above problem cannot
be solved in near-linear time in input and output size.
As Berkholz et al.~\cite{Berkholz20tutorial} argue, this is a weaker (i.e., more plausible) assumption than 
the original assumption of Bagan et al.~\cite{bagan07constenum} that 
the problem cannot be solved in quadratic time in size.

\begin{hypothesis}[\hyperclique{}~\cite{abboud14conjectures,DBLP:conf/soda/LincolnWW18}]
For every $k \geq 2$, there is no
$O(m \polylog m)$ algorithm to decide the existence of a
$(k{+}1,k)$-hyperclique in a $k$-uniform hypergraph with $m$ hyperedges,
where a \emph{$(k{+}1,k)$-hyperclique} is a set of $k{+}1$ vertices
such that every subset of $k$ elements is a hyperedge
and a $k$-uniform hypergraph is one where all hyperedges have size $k$.
\end{hypothesis}

For $k=2$, \hyperclique{} says that detecting a triangle in a graph cannot be
done in quasilinear time.
The fastest known algorithm for triangle detection~\cite{Alon1997} uses fast matrix multiplication
and runs in time $\Omega(n^{4/3})$, even if the matrix multiplication exponent is $2$, that is, the lowest possible.

\subsection{Known Results on Unranked and Lexicographic Enumeration}
\label{sec:known}

We now review a key dichotomy result for \emph{unranked} enumeration,
where there is no ranking function and the task is to 
enumerate query answers in no particular order.
We also touch upon results for the special case of ranking according to lexicographic orders.

\introparagraph{Free-connex CQs}
Bagan et al.~\cite{bagan07constenum} show that unranked enumeration for full CQs is always possible 
with linear-time preprocessing and constant delay in data complexity.
When the CQ has existential variables (i.e., projections), then this is only possible
(under the hypotheses of the previous section)
for the CQs that are \emph{free-connex}.
A CQ is free-connex if it is acyclic and additionally,
it remains acyclic if we add an atom that contains all free variables~\cite{brault13thesis}.
Trivially, every full acyclic CQ is free-connex.
We restate the dichotomy in terms of $\TT(k)$\footnote{
The lower bound on $\TT(k)$ stated here is slightly stronger than the one mentioned
in the original papers
(which bound the delay 
without logarithmic factors),
but is covered by the hypotheses in a straightforward way.
} below:

\begin{theorem}[\cite{bagan07constenum,brault13thesis}]
\label{theorem:known-enumeration}
Let $Q$ be a CQ. If $Q$ is free-connex, then unranked enumeration
(in arbitrary order) 
is possible with $\TT(k) = \O(n + k)$.
Otherwise, if $Q$ is also self-join-free, then it is not possible 
with $\TT(k) = \O(n + k \polylog k)$,
assuming \sparseBMM{} and \hyperclique{}.
\end{theorem}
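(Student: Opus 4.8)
The plan is to re-derive the two directions of the Bagan et al.--Brault-Baron dichotomy, adapting its guarantees from the usual ``constant delay after linear preprocessing'' phrasing to the $\TT(k)$ formulation. For the positive direction (\emph{$Q$ free-connex}) I would proceed in three steps. First, run a linear-time full-reducer pass (iterated semijoins up and down a join tree) to delete every dangling tuple; this does not change $Q(D)$ and costs $\O(n)$. Second, adapting Bagan et al.'s construction for free-connex CQs, reduce the problem to enumerating a \emph{full acyclic} CQ $Q'$ whose answer set is exactly $Q(D)$, over a database of size $\O(n)$ that is computed in $\O(n)$ time (essentially by materializing a suitable ear atom over the free variables using a free-connex tree decomposition). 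Third, enumerate the answers of $Q'$ by a depth-first traversal of its join tree, at each node looping over the input tuples that agree with the already-fixed variables. Because no dangling tuples remain and the running-intersection property holds, every partial assignment extends to at least one complete answer, so the gap between two consecutive outputs is worst-case $\O(1)$; combined with the $\O(n)$ preprocessing this gives $\TT(k) = \O(n + k)$, which the single-atom query shows to be tight.

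For the negative direction I would split on acyclicity. If $Q$ is acyclic but not free-connex (and self-join-free), I reduce from \sparseBMM{}. The canonical obstruction is $Q_0(x,z) \datarule R(x,y), S(y,z)$, which is acyclic yet not free-connex: encoding Boolean matrices $A, B$ as the relations of nonzero positions, the answers of $Q_0$ are precisely the nonzero positions of $AB$, so any algorithm achieving $\TT(\out) = \O(n + \out\log \out)$ would list all nonzeros of $AB$ in time $m^{1+o(1)}$, contradicting \sparseBMM{}. The general case follows from the structural lemma of Bagan et al.: every self-join-free acyclic non-free-connex CQ contains $Q_0$ as a pattern --- two free variables joined through an existential bridge --- and the remaining atoms can be instantiated with trivial (all-ones) relations imposing no constraint, yielding a linear-time, size-preserving reduction from enumerating $Q_0$ to enumerating $Q$ that preserves $n$ and $\out$ up to constants.

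If $Q$ is cyclic (and self-join-free), I reduce from \hyperclique{}, and here it suffices to look at Boolean evaluation: by Brault-Baron's characterization a self-join-free cyclic CQ cannot be decided in time $\O(n\polylog n)$ under \hyperclique{}, because it contains a ``tetra''/hyperclique obstruction as a topological minor into which a $(k{+}1,k)$-hyperclique detection instance embeds. Since even the time to return the first answer (or to certify that $Q(D) = \emptyset$) is then super-quasilinear, the bound $\TT(k) = \O(n + k\log k)$ already fails for small $k$. Finally, I would observe that the logarithmic factors in the target bound are harmless: in all of the reductions above $n + k\log k = m^{1+o(1)}$, so the $\log$ cannot rescue the algorithm against either hypothesis, which is exactly why the $\TT(k)$-version of the lower bound follows from the same assumptions as the classical delay-version (cf.\ the footnote after the theorem).

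I expect the main obstacle to be the combinatorial core of the lower bound: the two structural lemmas (every acyclic non-free-connex self-join-free CQ embeds the $R(x,y), S(y,z)$ pattern; every cyclic self-join-free CQ embeds a hyperclique obstruction) together with the bookkeeping needed to verify that the gadget reductions are genuinely size-preserving --- linear in both input and output --- so that the $\O(n + \out\log\out)$ budget really does translate into a sub-$m^{1+o(1)}$ algorithm for the hard problem. The positive direction is comparatively routine once Bagan's free-connex-to-full-acyclic transformation is available; the only delicate point there is checking that the join-tree traversal has \emph{worst-case} (not merely amortized) constant delay, so that $\TT(k)$ --- and not just $\TT(\out)$ --- is controlled.
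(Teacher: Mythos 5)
The paper does not actually prove this theorem: it is imported wholesale from Bagan et al.\ and Brault-Baron (with only a footnote justifying the rephrasing from delay to $\TT(k)$ and the harmlessness of the extra $\log k$), so there is no internal proof to compare against. Your reconstruction follows exactly the arguments of those cited works, and it is essentially correct: full reducer plus the free-connex-to-full-acyclic transformation plus constant-delay join-tree traversal for the upper bound; the free-path/$R(x,y),S(y,z)$ gadget against \sparseBMM{} for acyclic non-free-connex queries; hardness of even the Boolean question for cyclic queries; and the observation that $n+k\log k = m^{1+o(1)}$ so the logarithmic slack does not rescue the algorithm. Two soft spots are worth tightening. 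First, in the positive direction, be careful with ``materializing a suitable ear atom over the free variables'': a single relation over \emph{all} free variables could have size $\Theta(\out)$, not $\O(n)$. The correct construction (which the paper itself uses later, in its \cref{lem:fc_to_full} for the ranked setting) adds only \emph{projections of existing atoms} so that a connected subtree $U$ of the join tree covers exactly the free variables; each added relation then has size $\O(n)$. Second, the cyclic lower bound does not reduce uniformly to \hyperclique{}: Brault-Baron's characterization splits into the non-conformal case (a ``tetra'', i.e.\ a $(k{+}1,k)$-hyperclique obstruction, reduced from \hyperclique{}) and the non-chordal case (a chordless cycle of length $\ge 4$, whose hardness rests on matrix-multiplication-type assumptions), which is precisely why the theorem assumes \emph{both} \sparseBMM{} and \hyperclique{}. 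Neither issue invalidates the plan, but both need to be spelled out for the proof to go through as stated.
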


\introparagraph{Enumeration by lexicographic orders}
A closer look at the algorithm of Bagan et al.~\cite{bagan07constenum}
reveals that the query answers
are actually returned in \emph{some lexicographic order} of the variables.
In a lexicographic order $\langle x_1, x_2, \ldots \rangle$, two output tuples are first compared on the
$x_1$ value, and if equal then on their $x_2$ value, and so on.
The order of the variables crucially depends on the structure of the query
(i.e., it needs to be an \emph{alpha elimination order} \cite{baron16acyclic})
and there are certain lexicographic orders that cannot be realized by the algorithm.
For example, for the full 2-path query $Q_{P2}(x, y, z) \datarule R_1(x, y), R_2(y, z)$,
the lexicographic order $\langle x, z, y \rangle$ is not allowed.
Intuitively, this is because after choosing $x$ in $R_1$, 
we need to fix $y$ to determine the possible options for
$z$ in $R_2$.

Later work by Bakibayev et al. on factorized databases~\cite{bakibayev13fordering} 
shows how to achieve constant-delay enumeration according to any lexicographic order.
However, lexicographic orders that do not agree with the ``factorization order'' used in the construction of their data structure
require an expensive restructuring operation.
For $Q_{P2}$ with lexicographic order $\langle x, z, y \rangle$,
their approach would need to construct a representation of $\O(n^2)$ size,
which implies $\TT(k) = \O(n^2 + k)$.
(see \Cref{appendix:fdb_order}).

In our ranked-enumeration framework,
\emph{any lexicographic order} is supported 
with only an additional logarithmic factor
(i.e., $\TT(k) = \O(n + k \log k)$) for any free-connex 
CQ,
including the above example.
As we explain in more detail in \Cref{sec:lex},
lexicographic orders are at least as hard as the sum-of-weights case.

\section{Overview of Results and Outline}
\label{sec:overview}

We are now in a position to state the main results of this work more formally. 
The first result concerns the data complexity of ranked enumeration for free-connex CQs, 
accompanied by a conditional lower bound,
similar to the lower bound of unranked enumeration.
We note that our approach does not only apply to free-connex CQs, but also to any CQ with weaker guarantees.

\begin{theorem}[Any-$k$ CQs, Data Complexity]
\label{theorem:cq_data_comp}
Let $Q$ be a CQ.
If $Q$ is free-connex, then ranked enumeration with an \smonotone ranking function
is possible with $\TT(k) = \O(n + k \log k)$ 
and $\MEM(k) = \O(n + k)$.
Otherwise, if it is also self-join-free, 
then it is not possible with $\TT(k) = \O(n + k \polylog k)$,
assuming \sparseBMM{} and \hyperclique{}.
\end{theorem}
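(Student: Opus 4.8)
The plan is to prove the two halves separately, reusing as much of the unranked machinery (Theorem~\ref{theorem:known-enumeration}) as possible. For the \textbf{upper bound}, I would proceed in three stages. First, reduce ranked enumeration for a free-connex CQ to ranked enumeration for a \emph{full acyclic} CQ: given a free-connex $Q$, build a join tree that contains a connex subtree covering exactly the free variables (this is what free-connexity buys us, by the characterization via adding an atom on $\free(Q)$), run the Yannakakis-style semijoin reductions in $\O(n)$ time, and then ``trim'' the tree so that the existential parts below the connex frontier are summarized. The key point here is that under the \minweight\ semantics a query answer's weight is the least weight over its witnesses, so the reduction must push a \emph{minimization over weights of the projected-away parts} into each node --- i.e., replace each dangling subtree by a single virtual atom whose tuples carry, for each combination of boundary values, the $\preceq$-minimum aggregate weight of completions in that subtree. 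Subset-monotonicity is exactly what makes this well-defined and composable: the minimum over a subtree can be computed bottom-up because combining a fixed prefix with a $\preceq$-smaller suffix yields a $\preceq$-smaller whole. Second, observe that a full acyclic CQ is (after the reduction) an instance of the ``T-DP'' framework the paper sets up --- its join tree is a DP tree, witnesses are ``solutions,'' and the sum-of-weights / selective-dioid aggregation is the DP objective --- so the any-$k$ algorithm for T-DP (\ANYKREC\ or \ANYKPART) applies verbatim. Third, invoke the complexity guarantee of that any-$k$ algorithm: linear-time preprocessing plus $\O(\log k)$ amortized per answer, giving $\TT(k)=\O(n+k\log k)$, with the priority-queue/candidate structures occupying $\O(n+k)$ space, hence $\MEM(k)=\O(n+k)$. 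All three stages are in data complexity, so the $\ell$- and $\alpha$-dependent factors are absorbed into the constant.

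For the \textbf{lower bound}, the argument is a direct transfer of the unranked lower bound of Theorem~\ref{theorem:known-enumeration}. If $Q$ is self-join-free and \emph{not} free-connex, then by that theorem unranked enumeration of $Q(D)$ is impossible with $\TT(k)=\O(n+k\log k)$ under \sparseBMM{} and \hyperclique{}. But ranked enumeration with any fixed subset-monotone ranking function is \emph{at least as hard} as unranked enumeration: assign, say, all input tuples weight $0$ (or any constant), so that every answer has the same weight; then a ranked-enumeration algorithm is forced to output \emph{all} of $Q(D)$ (in some order consistent with a total tie-breaking rule, which is still a valid enumeration order), and it does so within the same $\TT(k)$ bound. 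This would solve unranked enumeration within $\O(n+k\log k)$, contradicting the hypotheses. (One must check the ranking function is indeed defined on the constant-weight instance and that sum-of-weights, $\max$, lexicographic, etc., all trivially are; the statement quantifies over subset-monotone functions, so exhibiting one such hard instance per function, or just using the constant function which is subset-monotone, suffices.)

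The \textbf{main obstacle} is the first stage of the upper bound: making the reduction from free-connex to full acyclic rigorous in the presence of projections and the \minweight\ semantics, and doing it in \emph{linear} time. Concretely, one has to show that the standard free-connex preprocessing (semijoins down to a connex subtree) can be augmented to carry, at each boundary, a $\preceq$-minimum weight, and that this augmentation (a) costs only $\O(n)$ in data complexity, (b) preserves the set of answers and their weights, and (c) produces an instance to which the T-DP any-$k$ algorithm can be applied so that its outputs, de-duplicated, are exactly the distinct answers of $Q$ in weight order. Point (c) hides a subtlety: after summarizing subtrees the remaining ``full acyclic'' query can produce the same free-tuple via several DP solutions (different prefixes through summarized nodes collapse), so one needs a duplicate-elimination layer that does not blow up $\TT(k)$ --- this is handled by the standard trick of only materializing, per free-tuple, the first (hence $\preceq$-minimum, by subset-monotonicity) solution encountered and discarding later ones, which the paper develops in \Cref{sec:projections}. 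I would lean on that section's construction and here only need to argue the weight-minimization is threaded through correctly. The rest (T-DP complexity, the lower-bound transfer) is routine once the framework of \Cref{sec:projections} and the T-DP sections is in hand.
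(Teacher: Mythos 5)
Your overall route matches the paper's: reduce the free-connex CQ to a full acyclic one by building a join tree with a connected subset $U$ of nodes covering exactly the free variables, push the $\preceq$-minimum weight of each pruned existential subtree into the boundary via a bottom-up (T-DP) pass, run the any-$k$ algorithms on the trimmed instance, and obtain the lower bound by reducing unranked enumeration to ranked enumeration with constant weights. This is precisely \Cref{lem:fc_to_full} plus \Cref{theorem:known-enumeration}, so the upper and lower bounds are argued the same way.

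The one place where your write-up diverges in substance is point (c), your ``main obstacle.'' You resolve the potential many-to-one collapse of DP solutions onto the same free-tuple by ``materializing, per free-tuple, the first solution encountered and discarding later ones.'' That is exactly the naive strategy the paper explicitly rejects at the start of \Cref{sec:projections}: up to $\O(\out)$ witnesses can project to the same answer, so on-the-fly de-duplication can stall the stream of \emph{distinct} answers and the $\TT(k)=\O(n+k\log k)$ guarantee is lost. The paper needs no de-duplication at all: in the trimmed instance $T'$ of \Cref{lem:fc_to_full}, every retained stage corresponds to an atom over free variables only, so two distinct T-DP solutions must differ in some tuple and hence in some free-variable value---the correspondence between solutions of $T'$ and distinct answers of $Q$ is a bijection, and the \minweight{} weights are already baked into the edges leading to the new terminal stages. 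So your worry about collapsing prefixes is unfounded under the paper's construction, but the mechanism you propose to address it would, if actually needed, break the theorem; the correct fix is structural (make the reduced instance duplicate-free), not a filtering layer.
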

\begin{proofoutline}
We start with the simplest case of full path-structured CQs
and show the (serial) DP structure of the problem in \Cref{sec:paths_and_dp}.
All three algorithms presented in \Cref{sec:DPalgorithms} achieve the data complexity bounds of the theorem.
In \Cref{sec:tdp}, we modify our techniques for tree-structured problems 
which include all full acyclic CQs.
To ease the presentation, we adopt the sum-of-weights ranking but carefully note which properties of the ranking function we use. 
Free-connex CQs are reduced to full CQs in \Cref{sec:projections}.
In \Cref{sec:rankingfctgeneralizing}, we generalize to all \smonotone ranking functions, showing algorithm correctness for these cases.
\end{proofoutline}

In a more detailed analysis, we show that the \ANYKPARTP algorithm that we develop achieves the best known guarantees in combined complexity
for the class of full acyclic CQs and \ssmonotone ranking functions.

\begin{theorem}[Any-$k$ CQs, Combined Complexity]
\label{theorem:cq_combined_comp}
For a full acyclic CQ $Q$ with $\ell$ atoms of
maximum arity $\alpha$,
ranked enumeration with an algebraic and
\smonotone ranking function
is possible with $\TT(k) = \O(n \ell \alpha + k (\log k + \ell \alpha))$.
Ranked enumeration with an algebraic and
\ssmonotone ranking function
is possible with $\TT(k) = \O(n \ell \alpha + k (\log(\min\{k, n^{\ell-\diam(Q)+1}\}) + \ell \alpha))$.
In both cases, $\MEM(k) = \O(n \ell \alpha + k \ell \alpha)$.
\end{theorem}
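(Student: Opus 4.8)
The plan is to prove both statements by one two-stage reduction---full acyclic CQ $\to$ tree-structured DP $\to$ explicit ranked path enumeration in a weighted DAG---and then to instantiate, on the resulting DAG, the per-answer cost bounds of \ANYKPART (subset-monotone case) and \ANYKPARTP (strong subset-monotone case), propagating the parameters $\ell$ and $\alpha$ through the reduction. First I would set up the reduction. Given a full acyclic $Q$ with $\ell$ atoms, fix a join tree $T$; a Yannakakis-style full semi-join pass removes every dangling tuple, and reading and hashing the $\le n$ tuples of arity $\le\alpha$ in each of the $\ell$ relations costs $\O(n\ell\alpha)$. From the reduced instance I build a weighted DAG $G_Q$ by linearizing $T$ along a DFS traversal: the nodes are the surviving tuples plus, between each pair of adjacent atoms, one auxiliary \emph{grouping} node per distinct value of their shared-attribute bundle (so the number of edges between consecutive stages is $\O(n)$ rather than $\Omega(n^2)$); each edge carries the decomposable weight contribution of its input tuple, grouping edges carry $w(\emptyset)$, so aggregating along an $s$-$t$ path realizes $w_Q$. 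This yields $|G_Q|=\O(n\ell)$, maximum $s$-$t$ path length $\lambda=\O(\ell)$, and a weight-preserving bijection between $s$-$t$ paths and witnesses; since $Q$ is full, witnesses are answers, so ranked enumeration of $Q(D)$ reduces to explicit ranked enumeration of $s$-$t$ paths in $G_Q$ (\Cref{def:ranked_enum_paths}), and the whole construction is $\O(n\ell\alpha)$.

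For the subset-monotone case I run \ANYKPART on $G_Q$. Its $\O(|G_Q|)$ preprocessing computes for every node its optimal suffix weight by a bottom-up pass, which is sound precisely because $w$ is subset-monotone (\Cref{def:smonotone}). Thereafter each successive path is produced by one extract-min plus $\O(\lambda)$ Lawler--Murty branching steps; organizing the $\O(\lambda)$ candidates spawned by a popped solution in a per-solution secondary heap keeps the global priority queue at $\O(k)$ entries, so a pop costs $\O(\log k)$, while building that secondary heap ($\O(\lambda)$ heapify, done once per popped solution), computing the $\O(\lambda)$ candidate weights from the precomputed suffix values ($\O(1)$ each), and materializing and emitting the explicit path cost $\O(\lambda\alpha)=\O(\ell\alpha)$ per answer (charging $\O(\alpha)$ to each arity-$\le\alpha$ hash lookup and emitted attribute, and absorbing $\log\lambda\le\lambda\le\ell\alpha$). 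Summing over $k$ and adding preprocessing gives $\TT(k)=\O(n\ell\alpha+k(\log k+\ell\alpha))$; the priority queue plus candidate bookkeeping occupies $\O(k\ell)$, so $\MEM(k)=\O(n\ell\alpha+k\ell)$.

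For the strong subset-monotone case I replace \ANYKPART by \ANYKPARTP (\Cref{sec:part_plus}). Strong subset-monotonicity makes the ranking of a node's completions independent of the prefix, so all candidates reaching the same node share one lazily-materialized sorted suffix list; hence the global priority queue never exceeds the number of distinct ``suffix states'' of $G_Q$, and all per-answer costs are as above except that an extract-min now costs $\O(\log(\text{number of suffix states}))$. The crux is to bound that number by $\O(n^{\ell-\diam(Q)+1})$: I would argue that the DFS-linearization of $T$ can be chosen so that any suffix shorter than $\diam(Q)-1$ lies on the linear tail of a longest chain of the query hypergraph and contributes no branching, leaving exactly $\ell-\diam(Q)+1$ stages, each multiplying the state count by at most $n$ (e.g., $\ell-\diam(Q)+1=1$ for an $\ell$-path, consistent with the $\O(n)$ nodes of its DAG). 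This yields $\TT(k)=\O(n\ell\alpha+k(\log(\min\{k,n^{\ell-\diam(Q)+1}\})+\ell\alpha))$, with $\MEM(k)=\O(n\ell\alpha+k\ell)$ unchanged since the suffix-list and candidate bookkeeping is still $\O(\min\{k,n^{\ell-\diam(Q)+1}\}\cdot\ell)\subseteq\O(k\ell)$.

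The main obstacle is this last bound: choosing the join-tree serialization so that the number of branching stages is exactly $\ell-\diam(Q)+1$, and then proving that \ANYKPARTP's suffix-sharing under strong subset-monotonicity still enumerates every answer in correct $\preceq$-order without duplicates while the priority queue stays within $\O(\min\{k,n^{\ell-\diam(Q)+1}\})$. A secondary, more clerical obstacle is the $\alpha$- and $\log\ell$-accounting: verifying that all hash lookups on arity-$\le\alpha$ keys, tuple materializations, and secondary-heap operations are charged so that only a single logarithmic factor (in $k$, resp.\ in $\min\{k,n^{\ell-\diam(Q)+1}\}$) survives per answer, and that the per-solution secondary heaps keep the global queue linear in the number of returned answers rather than in $k\ell$.
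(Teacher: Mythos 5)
Your high-level strategy is the same as the paper's (reduce the CQ to a DP instance, run \ANYKPART for the subset-monotone bound and \ANYKPARTP for the strong subset-monotone bound, and obtain the exponent $\ell-\diam(Q)+1$ from a well-chosen rooting/serialization of the join tree), but there is a genuine gap in your very first step. You propose to \emph{linearize the join tree $T$ along a DFS traversal} into a multi-stage DAG $G_Q$ of size $\O(n\ell)$ in which $s$-$t$ paths biject with witnesses. This fails whenever $T$ genuinely branches. Take $Q_T \datarule R_1(x,y,z), R_2(y,u), R_3(u,a), R_4(y,p), R_5(p,f), R_6(p,g)$ with the join tree rooted at $R_1$ (children $R_2,R_4$): in the DFS order $R_1,R_2,R_3,R_4,\ldots$ the stages $R_3$ and $R_4$ are adjacent but share no attributes, while $R_4$ must join with the $R_1$-tuple chosen three stages earlier. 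A linear DAG whose edges connect only consecutive stages (even with grouping nodes on the ``shared-attribute bundle'') cannot encode this non-local constraint; repairing it by making each stage carry all still-live variable bindings destroys the $\O(n\ell)$ size bound. The paper avoids this entirely by \emph{not} reducing to path enumeration: it defines a tree-structured DP (T-DP, \Cref{sec:tdp}) in which a solution is a tree of states and the choice set at stage $\Sset_c$ is determined by its parent $\Sset_{\parent(c)}$ in the join tree, not by the previous stage in the serialization, and then re-proves the complexity of each algorithm in that setting (including a modified candidate-generation for branches).

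The same issue undermines your \ANYKPARTP argument. You assert that strong subset-monotonicity lets ``all candidates reaching the same node share one sorted suffix list,'' but for tree-shaped instances this is precisely the property the paper shows does \emph{not} hold: two prefixes ending at the same node of $\Sset_3$ may admit different suffix sets because a later stage hangs off $\Sset_2$. The paper's fix is the \emph{serial decomposition}: stages are grouped level-by-level into vertices of a path, sorted suffix lists are keyed on the \emph{combination of nodes} occupying all stages of a vertex (not on a single node), and suffix memoization is applied only at vertex boundaries. The width $\serialw$ of this decomposition (maximum number of non-terminal stages in one vertex) is what bounds the priority queue by $n^{\serialw}$, and rooting the join tree at a deepest node gives depth $\ge\diam(Q)$, hence $\serialw\le\ell-\diam(Q)+1$. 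Your sketch conflates this width (stages per level) with the number of ``branching stages,'' and you flag the correctness of suffix-sharing as an unresolved obstacle --- it is exactly the point where the serial-decomposition idea is needed. To repair the proposal you would either have to adopt the T-DP formulation directly, or prove that your linearized DAG can be built with $\O(n\ell)$ states while preserving all parent-child join constraints, which I do not believe is possible in general.
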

\begin{proofoutline}
The two time bounds are achieved, respectively, by the \ANYKPART and \ANYKPARTP algorithms,
as we show in \Cref{sec:complexity}
(for path-structured CQs)
and in \Cref{sec:tdp}
(for tree-structured CQs).
Compared to \Cref{theorem:cq_data_comp}, 
proving these time complexity bounds requires
(1) the special handling of cases where the ranking function does not have an inverse (\Cref{sec:rankingfctgeneralizing})
and (2) the assumption that the ranking function is algebraic (which we adopt in \Cref{sec:complexity}). 
\end{proofoutline}

According to the theorem, an \ssmonotone ranking function such as sum-of-weights allows us to reduce the $k \log k$ term in the running time
to $k \log(\min\{k, n^{\ell-\diam(Q)+1}\})$.
This means that the higher the diameter of the CQ,
the smaller the logarithmic factor.
The largest difference is observed for
path CQs $Q_{P\ell}$ where $\diam(Q) = \ell$, and hence
$\TT(k) = \O(n \ell \alpha + k \log (\min\{k, n\} + \ell \alpha))$.
To see why this
is an improvement, note that $k$ can be very large, e.g., $\Omega(n^ { \lceil \ell / 2 \rceil })$ for $Q_{P\ell}$~\cite{AGM}.
In the other extreme, for star queries $Q_{S\ell}(x_1, \ldots, x_{2 \stages - 2}) \datarule R_1(x_1, \ldots, x_{\ell-1}), R_2(x_1, x_{\ell}), \ldots, R_\ell(x_{1}, x_{2 \stages - 2})$,
we have
$\diam(Q_{S\ell}) = 3$
and hence we obtain
$\TT(k) = \O(n \ell \alpha + k (\ell \log n + \ell \alpha))$,
i.e., an additional factor $\ell$ in the logarithm term.
We note that our upper bound is pessimistic, since
a high diameter is a sufficient but not necessary condition for our algorithm to reduce the $k \log k$ term in the running time.
This ultimately depends on the height of the constructed join tree.
For example, the arity-2 version of the star queries 
$Q_{2S\ell}(x_0, \ldots, x_\ell) \datarule E(x_0, x_1), E(x_0, x_2), \ldots, E(x_0, x_{\stages})$ 
have $\diam(Q_{2S\ell}) = 2$, however 
they admit a join tree that is a path
(similarly to the path CQs $Q_{P\ell}$) and, as a result,
admit the same running time as $Q_{P\ell}$.

As a side benefit, our algorithm can also be used for ranked enumeration of paths in a directed acyclic graph and thus applies to
a wide class of DP problems.

\begin{theorem}[Any-$k$ DP]
\label{theorem:paths}
In a weighted DAG $G$ with $\nodenum$ nodes and maximum path length $\ell$,
explicit ranked enumeration with an algebraic and \ssmonotone ranking function
is possible with $\TT(k) = \O(|G| + k(\log N + \ell))$
and $\MEM(k) = \O(|G| + k \ell)$.\footnote{Our algorithm actually achieves $\O(|G| + k(\log(\min \{ k, n \}) + \ell)$ but asymptotically,
this is the same as the bound of the theorem as we explain in \Cref{sec:complexity}.}
\end{theorem}
\begin{proofoutline}
The bounds are achieved by the \ANYKPARTP algorithm and its analysis in \Cref{sec:complexity}.
The correctness for ranking functions beyond sum is established in \Cref{sec:rankingfctgeneralizing}.
\end{proofoutline}

Our asymptotic time complexity
dominates the previously best-known algorithm due to Eppstein~\cite{eppstein1998finding}
with $\TT(k) = \O(|G| + k(\log k + \ell))$.
We strictly improve over this since $k$ can be as high as $2^{n-2}$ in a DAG
and for $k < n$ the first term dominates the running time.

\section{Path-Structured CQs and Dynamic Programming (DP)}
\label{sec:paths_and_dp}

We start with the subset of full path-structured CQs.
Note that this does not include only the CQs $Q_{P\ell}$ (\Cref{ex:path_cycle}),
but also any CQ whose join tree is a path (i.e., every node in the join tree has at most one child).
We show that the problem of computing
the single, top-ranked query answer is solvable by Dynamic Programming.
As a consequence, the weighted query answers can be represented as weighted paths in a DAG,
and their ranked enumeration (\cref{def:ranked_enum_cqs}) can be achieved by 
ranked enumeration of paths in that DAG (\cref{def:ranked_enum_paths}).

\subsection{Dynamic Programming Formulation}
\label{sec:dp}

We now describe a framework for Dynamic Programming (DP) as a DAG that captures a wide range of problems~\cite{bertsekas05dp,dpv08book}.
Our ranked enumeration approach described in the next section
applies to any problem expressible in this framework.

\introparagraph{DP as a DAG}
The DAG $G(V, E)$ of a DP problem
captures the dependencies between
different subproblems.
The nodes $V$ represent \emph{states}, which contain
local information for decision making~\cite{bertsekas05dp}.
Among them, there is a \emph{source} state $s$ and 
a \emph{terminal} or \emph{target} state $t$.
In each state $v \in V$, we have to make a {\em decision}
that leads to another state $v'$. 
These decisions are encoded as edges $E$.
Each decision $(v, v')$ is associated with a \emph{weight} (or cost) $\weight(v, v')$,
thus the DAG is weighted on its edges.
\footnote{``Cost'' is 
more common in optimization problems, ``weight'' in shortest-path problems.
We use the latter throughout the paper.
}
A \emph{solution} $\sol = \langle v_1, \ldots, v_\lambda \rangle$ 
to the DP problem is a sequence of $\lambda$ states that together with
$s$ and $t$
form an $s-t$ path in the graph,
i.e., $v_0 = s, v_{\lambda+1} = t$ and $(v_{i}, v_{i+1}) \in E$,
$\forall i \in [\lambda+1]_0$.
Notice that we do not include $s$ and $t$ when we write a solution, since they appear in all solutions.
The \emph{objective function} is the total cost of a solution,
\begin{align}
	\weight(\sol) = \aggrsum_{\sgiter=0}^{\lambda} \weight(v_{\sgiter}, v_{\sgiter+1}), \label{eq:costDP}
\end{align}
and DP finds the minimum-cost solution $\sol_1$. 
This corresponds to a shortest $s-t$ path in the DAG.\footnote{We use the term ``shortest''
for the path that has the smallest weight, regardless of the number of edges.
}
In our notation, the index denotes the rank,
i.e., $\sol_k$ is the $k$-th best solution.

By \emph{serial DP} we refer to the special case of a multi-stage graph: 
The states are partitioned into $\stages$ \emph{stages} and every decision 
from a state of stage $S_i$ can only lead to a state of stage $S_{i+1}$.
As a result, all solutions have the same size $\ell$. 
This is precisely the case for path CQs, as we shall see next.

\introparagraph{Principle of optimality}~\cite{bellman1954}
The core property of DP is that a solution can be efficiently derived from
solutions to subproblems.
In the shortest-path view of DP, the subproblem in \emph{any} state $v \in V$
is the problem of finding the shortest path from $v$ to $t$. 
With $\sol_1(v)$ and $\solW_1(v)=w(\sol_1(v))$
denoting a
shortest path from $v$ and its weight respectively, 
the DP algorithm computes the minimum of the objective function as follows:
\begin{equation}
\begin{aligned}
    \!\!\!\!\!\!\!\solW_1(t) &= 0 \textrm{ for the target $t$} \\
    \!\!\!\!\!\!\!\solW_1(v) &= 
		\!\!\min_{(v, v') \in \Dec} 
		\{\weight(v, v') \aggr \solW_1(v') \},
		\textrm{ for }
		v \in V \setminus \{t\} .
		\!\!
    \label{eq:DP_recursion}
\end{aligned}
\end{equation}
The cost of the optimal DP solution is then $\solW_1(s)$, i.e., the weight of the shortest path
from $s$ to $t$.
For convenience, we define the set of paths compared in \cref{eq:DP_recursion} 
(i.e., those that start at $v$, continue to a neighbor of $v$, and then reach $t$ optimally)
as
$\Choices_1(v)$.
With $\concat$ denoting concatenation, 
$\Choices_1(v) = \{v \concat \sol_1(v') \ |\ (v, v') \in \Dec \}$.

\introparagraph{DP algorithm}
\cref{eq:DP_recursion} can be computed for all states in time $\O(|\Sset| + |\Dec|)$ ``bottom-up'',
i.e., in a reverse topological sort of the DAG.
To compute $\Choices_1(v)$ for state $v \in V$,
the algorithm retrieves all decisions $(v, v') \in \Dec$
from $v$ to any state $v' \in V$, looks up $\solW_1(v')$,
and keeps track of the minimum total weight $\weight(v, v') \aggr \solW_1(v')$.
If no such edge is found, then the weight is set to $\infty$.
When computing $\solW_1(v)$, the algorithm also adds pointers to keep track of optimal solutions.
In this way, the
corresponding paths can be reconstructed by tracing the pointers back ``top-down'' from 
$s$~\cite{bertsekas05dp}. 

Whenever the bottom-up phase encounters a state $v$ without outgoing edges (i.e., $\Choices_1(v) = \emptyset$), 
then $v$ and all its adjacent edges can be
removed without affecting the solution space. 
If we apply this to all states, there will be no ``dead ends'' in the graph;
every node or edge reachable from $s$ will be part of some solution.
We use $\SsetR \subseteq V$ and $\DecR  \subseteq \Dec$ to denote the
\emph{sets of remaining states and decisions}, respectively. 
Note that the DP algorithm we describe corresponds to variable elimination~\cite{DBLP:journals/ai/Dechter99}
with the \emph{tropical semiring}~\cite{pin98tropical}
and
the removal of states and decisions is the same 
as the \emph{semi-join reductions by Yannakakis}~\cite{DBLP:conf/vldb/Yannakakis81}.

\begin{figure*}[tb]
\centering
\begin{subfigure}[t]{.35\linewidth}
    \centering
    \includegraphics[height=4.5cm]{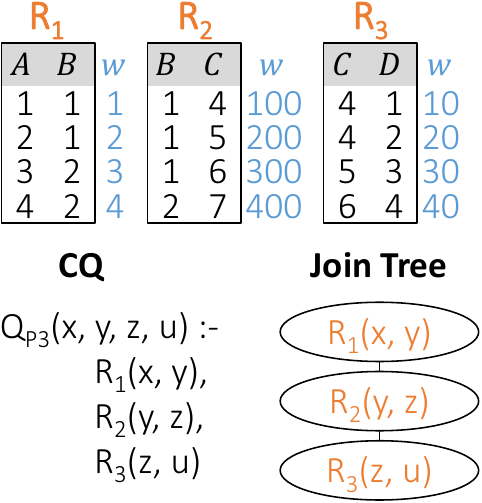}
    \caption{An example query $Q_{P3}$ together with an example database and a path-structured join tree.}
    \label{fig:tables_join_tree}
    \end{subfigure}%
\hfill
\begin{subfigure}[t]{.6\linewidth}
    \centering
    \includegraphics[height=4.5cm]{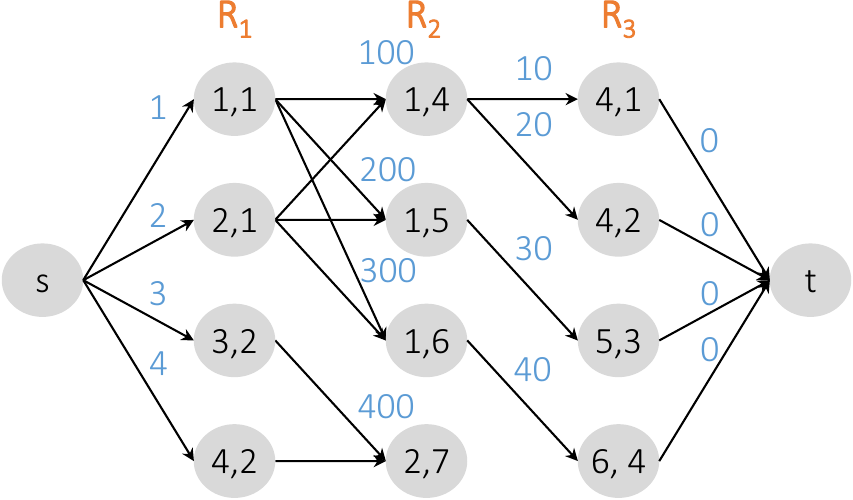}
    \caption{The DP instance represented as a weighted DAG. 
	Nodes represent tuples and edges joining pairs. 
	The edges are weighted according to the weight of the target tuple.}
    \label{fig:dp_naive}
\end{subfigure}
\caption{Constructing a DP instance from an input query and database.}
\label{fig:dp_from_tables}
\end{figure*}

\subsection{DP Instance for Path CQs}
\label{sec:cq_to_dp}

Computing the top-ranked answer to a full path CQ $Q_{P}$
can be modeled as an instance of serial DP.
To create it, we use the structure of the join tree.
Assume that the join tree of $Q_P$ is a path with the node
at depth $i$ referencing the relation $R_i, i \in [\ell]$.
The stages $S_i, i \in [\ell+1]_0$ of the DP instance are as follows:
(1) atom $R_i, i \in [\ell]$
corresponds to stage $S_i$ and each
tuple in $R_i$ maps to a unique state in $S_i$,
(2) there is an edge
between $v \in S_i$ and $v' \in S_{i+1}$ iff the corresponding
input tuples join and the weight of the edge is the weight of the tuple corresponding to $v'$,
(3) there is an edge from $s$ to each state in $S_1$ whose weight is
the weight of the corresponding $R_1$-tuple, and
(4) each state in $S_\stages$ has an edge to $t$ of weight 0.
Clearly, there is a 1:1 correspondence between
paths from $s$ to $t$ and query answers.
Due to the commutativity of sum,
the weight of such a path is equal to the weight of the corresponding query answer,
regardless of the stage order imposed by the join tree. 
Consequently, the $k^\textrm{th}$-best 
query answer
corresponds to the $k^\textrm{th}$-shortest path in the DP instance.

\begin{example}[Mapping of path CQ to DP]\label{ex:dp_naive}
We use the problem of finding the minimum-weight answer to the 3-path $Q_{P3}$
as a running example. 
\Cref{fig:tables_join_tree} shows an example database and the weights assigned to each tuple.
The figure also shows a possible join tree that has a path structure with
$R_1$ as the root.
The corresponding DP instance is depicted in \Cref{fig:dp_naive}.
It has 5 stages: Three of them correspond to relations and encode input tuples,
while the remaining two correspond to the source and terminal nodes $s, t$.
Every edge encodes two joining tuples.
For example, node $( 1, 1 )$ is connected to $(1, 5)$ because $y$ is a common variable between $R_1(x, y)$ and $R_2(y, z)$ and both tuples assign $1$ to $y$.
The edges are weighted according to target-tuple weight,
e.g., edge $((1, 1), (1, 5))$ has weight $100$ because $w((1, 5))=100$.
The DP algorithm visits the stages right-to-left (i.e., from $R_3$ to $R_1$)
and computes the minimum weight path to target $t$
for every node.
For example, $\solW_1((1, 1)) = \min \{ 100 + 10, 200 + 30, 300 + 40 \} = 110$
where $10, 30, 40$ are $\solW_1((1, 4)), \solW_1((1, 5)), \solW_1((1, 6))$, respectively,
which have already been computed when we visit $(1, 1)$.
For the source node, we have $\solW_1(s) = 111$ which corresponds precisely to the minimum-weight query answer.
A top-down traversal from $s$ can retrieve the exact edges that were chosen for $\solW_1(s)$,
from which we reconstruct the witness of the top-1 query answer $((1, 1), (1,4), (4,1))$.
\end{example}

\begin{figure}[tb]
\centering
\includegraphics[height=4.2cm]{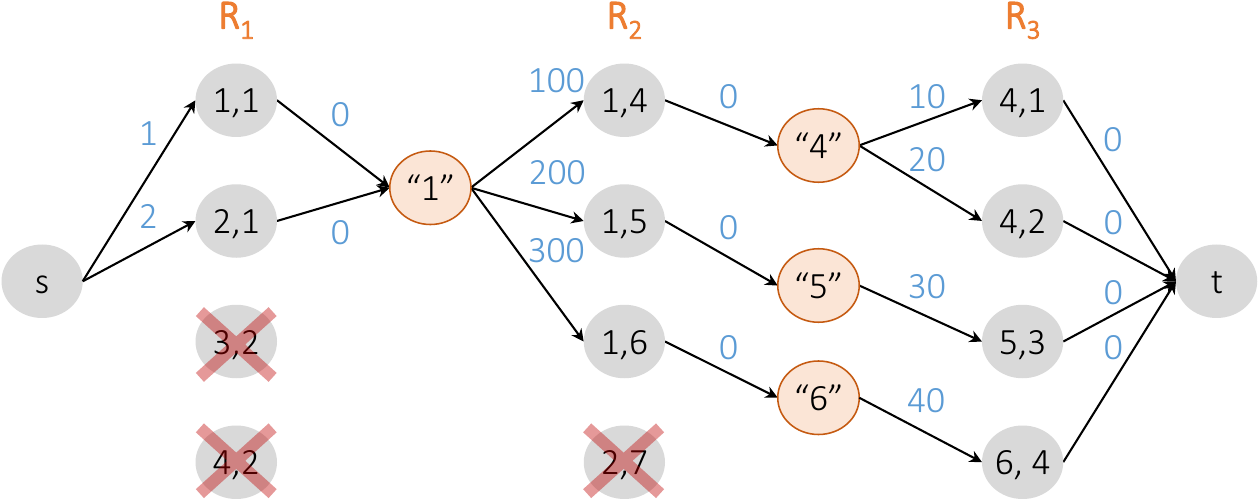}
\caption{\emph{Equi-join} representation from $\O(n^2)$ to $\O(n)$.
Intermediate nodes are introduced (in color), which correspond to values of the join attribute(s).
Additionally, nodes that have no outgoing edges are removed (equivalent to a semi-join reduction).}
\label{fig:equiJoinGraph}
\end{figure}

\introparagraph{Encoding equi-joins efficiently}
For an equi-join, the naive construction we describe above has $\O(\stages n)$ states and
$\bigO(\stages n^2)$ edges,
therefore the DP algorithm would have quadratic time complexity in the number of tuples.
This is due to values that are ``heavy hitters'',
i.e., occur with high frequency and result in a dense connection pattern
(e.g., value 1 between $R_1$ and $R_2$ in \cref{fig:dp_naive}).
We reduce the graph size to $\O(\stages n)$ with an 
\emph{equi-join-specific graph transformation} illustrated in \cref{fig:equiJoinGraph}.

Consider the join between $R_1$ and $R_2$, representing stages $\Sset_1$ and $\Sset_2$, respectively.
For each join-attribute value, the corresponding 
states in $R_1$ and $R_2$ 
form a fully connected bipartite graph. For each state,
all \emph{incoming} edges have the
same weight, as edge weight is determined by tuple weight. 
Therefore, we can represent the subgraph equivalently
with a single node ``in-between'' the matching states in $\Sset_1$ and $\Sset_2$,
assigning zero weight to edges starting from states in $\Sset_1$ and the target-tuple weight to those leading to states in $\Sset_2$.
The transformed representation has only $\O(\stages n)$ edges
and preserves all the connections that exist in the original graph.
Its total size is $\O(n \stages \alpha)$ since the states that correspond to input tuples need to store $\O(\alpha)$ attribute values.
At its core, our encoding relies on the conditional independence of the non-joining
attributes given the join attribute value, a property also exploited in
factorized databases \cite{olteanu16record}.
Here, we provide a different perspective on it as a graph transformation of bipartite cliques~\cite{feder91cliques}.

This efficient representation can be constructed directly from the input tables and the join tree in $\O(n \stages \alpha)$.
Similarly to the DP algorithm, we visit the input relations bottom-up, and for each stage,
we create the edges to the previous stage by using a lookup table (i.e., a hash table) on the join attribute(s).\footnote{If indexes already exist in the database, then they can be exploited here.
Note that using a B-tree or a sort-based method will incur an additional logarithmic factor in the analysis.
}
The time to construct each lookup table and then perform lookups for all tuples of a relation is $\O(n \alpha)$.
Nodes that have no outgoing edges can already be removed in this step.

\begin{example}[Equi-join Transformation]
\label{ex:equi-join}
The transformed DP instance for \cref{fig:dp_from_tables} is shown in \cref{fig:equiJoinGraph}.
The $6$ edges between $R_1$ and $R_2$ for join value $1$ are replaced by one intermediate node
and $2+3$ edges.
This reduces the number of edges from quadratic to linear.
Constructing the representation bottom-up, we first remove $(2, 7)$ because $7$ does not appear in the first column of $R_3$,
and as a consequence, we also remove $(3, 2)$ and $(4, 2)$ of $R_1$.
\end{example}

\section{Any-$k$ Algorithms for DP}
\label{sec:DPalgorithms}

We defined a class of DP problems that can be described in terms of a DAG,
where every solution is equivalent to a path from $s$ to $t$ in a ``reduced'' graph $(\SsetR, \DecR)$. 
In addition to the minimum-weight path, ranked enumeration retrieves \emph{all paths}
in weight order (\Cref{def:ranked_enum_paths}). Let $\sol_k(v)$ be the
$k^\textrm{th}$-shortest suffix
from node $v$ to $t$ and $\solW_k(v)$
its weight (i.e., $\solW_k(v) \define w(\sol_k(v))$). 
The goal is to return the sequence $\sol_1(s), \sol_2(s), \ldots$

First, we explore algorithms that follow two different approaches.
The first \emph{partitions} the solution space to find the next best solution 
and traces its roots to the works of Hoffman and Pavely~\cite{hoffman59shortest},
Lawler~\cite{lawler72}, and Murty~\cite{murty1968}.
We call it \ANYKPART and show that it has the lowest $\TT(k)$ for small $k$.
The second, which we call \ANYKREC, formulates the problem 
\emph{recursively}~\cite{bellman60kbest,dreyfus69shortest,jimenez99shortest},
and, as we show,
it has the lowest $\TT(k)$ for large $k$ on certain inputs.
Then we extend \ANYKPART with ideas from \ANYKREC to develop \ANYKPARTP,
a new algorithm that combines the respective advantages of both algorithms to give the best asymptotic $\TT(k)$ for any $k$.

\subsection{Partitioning-based Algorithm (ANYK-PART)}
\label{sec:part}

The \ANYKPART algorithm (1) relies on the \emph{Lawler-Murty procedure}~\cite{lawler72,murty1968},
which is a general ranked-enumeration approach
that applies to a wide range of optimization problems,
and (2) exploits the DP structure of our problem
using the concept of deviations.

\introparagraph{Deviations}
Given the source node $s$ and the target node $t$,
we can impose a deviation structure on the $s-t$ paths.

\begin{definition}[Deviation~\cite{hoffman59shortest}]
A deviation of a path $\sol$ is a path $\Pi'$
that follows the same edges as $\Pi$ from $s$ up to a node $v_i$,
then takes a different edge to a node $v_{i+1}'$,
and then the optimal path $\sol_1(v_{i+1}')$ to $t$.
The edge $(v_i, v_{i+1}')$ is called the deviating edge.
\end{definition}

Already in 1959, Hoffman and Pavely~\cite{hoffman59shortest} showed that every path
is a deviation of some shorter (or equally short) path.
Unfortunately, the precise statement in the original paper is inaccurate in the presence of ties between different paths.
We present the formal statement and its proof more precisely and for more general ranking functions:

\begin{lemma}
\label{lem:deviations}
For an \smonotone ranking function over the $s-t$ paths of a DAG and $k > 1$, 
there exists a valid ordering of the paths
such that 
the $k^{\textrm{th}}$-ranked path $\sol_k(s)$ is a deviation of some higher-ranked path $\sol_j(s)$
(i.e., $j < k$).
\end{lemma}
\begin{proof}
Let $\sol_k(s) = \langle s, v_1, v_2, \ldots, v_\lambda, t \rangle$.
Also, let $v_i$ be the last node of $\sol_k(s)$ with the property that the suffix 
$\sol(v_i) = \langle v_i, v_{i+1}, \ldots, v_\lambda, t \rangle$ 
is
not the same as $\sol_1(v_i)$.
Then, $\sol_k(s)$ is a deviation of $\sol_j(s) = \langle s, v_1, v_2, \ldots, v_i \rangle \concat \sol_1(v_i)$
because, from the way we picked $v_i$, the suffix $\langle v_{i+1}, \ldots, v_\lambda, t \rangle$ is necessarily optimal.
Since $\sol_1(v_i)$ is the optimal suffix from $v_i$, 
we have $w(\sol_1(v_i)) \preceq w(\sol(v_i))$ and by \smonotonicity, $w(\sol_j(s)) \preceq w(\sol_k(s))$.
If the inequality is strict, then $\sol_j(s)$ necessarily appears before $\sol_k(s)$ in the ranked order, i.e., $j < k$.

Otherwise, their weight is the same and either of them can be ranked higher.
For this case, we show that there exists a tie-breaking mechanism so that $\sol_j(s)$ is ranked higher than $\sol_k(s)$.
Let $\sol(v)$ and $\sol(v)'$ be two different equal-weight paths starting from some node $v$ and ending at $t$.
Also let $v'$ be their last common node starting from $v$.
If either $\sol(v)$ or $\sol(v)'$ coincides with the optimal suffix $\Pi_1(v)$, then it is preferred.
Otherwise, we can tie-break arbitrarily.
By this mechanism, $\sol_j(s)$ has to appear before $\sol_k(s)$ since $\sol_j(s)$ contains $\sol_1(v_i)$
and the two paths coincide up to $v_i$.
\end{proof}

Deviations provide an efficient way to explore the solution space, starting from the best 
path $\sol_1(s)$ which is already computed by DP.
\ANYKPART maintains a priority queue of candidates $\Cand$ 
initialized with $\sol_1(s)$. To produce the next-best path,
it pops from $\Cand$,
returns the path,
and pushes its deviations
back to $\Cand$.
For the correctness of this algorithm, note that every path can be constructed through deviations starting from $\sol_1(s)$
and \Cref{lem:deviations} ensures that the order in which
they are produced is correct.

\introparagraph{Avoiding duplicates}
A straightforward application of the idea described above that attempts
all possible deviations of every path produces duplicate candidates.
The Lawler-Murty procedure~\cite{lawler72,murty1968} 
provides a way to choose deviations in order to avoid duplicates.
For a path whose last deviating edge is $(v_i, v_{i+1})$, it does not produce new candidates for:
(1) deviating edges from $s$ up to $v_i$ and
(2) deviating edges to replace $(v_i, v_{i+1})$ that have already been produced with the same prefix.

\begin{figure}[tb]
\centering
\begin{minipage}[t]{.58\textwidth}
    \centering
    \includegraphics[height=2.4cm]{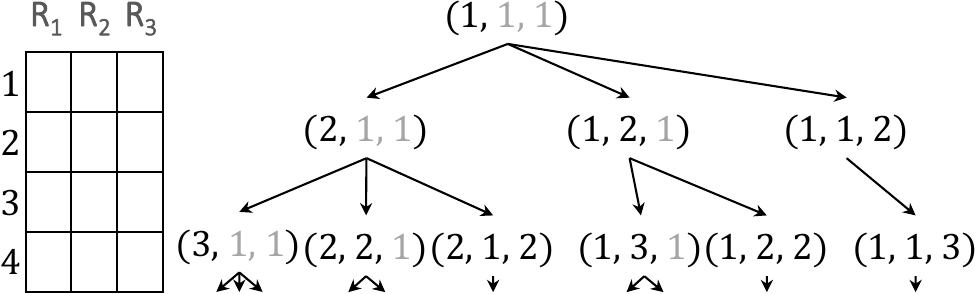}
    \caption{Ranked enumeration of Cartesian Product $R_1 \times R_2 \times R_3$ for three relations $R_1, R_2, R_3$. The numbers $1-4$ are the ranks of the input tuples within each relation.
    Those in gray are optimal suffixes and are not represented in our candidate encoding.}
    \label{fig:cartesian_product}
\end{minipage}%
\hfill
\begin{minipage}[t]{.39\textwidth}
    \centering
    \includegraphics[height=2.4cm]{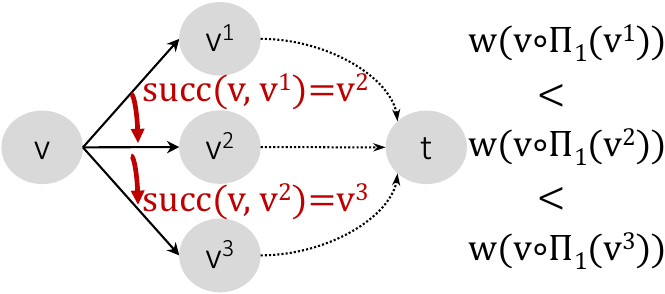}
    \caption{The successor function returns the next best child
    when comparing the optimal paths that reach $t$.}
    \label{fig:successor}
\end{minipage}
\end{figure}

\begin{example}[Cartesian Product]
\label{ex:cartesian}
\Cref{fig:cartesian_product} depicts how a Cartesian Product of three relations $R_1, R_2, R_3$ of size $4$
is explored according to the Lawler-Murty procedure.
Equivalently, this problem can be encoded in the DP framework as a 3-stage fully-connected graph together with a source and a target node (see \Cref{sec:dp}).
The figure shows the new candidates that we generate after returning each solution as its children in a tree.
Every solution is denoted by three numbers, indicating the rank of the input tuples in each relation.
For example, $(1, 2, 3)$ refers to the combination of the 
best $R_1$-tuple with the
$2^\textrm{nd}$-best $R_2$-tuple and the
$3^\textrm{rd}$-best $R_3$-tuple.
Clearly, $(1, 1, 1)$ is the best solution.
The candidates for the $2^\textrm{nd}$-best solution are its three deviations, one for each relation (or stage).

Notice that $(2, 2, 1)$ could potentially be generated by both $(1, 2, 1)$ and $(2, 1, 2)$.
To avoid this, the Lawler-Murty procedure produces deviations only in the third stage for $(2, 1, 2)$; this is because its last deviating edge is in the third stage (rule 1).
Further, notice that according to our definition, 
$(1, 1, 1)$ is a deviation of $(1, 2, 1)$
so we have to avoid producing it as a candidate again after returning $(1, 2, 1)$ (rule 2).
In our example, this is guaranteed by considering the input tuples in increasing order of their ranks.
We will formalize this with the notion of a successor.
\end{example}

\introparagraph{Encoding of candidates}
A convenient and efficient way to encode the candidate paths is to represent them
as prefixes up to the last deviating edge.
Under this convention, every prefix $\langle v_1, v_2, \ldots, v_i \rangle$ corresponds to the
path that we obtain if we expand it optimally up to the terminal node,
i.e., $\langle v_1, v_2, \ldots, v_i \rangle \concat \sol_1(v)$.
This expansion is performed whenever we pop a path from $\Cand$
so that the algorithm returns the path in an explicit form.
A benefit of this encoding is that it makes it easier to identify the
deviations that we need to generate in order to avoid duplicates;
we simply start from the last edge contained in the prefix (see \Cref{fig:cartesian_product}).

\introparagraph{The successor function}
In \Cref{ex:cartesian}, we assumed that the input relations were already sorted, which allowed us, for example, to generate only $(2, 1, 1)$ from $(1, 1, 1)$ in the first stage,
and not $(3, 1, 1)$ or $(4, 1, 1)$.
In general, when we deviate from an edge $(v_i, v_{i+1})$,
we want to generate only the best deviation that we have not yet considered.
The possible deviations we can pick are determined by the optimal suffixes of $\Choices_1(v_i)$.
Therefore, we assume a total order on $\Choices_1(v_i)$
that orders the children $v'$ of $v_i$
according to the weight of their optimal suffix $w(v_{i} \concat \sol_1(v'))$.
This order is given by a successor function $\suc(v_i, v_{i+1})$,
which returns the node $v_{i+1}'$ that is the next-best choice,
illustrated in \Cref{fig:successor}.
The successor function also helps us to apply rule (2) of the Lawler-Mutry procedure,
since we visit the deviations in successor order.

We are now in a position to describe the complete \ANYKPART approach,
as given in \Cref{alg:anyk-part}.
In each iteration, we pop a prefix $\langle v_1, \ldots, v_{i-1}, v_i \rangle$ from $\Cand$,
expand it optimally into $\langle v_1, \ldots, v_{i-1}, v_i, v_{i+1} \ldots, v_\lambda \rangle$,
and then generate the candidates: 
\begin{align*}
& \langle v_1, \ldots, v_{i-1}, \suc(v_{i-1}, v_i) \rangle \\
& \langle v_1, \ldots, v_{i-1}, v_i, \suc(v_i, v_{i+1}) \rangle \\
& \langle v_1, \ldots, v_{i-1}, v_i, v_{i+1}, \ldots, \suc(v_{\lambda-1}, v_\lambda) \rangle
\end{align*}

\begin{figure*}[tb]
\centering
\begin{subfigure}[t]{.55\linewidth}
    \centering
    \includegraphics[height=4cm]{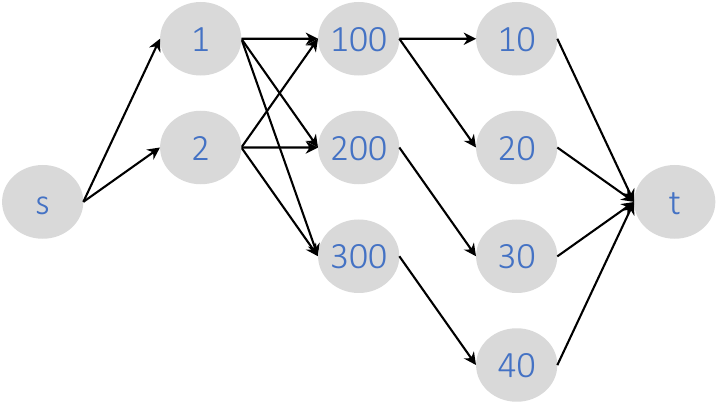}
    \caption{The graph of \Cref{fig:equiJoinGraph} simplified by removing the intermediate nodes and labeling the nodes by the weights of the incoming edges.}
    \label{fig:graph}
    \end{subfigure}%
\hfill
\begin{subfigure}[t]{.43\linewidth}
    \centering
    \includegraphics[height=4cm]{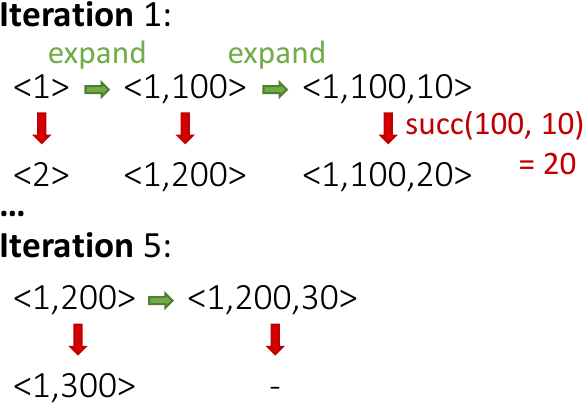}
    \caption{Two iterations of the algorithm. Source node $s$ and target node $t$ are not encoded in the solutions for simplicity.}
    \label{fig:iterations}
\end{subfigure}
\caption{Example iterations of \ANYKPART.}
\label{fig:part_example_run}
\end{figure*}

\begin{algorithm}[t]
\small
\SetAlgoLined
\LinesNumbered
\textbf{Input}: DP problem as a DAG $G(V, E)$ with source $s$ and target $t$\\
\textbf{Output}: solutions in increasing order of weight\\

Execute standard DP to produce for each node $v$: $\sol_1(v)$ and $\solW_1(v)$\;

\algocomment{Initialize candidates with top-1.}\;
\algocomment{The priority of a prefix $\langle v_1, \ldots, v_i \rangle$ in $\Cand$ is the weight of its optimal expansion $w(\langle s, v_1, \ldots, v_i \rangle) + \solW_1(v_i)$.}\;
$\Cand = [\langle v_1^* \rangle]$, where $\sol_1(s) = s \concat \sol_1(v_1^*)$\label{line:initialCand}\;

\Repeat {query is interrupted or $\Cand$ is empty}{\label{line:repeat}

    $\langle v_1, \ldots, v_i \rangle$ = $\Cand.\mathrm{popMin}()$\;

    \algocomment{Expand optimally into full solution.}\;
    $\langle v_1, \ldots, v_i, \ldots, v_\lambda \rangle$ = $\langle v_1, \ldots, v_i \rangle \concat \sol_1(v_i)$\;
    
    \algocomment{Generate new candidates by taking deviations.}\;
    \For {$j$ from $i$ to $\lambda$\label{line:for1}}{

        $\Cand.\mathrm{add}(\langle v_1, \ldots, v_i, \ldots, v_{j-1}, \suc(v_{j-1}, v_j) \rangle)$ \label{line:candidate_generation}\;
    }
    output solution $\langle v_1, \ldots, v_i, \ldots, v_\lambda \rangle$\label{line:output}\;
}
\caption{\ANYKPART}
\label{alg:anyk-part}
\end{algorithm}

\begin{example}[\ANYKPART on \Cref{ex:dp_naive}]
Consider again the DAG of \Cref{ex:dp_naive}.
For ease of presentation, \Cref{fig:graph} shows the same graph
without the equi-join transformation which introduces intermediate nodes (\Cref{ex:equi-join}) 
and with every node identified by the weight of its corresponding tuple.
$\Cand$ initially contains only one candidate $\langle 1 \rangle$ (\Cref{line:initialCand}).
It corresponds to the shortest path $\langle 1, 100, 10 \rangle$ found by DP.
This prefix is popped and expanded in the first iteration of the
repeat loop (\Cref{line:repeat}), leaving $\Cand$ empty for now.
As \Cref{fig:iterations} shows, the for loop in \Cref{line:for1} 
generates three new deviations as candidates.
These are $\langle 2 \rangle, \langle 1, 200 \rangle, \langle 1, 100, 20 \rangle$.
Each one is generated by applying the successor function; for example, we obtain $\langle 1, 200 \rangle$ from $\langle 1, 100 \rangle$
because $\suc(1, 100) = 200$ ($100 + 10 < 200 + 30 < 300 + 40$).
From the three new candidates, $\langle 2 \rangle$ is the one with the minimum weight
$2 + \solW_1(2) = 112$ and will be the one popped next from $\Cand$.
In the fifth iteration, $\langle 1, 200 \rangle$ will be popped. 
Following the Lawler-Murty procedure,
the for-loop (\Cref{line:for1}) for that prefix will be executed only two times,
generating only one new candidate since $30$ has no successor.
\end{example}

\introparagraph{Variants of \ANYKPART}
Different variants of \ANYKPART arise from different
implementations of the successor function.
The earlier version of our work~\cite{tziavelis20vldb} presented, analyzed, and compared
several variants
with ideas that were present in past work\footnote{Some of these variants require a more general definition of the successor 
function, allowing it to return a set of nodes instead of only one.
}:
\EAGER which uses sorting,
\MIN based on Yang et. al~\cite{yang2018any},
\LAZY based on Chang et. al~\cite{chang15enumeration},
and \HEAP as the asympotically best variant.
However, \HEAP is only better in terms of \emph{delay} instead of the more practically relevant $\TT(k)$ (see \Cref{sec:complexity_measures}).
A more careful analysis of $\TT(k)$ shows that \LAZY and \HEAP have the same complexity.
This is a consequence of \Cref{lem:complexity} that we prove in 
\Cref{sec:complexity}.

In the present article, we elect to put less emphasis on the differences
between these variants and defer a more detailed discussion to \Cref{sec:part_variants}.
In practice, we found a variant which we call \QUICK to be the best performer
and this is the one we show in our experimental evaluation
(\Cref{sec:experiments}).
\QUICK uses the Incremental Quicksort algorithm~\cite{paredes06iqs} to sort the children of
every node incrementally.
It is a randomized algorithm and, in expectation, it achieves the same $\TT(k)$ as \LAZY and \HEAP.

To simplify the analysis in \Cref{sec:complexity}, we assume the deterministic \LAZY variant.
After the bottom-up phase of DP, \LAZY constructs a binary heap in linear time for each node that contains one element per outgoing edge.
Successor calls are handled by popping from the heap and moving the popped element to a sorted list.
As the algorithm progresses, the heap gradually empties out, filling the sorted list, allowing
subsequent successor calls to be handled in constant time.

\subsection{Recursive-based Algorithm (ANYK-REC)}
\label{sec:rec}

\begin{figure}
\centering
\begin{subfigure}{.47\textwidth}
    \centering
    \includegraphics[height=4.8cm]{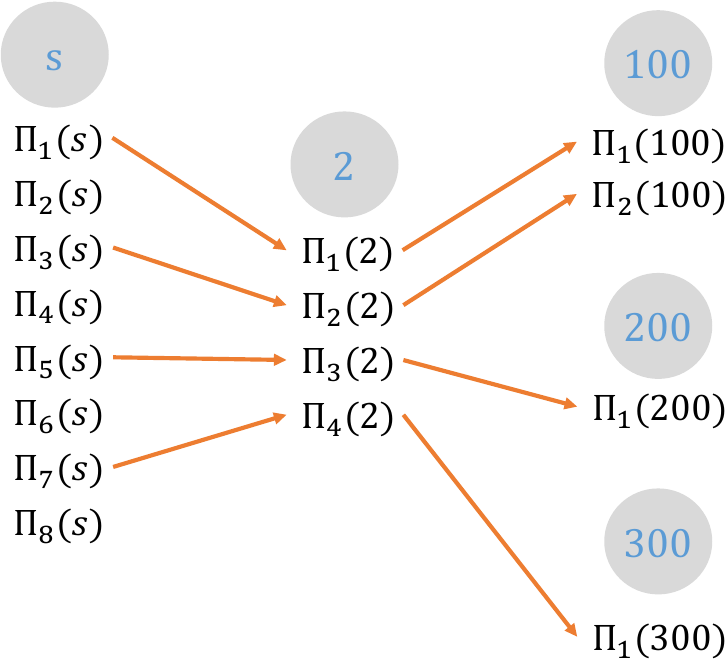}
    \caption{Pointers between solutions from and to node $2$.}
    \label{fig:rec_pointers}
\end{subfigure}%
\begin{subfigure}{.53\textwidth}
    \centering
    \includegraphics[height=4.8cm]{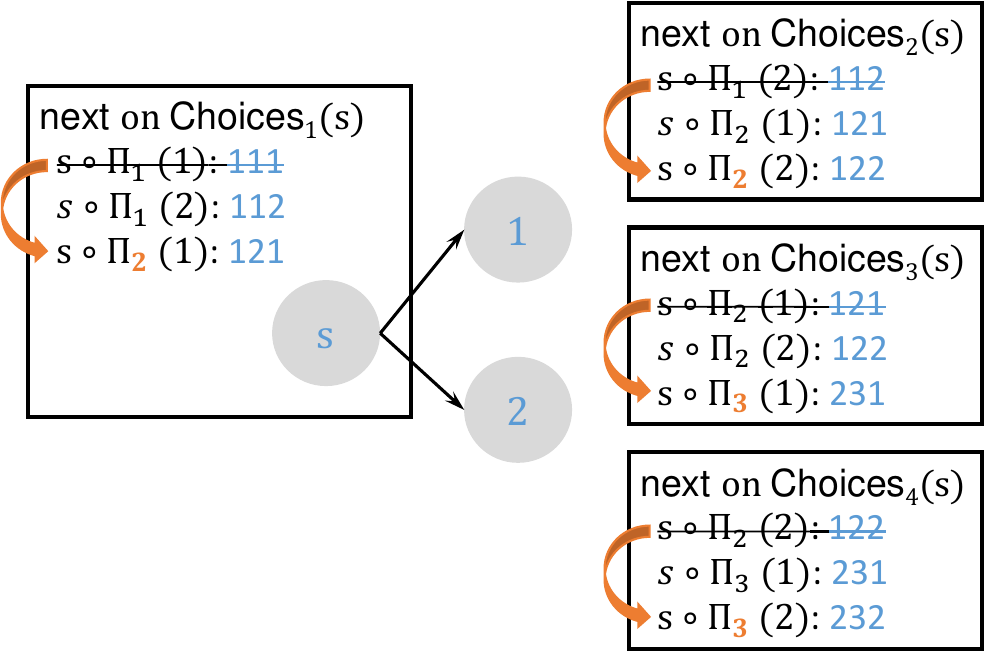}
    \caption{Recursive enumeration 
    of paths starting
    at node $2$.}
    \label{fig:rec_next}
\end{subfigure}
\caption{\Cref{ex:RE-DP}: Recursive enumeration}
\label{fig:REA}
\end{figure}

\begin{algorithm}[t]
\small
\SetAlgoLined
\LinesNumbered
\SetKwFunction{RecFun}{next}
\SetKwProg{Fn}{Function}{:}{}
\textbf{Input}: DP problem as a DAG $G(V, E)$ with source $s$ and target $t$\\
\textbf{Output}: solutions in increasing order of weight\\

Execute standard DP to produce for each state $s$: $\sol_1(s)$, $\solW_1(s)$, and $\Choices_1(s)$\;

\algocomment{Assume that $\sol_0(v)$ for a node $v$ is a special value used for initialization.}\;
$k = 0$ \;
\Repeat {query is interrupted or $\Choices_k(s)$ is empty}
{
    output solution $\sol_{k+1}(s) = \mathrm{\RecFun}(\sol_k(s))$\;
    $k = k + 1$\;
}
\;

\algocomment{Returns the next best solution starting from $v$.}\;
\Fn{\RecFun{$\sol_{k}(v)$}}{
    \algocomment{Base case: Terminal. Assume that null values are ignored throughout.}\;
    \If {$v == t$}{
        \KwRet null\;
    }
    \algocomment{If $\sol_{k + 1}(v)$ has been computed by some previous call, it has been stored at node $v$.}\;
    \If {$\sol_{k + 1}(v)$ is not in the sorted list of $v$ or $k == 0$}
    {
        $\sol_{k + 1}(v) = \Choices_{k}(v)$.popMin()\;
        Store $\sol_{k + 1}(v)$ in the sorted list of $v$\;
        \algocomment{Replace the popped path with the next one that goes to the same node, computed recursively.}\;
        Let $\sol_{k + 1}(v) = v \concat \sol_{k'}(v')$\;
        $\Choices_{k+1}(v) = \Choices_{k}(v)$.add($v \concat \mathrm{\RecFun}(\sol_{k'}(v'))$) \label{rec_line:insert}\;        
    }
    \KwRet $\sol_{k + 1}(v)$\;
}
\caption{\ANYKREC}
\label{alg:rec}
\end{algorithm}

\introparagraph{Generalized principle of optimality}
\ANYKREC relies on a \emph{generalizaton of the principle of
optimality} of DP, which in its usual form states that there exists an optimal path consisting of optimal suffixes.
The generalization gives a similar property to the $k^\textrm{th}$ path in the ranking.
To our knowledge, the first algorithm that uses this principle is due to Bellman and Kalaba~\cite{bellman60kbest},
while the principle has been stated more explicitly in later work by Martins et al.~\cite{martins2001opt}.

\begin{lemma}[Generalized Principle of Optimality]
\label{lem:generalizedopt}
For an \smonotone ranking function over the $s-t$ paths of a DAG $G(V, E)$ and $k \geq 1$, 
there exists a valid ordering of the paths such that for every node $v$, $\Pi_k(v)$ is equal to $v \concat \sol_{k'}(v')$ 
for some $k' \leq k$ and $(v, v') \in E$.
\end{lemma}
\begin{proof}
Let $v'$ be the node after $v$ in $\Pi_k(v)$.
Suppose that $\Pi_k(v)$ is $v \concat \Pi_i(v')$ for some $i > k$.
Consequently, there has to exist an index $j \in [k]$ such that
the path $v \concat \Pi_j(v')$ is not in the top $k$ suffixes from $v$. 
Since $\Pi_j(v')$ precedes $\Pi_i(v')$ in the ranking of suffixes from $v'$,
we have $w(\Pi_j(v')) \preceq w(\Pi_i(v'))$ and by \smonotonicity,
$w(v \concat \Pi_j(v')) \preceq w(v \concat \Pi_i(v'))$.
If the last inequality is strict, then we have a contradiction, since $v \concat \Pi_j(v')$ should have been in the top $k$ paths from $v$.
If their weights are equal, then we can swap them in the ordered list
(producing a different valid ordering)
so that $\Pi_k(v)$ is equal to $v \concat \sol_{k'}(v')$.
\end{proof}

For the shortest path (i.e., $k = 1$), we have $k' = 1$, which is the principle of optimality that
we saw in \Cref{sec:dp}.
The generalized principle of optimality gives rise to an algorithm that computes lower-ranked suffixes for all nodes,
not just the source node $s$.
In particular, to compute $\sol_k(s)$, it is \emph{sufficient} to compute $\sol_k(v)$ for every node $v$~\cite{bellman60kbest}. 
However, it is not \emph{necessary}~\cite{dreyfus69shortest} since fewer than $k$ suffixes might be needed for some of the nodes.
\ANYKREC computes only those that are needed using
a recursive call structure proposed by Jim{\'e}nez et al.~\cite{jimenez99shortest}.

\introparagraph{Recursive calls}
Recall that the shortest path $\sol_1(s)$ from the source node $s$
is found as the minimum-weight path in $\Choices_1(s)$. 
Assume it goes 
through $v'$. 
Through which node does the $2^\textrm{nd}$-shortest path $\sol_2(s)$ go? 
It must be either the $2^\textrm{nd}$-shortest path through $v$, of weight $\weight(s, v) + \solW_2(v)$, or the shortest
path through any of the other nodes adjacent to $s$. 
In general, the $k^\textrm{th}$-shortest path $\sol_k(v)$ from any node $v$
is determined as the minimum-weight path in some later version of the set of choices
$\Choices_k(v) = \{v' \concat \sol_{k'}(v') \ |\ (v, v') \in \DecR\}$,
for appropriate values of $k'$.
Let $\sol_k(v) = v \concat \sol_{k'}(v')$. 
Then the $(k+1)^\textrm{st}$
solution $\sol_{k+1}(v)$ is found
as the minimum over the same set of choices as in iteration $k$, except that
$v \concat \sol_{k'+1}(v')$ replaces
$v \concat \sol_{k'}(v')$.
To find $\sol_{k'+1}(v')$, the same procedure is applied recursively 
at $v'$ \emph{top-down}. Intuitively, an iterator-style \texttt{next} call at source node $s$
triggers a chain of such \texttt{next} calls along the path that was found
in the previous iteration.

\introparagraph{Data structures}
The sets of choices $\Choices_k(v)$ of every node $v$ is implemented with a \emph{priority queue}
so that we can find the minimum-weight choice efficiently.
Additionally, every node maintains a \emph{sorted list} of suffixes.
As the lower-ranked suffixes are computed, they are added to the sorted list so that they can be reused
(when the same node is visited via a different prefix).
This type of \emph{memoization} (in addition to that of standard DP) makes the algorithm faster as the value of $k$ increases.
\Cref{alg:rec} contains the detailed pseudocode.

\begin{example}[\ANYKREC on \Cref{ex:dp_naive}]
\label{ex:RE-DP}
Consider source node $s$ in \cref{fig:graph}.
Since its children are $1$ and $2$, 
the shortest path $\sol_1(s)$ is selected from
$\Choices_1(s) = \{s \concat \sol_1(1), s \concat \sol_1(2)\}$.
The first \texttt{next} call on node $s$ returns
$s \concat \sol_1(1)$, updating the set of choices for $\sol_2(s)$ to
$\{s \concat \sol_2(1), s \concat \sol_1(2) \}$ as shown in the left box in~\cref{fig:rec_next}. 
The subsequent \texttt{next} call on $s$ then returns
$s \concat \sol_1(2)$ for $\sol_2(s)$, causing $s \concat \sol_1(2)$ in
$\Choices_2(s)$
to be replaced by $s \concat \sol_2(2)$
for $\Choices_3(s)$; and so on.
The paths $\sol_2(1)$, $\sol_2(2), \ldots$ are themselves constructed in a similar way
with pointers to paths that start at nodes $100$, $200$, and $300$, as shown in \cref{fig:rec_pointers}.
\end{example}

\begin{algorithm}[t]
\small
\SetAlgoLined
\LinesNumbered
\SetKwFunction{FollowFun}{follow}
\SetKwFunction{StoreFun}{store}
\SetKwProg{Fn}{Procedure}{:}{}
\textbf{Input}: DP problem as a DAG $G(V, E)$ with source $s$ and target $t$\\
\textbf{Output}: solutions in increasing order of weight\\

Execute standard DP to produce for each node $v$: $\sol_1(v)$ and $\solW_1(v)$\;

\algocomment{Initialize candidates with top-1.}\;
\algocomment{The priority of a prefix $\langle v_1, \ldots, v_i \rangle$ in $\Cand$ is the weight of its optimal expansion $w(\langle s, v_1, \ldots, v_i \rangle) + \solW_1(v_i)$.}\;
$\Cand = [\langle v_1^* \rangle]$ where $\sol_1(s) = s \concat \sol_1(v_1^*)$\;

\Repeat {query is interrupted or $\Cand$ is empty}{

    $current$ = $\Cand.\mathrm{popMin}()$\;
    
    \If{current = $\langle v_1, \ldots, v_i \rangle$ $\textsf{\upshape (not a follower)}$}
    {
    
        \algocomment{Expand optimally into full solution.}\;
        $\langle v_1, \ldots, v_i, \ldots, v_\lambda \rangle$ = $\langle v_1, \ldots, v_i \rangle \concat \sol_1(v_i)$\;
        
        \algocomment{Find first non-leading prefix.}\;
        $m = \min \{j \;|\; j \in [i, \lambda] \wedge \SortedSuf(v_j) \; \textrm{initialized} \}$ or $\lambda$ if none \label{line:leading_pref}\;
        
        \algocomment{Generate new candidates by taking deviations until prefix is non-leading.}\;
        \For {$j$ from $i$ to $m$}
        {
            $\Cand.\mathrm{add}(\langle v_1, \ldots, v_i, \ldots, v_{j-1}, \suc(v_{j-1}, v_j) \rangle)$ \label{line:successor_partp}\;
        }
        
        \algocomment{For the rest, find the $2^\textrm{nd}$ best suffix from the leading prefix.}\;
        \If{$m < \lambda$}
        {
            $\mathrm{\FollowFun}(\langle v_1, \ldots, v_m \rangle, 2)$ \label{line:follower}\;
        }

        \algocomment{Update sorted lists.}\;
        \For {$j$ from $1$ to $m-1$ \label{line:store_loop}}
        {
            $\mathrm{\StoreFun}(\langle v_1, \ldots, v_j \rangle, \langle v_j, \ldots, v_\lambda \rangle)$ \label{line:store}\;   
        }
    }
    \ElseIf{current $\textsf{\upshape is a follower}$ $\langle v_1, \ldots, v_m \rangle \concat \langle v_m, \ldots, v_\lambda \rangle$ $\textsf{\upshape with annotation}$ $\{k'\}$}
    {
        \algocomment{Keep following the leading prefix.}\;
        $\mathrm{\FollowFun}(\langle v_1, \ldots, v_m \rangle, k' + 1)$\;
        
        \algocomment{Update sorted lists.}\;
        \For {$j$ from $1$ to $m-1$}
        {
            $\mathrm{\StoreFun}(\langle v_1, \ldots, v_j \rangle, \langle v_j, \ldots, v_\lambda \rangle)$ \label{line:store_suf}\;   
        }
    }
    output solution $\langle v_1, \ldots, v_i, \ldots, v_\lambda \rangle$\;
}
\;

\Fn{\FollowFun{$\langle v_1, \ldots, v_m \rangle$, $k$}}
{
    \If{$\sol_{k}(v_m) \notin \SortedSuf(v_m)$}
    {
        Add $(\langle v_1, \ldots, v_m \rangle, k)$ to subscribers of $\SortedSuf(v_m)$ \label{line:sub}\; 
    }
    \Else
    {
        $\Cand.\mathrm{add}$($\langle v_1, \ldots, v_m \rangle \concat \sol_{k}(v_m)$ with annotation $\{k\}$)\;
    }
}

\Fn{\StoreFun{$\langle v_1, \ldots, v_i \rangle$, $\langle v_i, \ldots, v_\lambda \rangle$}}
{
        $\SortedSuf(v_i).\mathrm{append}(\langle v_i, \ldots, v_\lambda \rangle)$ \label{line:sorted_list_append}\;
        \For {Subscriber $(\langle v_1', \ldots, v_i' \rangle, k)$ of $\SortedSuf(v_i)$}
        {
            Unsubscribe and $\Cand.\mathrm{add}$($\langle v_1', \ldots, v_i', \rangle \concat \langle v_i, \ldots, v_\lambda \rangle$ with annotation $\{k\}$) \label{line:unsubscribe}\;
        }
}

\caption{\ANYKPARTP}
\label{alg:anyk-part-plus}
\end{algorithm}

\subsection{ANYK-PART with Memoization (ANYK-PART+)}
\label{sec:part_plus}

\ANYKREC stores and reuses the order of suffixes $\sol_k(v)$ from every node $v$ as they are computed,
and, as a result, it can be faster for large values of $k$, as we show in \Cref{sec:complexity}.
For \ssmonotone ranking functions, we now propose a new algorithm \ANYKPARTP that is based on \ANYKPART, but uses the same type of memoization as \ANYKREC
and thus achieves the best of both worlds.
\ANYKPARTP achieves the bounds of \Cref{theorem:paths}.

\introparagraph{Reusing the suffix order}
The benefit of storing the sorted suffixes $\sol_1(v), \sol_2(v), \sol_3(v)$, etc.
for a node $v$ is that they can be reused for different prefixes that end at $v$.
For example, suppose that $i < j$ for two suffixes $\sol_i(v), \sol_j(v)$ (i.e., $\solW_i(v) \leq \solW_j(v)$).
Then, by \smonotonicity, we can infer that $w(p \concat \sol_i(v)) \leq w(p \concat \sol_j(v))$ for all prefixes $p$ that end at $v$.
We leverage this to reduce the comparisons among candidate solutions in the priority queue $\Cand$ of $\ANYKPART$.

\introparagraph{Leading prefix and followers}
The idea behind \ANYKPARTP is to let only the first prefix that visits a node $v_m$ explore the solution space as in \ANYKPART. 
The suffix order discovered by that prefix will be stored in
a sorted list $\SortedSuf(v_m)$ and reused by all other prefixes that reach $v_m$.

\begin{definition}[Leading Prefix]
    The prefix $\langle v_1, \ldots, v_m \rangle$ is called the leading prefix of $v_m$ if $\langle v_1, \ldots, v_m, \ldots, v_\lambda \rangle$ is the first solution returned by \ANYKPART (or \ANYKPARTP) that contains $v_m$.
\end{definition}

Recall that after popping a prefix $\langle v_1, \ldots, v_i \rangle$ 
from $\Cand$, \ANYKPART first expands it optimally into 
$\langle v_1, \ldots, v_i, \ldots v_\lambda \rangle$
and then generates a deviation at every step
$v_i, \ldots v_\lambda$.
In \ANYKPARTP, we again traverse the nodes $v_i, \ldots v_\lambda$ in order to find the first non-leading prefix $\langle v_1, \ldots, v_i, \ldots, v_m \rangle$.
To detect it, we can simply check whether the data structure $\SortedSuf(v_m)$ has already been initialized;
if it is, then this means that node $v_m$ has been visited before by some other prefix.
For all prefixes $\langle v_1, \ldots, v_j \rangle, j \in [i, m]$, we create deviations using the successor function as in \ANYKPART.
For the rest of the prefixes
$\langle v_1, \ldots, v_j \rangle, j \in [m+1, \lambda]$,
we do not create any deviations. 
Instead, we use the sorted list $\SortedSuf(v_m)$ to directly find the next-best suffix and create a single new candidate for that subspace.
We will refer to the solutions generated by appending a suffix from $\SortedSuf(v_m)$
to a non-leading prefix as a \emph{follower}.
A follower consists of a prefix and a suffix, together with an annotation $\{k'\}$ that specifies the rank of the suffix.
Whenever we pop a follower $p \concat \Pi_{k'}(v)$ with annotation $\{k'\}$ from $\Cand$,
we simply replace it with a new candidate $p \concat \Pi_{k'+1}(v)$.

The following property is helpful in order to better understand the behavior of the algorithm:

\begin{observation}
\label{obs:leading_pref}
For all iterations of \ANYKPARTP except the first one, 
when we pop from $\Cand$ a prefix $\langle v_1, \ldots, v_{i-1}, v_i \rangle$ that is not a follower,
we have that $\langle v_1, \ldots, v_{i-1} \rangle$ is a leading prefix.
\end{observation}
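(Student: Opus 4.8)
The plan is to establish, by induction on the iterations of \ANYKPARTP, a loop invariant that is slightly stronger than \Cref{obs:leading_pref} and implies it at once: \emph{at the start of every iteration, every non-follower candidate $\langle v_1,\ldots,v_i\rangle$ currently in $\Cand$ has the property that $\langle v_1,\ldots,v_{i-1}\rangle$ (the prefix with its last node removed) is a leading prefix}, where by convention the empty sequence is taken to be the leading prefix of $s$. Two elementary facts will drive the argument. First, leading-prefix status is \emph{monotone}: a node's leading prefix is the first prefix whose expansion reaches it and initializes its sorted list $\SortedSuf(\cdot)$, and this is never reassigned; hence if a candidate satisfies the invariant when it is \emph{created} it still satisfies it when it is \emph{popped}. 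Second, whenever a non-follower prefix is popped and its full expansion $\langle v_1,\ldots,v_i,\ldots,v_\lambda\rangle$ is traversed, the first index $m\ge i$ at which $\SortedSuf(v_m)$ is already initialized is, by definition of $m$, preceded only by nodes $v_i,\ldots,v_{m-1}$ that have \emph{not} been visited before, so the traversal turns each of $\langle v_1,\ldots,v_i\rangle,\ldots,\langle v_1,\ldots,v_{m-1}\rangle$ into the leading prefix of $v_i,\ldots,v_{m-1}$ respectively.

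For the base case I would inspect the first iteration, which pops $\langle v_1^*\rangle$, expands it into $\sol_1(s)=\langle v_1^*,\ldots,v_\lambda^*\rangle$, and --- since nothing has been visited yet --- makes every $\langle v_1^*,\ldots,v_j^*\rangle$ a leading prefix while emitting the deviation candidates $\langle v_1^*,\ldots,v_{j-1}^*,\suc(v_{j-1}^*,v_j^*)\rangle$. The prefix with last node removed of each of these is $\langle v_1^*,\ldots,v_{j-1}^*\rangle$ (the empty sequence when $j=1$), which has just been recorded as a leading prefix, and no further non-follower is produced; so the invariant holds entering the second iteration.

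For the inductive step I would split on the type of the candidate popped in iteration $t\ge 2$. If it is a follower, it is replaced only by another follower, so no new non-follower appears and the invariant is trivially preserved. If it is a non-follower $\langle v_1,\ldots,v_i\rangle$, the invariant applied to it gives that $\langle v_1,\ldots,v_{i-1}\rangle$ is a leading prefix --- which is exactly \Cref{obs:leading_pref} for this pop --- and it remains to verify the invariant for the candidates newly inserted into $\Cand$. These are the deviations $\langle v_1,\ldots,v_{j-1},\suc(v_{j-1},v_j)\rangle$ for $j\in[i,m]$ (plus an accompanying follower built from $\SortedSuf(v_m)$, which is irrelevant here). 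By the second elementary fact above, $\langle v_1,\ldots,v_{j-1}\rangle$ is a leading prefix for every $j-1\in\{i,\ldots,m-1\}$, and it is also a leading prefix for $j-1=i-1$ by the invariant hypothesis; since the prefix with last node removed of the $j$-th emitted candidate is precisely $\langle v_1,\ldots,v_{j-1}\rangle$, the invariant is restored. The degenerate case $m=i$ emits only the $j=i$ candidate and reduces to the hypothesis, and the case in which the traversal reaches $t$ without a repeat is identical with $m=\lambda+1$. This closes the induction and proves the observation.

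The step I expect to be the main obstacle --- and the one I would write most carefully --- is the second elementary fact: showing that the portion of the expanded solution up to the first node with an initialized $\SortedSuf$ consists exactly of freshly visited nodes, so that those prefixes genuinely become leading prefixes. This needs a precise reading of ``visited'' as ``$\SortedSuf$ has been initialized'', together with the observation that every traversal initializes $\SortedSuf$ for a contiguous initial block of newly seen nodes while the remainder of its path was already visited (because the optimal continuation from an already-visited node was handled when that node's own leading prefix was processed, invoking the principle of optimality so that suffixes match up). A minor secondary issue is the empty-prefix corner case, which the stated convention absorbs and which is exactly why the statement excludes the first iteration.
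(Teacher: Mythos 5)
Your proof is correct, and it reaches the same conclusion by the same basic mechanism as the paper --- the fact that \ANYKPARTP only takes a successor at a position whose preceding prefix is (or just became) leading --- but it packages the induction differently. The paper's inductive step reasons \emph{backward}: it takes the popped non-follower, traces it to the earlier iteration that generated it via $\suc(v_{i-1}, v_i')$, splits on whether $v_{i-1}$ was the last node of the prefix popped at that iteration or part of an optimal expansion, and in the latter case derives a contradiction (``the algorithm would have created a follower instead of taking the successor''). You instead maintain a \emph{forward} invariant over the entire contents of $\Cand$: every non-follower candidate currently in the queue already has a leading prefix once its last node is removed. This strengthening buys you two things the paper leaves implicit: (i) it makes explicit that leading-prefix status, once acquired at candidate-creation time, persists until the candidate is popped arbitrarily many iterations later (your ``monotonicity'' fact), which the paper's backward argument silently relies on; and (ii) it sidesteps the paper's slightly garbled inductive hypothesis (stated as ``true for all iterations $k \leq 2$'', evidently meant to be all iterations up to $k$). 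Your ``second elementary fact'' --- that the expansion of a popped non-follower turns exactly the contiguous block $v_i,\ldots,v_{m-1}$ of freshly visited nodes into leading-prefix positions, because $m$ is by definition the first index with $\SortedSuf$ initialized and the store loop initializes $\SortedSuf(v_j)$ for all $j \le m-1$ --- is the right thing to verify against \Cref{alg:anyk-part-plus} and does hold; together with the hypothesis covering the $j=i$ deviation, it restores the invariant for every newly inserted candidate. The follower case is handled correctly as well, since a popped follower only ever inserts followers back into $\Cand$.
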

\begin{proof}
We prove this by (strong) induction.
The base case is the second iteration of the algorithm, where the solution we pop from $\Cand$ is
necessarily a deviation of the optimal DP path, which necessarily consists of leading prefixes.
For the inductive hypothesis, assume that the statement is true for all iterations $k \leq 2$ and we prove it for iteration $k+1$.
For the sake of contradiction, suppose that we pop a non-follower $\langle v_1, \ldots, v_{i-1}, v_i \rangle$ and $v_{i-1}$ has been visited before by a different prefix.
The prefix $\langle v_1, \ldots, v_{i-1}, v_i \rangle$ was generated in some previous iteration $k' < k + 1$ by
computing the successor $v_i = \suc(v_{i-1}, v_i')$ where $v_i'$ was the node following $v_{i-1}$ in that iteration.
If the popped prefix in iteration $k'$ was $\langle v_1, \ldots, v_i' \rangle$, then by the inductive hypothesis $\langle v_1, \ldots, v_{i-1} \rangle$ has to be a leading prefix.
Otherwise, $\langle v_1, \ldots, v_i' \rangle$ was created by optimally expanding a shorter prefix in iteration $k'$.
In that case, since $\langle v_1, \ldots, v_{i-1} \rangle$ is not a leading prefix, the algorithm would create
a follower instead of a deviation,
which is a contradiction.
\end{proof}

\Cref{obs:leading_pref} implies that
for a popped prefix $\langle v_1, \ldots, v_{i-1}, v_i \rangle$,
\ANYKPARTP will always generate the deviation $\langle v_1, \ldots, v_{i-1}, \suc(v_{i-1}, v_i) \rangle$.

\introparagraph{Finding the suffix order}
Keeping track of the suffix order for every node is not as straightforward as in \ANYKREC.
In \ANYKPARTP, we infer it from the output of the algorithm (which consists of complete solutions) using \ssmonotonicity.
Whenever we return a solution $\langle v_1, \ldots, v_\lambda \rangle$, then we can break it up into a prefix $\langle v_1, \ldots, v_j \rangle$
and suffix $\langle v_j, \ldots, v_\lambda \rangle$ for any $j \in [\lambda]$ and record the suffix in $\SortedSuf(v_j)$.
Using \ssmonotonicity, we can show that any later solution that passes through $v_j$
with a different prefix will also follow the same suffix order.
We defer the proof of correctness to \Cref{sec:rankingfctgeneralizing}.

We now discuss how to avoid storing duplicate suffixes. 
In principle, this can be achieved by attempting to store all possible
suffixes obtained by breaking up the solutions returned by the algorithm and
checking if these already exist in our sorted lists.
We describe a more efficient method that does not require checking for duplicates.
The goal is to store a suffix only the first time we encounter it, which is when it is preceded by a leading prefix.
There are two cases:
\begin{enumerate}
    \item \emph{Non-follower}: Suppose that we have returned a solution that is not a follower and let $\langle v_1, \ldots, v_m \rangle$ be its first non-leading prefix.
    Then, we store the suffix of the solution for each node $v_j, j \in [m-1]$.
    This is because from the way the algorithm operates,
    every prefix contained in a leading prefix must be leading.
    \item \emph{Follower}: For a follower solution $\langle v_1, \ldots, v_{m-1}, v_m \rangle \concat \langle v_m, \ldots, v_\lambda \rangle$,
    we also know that the prefix $\langle v_1, \ldots, v_{m-1} \rangle$ is leading from the way we construct followers.
    Thus, we store the suffix for each node $v_j, j \in [m-1]$. 
\end{enumerate}

\introparagraph{Delaying candidate generation}
A complication that arises is that followers might not yet have access to the next-best suffix during the iteration in which we pop them.
Note that it is guaranteed (by \ssmonotonicity) that a follower
$p \concat \sol_{k'}(v)$ will never need to be popped from $\Cand$
before $p^* \concat \sol_{k'}(v)$
where $p^*$ is the leading prefix of $v$.
However, it is not necessarily the case that $p^* \concat \sol_{k'+1}(v)$ will be popped before $p \concat \sol_{k'}(v)$.
If we have not yet popped $p^* \concat \sol_{k'+1}(v)$,
then $\sol_{k'+1}(v)$ is not yet in the sorted list $\SortedSuf(v)$,
so we cannot immediately replace the follower $p \concat \sol_{k'}(v)$ with $p \concat \sol_{k'+1}(v)$.
To handle this scenario, we maintain a list of \emph{subscribers} along with $\SortedSuf(v)$.
A subscriber is a follower that waits for the next best suffix to be discovered and inserted into the sorted list by the leading prefix.
When it does, we push all subscribers to $\Cand$ together with the newly discovered suffix.

\Cref{alg:anyk-part-plus} contains the complete pseudocode of \ANYKPARTP.

\begin{example}[\ANYKPARTP on \Cref{ex:dp_naive}]
We return to the running example of \Cref{fig:graph}.
The first iteration of the algorithm is the same as \ANYKPART since all prefixes in the solution are leading,
but we also store the suffixes for the popped solution (\Cref{line:store_suf}).
$\sol_1(1) = \langle 1, 100, 10 \rangle$ is appended to $\SortedSuf(1)$,
$\sol_1(100) = \langle 100, 10 \rangle$ to $\SortedSuf(100)$,
and $\sol_1(10) = \langle 10 \rangle$ to $\SortedSuf(10)$.
In the second iteration, after we expand $\langle 2 \rangle$ into $\langle 2, 100, 10\rangle$,
we generate two new candidates $\langle 3 \rangle$ and $\langle 2, 200 \rangle$ by using the successor function.
In contrast to \ANYKPART, we stop taking successors after node $100$ because it has been visited before (\Cref{line:leading_pref}).
For all the following nodes, we create a follower with prefix $\langle 2, 100 \rangle$ (\Cref{line:follower}) that will read the sorted list of suffixes 
as found by the leading prefix $\langle 1, 100 \rangle$.
However, the second best suffix $\sol_2(100)$ has not yet been discovered, thus the follower $\langle 2, 100 \rangle$ becomes a subscriber (\Cref{line:sub}).
We also update our sorted lists of suffixes with the current solution.
Notice that $\langle 100, 10 \rangle$ is not stored in $\SortedSuf(100)$ ($m = 2$ in \Cref{line:store_loop})
in order to avoid duplicate suffixes.
The leading prefix $\langle 1, 100 \rangle$ is the only one responsible for updating $\SortedSuf(100)$.
In the third iteration, we pop $\langle 1, 100, 20 \rangle$ from $\Cand$.
This results in storing the suffix $\langle 100, 20 \rangle$ in $\SortedSuf(100)$ and subsequently providing this suffix to all the subscribers of the list.
Thus, $\langle 2, 100 \rangle$ from the previous iteration receives this suffix and the follower $\langle 2, 100 \rangle \concat \langle 100, 20 \rangle$ with annotation $\{2\}$
is added to $\Cand$ (\Cref{line:unsubscribe}).
\end{example}

\subsection{Complexity Analysis}
\label{sec:complexity}

\newcommand\bolden[1]{{\boldmath\bfseries#1}}
\definecolor{colorbest}{RGB}{77, 175, 74}

\begin{figure*}[t]
\centering
\scriptsize
\renewcommand{\tabcolsep}{1.3pt}
\begin{tabular}{|l|l|l|l|l|l|l|}
\hline
\bolden{Algorithm} 	    & \bolden{Ranking fun.} & \bolden{$\TT(k)$} 	& \bolden{$\TT(\nodenum)$} & \bolden{$\TTL$ ($\out = \Omega(|G|)$}) & \bolden{$\TTL$ for CP} & \bolden{$\MEM(k)$} \\ 
\hline
\ANYKPART       & \smonotone 	    
                & $\bigO(|G| + k (\log k + \ell))$
                & \cellcolor{colorbest!20}$\bigO(|G| + \nodenum (\log \nodenum + \ell))$ 
                & $\bigO(\out (\log \out + \stages))$
                & $\bigO(n^\ell \ell \log n)$
				& \cellcolor{colorbest!20}$\bigO(|G| + k \stages)$ \\
\hline
\ANYKREC        & \smonotone 	
                & $\bigO(|G| + k \ell \log \nodenum)$
                & $\bigO(|G| + \nodenum \ell \log \nodenum)$ 
                & $\bigO(\out \ell \log \nodenum)$ 
                & \cellcolor{colorbest!20}$\bigO(n^\ell (\log n + \stages))$ 
				& \cellcolor{colorbest!20}$\bigO(|G| + k \stages)$ \\
		
\hline
\ANYKPARTP		& \ssmonotone 	
                & \cellcolor{colorbest!20}$\bigO(|G| + k (\log \nodenum + \ell))$
                & \cellcolor{colorbest!20}$\bigO(|G| + \nodenum (\log \nodenum + \ell))$
                & \cellcolor{colorbest!20}$\bigO(\out (\log \nodenum + \stages))$ 
                & \cellcolor{colorbest!20}$\bigO(n^\ell (\log n + \stages))$ 
				& \cellcolor{colorbest!20}$\bigO(|G| + k \stages)$ \\
	
\hline
\BATCH     	    & Any
                & $\bigO(|G| + \out (\log \out + \stages))$ 
                & $\bigO(|G| + \out (\log \out + \stages))$ 
                & $\bigO(\out (\log \out + \stages))$ 
                & $\bigO(n^\ell \ell \log n)$
				& $\bigO(|G| + \out \stages)$ \\
\hline
\end{tabular} 
\caption{Complexity of ranked-enumeration algorithms for $s$-$t$ path enumeration in a DAG.
Best performing algorithms are colored in green.
$|G|$ is the graph size, $\nodenum$ is the number of nodes, 
$\out$ is the total number of paths, and
$\ell$ is the maximum length of a path.
The ranking function can be subset-monotone (\smonotone),
strong-subset-monotone (\ssmonotone),
or arbitrary.
For instances created from path CQs (\Cref{sec:cq_to_dp}),
$\ell$ is the number of atoms,
$\nodenum = \O(n \ell)$, with $n$ being the maximum relation size,
and $|G|$ is also $\O(n \ell)$.
CP is a fully-connected multi-stage graph resulting from a Cartesian Product.
}
\label{tab:complexity_dp}
\end{figure*}

We now analyze our any-$k$ algorithms in terms of time and memory.
For the analysis, we assume that
the problem at hand is $s$-$t$ path enumeration in a DAG $G$,
but the same analysis also applies to path-structured CQs
as we explained in \Cref{sec:cq_to_dp}.
In the following, $\nodenum$ is the number of nodes,  
$|G|$ is the number of nodes and edges, 
$\ell$ is the maximum length of a path,
and $\out$ is the total number of paths.
For \ANYKPART and \ANYKPARTP, we assume that $\Cand$ is implemented by a priority queue with a logarithmic-time pop and constant-time insert~\cite{LarkinSenTarjan2004:PQs}.
With \BATCH, we refer to an algorithm that computes all paths
and then sorts them with a comparison-based algorithm.
For full acyclic CQs, this is equivalent to the Yannakakis algorithm~\cite{DBLP:conf/vldb/Yannakakis81} followed by sorting.

For the time to the first solution ($k = 1$),
all any-$k$ algorithms execute DP in time $\O(|G|)$.
All initializations of data structures such as the priority queues $\Choices_1(v)$ of \ANYKREC or the heaps of the \LAZY variant of \ANYKPART and \ANYKPARTP
also take $\O(|G|)$.
Every algorithm pays $\bigO(\ell)$ to return a path of size $\bigO(\ell)$ in each iteration.

\introparagraph{\ANYKPART}
Since at most $\ell$ candidates are generated in each iteration, $|\Cand| \le k \ell$.
Thus, popping the best candidate as well as bulk-inserting all new candidates takes $\bigO(\log(k \ell))$. 
For efficient candidate generation (\Cref{line:candidate_generation} in \Cref{alg:anyk-part}), the new candidates do not copy the previous solution prefix, 
but simply create a pointer to it.
Therefore, a new candidate can be created in $\O(1)$.
To evaluate the successor function, the \LAZY variant may need to pop from a binary heap of size $\O(\nodenum)$.
We can assume that all the second-best choices are already computed in the preprocessing phase in linear time by popping from every heap once.
Since at most one deviation per iteration does not involve the second-best choice from a node, the overall cost of the successor function is $\O(\log \nodenum)$ per iteration.
Putting it all together, we have $\TT(k) = \O(|G| + k(\log(k \ell) + \log \nodenum + \ell)) = 
\O(|G| + k(\log k + \log \nodenum + \ell))$.

\begin{lemma}\label{lem:complexity}
For all $\nodenum \geq 1$ and $k \geq 1$, we have $\nodenum + k \log \nodenum = \O(\nodenum + k \log k)$.
\end{lemma}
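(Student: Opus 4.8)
The plan is to prove this by a straightforward case analysis on whether $k$ is small or large relative to $N$, since the inequality $N + k\log N = \O(N + k\log k)$ clearly cannot hold term-by-term (when $k$ is tiny and $N$ is huge, $k\log N$ exceeds $k\log k$), but the extra cost gets absorbed into the $N$ term. First I would dispose of the trivial edge cases: if $N = 1$ then $\log N = 0$ and the left side is just $k$, so the bound holds; similarly if $k = 1$ the left side is $N$, again trivially $\O(N + k\log k)$ once we agree $\log 1 = 0$ or interpret the $k\log k$ term as $\O(1)$. So assume $N \geq 2$ and $k \geq 2$.

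The key step is the split at $k = N$. If $k \geq N$, then $\log N \leq \log k$, hence $k\log N \leq k\log k$, and therefore $N + k\log N \leq N + k\log k$, which gives the desired bound with constant $1$. If instead $k < N$, then I bound $k\log N$ against the $N$ term: since $k < N$ we have $k\log N < N\log N$, but that is too weak, so instead I would use the sharper observation that for $k \leq N$, $k \log N \le N \log N$ is still not linear in $N$. The right move is: when $k < N$, we have $k \log N \le k \log N$, and we want $k \log N = \O(N + k\log k)$. Split again on whether $k \le \sqrt{N}$ or $k > \sqrt N$. If $k > \sqrt{N}$ then $\log k > \tfrac12 \log N$, so $k\log N < 2 k\log k$ and we are done. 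If $k \le \sqrt{N}$, then $k \log N \le \sqrt{N}\,\log N = \O(N)$ since $\sqrt{N}\log N = o(N)$, so the term is absorbed into the $N$ term.

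Putting the three regimes together — $k \ge N$, $\sqrt N < k < N$, and $k \le \sqrt N$ — covers all cases with an absolute constant (the worst constant, $2$, comes from the middle regime), which establishes $N + k\log N = \O(N + k\log k)$ uniformly in $N \ge 1$, $k \ge 1$. I do not expect any real obstacle here; the only mild subtlety is making sure the constants are genuinely independent of $N$ and $k$ (in particular that $\sqrt N \log N \le c N$ holds for \emph{all} $N \ge 1$ with a fixed $c$, which follows since $\frac{\log N}{\sqrt N}$ is bounded on $[1,\infty)$), and being careful about the degenerate values $N=1$ and $k=1$ where logarithms vanish.
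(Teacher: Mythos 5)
Your proof is correct, but it takes a different route from the paper's. Where you split the regime $k < \nodenum$ further at $k = \sqrt{\nodenum}$ (using $\log k > \tfrac12 \log \nodenum$ in the middle regime and the boundedness of $\log \nodenum / \sqrt{\nodenum}$ in the small regime), the paper handles all of $1 \leq k \leq \nodenum$ in a single step via the elementary inequality $x \geq \log x$ applied to $x = \nodenum/k$: from $\nodenum/k \geq \log(\nodenum/k) = \log \nodenum - \log k \geq \log \nodenum - 2\log k$ one gets $\nodenum \geq k\log\nodenum - 2k\log k$, hence $2(\nodenum + k\log k) \geq \nodenum + k\log\nodenum$ with an explicit constant of $2$. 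The paper's argument is slightly tighter and avoids the analytic fact that $\sqrt{\nodenum}\log\nodenum = \O(\nodenum)$; yours is arguably more transparent about \emph{why} the bound holds (the $k\log\nodenum$ term is either dominated by $k\log k$ or absorbed into $\nodenum$), at the cost of one extra case. One trivial slip worth fixing: for $\nodenum = 1$ the left-hand side is $1 + k\cdot 0 = 1$, not $k$; the conclusion is unaffected since $1 \leq 1 + k\log k$.
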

\begin{proof}
If $k \geq \nodenum$, then $k \log \nodenum \leq k \log k$, thus the statement is obvious.
For any $1 \leq k \leq \nodenum$, it holds that 
$\nodenum / k \geq \log(\nodenum / k)$ and 
$\log k \geq 0$
and therefore
\begin{align*}
& \frac{\nodenum}{k} \geq \log \nodenum - \log k \geq \log \nodenum - 2 \log k \\
\Rightarrow & \nodenum \geq k \log \nodenum - 2 k \log k \\
\Rightarrow & 2 (\nodenum + k \log k) \geq \nodenum + k \log \nodenum
\end{align*}
This means that there exists an $a > 0$ ($a=2$ here) for which 
$\nodenum + k \log \nodenum \leq a (\nodenum + k \log \nodenum)$ for all values of $\nodenum$,
which completes the proof.
\end{proof}

By \Cref{lem:complexity}, the time complexity of \ANYKPART is 
$\TT(k) =\O(|G| + k(\log k + \ell))$.
If $k$ is equal to the size of the full output $\out$, we obtain 
$\TTL = \O(|G| + \out(\log \out + \ell))$,
which is the same as \BATCH.

\introparagraph{\ANYKREC}
Each \texttt{next} call on source node $s$ triggers 
$\O(\ell)$ \texttt{next} calls (at most one per node on the current path). 
A \texttt{next} call deletes the
top choice at the node and replaces it with the next-best suffix through the same child node.
With a priority queue, these operations together take
time $\bigO(\log \nodenum)$ per node accessed, for a total delay of $\bigO(\stages \log \nodenum)$
between consecutive solutions. In total, we obtain
$\TT(k) = \bigO(|G| + k \stages \log \nodenum)$.
The resulting $\TTL$ bound of $\bigO(|G| + \out \stages \log \nodenum)$
is \emph{tight} in the sense that there exist inputs where \ANYKREC runs in $\bigO(|G| + \out \stages \log \nodenum)$.
However, it does not take into account the effect of memoization that we also exploited in \ANYKPARTP; in later iterations
many \texttt{next} calls will stop early because the corresponding suffixes $\sol_i$
have already been computed by an earlier call.
Taking this into account, we can prove the following:

\begin{proposition}\label{prop:rec_ttl}
There exists a class of inputs where \ANYKREC has asymptotically lower time-to-last ($\TTL$) complexity than \BATCH.
\end{proposition}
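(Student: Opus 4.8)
The plan is to use the \emph{Cartesian-product} instances as the witness family. For a parameter $n$, let $G_n$ be the fully-connected multi-stage DAG with $\ell=n$ stages, each holding $n$ nodes, with $S_i$ completely connected to $S_{i+1}$, together with $s$ joined to all of $S_1$ and all of $S_n$ joined to $t$ (the kind of instance of \Cref{ex:cartesian}). For this family $\out = n^n$, the number of nodes is $\nodenum = \O(n^2)$, the graph size is $|G_n| = \O(\ell n^2) = \O(n^3)$, and the longest path has length $\ell = n$; in particular $\out = n^n$ dominates $|G_n|$ and every polynomial-in-$n$ term. I would also pin down the node weights so that all $n^n$ path weights are pairwise distinct --- e.g.\ assign the $n$ nodes of stage $i$ the weights $0, n^{i-1}, 2n^{i-1},\ldots,(n-1)n^{i-1}$, so that a path's weight is exactly the base-$n$ integer whose digits record which node was chosen at each stage.

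First I would lower-bound \BATCH. By definition it materializes all $\out$ paths and then sorts them by weight; since all $\out$ keys are distinct under the weight assignment above, the final comparison sort costs $\Omega(\out\log\out)$, so
\begin{align*}
\TTL_{\BATCH}(G_n) \;=\; \Omega\!\big(\out\log\out\big) \;=\; \Omega\!\big(n^{n+1}\log n\big).
\end{align*}

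Next I would upper-bound \ANYKREC on $G_n$ by an amortized charging argument over the recursive \texttt{next} calls. After the $\O(|G_n|)$ preprocessing (DP plus initialization of the priority queues $\Choices_1(\cdot)$, each of size $\le n$), a top-level \texttt{next} call on $s$ descends along a single path, and at each visited node $v$ it either (a) reads an already-computed suffix from $v$'s sorted list in $\O(1)$ and stops, or (b) performs one \texttt{popMin} plus one insertion on $v$'s priority queue --- $\O(\log\nodenum)=\O(\log n)$, since that queue has at most $n$ entries --- appends the freshly discovered $\sol_j(v)$ to $v$'s sorted list, and recurses once. Case (b) occurs at a stage-$i$ node at most $n^{\ell-i}$ times (once per distinct suffix of that node ever materialized), so summing over all nodes the total work spent in case (b) is
\begin{align*}
\sum_{i=1}^{\ell} n\cdot n^{\ell-i}\cdot\O(\log n) \;=\; \O\!\big(n^{\ell}\log n\big) \;=\; \O\!\big(n^{n}\log n\big),
\end{align*}
while case (a) (and the base case at $t$) happens $\O(1)$ times per top-level call, for $\O(\out)$ total, and assembling each returned path explicitly costs $\O(\ell)$, for $\O(\out\,\ell)=\O(n^{n+1})$ total. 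Hence $\TTL_{\ANYKREC}(G_n) = \O\big(|G_n| + n^{n}\log n + n^{n+1}\big) = \O(n^{n+1})$, and therefore $\TTL_{\BATCH}(G_n)/\TTL_{\ANYKREC}(G_n) = \Omega(\log n)\to\infty$ as $n\to\infty$, which establishes the proposition.

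The main obstacle is precisely this amortized analysis of \ANYKREC: one must argue that a descent of \texttt{next} calls contributes only $\O(\log\nodenum)$ per \emph{newly discovered} suffix rather than $\O(\log\nodenum)$ per node on the descent path (the latter would give only the weaker $\O(\out\,\ell\log\nodenum)$ bound, which matches \BATCH up to constants on this family). This rests on the correctness of the reuse mechanism --- that every suffix, once computed, is stored in weight order and is reused on all subsequent requests --- which in turn follows from the generalized principle of optimality under subset-monotonicity discussed in \Cref{sec:rec}. A secondary point to verify is that \BATCH's sort genuinely costs $\Omega(\out\log\out)$ on the chosen instance, which is immediate once all path weights are made distinct.
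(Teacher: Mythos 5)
Your proof is correct and follows essentially the same route as the paper's: the same Cartesian-product (fully connected multi-stage) witness family, the same amortization charging each priority-queue operation to a distinct suffix so that the geometric sum of suffix counts is dominated by the first stage, and the same $\Omega(\out\log\out)$ accounting for \BATCH's sort. Your only additions are cosmetic refinements the paper leaves implicit --- fixing $\ell=n$ to make the $\log n$ separation concrete, and choosing weights that make all path weights distinct so the comparison-sort cost is not degraded by ties.
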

\begin{proof}
Regardless of the implementation of \BATCH, before it terminates it has to
(i) process the input in $\Omega(|G|)$,
(ii) enumerate all solutions in $\Omega(\out \cdot \ell)$ and
(iii) use a standard comparison-based sort algorithm on the entire output
in $(\out \log \out)$. In total, it needs
$\Omega(|G| + \out (\log \out + \ell))$.

If \ANYKREC returns the full output, each suffix $\sol_i(v)$
of a node $v$ is
inserted into and removed from the priority queue managing $\Choices$ at $v$ \emph{exactly once}.
Therefore, the total number of priority queue operations,
each costing $\bigO(\log \nodenum)$, is equal to the number
of suffixes. 
Let $\sol_*(i)$ denote the number of suffixes starting at nodes which are at distance $i$ from $s$.
Then the total cost for all priority-queue operations is
$\O(\log \nodenum \sum_{i=1}^{\ell} \sol_*(i))$. 
If
$\sum_{i=1}^{\ell} \sol_*(i) = \bigO(\sol_*(1))$,
then this cost is $\bigO(\out \cdot \log \nodenum)$.
To see this, note that the set of paths starting at distance-1 nodes is
the set of all possible paths, i.e., the full output.
Together with preprocessing time and time to assemble each output
tuple, the total $\TTL$ complexity of \ANYKREC then adds up to
$\O(|G| + \out (\log \nodenum + \ell))$.
To complete the proof, we show inputs where the condition $\sum_{i=1}^{\ell} \sol_*(i) = \bigO(\sol_*(1))$ holds and in which the running time of \BATCH is strictly worse.

Consider multi-stage graphs where $\ell$ consecutive stages of size $n$ are fully connected
($s$ and $t$ are the single nodes contained in two additional stages respectively).
These graphs are obtained from Cartesian Product CQs if we follow the mapping in \Cref{sec:cq_to_dp}.
The output size is $\out = n^\ell$ and the number of suffixes in the first stage is also $\sol_*(1) = n^\ell$.
The ratio between $\sol_*(i)$ and $\sol_*(i + 1)$ for some stage $i \in [\ell-1]$ is $n$.
Therefore, the sum $\sum_{i=1}^{\ell} \sol_*(i)$ is a geometric series and the first term 
$\sol_*(1)$ asymptotically dominates.
Also note that the running time of \BATCH in these instances is $\Omega(n^\ell \cdot \ell \log n)$, which is higher than the complexity $\O(n^\ell (\log n + \ell))$ of \ANYKREC.
\end{proof}

The lower $\TTL$ of \ANYKREC is at first surprising,
given that \BATCH is seemingly optimized for bulk-computing and bulk-sorting the entire output.
The reason why \ANYKREC wins is that it exploits the shared structure of thee solutions, which enables the reuse of shared path suffixes, 
while \BATCH uses general-purpose comparison-based sorting.

\introparagraph{\ANYKPARTP}
In \ANYKPARTP,
the maintenance of additional data structures for memoization does not incur any additional cost compared to \ANYKPART.
In any iteration, the first non-leading prefix can be identified in $\O(\ell)$ 
and the lists $\SortedSuf(v)$ for nodes $v$ contained in a solution
can also be updated in $\O(\ell)$.
The benefit of memoization is evident in the size of the priority queue $\Cand$.
Followers do not increase the size of $\Cand$ because they are replaced by at most one candidate.
Non-followers can generate one follower and a number of other deviations through the successor function.
We can again assume that the generated follower does not increase the size of $\Cand$ due to the pop that occurs in each iteration.
The generated deviations are responsive for increasing the size of $\Cand$, 
but this can only happen once for each leading prefix (i.e., the first time we visit a node).
Since the number of leading prefixes is at most $\nodenum$,
the size of $\Cand$ is also bounded by $\nodenum$, and the cost of popping or inserting elements in $\Cand$ is $\O(\log (\min \{\nodenum, k\}))$.

Furthermore, the list of subscribers affects the worst-case delay, but not $\TT(k)$ (see \Cref{sec:complexity_measures}).
Indeed, since the prefix of every follower is by construction composed of a leading prefix followed by one edge,
we can have $\O(\nodenum)$ subscribers in one list which are all added to $\Cand$
when the next-best suffix becomes available.
However, this delayed candidate generation can only improve $\TT(k)$.
To see this, compare it with an algorithm that has access to an oracle which immediately gives the next-best suffix
without the need of subscriber lists.
For each subscriber that we add to $\Cand$ in a later iteration $k$ than we should,
there exists a previous iteration $k'$ where we added one less candidate.
This means that the number of priority queue insertions up to iteration $k$ is the same as if
it had been added to $\Cand$ in iteration $k'$ and all priority queue operations in iterations in-between can only cost less because $\Cand$ is smaller.

Using \Cref{lem:complexity}, we can simplify the complexity of \ANYKPARTP to
$\TT(k) = \O(|G| + k (\log \nodenum + \ell))$.
This is strictly better than the $\TT(k)$ complexity of both \ANYKPART and \ANYKREC,
and also matches the improved bound of \ANYKREC that we showed in \Cref{prop:rec_ttl}.

\introparagraph{Memory}
All algorithms need $\O(|G|)$ memory for storing the input.
The memory consumption of \ANYKPART depends on the size of $\Cand$.
Since at most $\O(\ell)$
new candidates are generated per iteration, we have $\MEM(k) = \O(|G| + k \ell)$.
For \ANYKREC, the size of a choice set $\Choices_k(v)$ is bounded by the out-degree
of $v$, hence cannot exceed $\nodenum$.
However, we need to store the suffixes $\sol_i(v)$,
whose number is $\O(\ell)$ per iteration, so $\MEM(k) = \O(|G| + k \ell)$.
\ANYKPARTP has the same memory consumption; the size of $\Cand$ is bounded by $\nodenum$
but the sorted lists $\SortedSuf(v)$ occupy $\O(k \ell)$ space.
\BATCH first materializes the output and then sorts it in-place,
therefore has $\MEM(k) = \O(|G| + \out \ell)$, regardless of $k$.

\introparagraph{Summary}
\Cref{tab:complexity_dp} summarizes the analysis for $\TT(k)$, 
highlighting the case of $k = \nodenum$
(this is a case where $k$ is sufficiently large for the enumeration cost to exceed the preprocessing cost).
We also show $\TTL$ for cases where the output is sufficiently large (so that enumeration dominates
preprocessing),
$\TTL$ on Cartesian Product inputs where we can see the advantage of \ANYKREC (\Cref{prop:rec_ttl}),
and $\MEM(k)$.
\ANYKPART is faster than \ANYKREC in general (e.g., for $\TT(\nodenum)$),
but \ANYKREC has the edge for large values of $k$ on certain inputs.
\ANYKPARTP matches the best running times of both and achieves the lowest $\TT(k)$ complexity overall.
All any-$k$ algorithms require minimal space, depending only on input size
and the number of iterations $k$ times the solution size.
\BATCH requires more memory because it materializes the full output.

\section{Extension to General CQs}
\label{sec:general_cqs}

We extend our ranked enumeration framework from path-structured CQs to general CQs.
First, we study DP problems with a tree structure (T-DP),
which allows us to handle any acyclic CQ.
Then, we discuss how to handle projections for non-full CQs and finally, how our techniques can be applied for cyclic CQs.

\subsection{Tree-Based DP (T-DP)}
\label{sec:tdp}

We now consider a class of Dynamic Programming problems where the states (i.e., the nodes)
are organized into stages similarly to serial DP
and the stages are organized in a tree structure.
Compared to
the DP framework of \Cref{sec:dp},
T-DP is more general because it allows tree-structured problems,
but also less general because it does not allow arbitrary ``jumps'' between stages.
Both are instances of Non-Serial Dynamic Programming~\cite{bertele72nsdp} which generalizes the stage-by-stage computation of serial DP;
the DP framework of \Cref{sec:dp} allows ``feedforward loops'',
while T-DP allows ``diverging branches''~\cite{esogbue74nsdp}.
We also note that the top-1 solution for T-DP can be phrased in the framework of Functional Aggregate Queries~\cite{abo16faq}.

In T-DP, we have a set of stages $\Sset_1, \ldots, \Sset_\ell$ that partition the states $V$.
The stages are organized in a \emph{rooted tree}
with $\Sset_0 = \{ s \}$ as the root stage. 
We let the leaves of the tree be a set of terminal stages
$\Sset_{\ell+1}, \ldots, \Sset_{\ell+b}$, each one containing a single terminal node $t_i, i \in [\ell + 1, \ell + b]$.\footnote{
Artificial stages can always be introduced to meet this assumption.}
For every parent stage $\Sset_p$ and child stage $\Sset_c$,
there is a distinct set of decisions (i.e., edges) $E_{pc}$ directed from $\Sset_p$ states to $\Sset_c$ states.
Every root-to-leaf path in the stage tree represents an instance of serial DP (\cref{sec:dp}).
We now extend our approach to such problems.

We establish a \emph{tree order} that serializes the stages and assume that
their indexing follows this order.
Non-leaf stages $\Sset_1, \ldots, \Sset_\ell$ are ordered by a breadth-first traversal of the tree;
if the distance from the root to $\Sset_i$ is greater than $\Sset_j$, then  $i > j$.
Irrespective of their depth, the $t$ leaf nodes are always indexed last in arbitrary order.
We now introduce some helpful notation for trees.
We define $\Ch(\Sset_p)$ 
to be the set of indices of child stages of a stage $\Sset_p$ and $\parent(\Sset_c)$ as the index of the parent stage of a stage $\Sset_c$. 
By $\llceil \Sset_i \rrceil$ we denote all stage indexes in the subtree rooted at $\Sset_i$,
excluding $i$.
Slightly overloading the notation, we also use 
$\Ch(v_i) := \Ch(\Sset_i)$
for a state $v_i\in \Sset_i$,
and analogously for $\parent(v_i)$ and $\llceil v_i \rrceil$.

\begin{figure*}[tb]
\centering
\hfill
\begin{subfigure}[t]{.4\linewidth}
    \centering
    \includegraphics[height=3.9cm]{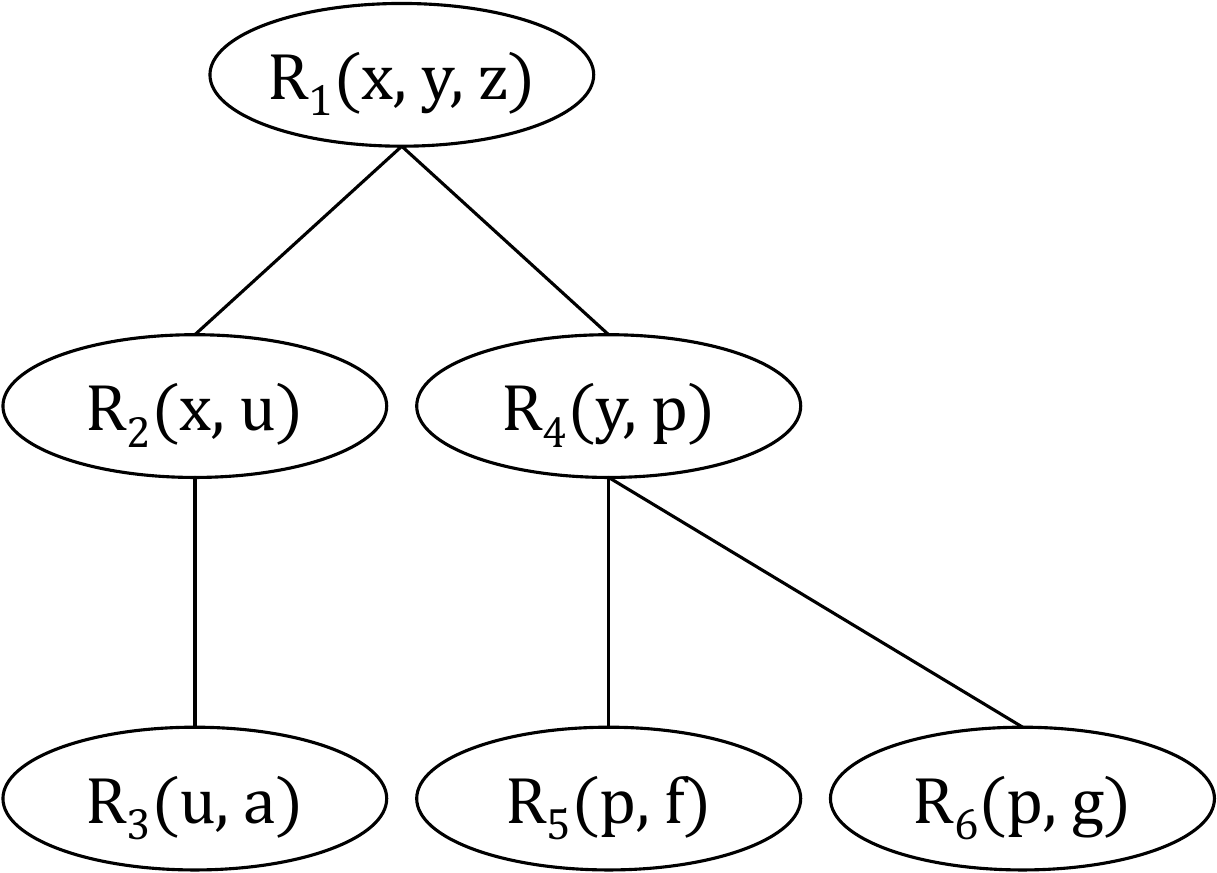}
    \caption{A join tree constructed for the query\\$Q_T(x, y, z, u, a, p, f, g) \datarule 
    R_1(x, y, z),$ $R_2(y, u), R_3(u, a), R_4(y, p), R_5(p, f), R_6(p, g)$.}
    \label{fig:join_tree_2}
\end{subfigure}%
\hfill
\begin{subfigure}[t]{.42\linewidth}
    \centering
    \includegraphics[height=3.9cm]{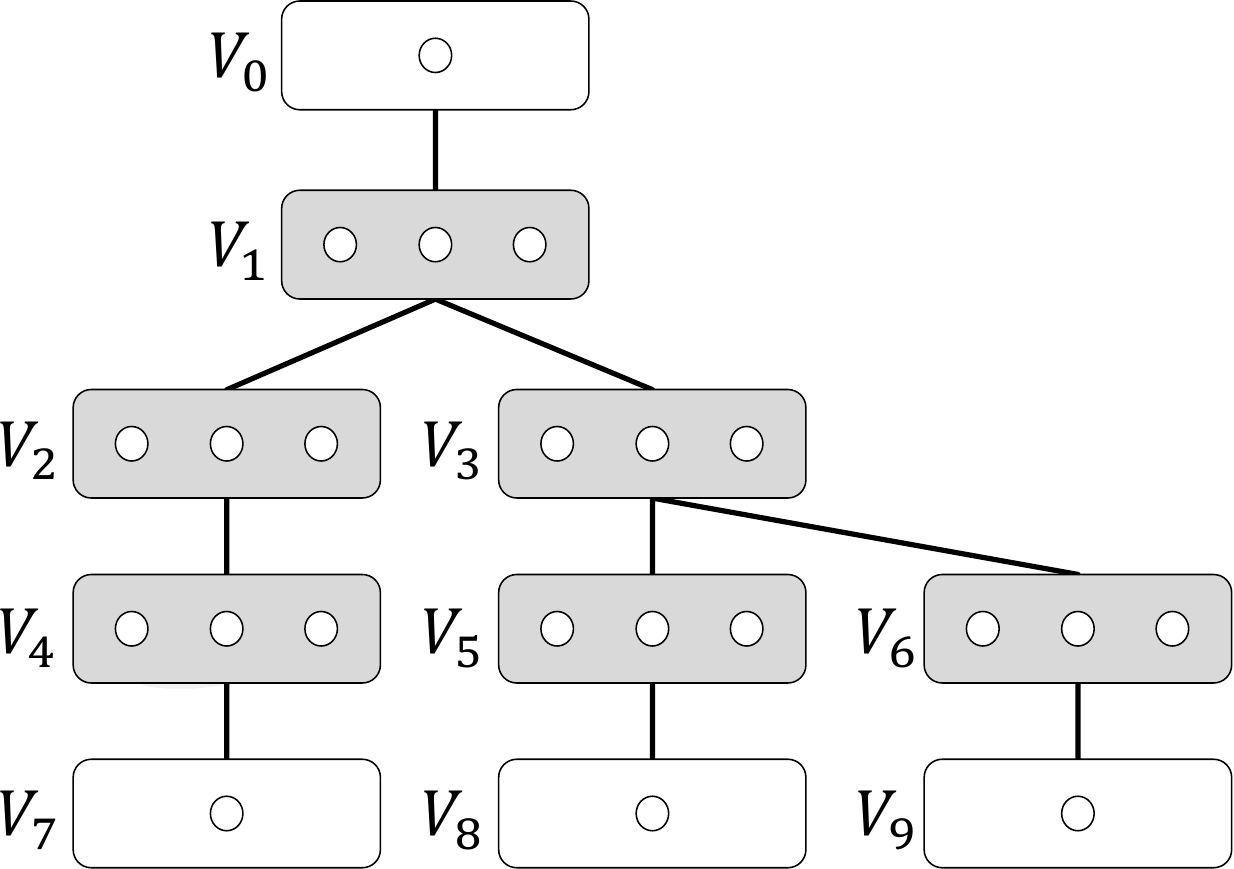}
    \caption{T-DP problem structure. 
    Rounded rectangles are \emph{stages}, small circles are \emph{states}.
    The edges connect the stages in a tree structure.
    }
    \label{fig:treedp}
\end{subfigure}
\hfill
\begin{subfigure}[t]{.15\linewidth}
    \centering
    \includegraphics[height=3.9cm]{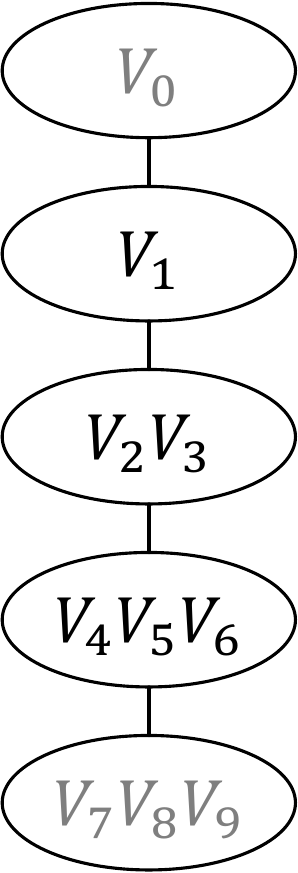}
    \caption{Serial\\decomposition of width $3$.}
    \label{fig:tree_to_path}
\end{subfigure}
\hfill
\caption{\emph{Tree-Based DP} (T-DP) instance from an acyclic CQ.}
\label{fig:tdp}
\end{figure*}

\begin{example}[T-DP Instance]
\Cref{fig:treedp} shows an instance of T-DP with 6 ``internal'' stages $V_1$ to $V_6$, 
a root stage $V_0$ and 3 terminal stages $V_7$ to $V_9$.
Notice that the indexing of stages follows a breath-first order.
If we restrict ourselves to a root-to-leaf path, such as $V_0$ to $V_7$, then the problem degenerates
to serial DP (\Cref{fig:dp_from_tables}).
In this example, $\llceil \Sset_3 \rrceil := \{  5, 6, 8, 9\}$,
$\Ch(\Sset_3) = \{ 5, 6 \}$ and $\parent(\Sset_3) = 1$. 
\end{example}

A T-DP \emph{solution} $\sol = \langle v_1, \ldots, v_\ell \rangle$ 
\footnote{
Notice that as in DP, we do not include the unique root state and the $b$ terminal nodes of the $b$ leaf stages in the solution.
}
is a tree with one state per stage that satisfies
$(v_p, v_c) \in \Dec_{pc}$,
$\forall c \in [\stages+b]$,
where $p = \parent(v_c)$, $v_0 = s$, and $v_{\ell+i} = t_{i+1}, i \in [b]$.
The \emph{objective function} aggregates the weights of decisions across the entire tree structure:
\begin{align}
	\weight(\sol) = \aggrsum_{c=1}^{\stages+b} \weight(v_{\parent(v_c)}, v_{c}) \label{eq:cost_treeDP}
\end{align}

\introparagraph{Principle of optimality and T-DP Algorithm}
The optimal solution is computed bottom-up, following the reverse serial ordering
of the stages.
A bottom-up step for a state $v$ solves a \emph{subproblem} which corresponds to finding an optimal subtree $\sol_1(v)$.
If $\Ch(v) = \{ i_1, \ldots, i_\chnum \}$, then that subtree consists of 
$v$ and a list of other subtrees rooted at its children $v_{i_1}, \ldots, v_{i_{\chnum}}$. 
To solve a subproblem, we can \emph{independently} choose the best decision for each child stage.
The equations 
describing the bottom-up phase in T-DP are
recursively defined for all states and stages
by
\begin{equation}
\begin{aligned}
    \solW_1(v) &= 0, \textrm{ for the $b$ terminals with } \Ch(v) = \emptyset		 \\
    \solW_1(v) &= \!\!\!\aggrsum_{c \in \Ch(v)} 
		\!\min_{(v, v_c) \in \Dec_{pc}} 
		\!\big\{\weight(v, v_c) \aggr \solW_1(v_c)\big\},  
		\textrm{ for } 
			v \in \Sset_p, 
			p \in [\ell]_0
	\label{eq:TDP_recursion}
\end{aligned}
\end{equation}

Similarly to DP, after the bottom-up phase we get reduced sets of states
$\SsetR_i \subseteq \Sset_i$, $\DecR_{pc} \subseteq \Dec_{pc}$
and the top-1 solution $\sol_1(s)$ is found by a top-down
phase that follows optimal decisions.

Comparing the above with serial DP, we now have multiple terminal states (i.e., leaves in the tree) 
that are initialized with zero weight, 
but we still have only one single root node. 
A minimum-weight solution contains other subtree solutions 
that themselves achieve minimum weight for their respective subproblems.

The correctness of the T-DP algorithm is well-established;
for example, it is a special case of the InsideOut algorithm for Functional Aggregate Queries~\cite{abo16faq}.
We give here an independent proof:

\begin{proposition}[T-DP]\label{TH:T-DP}
\Cref{eq:TDP_recursion}
	finds a solution that minimizes \cref{eq:cost_treeDP}.
\end{proposition}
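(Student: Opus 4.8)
The plan is to prove the usual ``principle of optimality'' strengthening and then specialize it to the root. Concretely, for a stage $\Sset_p$ and a state $v\in\Sset_p$, call a \emph{subtree solution rooted at $v$} an assignment of one state to each stage index in $\llceil \Sset_p \rrceil$ (together with $v$ itself) that respects every edge constraint $(v_{\parent(v_c)},v_c)\in\Dec_{\parent(v_c)c}$ within the subtree, and let its cost be $\aggrsum \weight(v_{\parent(v_c)},v_c)$ over those edges. The claim to prove by induction (on stages processed in reverse serial order, i.e. from the leaves toward the root) is: for every $v\in\Sset_p$, the value $\solW_1(v)$ produced by \cref{eq:TDP_recursion} equals the minimum cost over all subtree solutions rooted at $v$ (with the convention that this minimum is $\infty$ when no such solution exists, matching the empty-$\min$ case in \cref{eq:TDP_recursion}). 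Once this is established at $\Sset_0=\{s\}$, observe that a subtree solution rooted at $s$ is exactly a T-DP solution $\sol=\langle v_1,\ldots,v_\ell\rangle$ in the sense of \cref{eq:cost_treeDP}, and its cost is precisely $\weight(\sol)$; hence $\solW_1(s)=\min_{\sol}\weight(\sol)$, and the top-down phase that follows the pointers recorded during the minimizations reconstructs a solution attaining this minimum.

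The base case is the $b$ terminal stages: if $\Ch(v)=\emptyset$ then $\llceil\Sset_p\rrceil$ is empty, the unique subtree solution is $v$ alone with cost $0$, and \cref{eq:TDP_recursion} sets $\solW_1(v)=0$. For the inductive step, fix $v\in\Sset_p$ with $\Ch(v)=\{i_1,\ldots,i_\chnum\}$ and assume the claim for every child stage (each appears earlier in the reverse serial order, by the breadth-first indexing). A subtree solution rooted at $v$ is, by the tree structure, in bijection with a tuple of \emph{independent} choices, one per child stage $c\in\Ch(v)$: an edge $(v,v_c)\in\Dec_{pc}$ together with a subtree solution rooted at $v_c$. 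Using commutativity and associativity of $\aggr$ (the aggregation is assumed order-independent), the cost of the combined subtree splits as $\aggrsum_{c\in\Ch(v)}\bigl(\weight(v,v_c)\aggr(\text{cost of the subtree rooted at }v_c)\bigr)$. Since the choices range over a product domain, minimizing this sum equals summing the per-term minima; pushing the minimum inside and invoking the inductive hypothesis, $\min(\text{cost of subtree rooted at }v_c)=\solW_1(v_c)$, turns the expression into exactly the right-hand side of \cref{eq:TDP_recursion}. (States with an empty choice set are removed in the bottom-up phase, so each surviving $v\in\SsetR_p$ admits at least one subtree solution and all minima are well defined.)

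The one step that genuinely needs argument, rather than routine bookkeeping, is the independence-of-children claim used to interchange $\min$ and $\aggrsum$: that the subproblems hanging off distinct children $i_j$ of $v$ impose no cross-constraints, so every combination of locally valid (and in particular locally optimal) child subtrees is globally valid. I would isolate this as a one-line lemma: the index sets $\llceil\Sset_{i_1}\rrceil,\ldots,\llceil\Sset_{i_\chnum}\rrceil$ are pairwise disjoint because the stage graph is a tree, hence a subtree solution rooted at $v$ is literally the disjoint union of $v$, the chosen edges $(v,v_{i_j})$, and arbitrary independent subtree solutions at the $v_{i_j}$. This is precisely where the tree structure (equivalently, the running intersection property of the join tree) is used, and it is what separates T-DP from the ``feedforward'' DP of \Cref{sec:dp} where branches could reconverge. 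Everything else—the base case, the additive decomposition of \cref{eq:cost_treeDP} across the subtree edges, the $\min$/$\aggrsum$ interchange over a product domain, and the final specialization at $s$—is mechanical.
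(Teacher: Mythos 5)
Your proposal is correct and follows essentially the same route as the paper's proof: induction over stages in reverse serial order showing $\solW_1(v)$ equals the minimum subtree cost at $v$, with the key step being the interchange of $\min$ and $\aggrsum$ over the product of independent child choices (which the paper justifies in one line as ``distributivity''), then specializing at $s$. Your explicit isolation of the disjointness-of-child-subtrees lemma is a slightly more careful rendering of that same step, not a different argument.
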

\begin{proof}
We show by induction on the tree stages
in reverse serial order
that for all nodes $v \in V$:
\begin{align*}
 \min_{\sol(v)}
 \big\{
 \aggrsum_{j \in \llceil v \rrceil}
 \weight(s_{\parent(v_i)}, v_{j})
 \big\}
	= \solW_1(v)
\end{align*}
The base case for the (terminal) leaf states follows by definition from
\Cref{eq:TDP_recursion}.
For the inductive step, assume that the above holds for all
descendant states
$v_{d} \in \Sset_d, d \in \llceil v \rrceil$ of a node $v$.

Then for any state $v \in \Sset_p$:
\begin{align*}
&\solW_1(v)
=
	\!\!\!\!\aggrsum_{c \in \Ch(v)}
 \min_{(v, v_c) \in \Dec_{pc}}
	\!\big\{
	\weight(v, v_c) \aggr \solW_1(v_c)
	\!\big\}
	\overset{\mathrm{ind. step}}{=\joinrel=}
	\!\!\!\!\aggrsum_{c \in \Ch(v)}
	\min_{(v, v_c) \in \Dec_{pc}}
	\!\big\{
	\weight(v, v_c) \aggr
	\min_{\sol(v_c)}
	\!\big\{
	\!\!\!\!\aggrsum_{c' \in \llceil v_c \rrceil}
	\weight(v_c, v_{c'})
	\big\}
	\big\}\\
&\overset{  \begin{subarray}{c}
                \mathrm{distributivity}/\\
                \mathrm{subset-monotonicity}
            \end{subarray}
    }{=\joinrel=\joinrel=}
    \min_{\substack{(v_{c_1}, \ldots, v_{c_\chnum}) \\ (c_1, \ldots, c_\chnum) = \Ch(v) \\ (v, v_{c_i}) \in \Dec_{p{c_i}}}}
    \:
	\!\big\{
	\!\!\!\!\aggrsum_{c \in \Ch(v)}
	\!\big(
	\weight(v, v_c) \aggr
	\!\!\!\!\aggrsum_{c' \in \llceil v_c \rrceil}
	\weight(v_c, v_{c'})
	\big)
	\big\}
    =
 \min_{\sol(v)}
 \!\big\{
 \aggrsum_{j \in \llceil v \rrceil}
 \weight(v_{\parent(v_j)}, v_{j})
 \!\big\}
\end{align*}
The important property that allows us to swap the minimization over all solutions in the subtree
with the sum over the mininum solution of each child (second-to-last step)
is the
distributivity of sum over $\min$ or, alternatively, the
subset-monotonicity of the ranking function.

Since the above holds for any node $v$, it also holds for the source $s$, and the statement follows.
\end{proof}

\subsubsection{Any-$k$ for T-DP}

To enumerate lower-ranked solutions for T-DP, we extend the path-based any-$k$ algorithms for serial DP.

\introparagraph{Changes to \ANYKPART}
\ANYKPART is straightforward to extend to the tree case by following the serialized order of the stages.
In particular, the $i^\textrm{th}$ stage in the tree order is treated like the
$i^\textrm{th}$ stage in serial DP, except that the sets of choices are determined
by the parent-child edges in the tree. 
For illustration, assume a tree order as indicated
by the stage indices in \Cref{fig:treedp}. 
Given a prefix $\langle v_1, v_2, v_3 \rangle$,
the choices for $v_4 \in \Sset_4$ are not determined by $v_3$ 
(as they would be in serial DP
with stages $\Sset_1, \Sset_2, \ldots$), 
but by $v_2 \in \Sset_2$, because $\Sset_2$ is the parent
of $\Sset_4$ in the tree. 
In general, at stage $\Sset_c$, we find the successor $\Suc(v_p, v_c)$ 
where $p = \parent(v_c)$. 
Similarly, to optimally expand a prefix
$\langle v_1, \ldots, v_{c-1} \rangle$ by one stage, 
we append $v_{c}$ such that $\sol_1(v_{c})$ is a subtree of $\sol_1(v_p)$.
Thus, we can run \Cref{alg:anyk-part} unchanged as long as we
define the choice sets based on the parent-child relationships in the
tree. 
Hence the complexity analysis in \Cref{sec:complexity} still applies
as summarized in \Cref{tab:complexity_dp}.

\introparagraph{Changes to \ANYKREC}
Recall that in serial DP, each node $v_i$ processes a
\texttt{next} call by recursively calling \texttt{next} on the node $v_{i+1}$ 
that follows in the solution.
In T-DP, the solutions have a tree structure, and therefore, to find the next-best subtree solution at $v_i$,
we have to consider the next-best subtrees for all of its children $v_c$, for $c \in \Ch(v_i)$.
For example, consider a node $v_1 \in \Sset_1$ where the children of $\Sset_1$ are $\Sset_2$ and $\Sset_3$.
A solution rooted at $v_1$ consists of two parts: one solution rooted at some node of
$\Sset_2$ and another rooted at $\Sset_3$. 
Suppose that the current solution rooted at $v_1$ contains the $2^\textrm{nd}$-best
solution from $\Sset_2$ and the $4^\textrm{th}$-best solution from $\Sset_3$
and denote that by $[\Pi_2, \Pi_4]$.
Then the next-best solution from $v_1$ could be either $[\Pi_3, \Pi_4]$ or
$[\Pi_2, \Pi_5]$. Since any combination of child solutions $[\Pi_{j_1}, \Pi_{j_2}]$ is valid
for the parent, the problem is essentially to rank the Cartesian product space of
subtree solutions which agree with $v_1$. 

More generally, let $\sol_j(v, c)$ be the $j^\textrm{th}$ best subsolution starting from $v$ and restricted only to a single branch $c \in \Ch(v)$.
$\sol_j(v, c)$ consists of state $v$, then a child $v_c \in \Sset_c$ and, from there, a list of pointers to other solutions (i.e., subtrees) that have their own ranks $j_1, \ldots, j_\chnum$.
We write that as 
$\sol_j(v, c) = v \tree [ \sol_{j_1}(v_c, i_1), \ldots, \sol_{j_\chnum}(v_c, i_\chnum) ]$ 
for $\Ch(v_c) = \{ i_1, \ldots, i_\chnum \}$.
For example, in \Cref{fig:treedp}, 
$\sol_k(v_1, 3) = v_1 \tree [\sol_{j_1}(v_3, 5), \sol_{j_2}(v_3, 6) ]$ for some ranks $j, j_1, j_2$.
Notice that this definition matches the one in \Cref{sec:rec} for $|\Ch(s_c)| = 1$ and
since we can, without loss of generality, assume that $\Sset_0$ always has a single child $\Sset_1$, we have $\sol_k(s) = \sol_k(s, 1)$ for all values of $k$.
A node $v \in \Sset_p$ maintains one data structure per branch $c \in \Ch(v)$ for storing and comparing solutions $\sol_j(v, c)$.
At the beginning of the algorithm, we initialize it as
$\Choices_1(v, c) = 
\{ v \tree [ \sol_1(v_c, i_1), \ldots, \sol_1(v_c, i_\chnum) ] 
\: | \: 
(v, v_c) \in \DecR_{pc}, \Ch(v_c) = \{ i_1, \ldots, i_\chnum \} \}$. 
To process a \texttt{next} call, we pop the best solution from the data structure but unlike serial DP, we now have to replace it with more than one new candidate.
To compute \texttt{next} of 
$\sol_j(v, c) = v \tree [ \sol_{j_1}(v_c, i_1), \ldots, \sol_{j_\chnum}(v_c, i_\chnum) ]$,
the new candidates are:
\begin{align*}
& v \tree [ \sol_{j_1 + 1}(v_c, i_1), \ldots, \sol_{j_\chnum}(v_c, i_\chnum) ] \\
&\qquad\qquad\qquad\vdots \\
& v \tree [ \sol_{j_1}(v_c, i_1), \ldots, \sol_{j_\chnum + 1}(v_c, i_\chnum) ].
\end{align*}
While this algorithm is correct, it has two drawbacks.
First, there can be up to $\ell$ children, hence up to $\ell$ new candidates, and each one could have size (i.e., number of pointers) up to $\ell$. 
Thus, to create them we would have to pay $\bigO(\stages^2)$, which results in an $\bigO(\ell)$ factor increase in $\TT(k)$ complexity.
Second, this candidate generation process creates duplicates, similarly to the issue we described in \Cref{sec:part}.
We address both of these issues by applying the more efficient candidate generation process of \ANYKPART
for each priority queue $\Choices_j(v, c)$.
As a result, \ANYKREC remains the same as the serial DP case for stages with a single child,
but behaves similarly to \ANYKPART when encountering branches. In the extreme case of star queries
(where a root stage is directly connected to all leaves),
\ANYKREC degenerates to \ANYKPART.

\introparagraph{Changes to \ANYKPARTP}
The expansion and successor-taking phases of \ANYKPARTP are modified in the same way as \ANYKPART.
Unfortunately, the memoization of the order of suffixes does not extend to trees in a direct way.
The reason is that for T-DP, the \emph{independence of suffixes from prefixes} does not hold.
In DP, every prefix $p$ that arrives at a node $v$ can be combined with any suffix that starts from $v$.
This is not true in T-DP; a prefix $\langle v_1, v_2, v_3 \rangle$ may not admit
the same suffixes as a different prefix $\langle v_1', v_2', v_3 \rangle$ if the parent of $\Sset_4$ is not $\Sset_3$.
In the example of \Cref{fig:treedp}, the choice of the node from $\Sset_4$ is restricted by the node chosen at $\Sset_2$, and different $\Sset_2$ nodes allow different choices at $\Sset_4$. 

However, we can still use memoization whenever the prefix-suffix independence holds.
For example, in \Cref{fig:treedp}, if we fix the choice of nodes in \emph{both} $\Sset_2$ and $\Sset_3$,
then the prefixes (the nodes before $\Sset_2$) and the suffixes (the nodes after $\Sset_3$)
are independent.
To capture this intuition, we define the following notion:

\begin{definition}[Serial Decomposition]
A serial decomposition of a T-DP instance is a path where every vertex consists of a set of stages such that 
(1) every stage appears exactly once, 
(2) adjacent stages in T-DP appear in the same or adjacent vertices in the path, and
(3) stage $\Sset_j$ cannot appear strictly before stage $\Sset_i$ in the path if $j > i$.
The width of a serial decomposition $\serialw$ is the maximum number of non-terminal stages contained in a vertex.
\end{definition}

This notion of width is reminiscent of query width for CQs~\cite{chekuri97querydecomp},
but with the additional restrictions that we impose here.
And compared to path decompositions for CQs~\cite{olteanu15dtrees}, the difference here is twofold.
First, the serial decomposition is tied to a particular join tree.
Second, we do not materialize the bags of the decomposition,
but only use the serial decomposition structure to determine which sorted suffix lists to maintain for \ANYKPARTP, as we explain below.

We adapt the suffix memoization of \ANYKPARTP
to the serial T-DP decomposition.
In particular, we maintain a list of sorted suffixes (\Cref{line:sorted_list_append} 
in \Cref{alg:anyk-part-plus}) 
for every \emph{combination of nodes} that belong to stages in the same vertex of the decomposition.
Follower generation (\Cref{line:follower}) and suffix storing (\Cref{line:store}) are only
done when the node being processed belongs to the \emph{last stage of a decomposition vertex}.
For all other nodes, we take their successor (\Cref{line:successor_partp}) as in \ANYKPART.

\begin{example}[\ANYKPARTP for T-DP]
Consider again the example T-DP instance of \Cref{fig:tdp}.
A serial decomposition of width $3$ is shown in \Cref{fig:tree_to_path}.
Note that the terminal stages colored in gray are ignored when computing the width.
Suppose that our current solution after expansion is $\langle v_1, \ldots, v_\ell \rangle$
and we are now processing node $v_2$.
Since $\Sset_2$ is in the same vertex as $\Sset_3$ in the decomposition, we do not perform
any of the additional operations of \ANYKPARTP and only take the successor
$\langle v_1, \suc(v_1, v_2) \rangle$.
When we move on to $v_3$, we have seen all the nodes of the $\Sset_2, \Sset_3$ vertex,
therefore we check if the current prefix $\langle v_1, v_2, v_3 \rangle$ is leading
in the sense that the combination $(v_2, v_3)$ is visited for the first time.
If not, then we generate a follower solution for the rest of the stages.
If it is a leading prefix, then we move on to the next node but also
update the sorted list $\SortedSuf((v_2, v_3))$ with $\langle v_4, \ldots, v_\ell \rangle$.
\end{example}

The time complexity of \ANYKPARTP on T-DP depends on the width of the serial decomposition.
In particular, the number of leading prefixes per level of the decomposition is at most $n^{\serialw}$, where $n$ is the maximum number of nodes in a stage.
The size of the priority queue $\Cand$ is thus bounded by $\O(n^{\serialw} \cdot \ell)$
and the cost of priority queue operations is bounded by $\O(\log n^{\serialw} + \log \ell)$.
Overall, we obtain $\TT(k) = \O(n \ell + k ( \log(\min \{ k, n^{\serialw}\}) + \ell))$.
The memory consumption is the same as the serial DP case since we store less suffixes.

\subsubsection{Any-$k$ for full acyclic CQs}
Given a full acyclic CQ, we can easily map it to a T-DP instance using a join tree.
As in serial DP, the stages correspond to relations, and states correspond to input tuples. 
The connections between states of parent-child stages are created by connecting joining tuples in linear time as in \Cref{sec:cq_to_dp}.
To achieve the best possible bound in combined complexity (\Cref{theorem:cq_combined_comp}),
we cannot afford to try all possible join trees and all possible serial decompositions
to find the one that gives the lowest width.
Instead, we work with an arbitrary join tree computed by one run of the GYO reduction~\cite{graham79gyo,tarjan84acyclic,yu79gyo}.

The width of the serial decomposition of our join tree will determine the running time of \ANYKPARTP, but now we
connect this running time to a \emph{structural property} of the CQ.
For any possible join tree, we can pick as the root node the one that maximizes the depth of the tree; this is easy to find in linear time.
From the definition of the diameter of a CQ, it easily follows that this maximum depth is at least $\diam(Q)$.
For the serial decomposition, we choose one that works ``level-by-level'';
this means that all stages at depth $i$ from the root are placed together in the $i^\textrm{th}$ vertex on the path.
Since the depth of the tree is at least $\diam(Q)$, it is guaranteed that it has width $\serialw \leq \ell - \diam(Q) + 1$.

To derive the final complexity of \Cref{theorem:cq_combined_comp}, we also need to take into account that every state corresponds to a tuple that contains
$\O(\alpha)$ values from the database.
Thus, \ANYKPART yields
$\O(n \ell \alpha + k (\log(\min\{k, n^{\ell-\diam(Q)+1}\}) + \ell \alpha))$.

\begin{example}[Full acyclic CQ to T-DP]
\label{ex:full_acyclic_to_tdp}
A join tree of the query $Q_T(x, y, z, u, a, p, f, g) \datarule$ $R_1(x, y, z),$ $R_2(y, u), R_3(u, a), R_4(y, p), R_5(p, f), R_6(p, g)$
is shown in \Cref{fig:join_tree_2} and the resulting T-DP instance is that of \Cref{fig:treedp}.
The connections between $\Sset_1$ nodes and $\Sset_2$ nodes are constructed by introducing intermediate nodes, 
in the same way as the connections between $R_1$ and $R_2$ in \Cref{fig:equiJoinGraph}.
Similarly for $\Sset_1$ and $\Sset_3$.
The width of the serial decomposition in \Cref{fig:tree_to_path} is $3$ because 
three non-leaf stages $V_4, V_5, V_6$ are placed in the same vertex.
However, the diameter of $Q_T$ is $5$ (this is the shortest distance from variable $g$ to variable $a$.
Therefore, it is possible to achieve a width up to $\ell - \diam(Q) + 1 = 6 - 5 + 1 = 2$ if we choose a better root;
this is indeed the case if we choose $R_6$ as the root of the join tree.
\end{example}

\subsection{CQs with Projections}
\label{sec:projections}

So far, we have only considered \emph{full} CQs, i.e., those that do not have projections.
In this section, we first investigate the different possible semantics of ranked enumeration with projections.
Then, we show that under the semantics we introduced in \Cref{sec:ranked_enumeration},
free-connex CQs can be handled as efficiently as full acyclic CQs (in data complexity).

\introparagraph{Alternative ways to define ranked enumeration}
The set of witnesses $\witness(q)$ for a query answer $q$ is not necessarily unique for non-full CQs.
Thus, there are at least two reasonable semantics for ranked enumeration over CQs with projections.
Consider the $2$-path query $Q_{P2E}(x_1) \datarule R_1(x_1, x_2), R_2(x_2, x_3)$ where we want to return only the values of the first variable $x_1$.
Recall that we assume that input weights have been placed on the relation tuples.
What do we do if the same value $c_1$ of $x_1$ appears in two different witnesses
$((c_1, c_2), (c_2, c_3))$ and $((c_1, c_2'), (c_2', c_3'))$ 
with weights $w$ and $w'$, respectively?
We identify two different semantics:

\begin{enumerate}
\item \emph{\Allweights} 
semantics: The first option is to return $c_1$ twice with both weights $w, w'$ in the correct sequence.
The corresponding SQL query would be:
\begin{verbatim}
    SELECT   R1.X1, R1.W + R2.W as Weight
    FROM     R1, R2
    WHERE    R1.X2=R2.X2
    ORDER BY Weight ASC
\end{verbatim}
In general, for a CQ $Q$, we return the answers and the weights that the corresponding full CQ would return
projected on the variables $\free(Q)$.\footnote{In the case that two answers have the same weight, we still return both of them.}
Thus, it is trivial to extend our approach to \allweights semantics, as it is essentially equivalent to ranked enumeration of full CQs.
We enumerate the full CQ $Q(\vec x)$ as before and then apply a projection $\pi_{\vec \free(Q)}(q)$ 
to every answer $q \in Q$ before returning it.
The guarantees we get in this case are exactly the same as for full CQs.

\item \emph{\Minweight} semantics: 
The second option, which was, to our knowledge, first proposed by Kimelfeld and Sagiv~\cite{KimelfeldS2006}, is to return $c_1$ only once with the best (minimum) of the two weights.
This is precisely the definition of answer weights we gave in \Cref{sec:ranked_enumeration}.
In this case, the SQL query is:
\begin{verbatim}
    SELECT 	 Y.X1, Y.Weight
    FROM  (SELECT   R1.X1, MIN(R1.W + R2.W) as Weight
           FROM     R1, R2
           WHERE    R1.X2=R2.X2
           GROUP BY R1.X1) Y
    ORDER BY Y.Weight
\end{verbatim}
In general, we define the weight of $q \in Q$ as
$w(q) = \min_{(t_1,\ldots, t_\stages) \in \witness(q)} \sum_{i=1}^{\ell} w(t_i)$
and rank the answers by those weights.
Each returned answer has the minimum weight over all answers to the corresponding full CQ $Q(\vec x)$ that agree with $q$ on $\free(Q)$.
We note that an interpretation of \minweight semantics is that the query has a \emph{group-by clause},
and the aggregation in the group-by is the same as the one used in the ranking function.

A simple way to handle \minweight semantics is to
apply the projections on the output of the enumeration and discard all lower-ranked duplicates.
However, the non-trivial $\TT(k)$ guarantees we have proved do not hold in that case.
The reason is that the answers that project to the same values can be as many as $\O(\out)$ in the worst case,
delaying the enumeration of distinct answers.
We next discuss a non-trivial extension that can efficiently handle \minweight semantics for free-connex CQs.
\end{enumerate}

\introparagraph{Handling free-connex CQs}
We modify the techniques that have been developed for \emph{unranked} enumeration (\Cref{theorem:known-enumeration}) in order to
accommodate efficient ranked enumeration under \minweight semantics.
Intuitively, unranked enumeration for free-connex CQs
works by constructing an appropriate join tree
that groups the free variables together.
The tree is first swept bottom-up with
semi-joins as in the
Yannakakis algorithm \cite{DBLP:conf/vldb/Yannakakis81}
and then pruned so that only the free variables remain.
The answers to the query can then be enumerated as if it were full (without projections).
We present a modification of this approach for ranked enumeration under \minweight semantics.
Essentially, we replace the semi-joins with our Dynamic Programming framework.

\begin{figure*}[tb!]
    \centering    
    \hfill
	\subcaptionbox{Hyperedge $R'$ verifies $Q_{FC}$ is free-connex.
		\label{fig:fc_query}}
		[0.22\linewidth]
		{\includegraphics[height=3.8cm]{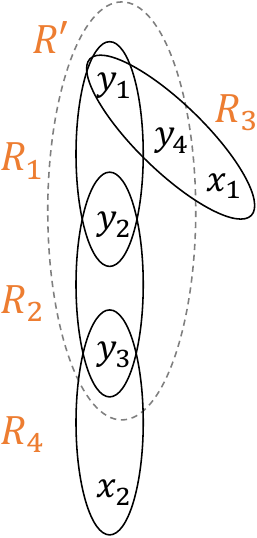}}
    \hfill
	\subcaptionbox{The join tree with a connected subset of nodes $U$ that contain precisely the free variables.
		\label{fig:fc_join_tree}}
		[0.48\linewidth]
		{\includegraphics[height=3.8cm]{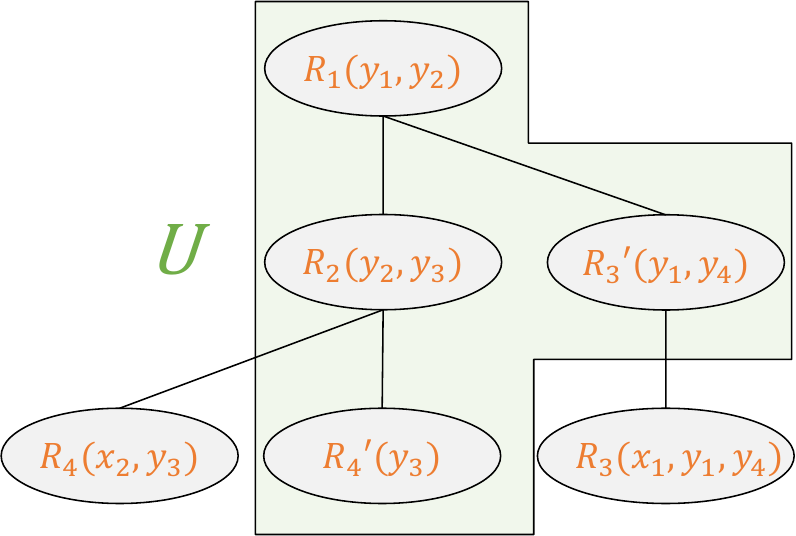}}
    \hfill
	\subcaptionbox{An example database instance.\label{fig:fc_relations}}
		[0.24\linewidth]
		{\includegraphics[height=2.9cm]{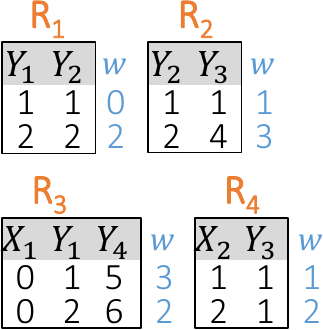}}

	\subcaptionbox{T-DP instance $T$ of the full CQ using the join tree of \cref{fig:fc_join_tree}.
        For every state $v$, $\solW_1(v)$ is depicted on its top-right.
		\label{fig:fc_tdp1}}
		[0.49\linewidth]
		{\includegraphics[height=4.4cm]{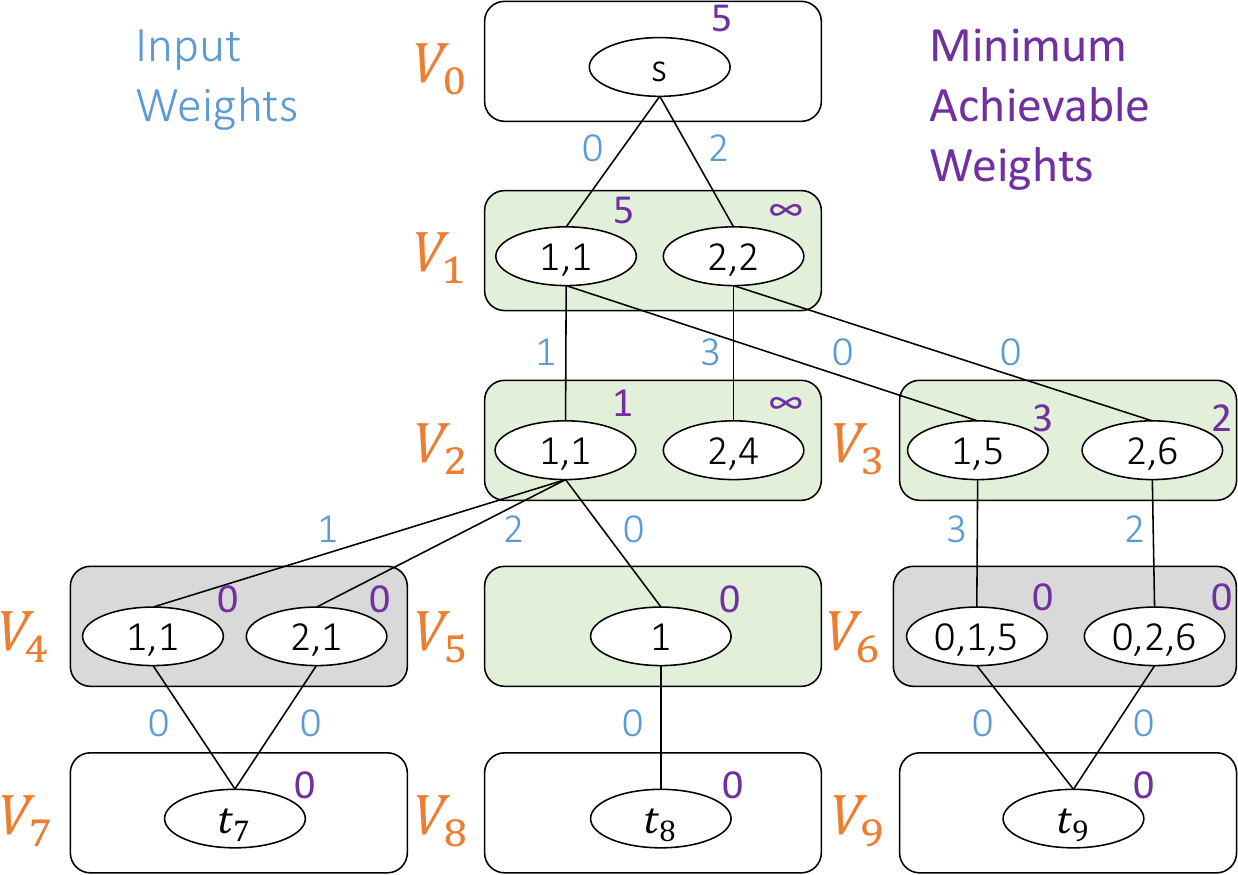}}
    \hfill
	\subcaptionbox{T-DP instance $T'$ used for ranked enumeration.
		Stages corresponding to relations not in $U$ have been removed and edge weights have been modified.
		\label{fig:fc_tdp2}}
		[0.49\linewidth]
		{\includegraphics[height=4.4cm]{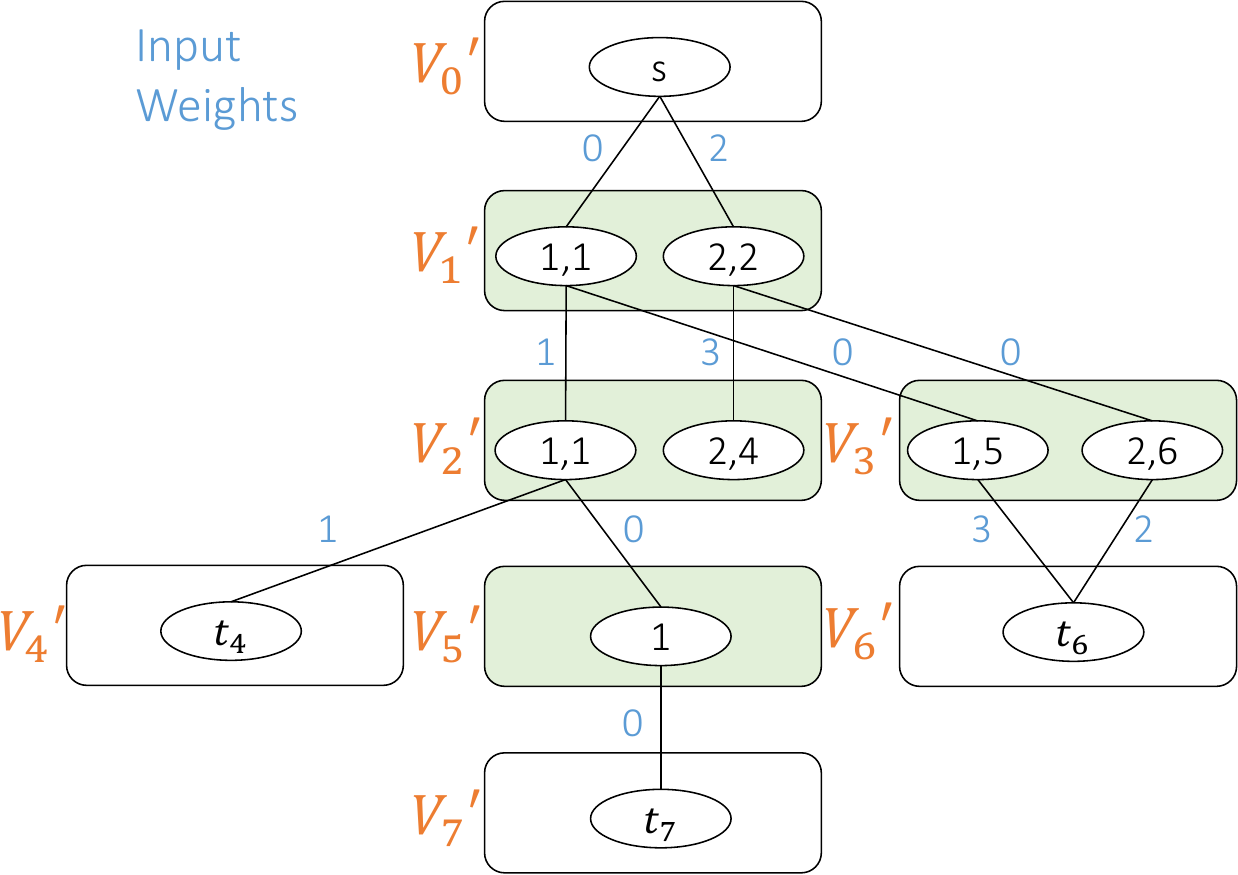}}
	\hfill
    \caption{\Cref{ex:free_connex}: ranked enumeration under \minweight semantics for the free-connex CQ
    $Q_{FC}(y_1, y_2, y_3, y_4) \datarule R_1(y_1, y_2), R_2(y_2, y_3), R_3(x_1, y_1, y_4), R_4(x_2, y_3)$ 
    on an example database.}
    \label{fig:fc}
\end{figure*}

\begin{example}[Free-connex CQ]
Consider the CQ 
$Q_{FC} \datarule R_1(y_1, y_2), R_2(y_2, y_3), R_3(x_1, y_1, y_4),$ $R_4(x_2, y_3)$.
We can verify that it is free-connex if we
add an additional hyperedge $R'$
that encompasses all the free variables
(\cref{fig:fc_query})
and then check that the hypergraph is acyclic
(e.g., by finding a join tree).
Using the algorithm of Brault-Baron~\cite{brault13thesis},
we can construct a join tree for $Q_{FC}$ such that a
connected subset of nodes $U$ contain all the free variables
and no existentially quantified ones (\cref{fig:fc_join_tree}).
In order to achieve that, we have to introduce two additional relations $R_3', R_4'$ 
which are projections of $R_3$ and $R_4$ respectively.
Given this join tree and a database instance (\cref{fig:fc_relations}), 
we can construct the T-DP instance $T$ shown in \cref{fig:fc_tdp1}.
The non-terminal stages correspond to
the nodes of the join tree and are populated by states 
(depicted by white circles)
that correspond to input tuples.
Ranked enumeration directly on $T$ would produce the answers to the full query.
Instead, we only run the bottom-up phase that computes the values $\solW_1(v)$ for all states $v$, 
shown on the top-right of each state.
We proceed by removing the stages that do not belong to $U$ and replacing them with terminal nodes, thereby getting a modified instance $T'$ shown in \cref{fig:fc_tdp2}.
Observe that ranked enumeration on $T'$ will now enumerate the answers to $Q_{FC}$.
To get the correct \minweight semantics, we also have to modify the input weights on $T'$.
Consider state $(1, 1) \in \Sset_2$ on $T$ that has two branches, one towards $\Sset_4$ and one towards $\Sset_5$.
For the first branch, we have to choose between $(1, 1) \in \Sset_4$ and $(2, 1) \in \Sset_4$.
The minimum achievable weight is achieved through $(1, 1)$ 
since $1 + 0 < 2 + 0$.
Therefore, when we remove stage $\Sset_4$ in $T'$ and replace it with a terminal stage 
$\Sset_4' = \{ t_4 \}$, we set the weight of the 
edge $((1,1), t_5)$ to be equal to 1 (i.e., the minimum).
The minimum achievable weights are computed from the bottom-up phase on $T$ (see \cref{eq:TDP_recursion}).
\label{ex:free_connex}
\end{example}

\begin{lemma}[Free-connex CQ to Full CQ]
\label{lem:fc_to_full}
Given a free-connex CQ $Q$ of size $\O(1)$ and a database $D$ of size $\O(n)$,
we can construct a full CQ $Q'$ of size $\O(1)$ and a database $D'$ of size $\O(n)$
such that ranked enumeration
with an \smonotone ranking function
under \minweight semantics for $Q$ 
produces the same answers as ranked enumeration for $Q'$. 
\end{lemma}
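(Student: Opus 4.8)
The plan is to adapt the Yannakakis-style treatment of free-connex CQs used for unranked enumeration (\Cref{theorem:known-enumeration}), replacing its semi-join reductions by a single bottom-up pass of the T-DP algorithm so that the $\preceq$-least achievable weight of every pruned part of the query is folded into the part that is retained. First I would apply the algorithm of Brault-Baron~\cite{brault13thesis} to obtain, in linear time, a join tree $\calT$ of the full version $Q^{\mathrm{full}}$ of $Q$ together with a connected subtree $U$ of $\calT$ such that the variables occurring in the atoms of $U$ are exactly $\free(Q)$; re-rooting $\calT$ at a node of $U$, we may assume $U$ contains the root. This construction may introduce $O(1)$ auxiliary atoms that are projections of existing atoms, each populated in time $O(n)$ and of size at most $n$, and does not change the set of answers. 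Next I would build the T-DP instance $T$ for $Q^{\mathrm{full}}$ from $\calT$ exactly as in \Cref{sec:cq_to_dp}, and run only the bottom-up phase of \Cref{eq:TDP_recursion} over the selective dioid, together with the dead-end (semi-join) reduction. By \Cref{TH:T-DP}, after this pass $\solW_1(v)$ is the $\preceq$-least weight of any subtree of a solution rooted at the state $v$, and every surviving state and edge participates in some solution.

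The second step is to turn the retained part into $Q'$ and $D'$. I would delete from $T$ every stage outside $U$; the deleted stages form subtrees hanging off ``fringe'' states of $U$. For a retained state $v \in \Sset_p$ and each child stage $\Sset_c \notin U$ with $p = \parent(\Sset_c)$, let $b_c(v) = \min_{(v,v_c)\in\DecR_{pc}}\{\weight(v,v_c) \aggr \solW_1(v_c)\}$, which by \Cref{eq:TDP_recursion} is exactly the branch-$c$ summand of $\solW_1(v)$, i.e.\ the $\preceq$-least weight of the deleted branch. In $D'$ I would set the weight of the tuple corresponding to $v$ to its original weight $\aggr \aggrsum_{c}b_c(v)$ (adding a fresh terminal leaf below $v$ wherever $v$ thereby loses all of its children). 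The remaining instance is a T-DP instance whose stage tree is $U$ augmented with terminal leaves, so by the mapping of \Cref{sec:cq_to_dp} it is the T-DP instance of the \emph{full} CQ $Q'$ whose atoms are the atoms of $U$, over the database $D'$ consisting of those relations with the modified weights. Since $Q$ has fixed size, so does $U$, hence $Q'$ has size $O(1)$; and $|D'| = O(n)$. This is precisely the construction carried out in \Cref{ex:free_connex}, and it is computable in linear time in data complexity.

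For correctness, observe that because relations are sets and the atoms of $U$ jointly mention exactly $\free(Q)$, the solutions of the reduced T-DP instance of $Q'$ are in bijection with the answers of $Q$: each such solution is a $\free(Q)$-assignment that satisfies the joins of $Q'$ and, by the dead-end reduction, extends to a full solution of $T$, hence occurs as the projection of some answer of $Q^{\mathrm{full}}$ over $D$; conversely every answer of $Q$ arises this way. For such an answer $q'$, its weight in the new instance is the aggregate of the weights along its $U$-edges together with $\aggrsum_{c}b_c(\cdot)$ over all pruned branches; unfolding the $b_c$'s and using distributivity of $\aggr$ over the aggregation operator together with subset-monotonicity, this equals the $\preceq$-minimum, over all completions of the $U$-solution to a full solution of $T$, of that solution's weight, which is exactly $w_Q(q') = \minel_{(t_1,\dots,t_\ell)\in\witness(q')} w_a(\{w_I(t_i) \mid i \in [\ell]\})$. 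Hence ranked enumeration of $Q'$ over $D'$ and ranked enumeration of $Q$ under \minweight semantics produce the same answers with the same weights, and therefore in the same order (up to ties).

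The step I expect to be the main obstacle is this last equality \emph{in full generality}: verifying that folding the $\preceq$-least weight of each pruned subtree into a single retained tuple is sound in an arbitrary commutative selective dioid (not merely for sum-of-weights) rests on distributivity of $\aggr$ over the aggregation and on the branch-independence built into \Cref{eq:TDP_recursion}, so that the fold applied for one pruned branch of a state never interferes with another branch or with the retained part. A secondary point, which I would treat carefully but expect to be routine, is the bijection between $U$-solutions and \minweight answers: the dead-end reduction on $T$ must remove exactly those $U$-configurations that cannot be completed to a full witness, and projecting the auxiliary atoms onto subsets of $\free(Q)$ must not merge distinct answers --- both following from the running-intersection property of $\calT$ and the fact that $U$ carries all free variables.
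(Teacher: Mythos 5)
Your proposal is correct and follows essentially the same route as the paper's proof: Brault-Baron's join tree with a connected subtree $U$ covering exactly $\free(Q)$, one bottom-up T-DP pass over the full query, pruning the stages outside $U$, and folding the $\preceq$-least achievable weight of each pruned branch (the paper's $\min_{(v_p,v_c)}\{\weight(v_p,v_c)\aggr\solW_1(v_c)\}$, your $b_c(v)$) into the retained instance, followed by the same bijection-plus-weight-preservation argument. The only detail you omit is that the auxiliary projection atoms introduced by the join-tree construction must be assigned the neutral weight so they do not perturb the aggregate, which the paper states explicitly; this is a one-line fix and does not affect the soundness of your plan.
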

\begin{proof}
Let $\vec y$ be the free variables of $Q$.
In a join tree of $Q$, let $\varset(u)$ be the set of variables of the corresponding atom of
a node $u$.
Since the query is free-connex, we can compute 
a join tree with a connected subset of nodes $U$ that satisfy $\bigcup_{u \in U} \varset(u) = \vec y$ 
in $\O(|Q|)$ time using known techniques \cite{Berkholz20tutorial,brault13thesis}.
To achieve this, additional atoms might be introduced;
set the input weights of all tuples materialized from those atoms to $0$.
Also set the root of the tree to be any node $u \in U$.

Next, from the join tree, construct in a bottom-up fashion a T-DP instance $T$ as in \cref{sec:tdp}.
This takes $\O(n)$ time and removes all states that do not participate in any solution.
Every solution of $T$ 
is by construction an answer to the full query $Q_F$ that has the same body as $Q$ (without projections).
Given that (1) $U$ contains all the free variables $y$ needed for answering $Q$ and (2) bottom-up consistency with stages not in $U$\footnote{By slightly abusing the notation, we say that a stage belongs to $U$ if its corresponding node in the join tree belongs to $U$.} has already been enforced, 
the subtree induced by $U$ contains precisely the answers to $Q$.
In more detail, create a copy $T'$ of $T$ 
that only retains the stages
that belong to $U$.
Complete $T'$ with an artificial starting stage as the root of the tree
and terminal stages as the leaves, exactly as in \cref{sec:tdp}.
We argue that 
there is a 1-to-1 correspondence between the T-DP solutions 
of $T'$ and the answers to $Q$.
First, consider a T-DP solution $\sol$ of $T'$.
It has to contain states that belong to $\SsetR$ 
(recall that these are the ones not removed by the bottom-up pass), 
hence they can reach the terminal states of the original T-DP instance $T$.
Thus, there is a way to extend $\sol$ to a solution to the original state-space $T$, 
which corresponds to an answer to the full CQ $Q_F$.
The values assigned to the variables $\vec y$ constitute an answer to $Q$.
Conversely, an answer $q \in Q$ assigns values to the the $\vec y$ variables.
Since the subset $U$ of the join tree contains precisely the $\vec y$ variables, 
we can find tuples in the materialized relations of the join tree or equivalently, states in $T'$ that form a T-DP solution using those values.

To get \minweight semantics, we have to make adjustments to the input weights of $T'$.
In particular, we set the weights of the edges that reach the additional terminal nodes we introduced in
$T'$ according to the weights of $T$ that do not appear in $T'$.
Let $\Sset_c$ be a non-leaf stage in $T$ and $\Sset_p = \parent(\Sset_c)$ its parent such that $\Sset_c \notin U$ and $\Sset_p \in U$.
Also let $\Sset_p'$ be the copy of $\Sset_p$ in $T'$.
The construction of $T'$ added a stage $\Sset_j' = \{ t_j' \}$ and decisions 
$(v_p', t_j')$ for all $v_p' \in \Sset_p'$.
The weight of the edges to reach $t_j'$ from $v_p'$ is set to be the minimum achievable weight that $v_p$ could reach in $T$ from the branch that goes to $\Sset_c$:
$w(v_p', t_j') = \min_{(v_p, v_c) \in \Dec_{pr}}
\big\{ 
\weight(v_p, v_c) \aggr \solW_1(v_c)
\big\}$.
This can be done in time linear in $T$.
For a solution $\sol$ of $T$ and a solution $\sol'$ of $T'$, let $\sol' \subset_U \sol$ if they agree on the subset $U$.
By subset-monotonicity, we have that $w(\sol') = \min_{\sol : \sol' \subset_U \sol} w(\sol)$ for every solution $\sol'$ of $T'$.

The total time spent so far is linear in the size of the database $D$.
From $T'$ we can easily construct a corresponding CQ $Q'$ and database $D'$ by creating a relation for every stage and a tuple for every state.
The weight of a tuples is equal to the weight of the (unique) edge that connects it to its parent
plus the weight of the edges that connect it to terminal children stages (i.e those that are leaves), is such exist.
\end{proof}

\Cref{lem:fc_to_full} says that for free-connex CQs and ranking functions that are \smonotone,
we can achieve ranked enumeration
by first creating a full CQ and a modified database and then applying our any-$k$ algorithms.
By following this approach, we obtain the guarantees of \Cref{theorem:cq_data_comp}.
The corresponding lower bound of the theorem is immediate from \Cref{theorem:known-enumeration} since ranked enumeration is strictly harder than unranked enumeration.

\subsection{Cyclic Queries}
\label{sec:cycles}

Our work mainly targets acyclic CQs, yet the techniques we develop can also be used for cyclic CQs
by leveraging (hyper)tree decompositions~\cite{GottlobGLS:2016}.
The main idea of such \emph{decomposition methods} is to reduce cyclic CQs to acyclic CQs.
Extending the notion of
tree decompositions for graphs~\cite{RobertsonS:1986}, (hyper)tree
decompositions~\cite{GottlobLS:2002} organize the
relations into ``bags'' and arrange those bags into a
tree structure. Each decomposition is associated with
a width parameter that captures the degree of acyclicity in the query
and affects the complexity of subsequent evaluation: smaller width implies
lower time complexity.
\emph{Our approach is orthogonal to the decomposition method used and it adds
ranked enumeration capability virtually ``for free.''}

\introparagraph{Decomposition methods}
The state-of-the-art decompositions rely on the submodular width $\subw(Q)$
of a query $Q$~\cite{Marx:2013:THP:2555516.2535926}. 
There exist decomposition methods that run in time 
$\O(f(|Q|)n^{(2+\delta)\subw(Q)})$ for $\delta > 0$~\cite{berkholz19submodular}
or $\bigO(f_1(|Q|)n^{\subw(Q)}(\log n)^{f_2(|Q|)})$~\cite{khamis17panda}
for query-dependent functions $f, f_1$ and $f_2$.
For example, $\stages$-cycle CQs $Q_{C\stages}$
can be decomposed into trees where each bag materializes a relation of size $\O(n^{2-1/\lceil \ell/2 \rceil})$~\cite{Alon1997,DBLP:conf/sebd/Scarcello18}.
Since this is an active research area,
we expect these algorithms to be improved and we believe our framework is general
enough to accommodate future decomposition methods. Sufficient conditions for our approach
to apply with no additional preprocessing cost are
(1) the full output of $Q$ is the union of the output produced by the trees
in the decomposition and
(2) the number of trees depends only on query size $|Q|$.
Both are satisfied by current decompositions.\footnote{In general, a decomposition
might be designed only for Boolean CQs where it can take ``shortcuts'' because no output tuples are needed, e.g., through
(fast) matrix-multiplication~\cite{Alon1997}.}

\introparagraph{Applying any-$k$}
To use our any-$k$ algorithms on top of an existing decomposition, three issues have to be addressed.
First, the decomposition (e.g., based on the submodular width) might create more than one tree.
To handle that, we define UT-DP as a \emph{union of T-DP problems} 
where a solution to
any of the T-DP problems constitutes a valid solution.
Thus, we are given a set of $u$ functions
$F = \big\{ f^{(i)} \big\}$, 
each defined over a solution space $\sol^{(i)}, i \in [u]$.
The UT-DP problem is then to \emph{find the minimum solution across all T-DP instances}.
The necessary changes to any of our any-$k$ algorithms are straightforward:
We add one more top-level data structure $\Union$ that maintains the
last returned solution of each separate T-DP algorithm in a single priority queue.
Whenever a solution is popped from $\Union$, it gets replaced by the
next best solution of the corresponding T-DP problem.
The second challenge is that we have to properly compute the tuple weights in the bags of the decomposition.
For this, we track the lineage of the relations participating in a bag at the schema level: We only need to know from which input
relation a tuple originates and if that relation's weight values had already been accounted
for by another bag.
Third, we have to deal with possible output duplicates when a decomposition creates
multiple trees. 
We eliminate the duplicates by maintaining a lookup table that contains all the answers produced so far.
Note that many consecutive duplicates can increase the delay of the algorithm,
however $\TT(k)$ is unaffected in data complexity because the number of duplicates for an answer is bounded by $\O(|Q|)$.

\section{Ranking Functions}
\label{sec:ranking_functions}

This section focuses on different aspects related to the ranking functions that are supported by our framework.
We explore how algebraic structures can be used to define ranking functions (\Cref{sec:algebra}),
how the monotonicity properties we have discussed relate to algebraic properties,
as well as other definitions of monotonicity in the literature (\Cref{sec:monotonicity}),
and show in more detail the correctness of our algorithms when we generalize from
the sum-of-weights model we have focused on so far
to other ranking functions (\cref{sec:rankingfctgeneralizing}).

\subsection{Relationship to Algebraic Structures}
\label{sec:algebra}

In many cases, the aggregate ranking function $w_A$ is defined through a binary operator such as
$+, \times$ or $\max$.
Algebraic structures~\cite{GondranMinoux:2008:Semirings} capture the properties of such operators
in an abstract framework.
As a prominent example, semirings have been repeatedly shown to be at the core of many efficient algorithms~\cite{abo16faq,aji00distributive,green07semirings,}.
We now show how these are related to ranking functions and different monotonicity properties.

\introparagraph{Definitions}
A \emph{monoid} is a 3-tuple $(W, \oplus, \0)$
where $W$ is a non-empty set and
$\oplus: W \times W \to W$ is a closed binary operation such that:
\begin{enumerate}
    \item $(x \oplus y) \oplus z = x \oplus (y \oplus z)$ (associativity),
    \item $\0 \in W$ satisfies $x \oplus \0 = \0 \oplus x = x, \forall x \in W$ (neutral or identity element).
\end{enumerate}
A monoid is commutative if it also satisfies (3) $x \oplus y = y \oplus x, \forall x, y \in W$ (commutativity).

A \emph{semiring} is a 5-tuple $(W, \oplus, \otimes, \0, \1)$, where
\begin{enumerate}
    \item $(W, \oplus, \0)$ is a commutative monoid and $(W, \otimes, \1)$ is a monoid,
    \item $\forall x, y, z \in W: 
            (x \oplus y) \otimes z =
            (x \otimes z) \oplus  (y \otimes z)$
            (distributivity of $\otimes$ over $\oplus$),
    \item $\forall x \in W: x \otimes \0 = \0$
            ($\0$ is absorbing or annihilating for $\otimes$).
\end{enumerate}

A \emph{commutative semiring} is one where $(W, \otimes, \1)$ is also commutative.
A binary operator $\oplus$ is called \emph{selective} iff it always returns one of the two operands, i.e.,
$\forall x,y \in W: (x \oplus y = x) \lor (x \oplus y = y)$.
A \emph{selective (commutative) dioid} is a (commutative) semiring in which
$\oplus$ is selective.
A selective commutative dioid has a \emph{lattice} structure if  $\forall x,y \in W: x \oplus y = x \Rightarrow x \otimes y = y$, i.e., $\otimes$ is also selective and follows the reverse order.
An element $a$ of a commutative monoid $(W, \otimes, \1)$ is called \emph{cancellative} iff it can be ``canceled out''
to solve equations, i.e., $\forall x,y \in W: x \otimes a = y \otimes a \Rightarrow x = y$.
Notice that cancellation is a weaker notion than invertibility.
A semiring (or a selective dioid) is cancellative if all the elements 
of the 
product monoid $(W, \otimes, \1)$ are cancellative,
except for the neutral element of the sum monoid $\0$ 
(which is absorbing and thus cannot be cancellative).

For ranked enumeration, a total order on the domain is necessary and it is known that the selective property is enough to induce such an order. 

\begin{lemma}[\cite{GondranMinoux:2008:Semirings}]
\label{lem:total_order}
For any selective dioid $\mathcal{D} = (W, \oplus, \otimes, \0, \1)$, the relation $a \preceq_\mathcal{D} b \equiv a \oplus b = a$ is a total order.
Also, this order is
monotonic (or translation-invariant)
with respect to $\otimes$
(i.e., it is a total order that satisfies $a \preceq_\mathcal{D} b \Rightarrow a \otimes c \preceq_\mathcal{D} b \otimes c, \forall a,b,c \in W$).
\end{lemma}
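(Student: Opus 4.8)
The plan is to verify the three required properties in turn: that $\preceq$ is a total order, that it is reflexive/antisymmetric/transitive, and that it is monotone with respect to $\otimes$. The key algebraic facts I would use are the selectivity of $\oplus$, the idempotency of $\oplus$ (which follows from selectivity since $x \oplus x$ must equal $x$), commutativity and associativity of $\oplus$, and distributivity of $\otimes$ over $\oplus$.

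First I would establish that $\preceq$ is a partial order. Reflexivity is immediate: $a \oplus a = a$ by idempotency, so $a \preceq a$. For antisymmetry, suppose $a \preceq b$ and $b \preceq a$, i.e.\ $a \oplus b = a$ and $b \oplus a = b$; by commutativity of $\oplus$ these left-hand sides are equal, so $a = b$. For transitivity, suppose $a \oplus b = a$ and $b \oplus c = b$; then $a \oplus c = (a \oplus b) \oplus c = a \oplus (b \oplus c) = a \oplus b = a$, using associativity and the two hypotheses. Next, totality: for any $a, b \in W$, selectivity of $\oplus$ gives $a \oplus b = a$ or $a \oplus b = b$; the first case is exactly $a \preceq b$, and the second, combined with commutativity ($b \oplus a = a \oplus b = b$), is exactly $b \preceq a$. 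Hence any two elements are comparable.

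Finally, monotonicity of $\preceq$ with respect to $\otimes$. Assume $a \preceq b$, i.e.\ $a \oplus b = a$. Applying $\otimes c$ to both sides and using distributivity of $\otimes$ over $\oplus$, we get $(a \oplus b) \otimes c = a \otimes c$, hence $(a \otimes c) \oplus (b \otimes c) = a \otimes c$, which by definition says $a \otimes c \preceq b \otimes c$. This holds for all $a, b, c \in W$, completing the proof.

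I do not anticipate a genuine obstacle here — every step is a short algebraic manipulation — but the one point that deserves care is justifying idempotency of $\oplus$ from selectivity before using it in the reflexivity argument, and being explicit that ``total order'' here bundles together reflexivity, antisymmetry, transitivity, \emph{and} totality, each of which must be checked separately rather than assumed. The monotonicity step is the only place distributivity (as opposed to the monoid axioms for $\oplus$) is invoked, so I would flag that dependency explicitly.
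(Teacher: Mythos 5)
Your proof is correct and follows essentially the same route as the paper's: selectivity gives totality, associativity gives transitivity, and distributivity gives monotonicity under $\otimes$. The only (minor, welcome) difference is your antisymmetry argument, which uses commutativity of $\oplus$ directly and is cleaner than the paper's derivation via $a \oplus a = b$ and selectivity; your explicit check of reflexivity via idempotency is also a small addition the paper omits.
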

\begin{proof}
First, we prove that $\preceq_\mathcal{D}$ is a total order.
(1)~To show it is transitive, suppose that $a \preceq_\mathcal{D} b$ and $b \preceq_\mathcal{D} c$ for $a,b,c \in W$, thus
$a \oplus b = a$ and $b \oplus c = b$.
By adding $a$ to both sides of the second equation we get $a \oplus (b \oplus c) = a \oplus b$
and by associativity, $(a \oplus b) \oplus c = a \oplus b$, hence $a \oplus c = a$ or $a \preceq_\mathcal{D} c$.
(2)~For antisymmetry, suppose that $a \preceq_\mathcal{D} b$ and $b \preceq_\mathcal{D} a$ for $a,b \in W$, thus
$a \oplus b = a$ and $b \oplus a = b$.
By commutativity, $a \preceq_\mathcal{D} b = b \preceq_\mathcal{D} a$, so it follows that $a=b$.
(3)~The fact that it is total (or connected), i.e.\ $a \preceq_\mathcal{D} b$ or $b \preceq_\mathcal{D} a$ for all $a,b \in W$, 
follows directly from selectivity.
(4)~Similarly, the fact that it is reflexive follows immediately from $a \oplus a = a$, hence $a \preceq_\mathcal{D} a$.

Next, we prove that the order is monotonic w.r.t. $\otimes$.
For all $a,b,c \in W$ we have $a \preceq_\mathcal{D} b \Rightarrow a \oplus b = a \Rightarrow 
c \otimes (a \oplus b) = c \otimes a
$
and by distributivity, $(c \otimes a) \oplus (c \otimes b) = c \otimes a$,
which implies that $a \otimes c \preceq_\mathcal{D} b \otimes c$.
\end{proof}

The converse is also true~\cite{bistarelli99semiring}: given a commutative monoid equipped with a total order that is monotonic with respect to $\otimes$,
there is a corresponding selective dioid where the $\oplus$ operator selects the least element according to the order.
Therefore, these two perspectives are equivalent.

\introparagraph{Algebraic structures as ranking functions}
A selective commutative dioid can be used as a ranking function where
$\oplus$ acts as a comparator and $\otimes$ as an aggregator of input weights.
The sum-of-weights model corresponds to $(\mathbb{R} \cup \{ \infty \}, \min, +, \infty, 0)$,
also called the \emph{tropical semiring}~\cite{pin98tropical}.
Notice the correspondence of
semiring multiplication
$\otimes$ to $+$ and
semiring addition
$\oplus$ to $\min$.
Another semiring we can use is $(\R \cup \{ \infty, -\infty \}, \min, \max, \infty, -\infty)$,
where we
essentially swap the $+$ operator for $\max$.
Ranked enumeration with this semiring prioritizes query answers whose maximum-weight witness
is as small as possible (regardless of the sum of weights).

\begin{definition}
Given a commutative selective dioid $\mathcal{D} = (W, \oplus, \otimes, \0, \1)$, we can define an algebraic~\footnote{To be precise, the aggregate ranking function is distributive~\cite{DBLP:journals/datamine/GrayCBLRVPP97} or self-decomposable~\cite{jesus15aggregation} in this case.} ranking function
$w^\mathcal{D}$
such that $w_A^\mathcal{D} = \bigotimes_{x \in X} x$ and $x \preceq y$ iff $x \oplus y = x$.
\end{definition}

\introparagraph{On commutativity}
Commutativity of the product operator $\otimes$ together with associativity
guarantees that the outcome of an aggregation does not depend on the order of the elements. 
It is required so that the ranking function defined from a selective dioid is an \emph{aggregate} function,
which is order-insensitive (\Cref{sec:ranked_enumeration}).
The reason that this property is important is because different join trees impose a different order of stages in DP,
thus a different order of weight aggregation.
Without commutativity, the weight of a query answer could be dependent on the chosen join tree.\footnote{In the earlier version of this article~\cite{tziavelis20vldb},
commutativity was not stated as an explicit requirement. This oversight is not important if the DP structure is fixed and the ranking function order agrees with it,
but it is necessary if we want the freedom to choose different join trees.
}

\begin{example}[Non-commutative structures]
The prime example of a non-commutative operator is matrix multiplication since for two matrices $A,B$ we could have
$AB \neq BA$.
However, we are not aware of any non-commutative selective dioid on infinite domains.
For example, if we assume that $\otimes$ is matrix multiplication,
there is no obvious way to define a selective operator on matrices so that
distributivity also holds.
Using the Mace4 tool~\cite{mccune03mace4}, 
we were able to identify a finite such structure on a domain of 5 elements
(see \Cref{appendix:dioid_example}).
\end{example}

\subsection{Comparison of Monotonicity Properties}
\label{sec:monotonicity}

In this section, we compare different notions of monotonicity of the ranking function and show how these can be derived from the properties of algebraic structures introduced previously.

\introparagraph{Holistic-Monotonicity}
A well-known notion of monotonicity is due to Fagin et al. \cite{fagin03}.
To avoid ambiguity in the use of the generic term ``monotonicity'', we use the term ``holistic-monotonicity''.

\begin{definition}[Holistic-Monotonicity~\cite{fagin03}]
A ranking function $w$ is holistic-monotone if
$x_i \preceq x_i', \forall i \in [\ell]$ implies
$w_A(\{x_1, \ldots, x_\ell \}) \preceq w_A(\{ x_1', \ldots, x_\ell' \})$.
\end{definition}

Intuitively, a holistic-monotone ranking function will yield a better (or equal) result
when all of its components are better (or equal).
We note that there is an \emph{equivalent} definition of this property where only one component changes value:
a ranking function $w$ is holistic-monotone if 
$x_i \preceq x_i'$ implies
$w_A(\{x_1, \ldots, x_i, \ldots, x_\ell \}) \preceq w_A(\{ x_1, \ldots, x_i', \ldots, x_\ell \})$.

\begin{figure}[tb]
\centering
\includegraphics[height=4.5cm]{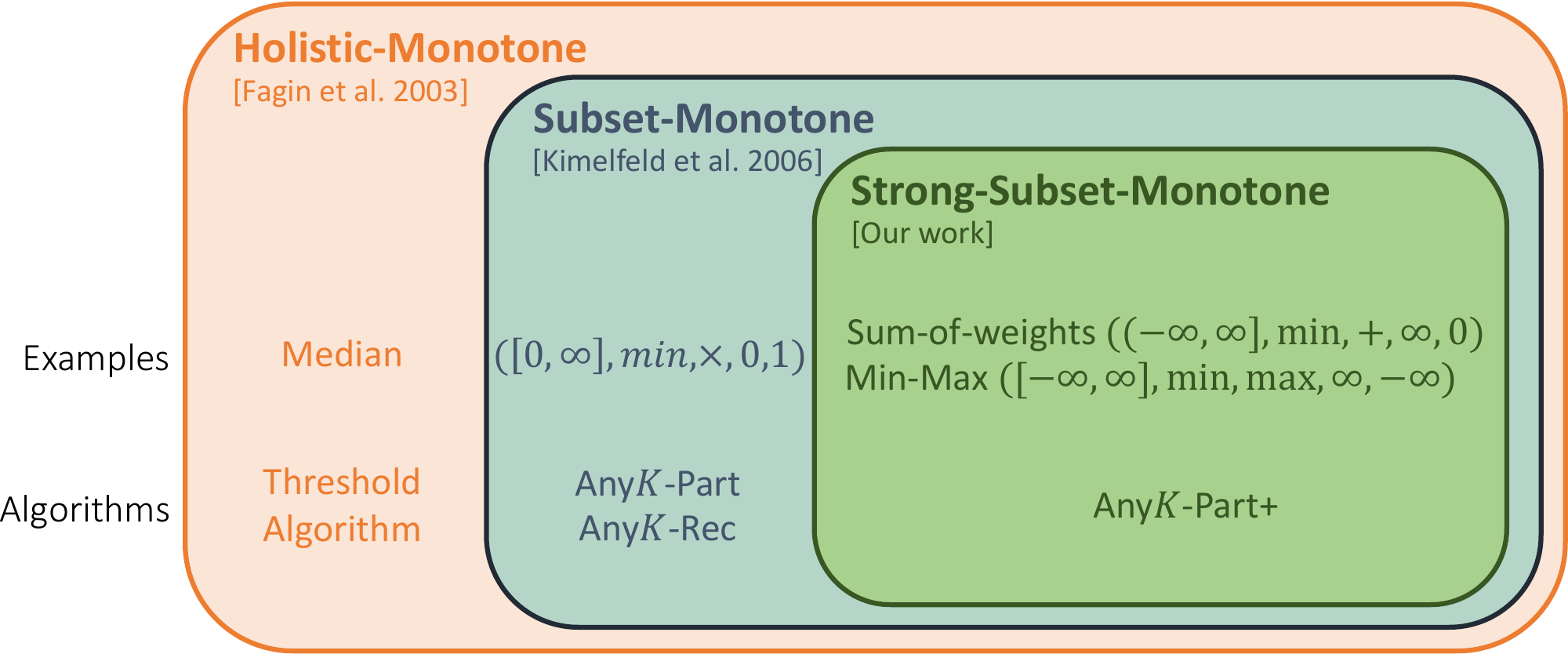}
\caption{Hierarchy of monotonicity properties. Depicted size does not correspond to actual class size.}
\label{fig:monotonicity}
\end{figure}

\introparagraph{Subset-Monotonicity}
We remind the reader that \smonotonicity allows us to compare two (sub)sets of elements ($w(Y_1) \preceq w(Y_2)$) to infer
their order when they are extended with the same (sub)set $X$ (\Cref{def:smonotone}). 
Compared to holistic-monotonicity, \smonotonicity is a stricter property that fewer ranking functions satisfy.
To see this, notice that holistic-monotonicity is implied if $|Y_1|=|Y_2|=1$.
Intuitively, holistic-monotonicity does not allow us to rely on the ordering of larger (than size-one) sets of elements
when we assemble larger sets.
Indeed, the Threshold Algorithm~\cite{fagin03} always makes comparisons either between individual elements
or between complete solutions. 
On the other hand, DP and our any-$k$ algorithms require the ability to meaningfully compare partial solutions (prefixes or suffixes).

\begin{example}[Not subset-monotone]
Consider as a ranking function the median over the real numbers.
It is holistic-monotone and
therefore supported by the Threshold Algorithm \cite{fagin03},
because substituting any individual element with a smaller one can only decrease the median.
However, it is \emph{not \smonotone}.
Consider $L_1 = \{1, 10, 100\}$ and $L_2 = \{19, 20, 21\}$.
Then $\textrm{median}(L_1) = 10 \leq 20 = \textrm{median}(L_2)$,
but if we union with $L = \{101, 102\}$, we have
$\textrm{median}(L_1 \cup L) = 100 \geq 21 = \textrm{median}(L_2 \cup L)$.
\end{example}

If the ranking function is defined via a commutative selective dioid, then \smonotonicity is guaranteed as it follows from the other algebraic properties.

\begin{lemma}
For a commutative selective dioid $\mathcal{D}$, the ranking function $w^\mathcal{D}$ is \smonotone.
\end{lemma}
\begin{proof}
For all multisets $Y_1, Y_2, X$, 
$w_A^\mathcal{D}(Y_1) \preceq w_A^\mathcal{D}(Y_2) \Rightarrow 
\bigotimes_{y_1 \in Y_1} y_1 \preceq_\mathcal{D} \bigotimes_{y_2 \in Y_2} y_2 \Rightarrow
(\bigotimes_{x \in X} x) \otimes (\bigotimes_{y_1 \in Y_1} y_1) \preceq_\mathcal{D} (\bigotimes_{x \in X} x) \otimes (\bigotimes_{y_2 \in Y_2} y_2) \Rightarrow
w_A^\mathcal{D}(X \uplus Y_1) \preceq w_A^\mathcal{D}(X \uplus Y_2)$,
where $\preceq_\mathcal{D}$ is the total order $a \preceq_\mathcal{D} b \equiv a \oplus b$
and the second-to-last implication follows from the monotonic property of that order (\Cref{lem:total_order}).
\end{proof}

\introparagraph{Strong-Subset-Monotonicity}
The \ssmonotonicity property (\cref{def:strongsubset}) is even stricter.
It enables us to swap a set $X_1$ in a comparison $w_A(X_1 \uplus Y_1) \preceq w_A(X_1 \uplus Y_2)$ with a different set $X_2$, provided that $w_A(X_1) \preceq w_A(X_2)$.
As previously mentioned, \smonotonicity is implied by setting $X_1 = \emptyset$.
The benefit of \ssmonotonicity is that it allows us to reuse the relative ranking of $Y_1$ and $Y_2$
even if we do not directly compare them, but instead compare their extension with $X_1$. 

We now give some intuition on how \ssmonotonicity differs from the weaker \smonotonicity.
If we could show that $w_A(Y_1) \preceq w_A(Y_2)$, then \ssmonotonicity would be guaranteed by \smonotonicity.
As we will show shortly, this is the case for cancellative algebraic structures where we can cancel out $X_1$. 
For example, in the tropical semiring, if $1 + 2 \leq 1 + 3$, then we can swap out $1$ with any other element, e.g., $5 + 2 \leq 5 + 3$ because we can infer $2 \leq 3$.
But \ssmonotonicity may still hold even if $w_A(Y_1) \npreceq w_A(Y_2)$.
An example is the min-max semiring where $\max(5, 3) \leq \max(5, 2)$ and $\max(i, 3) \leq \max(i, 2)$ for any $i \geq 5$, even though $3 \nleq 2$.
The condition $i \geq 5$ in the example is captured by the requirement that $w_A(X_1) \preceq w_A(X_2)$.
In terms of our \ANYKPARTP algorithm, this continues to hold as long as we continue to visit lower-ranked partial solutions.
We prove in more detail how this property guarantees the correctness of the algorithm in
\Cref{sec:rankingfctgeneralizing}.

\begin{example}[Not Strong-Subset-Monotone]
\label{ex:weaksubsetmonotone}
An example of a ranking function that is \smonotone but not \ssmonotone is the one defined by $([0,\infty], \min, \times, \infty, 1)$.
For example, we have $0 \times 2 \leq 0 \times 1$ but $5 \times 2 > 5 \times 1$ even though
$0 < 5$.
However, if we remove the non-cancellative element $0$, then the structure $((0,\infty], \min, \times, \infty, 1)$ gives us an \ssmonotone ranking function.
\end{example}	

We now show two algebraic properties sufficient to guarantee \ssmonotonicity:

\begin{lemma}
Given a commutative selective dioid $\mathcal{D}$ that is either cancellative 
or has a lattice structure,
the corresponding ranking function $w^\mathcal{D}$ is \ssmonotone.
\end{lemma}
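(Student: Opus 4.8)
The statement has two independent cases: a cancellative dioid, and a lattice dioid. In both cases I need to show that $w_{\mathcal{A}}$ is strong subset-monotone, i.e., whenever $w(X_1) \preceq w(X_2)$, the implication $w(X_1 \uplus Y_1) \preceq w(X_1 \uplus Y_2) \Rightarrow w(X_2 \uplus Y_1) \preceq w(X_2 \uplus Y_2)$ holds. Since $w_{\mathcal{A}}(S) = \bigotimes_{s \in S} s$, and $\otimes$ is commutative and associative, I can abbreviate $x_i := w(X_i) = \bigotimes X_i$ and $y_j := w(Y_j) = \bigotimes Y_j$, so that $w(X_i \uplus Y_j) = x_i \otimes y_j$. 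The hypothesis is $x_1 \preceq x_2$, and I must derive $x_2 \otimes y_1 \preceq x_2 \otimes y_2$ from $x_1 \otimes y_1 \preceq x_1 \otimes y_2$.

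\textbf{Cancellative case.} First I would dispose of the degenerate subcase $x_1 = \0$: then $x_1 \otimes y_1 = \0 = x_1 \otimes y_2$ (absorbing), so the premise gives no information, but I still need the conclusion. Actually if $x_1 = \0$ then $x_1 \preceq x_2$ holds for every $x_2$, so I cannot use the premise at all here --- this subcase needs separate care, and I will handle it by noting the premise $x_1 \otimes y_1 \preceq x_1 \otimes y_2$ is vacuous, so the statement as written would be false unless... let me reconsider: the definition quantifies over \emph{all} $Y_1, Y_2$ and \emph{all} $X_1, X_2$ with $w(X_1) \preceq w(X_2)$; so I genuinely need to show that for $x_1 = \0$, $x_2 \otimes y_1 \preceq x_2 \otimes y_2$ for all $y_1, y_2$ --- which is plainly false. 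So in fact the intended reading must be that cancellativity is used to \emph{avoid} this: I should instead argue directly via the monotonicity of Lemma~\ref{lem:total_order}. From $x_1 \preceq x_2$ and $y_1 \preceq y_2$ or $y_2 \preceq y_1$ (totality), combined with monotonicity, I get either both products preserve the same order. The clean route: assume $x_1 \otimes y_1 \preceq x_1 \otimes y_2$; by totality either $y_1 \preceq y_2$ or $y_2 \preceq y_1$; if $y_2 \prec y_1$, then by monotonicity $x_1 \otimes y_2 \preceq x_1 \otimes y_1$, and combined with the premise, $x_1 \otimes y_1 = x_1 \otimes y_2$; now cancellativity of $x_1$ (assuming $x_1 \neq \0$) gives $y_1 = y_2$, contradicting $y_2 \prec y_1$. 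Hence $y_1 \preceq y_2$, and monotonicity immediately yields $x_2 \otimes y_1 \preceq x_2 \otimes y_2$. The case $x_1 = \0$: then $\0$ is the minimum (it is absorbing and $\0 = \0 \oplus x$ would need checking, but in the tropical-style dioid $\0 = \infty$ is the \emph{maximum}; in any case the premise $\0 \preceq \0$ is vacuous) --- I will note that when $x_1 = \0$ the premise holds trivially and, since $\0$ cannot be cancellative, the lemma's cancellativity hypothesis should be read as excluding this or the conclusion must be re-derived; I expect the paper sidesteps this by the standing assumption $w(\emptyset) \preceq w(X_2)$, meaning $\0$ here is the \emph{least} element, in which case $x_1 = \0$ forces $x_1 \otimes y_1 = \0 \preceq$ everything and we still need totality of $y$'s --- I will follow whichever convention the paper fixed and flag it.

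\textbf{Lattice case.} Here $\otimes$ is also selective and follows the opposite order: $x \oplus y = x \Rightarrow x \otimes y = y$, i.e., $x \preceq y \Rightarrow x \otimes y = y$ (the larger survives under $\otimes$). So $a \otimes b = \max_{\preceq}(a,b)$. Now I want: $x_1 \preceq x_2$ and $\max(x_1,y_1) \preceq \max(x_1,y_2)$ imply $\max(x_2,y_1) \preceq \max(x_2,y_2)$. I would case-split on the order between $y_1$ and $y_2$. If $y_1 \preceq y_2$: then $\max(x_2,y_1) \preceq \max(x_2,y_2)$ is automatic from monotonicity of $\max$ in each argument. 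If $y_2 \prec y_1$: then $\max(x_1,y_1) = \max(x_1,y_1)$; the premise says $\max(x_1,y_1)\preceq\max(x_1,y_2)$, but $\max(x_1,y_2) \preceq \max(x_1,y_1)$ always (since $y_2 \prec y_1$), so $\max(x_1,y_1) = \max(x_1,y_2)$, which with $y_2 \prec y_1$ forces $y_1 \preceq x_1$ (the only way $y_1$ doesn't affect the max is if $x_1$ already dominates it). Then $y_1 \preceq x_1 \preceq x_2$, so $\max(x_2,y_1) = x_2 = \max(x_2, y_2)$ as well (since $y_2 \prec y_1 \preceq x_2$), giving equality, hence $\preceq$. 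This case analysis is routine. \textbf{Main obstacle.} I expect the genuine subtlety to be the boundary behavior with $\0$ (and symmetrically $\1$) in the cancellative case: cancellativity is stated only for elements other than $\0$, so the argument "cancel $x_1$" is illegitimate precisely when $x_1 = \0$, and I must confirm from the paper's conventions (the assumption $w(\emptyset) \preceq w(X)$ and the role of $\0$ as least vs.\ greatest element) that this case is either vacuous or handled; getting that bookkeeping exactly right, rather than the algebra, is where I would spend the most care.
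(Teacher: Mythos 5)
Your main arguments are sound: the cancellative case for $w(X_1)\neq\0$ (totality of $\preceq$, monotonicity, antisymmetry, then cancellation to force $y_1\preceq y_2$, then monotonicity again) is essentially the paper's argument, which reaches the same intermediate conclusion $x_1\oplus x_2=x_1$ via distributivity and cancellation instead of via the order; and your lattice case, done as an explicit case analysis on $\max_{\preceq}$, is a correct alternative to the paper's slicker route (multiply the premise by $y_2$, distribute, and absorb $y_1\otimes y_2=y_2$ using the lattice property). The one genuine gap is the boundary case $w(X_1)=\0$, which you correctly identify as the crux but then resolve in the wrong direction. You assert that if $x_1=\0$ then $x_1\preceq x_2$ holds for every $x_2$, and from this conclude the lemma would be ``plainly false.'' But in the order induced by a selective dioid, $a\preceq b\equiv a\oplus b=a$, and since $\0$ is the \emph{neutral} element of $\oplus$ we have $b\oplus\0=b$ for every $b$, i.e.\ $b\preceq\0$ always: $\0$ is the \emph{greatest} element, not the least. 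Hence $\0\preceq x_2$ holds only when $x_2=\0$, so the hypothesis $w(X_1)=\0\preceq w(X_2)$ forces $w(X_2)=\0$, and the conclusion $w(X_2\uplus Y_1)\preceq w(X_2\uplus Y_2)$ reads $\0\preceq\0$ by absorption. This is exactly the one-line disposal the paper gives, and it is where your proof, as written, does not close.

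A related conflation feeds the confusion: the paper's standing assumption that $w(\emptyset)$ is the least element concerns $\1$ (the empty $\otimes$-product), not $\0$; in the tropical semiring $\0=\infty$ is indeed the top of the order while $\1=0$ is what the assumption constrains. Once you replace ``$\0$ is least'' with ``$\0$ is greatest,'' the case you flag becomes vacuous rather than fatal, and the rest of your proof goes through unchanged.
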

\begin{proof}
Strong-subset-monotonicity requires that for multisets $X_1, X_2, Y_1, Y_2$,
if $w_A^\mathcal{D}(X_1 \uplus Y_1) \preceq w_A^\mathcal{D}(X_1 \uplus Y_2)$ 
and $w_A^\mathcal{D}(X_1) \preceq w_A^\mathcal{D}(X_2)$,
then $w_A^\mathcal{D}(X_2 \uplus Y_1) \preceq w_A^\mathcal{D}(X_2 \uplus Y_2)$.
By the definition of $w_A^\mathcal{D}$, this is equivalent to showing that if
$(x_1 \otimes y_1) \oplus (x_1 \otimes y_2) = x_1 \otimes y_1$
and $x_1 \oplus x_2 = x_1$,
then $(x_2 \otimes y_1) \oplus (x_2 \otimes y_2) = x_2 \otimes y_1$ for $x_1, x_2, y_1, y_2 \in W$.

First, let $\mathcal{D}$ be cancellative.
We apply distributivity to get $x_1 \otimes (y_1 \oplus y_2) = x_1 \otimes y_1$.
If $x_1 \neq \0$, we cancel it out to get $y_1 \oplus y_2 = y_1$ and by \Cref{lem:total_order},
$(x_2 \otimes y_1) \oplus (x_2 \otimes y_2) = x_2 \otimes y_1$.
If $x_1 = \0$, then because $x_1 \oplus x_2 = x_1$ we have $x_2 = \0$
and $(x_2 \otimes y_1) \oplus (x_2 \otimes y_2) = x_2 \otimes y_1$ because $x_2$ is absorptive.

Second, let $\mathcal{D}$ have a lattice structure.
Then, we multiply $(x_1 \otimes y_1) \oplus (x_1 \otimes y_2) = x_1 \otimes y_1$ by $x_2$ to get
$x_2 \otimes((x_1 \otimes y_1) \oplus (x_1 \otimes y_2)) = x_2 \otimes x_1 \otimes y_1$
and by distributivity,
$(x_2 \otimes x_1 \otimes y_1) \oplus (x_2 \otimes x_1 \otimes y_2) = x_2 \otimes x_1 \otimes y_1$.
Then, because $x_1 \oplus x_2 = x_1$, we have $x_1 \otimes x_2 = x_2$,
hence $(x_2 \otimes y_1) \oplus (x_2 \otimes y_2) = x_2 \otimes y_1$.
\end{proof}

\Cref{fig:monotonicity} illustrates the hierarchy between the three monotonicity notions.

\subsection{Correctness of Any-$k$ Algorithms}
\label{sec:rankingfctgeneralizing}

In this section, we elaborate on how our algorithms can work with ranking functions that are more general than
the sum-of-weights model that we have mostly assumed throughout the paper,
as long as these satisfy the monotonicity properties we have discussed.

\introparagraph{Correctness by monotonicity}
For \ANYKPART and \ANYKREC, the key property that is required for correctness
is that the order of suffixes from a node is maintained when those suffixes are grown to longer paths by appending
the same prefix in front.
This is required in (1) Dynamic Programming (\Cref{eq:DP_recursion,eq:TDP_recursion}),
(2) deviations for ranking (\Cref{lem:deviations}),
(3) the successor function for \ANYKPART (\Cref{sec:part}), and
(4) the generalized principle of optimality for \ANYKREC (\Cref{lem:generalizedopt}).
Fortunately, this property is an immediate consequence of subset-monotonicity.

\begin{lemma}
\label{lem:suffixes}
For an \smonotone ranking function $w$,
if $w(\sol_i(v)) \preceq w(\sol_j(v))$ for two suffixes $\sol_i(v), \sol_j(v)$
starting at node $v$,
then
$w(p \concat \sol_i(v)) \preceq w(p \concat \sol_j(v))$ for any path $p$ ending at $v$.
\end{lemma}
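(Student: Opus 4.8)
The plan is to observe that this lemma is nothing more than an instance of subset-monotonicity, once paths are identified with the multisets of their edge weights. First I would recall that for \emph{any} path $q$ the weight is $w(q) = w(\{w(e) \mid e \in q\})$, i.e., an aggregate over the multiset of edge weights encountered along $q$. Next, for a prefix $p$ ending at $v$ and a suffix $\sol(v)$ starting at $v$, I would note that the edge multiset of $p \concat \sol(v)$ is exactly the multiset union of the edges of $p$ and the edges of $\sol(v)$: the two pieces share the node $v$ but no edge, and since weights live on edges, nothing is counted twice. Writing $X = \{w(e) \mid e \in p\}$, $Y_i = \{w(e) \mid e \in \sol_i(v)\}$, and $Y_j = \{w(e) \mid e \in \sol_j(v)\}$, this gives $w(p \concat \sol_i(v)) = w(X \uplus Y_i)$ and $w(p \concat \sol_j(v)) = w(X \uplus Y_j)$.

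With this translation in hand, the conclusion is immediate: the hypothesis $w(\sol_i(v)) \preceq w(\sol_j(v))$ reads $w(Y_i) \preceq w(Y_j)$, and subset-monotonicity (\Cref{def:smonotone}) applied with this $X$ yields $w(X \uplus Y_i) \preceq w(X \uplus Y_j)$, i.e., $w(p \concat \sol_i(v)) \preceq w(p \concat \sol_j(v))$. Here I would also remind the reader that $w$ being an \emph{aggregate} function is precisely what makes the order in which the edge weights are combined irrelevant, so that "the multiset of edge weights of $p \concat \sol(v)$ is $X \uplus Y$" is meaningful regardless of how the concatenation is traversed.

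I do not anticipate a genuine obstacle. The only point that deserves care is the bookkeeping claim that concatenation of a prefix with a suffix corresponds to a disjoint multiset union of edge-weight multisets --- in particular that the shared endpoint $v$ introduces no duplicated weight --- together with the fact that this is exactly the shape $X \uplus Y_1$ versus $X \uplus Y_2$ required by \Cref{def:smonotone}. For completeness I would add one sentence noting the analogous statement needed for the tree-structured setting: with "path" replaced by a T-DP solution and "suffix" by the subtree hanging below $v$, the objective \cref{eq:cost_treeDP} is again an aggregate over a multiset of edge weights that splits as a disjoint union of the "above-$v$" part and the subtree below $v$, so the same one-line argument via subset-monotonicity applies.
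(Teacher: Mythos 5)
Your proposal is correct and matches the paper's reasoning: the paper states this lemma without a separate proof, calling it "an immediate consequence of subset-monotonicity," and your argument — identifying $p \concat \sol_i(v)$ with the disjoint multiset union $X \uplus Y_i$ of edge weights and invoking \Cref{def:smonotone} — is exactly the immediate step being invoked. Your extra care about the shared node $v$ contributing no duplicated edge weight, and the remark about the tree-structured analogue, are both sound.
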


For \ANYKPARTP, we additionally need the property that from the order of complete solutions
popped from the priority queue $\Cand$,
we can obtain a ranking of suffixes that we reuse for later solutions.
This is not the same as \Cref{lem:suffixes} because we do not directly have access to the actual ranking of suffixes.
Here, we need the stricter \ssmonotonicity property.

\begin{lemma}
\label{lem:partp_correctness}
For an \ssmonotone ranking function $w$,
if $w(p_1 \concat \sol_i(v)) \preceq w(p_1 \concat \sol_j(v))$ for two suffixes $\sol_i(v), \sol_j(v)$
starting at node $v$ and the leading 
prefix $p_1$ of $v$,
then
for any path $p_2 \concat \sol_i(v)$ where $p_2$ is a prefix ending at $v$ with $p_2 \neq p_1$,
we have that
$w(p_2 \concat \sol_i(v)) \preceq w(p_2 \concat \sol_j(v))$.
\end{lemma}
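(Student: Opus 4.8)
The plan is to obtain the statement as a direct consequence of strong subset-monotonicity, once the right precondition about the two prefixes is in place. Because $w$ is an aggregate (order-insensitive) function and the edges of a prefix ending at $v$ are disjoint from those of a suffix starting at $v$, we have $w(p \concat \sol(v)) = w(M_p \uplus M_{\sol(v)})$, where $M_x$ denotes the multiset of edge weights along a (sub)path $x$. Hence strong subset-monotonicity, invoked with $X_1 := M_{p_1}$, $X_2 := M_{p_2}$, $Y_1 := M_{\sol_i(v)}$, $Y_2 := M_{\sol_j(v)}$, turns the hypothesis $w(p_1 \concat \sol_i(v)) \preceq w(p_1 \concat \sol_j(v))$ into the desired $w(p_2 \concat \sol_i(v)) \preceq w(p_2 \concat \sol_j(v))$ --- but only if its precondition $w(M_{p_1}) \preceq w(M_{p_2})$, i.e.\ $w(p_1) \preceq w(p_2)$, holds. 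So the real work is to relate the leading prefix $p_1$ of $v$ to an arbitrary competing prefix $p_2$ ending at $v$, using only the fact that $p_1$ is popped from $\Cand$ no later than $p_2$.

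For this I would first recall that \ANYKPARTP outputs solutions in non-decreasing weight order (as in \ANYKPART, cf.\ \Cref{lem:deviations}) and that popping a prefix $\langle v_1,\dots,v_i\rangle$ ending at $v=v_i$ (when popped, each of $p_1$ and $p_2$ is handled by the non-follower branch) expands it to the solution $p \concat \sol_1(v)$, whose weight equals its priority in $\Cand$. Since $p_1$ is popped before $p_2$, the solution $p_1 \concat \sol_1(v)$ is output before $p_2 \concat \sol_1(v)$, so $w(p_1 \concat \sol_1(v)) \preceq w(p_2 \concat \sol_1(v))$. Now split into two cases. If $w(p_1) \preceq w(p_2)$, the precondition above is met and the proof concludes in one line. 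Otherwise $w(p_2) \prec w(p_1)$, and subset-monotonicity (\Cref{def:smonotone}, applied with $X := M_{\sol_1(v)}$) gives $w(p_2 \concat \sol_1(v)) \preceq w(p_1 \concat \sol_1(v))$; combined with the inequality just derived, this forces $w(p_1 \concat \sol_1(v)) = w(p_2 \concat \sol_1(v))$.

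In this second case I would use that $\sol_1(v)$ is a minimum-weight suffix from $v$, so $w(M_{\sol_1(v)}) \preceq w(M_{\sol_i(v)})$ and $w(M_{\sol_1(v)}) \preceq w(M_{\sol_j(v)})$. Applying strong subset-monotonicity in both directions to the equality $w(M_{\sol_1(v)} \uplus M_{p_1}) = w(M_{\sol_1(v)} \uplus M_{p_2})$ --- with $M_{\sol_1(v)}$ in the role of $X_1$, $M_{\sol_i(v)}$ (then $M_{\sol_j(v)}$) in the role of $X_2$, and $M_{p_1}, M_{p_2}$ in the roles of $Y_1, Y_2$ --- yields $w(p_1 \concat \sol_i(v)) = w(p_2 \concat \sol_i(v))$ and $w(p_1 \concat \sol_j(v)) = w(p_2 \concat \sol_j(v))$, so the hypothesis transfers to $p_2$ verbatim. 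I expect the main obstacle to be exactly this second case: the leading prefix need not have strictly minimal weight among prefixes ending at $v$ (a heavier prefix can be popped first when completion weights tie), so the naive one-line argument via strong subset-monotonicity is incomplete; the point is that whenever this happens the completion weights of $p_1$ and $p_2$ coincide on every suffix from $v$, which makes the suffix order irrelevant. A minor secondary point is pinning down the ``popped earlier $\Rightarrow$ no larger completion weight'' step, which I would handle through the already-established non-decreasing output order.
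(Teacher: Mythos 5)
Your proof is correct and rests on the same two pillars as the paper's own argument: a direct application of strong subset-monotonicity when $w(p_1) \preceq w(p_2)$, and, in the opposite case, the combination of the leading-prefix property with subset-monotonicity to force $w(p_1 \concat \sol_1(v)) = w(p_2 \concat \sol_1(v))$, followed by the ``equality version'' of strong subset-monotonicity to transfer that equality from $\sol_1(v)$ to $\sol_i(v)$ and $\sol_j(v)$. The organization differs, however. The paper distinguishes four cases: it first disposes of $w(\sol_i(v)) \preceq w(\sol_j(v))$ via plain subset-monotonicity, and in the hard regime it splits further on whether the first-popped suffix $\Pi_1$ is lighter or heavier than $\sol_j(v)$, because it never assumes that $\Pi_1$ minimizes suffix weight --- only that $p_1 \concat \Pi_1$ minimizes completed-path weight among paths through $v$. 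You instead identify the first suffix with the DP optimum $\sol_1(v)$ and invoke its global minimality $w(\sol_1(v)) \preceq w(\sol_i(v))$ and $w(\sol_1(v)) \preceq w(\sol_j(v))$, which lets you apply the equality transfer uniformly and collapses the paper's Cases 1, 3, and 4 into a single argument. That extra fact is legitimate (it is exactly what the bottom-up DP phase computes, and by the principle of optimality the first solution output through $v$ does carry an optimal suffix), so your two-case version is a genuine simplification; the price is a dependence on how the algorithm expands popped prefixes, which the paper's more agnostic formulation avoids. One point both proofs leave implicit, and which you at least flag: the ``popped earlier implies no larger weight'' step presupposes the non-decreasing output order of \ANYKPARTP, whose correctness in turn leans on this lemma; a fully rigorous treatment would run the two claims through a joint induction on the iteration number.
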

\begin{proof}
For brevity, we write any $\Pi_k(v)$ as $\Pi_k$.
Since $p_1$ is the leading prefix, we have that $w(p_1 \concat \Pi_1)$ is smaller or equal
than all other $s-t$ paths going though $v$ for some suffix $\Pi_1$.
We will also be using a derivative property of strong subset-monotonicity:
if $w(Y_1 \uplus X_1) = w(Y_1 \uplus X_2)$ and $w(X_1) \preceq w(X_2)$, then 
$w(Y_2 \uplus X_1) = w(Y_2 \uplus X_2)$.
This is proven by rewriting the equality as the conjunction of two inequalities
$\preceq$ and $\succeq$ and applying \ssmonotonicity twice.
We consider four distinct cases.

\underline{Case 1:} $w(\Pi_i) \preceq w(\Pi_j)$. By \smonotonicity, 
$w(p_2 \concat \Pi_i) \preceq w(p_2 \concat \Pi_j)$.

\underline{Case 2:} $w(p_1) \preceq w(p_2)$. Starting from $w(p_1 \concat \Pi_i) \preceq w(p_1 \concat \Pi_j)$ 
and using \ssmonotonicity to replace $p_1$ with $p_2$,
we obtain $w(p_2 \concat \Pi_i) \preceq w(p_2 \concat \Pi_j)$.

\underline{Case 3:} $w(\Pi_j) \preceq w(\Pi_i)$, $w(p_2) \preceq w(p_1)$, and $w(\Pi_j) \preceq w(\Pi_1)$. 
We apply \smonotonicity to $w(p_2) \preceq w(p_1)$ to get $w(p_2 \concat \Pi_1) \preceq w(p_1 \concat \Pi_1)$.
Since $p_1$ is leading, we also have $w(p_1 \concat \Pi_1) \preceq w(p_2 \concat \Pi_1)$,
therefore $w(p_1 \concat \Pi_1) = w(p_2 \concat \Pi_1)$.
Now applying the equality version of \ssmonotonicity to replace $\Pi_1$ with $\Pi_j$ (recall that $w(\Pi_1) \preceq w(\Pi_j)$), we obtain
$w(p_1 \concat \Pi_j) = w(p_2 \concat \Pi_j)$.
Repeating the same process for $\Pi_i$ instead of $\Pi_j$ (by transitivity, $w(\Pi_1) \preceq w(\Pi_i)$), we obtain $w(p_1 \concat \Pi_i) = w(p_2 \concat \Pi_i)$.
We now use our initial assumption that $w(p_1 \concat \Pi_i) \preceq w(p_1 \concat \Pi_j)$
and prove $w(p_2 \concat \Pi_i) = w(p_1 \concat \Pi_i) \preceq w(p_1 \concat \Pi_j) = w(p_2 \concat \Pi_j)$.

\underline{Case 4:} $w(\Pi_j) \preceq w(\Pi_i)$, $w(p_2) \preceq w(p_1)$, and $w(\Pi_j) \preceq w(\Pi_1)$. 
We apply \smonotonicity to $w(\Pi_j) \preceq w(\Pi_1)$ to get $w(p_1 \concat \Pi_j) \preceq w(p_1 \concat \Pi_1)$.
Since $p_1$ is leading, we also have $w(p_1 \concat \Pi_1) \preceq w(p_1 \concat \Pi_j)$,
therefore $w(p_1 \concat \Pi_j) = w(p_1 \concat \Pi_1)$.
Similarly, we apply \smonotonicity to $w(p_2) \preceq w(p_1)$ to get $w(p_2 \concat \Pi_j) \preceq w(p_1 \concat \Pi_j)$.
Since $p_1$ is leading, we also have $w(p_1 \concat \Pi_1) \preceq w(p_2 \concat \Pi_j)$.
Up to this point, we have $w(p_2 \concat \Pi_j) \preceq w(p_1 \concat \Pi_j) = w(p_1 \concat \Pi_1) \preceq w(p_2 \concat \Pi_j)$.
Therefore, $w(p_2 \concat \Pi_j) = w(p_1 \concat \Pi_j)$.
We now apply the equality version of \ssmonotonicity to replace $\Pi_j$ with $\Pi_i$ (recall that $w(\Pi_j) \preceq w(\Pi_i)$) to obtain
$w(p_2 \concat \Pi_i) = w(p_1 \concat \Pi_i)$.
Similarly to Case 3, we derive 
$w(p_2 \concat \Pi_i) = w(p_1 \concat \Pi_i) \preceq w(p_1 \concat \Pi_j) = w(p_2 \concat \Pi_j)$.

\end{proof}

\Cref{lem:partp_correctness} shows that the order of suffixes discovered by the leading prefix $p_1$
can be reused for other prefixes.
Since the ranking function is commutative, it does not distinguish between prefixes and suffixes
and the dual statement also holds:
if $w(p_1 \concat \Pi_i(v)) \preceq w(p_2 \concat \Pi_i(v))$ 
for $p_2 \neq p_1$, then
$w(p_1 \concat \Pi_j) \preceq w(p_2 \concat \Pi_j)$.
This shows that a subscriber will never be needed before the leading prefix discovers the next-best suffix,
establishing the correctness of \ANYKPARTP.

\introparagraph{Any-$k$ without an inverse}
\ANYKPART and \ANYKPARTP 
compute the weights of $\O(\ell)$ new
candidates that are deviations of the current solution in $\O(\ell)$ time in every iteration. 
A careful reader might have noticed that while this is straightforward for sum-of-weights using subtraction,
other ranking functions might not have such an inverse operation.
In more detail, recall from 
\Cref{sec:complexity_measures}
that our analysis assumes an algebraic ranking function that allows us to aggregate 
partial weights $w(X)$ and $w(Y)$ into $w(X \uplus Y) = w(X) \otimes w(Y)$ with some operation $\otimes$ in $\O(1)$.
An inverse, denoted here by $\oslash$ allows us to reverse such an operation, i.e., 
$w(X) = w(X \uplus Y) \oslash w(Y)$.
Commutative monoids with this property are called \emph{Abelian groups}.
Typical examples of monoids that are not groups are the logical conjunction
$(\{0, 1\}, \wedge, 1)$
and the minimum over reals
$(\R, \min, \infty)$. 
In general, an inverse
allows us to
speed up the computation by reusing prior results.

In the context of our algorithms, \ANYKPART and \ANYKPARTP can use the inverse operation to calculate from a path
$\langle s, v_1, \ldots, v_j \rangle \concat \sol_1(v_j)$,
the weight of a deviation
$\langle s, v_1, \ldots, v_j' \rangle \concat \sol_1(v_j')$ as
$w(\langle s, v_1, \ldots, v_j \rangle \concat \sol_1(v_j)) \oslash w(v_{j-1}, v_j) \oslash w(\sol_1(v_j))
\otimes w(v_{j-1}, v_{j}') \otimes w(\sol_1(v_j'))$,
i.e., taking the current weight, ``subtracting'' the suffix weight and ``adding'' the new suffix weight.
We now discuss how to achieve the same $\O(\ell)$ computation (for $\O(\ell)$ new candidates)
per iteration without an inverse.
Note that \ANYKREC never uses an inverse since it always constructs solutions by appending one node to a suffix or a list of subtrees
(see \Cref{rec_line:insert} of \Cref{alg:rec}).

For DP, the modification that we have to make is that, as we expand a popped solution, 
we keep track of the weight of every prefix.
Using the weight of the prefix, we can compute the weight of a deviation
$\langle s, v_1, \ldots, v_j' \rangle \concat \sol_1(v_j')$
in $\O(1)$ as
$w(\langle s, v_1, \ldots, v_{j-1} \rangle)
\otimes w(v_{j-1}, v_{j}') \otimes w(\sol_1(v_j'))$.
For T-DP, the situation is more involved because a suffix of a solution consists of many disconnected subtrees.
As a consequence, the new node $v_j'$ only has access to the optimal weight of its subtree
and not the optimal weight of the suffix.
For the example of \Cref{fig:tdp}, suppose that our current solution is $\langle s, v_1, v_2, v_3 \rangle$
and we generate the deviation $\langle s, v_1, v_2, v_3' \rangle$.
Then, $\sol_1(v_3')$ does not contain the weight of the $\Sset_2-\Sset_4$ transition because it is in a different subtree.
Naively, computing the weight of the optimal suffix for all deviations of an iteration will take $\O(\ell^2)$.
We now describe how to bring this down to $\O(\ell^2)$.
First, recall that the stages are indexed in BFS order.
For each node $v_i$ of the current solution, let $\textrm{first}(v_i)$ be the smallest index of a node at
the same tree level and let $\textrm{last}(v_i)$ be the largest.
We now define the ``sums'' $\text{prefS}(v_i) = \bigotimes_{k \in [\textrm{first}(v_i), i]} \solW_1(v_k)$
and $\text{suffS}(v_i) = \bigotimes_{k \in [i, \textrm{last}(v_i)]} \solW_1(v_k)$.
These can be computed for every node in the current solution in $\O(\ell)$.
Using these data structures, the weight of a deviation
$\langle s, v_1, \ldots, v_j' \rangle \concat \sol_1(v_j')$ is
$w(\langle s, v_1, \ldots, v_{j-1} \rangle) \otimes \text{prefS}(v_{c-1}) \otimes \solW_1(v_j') \otimes \text{suffS}(v_{j+1})$, where
$c$ is the smallest index of a child of $v_j$
assuming that $c-1$ is at the same level as $c$ and $j+1$ is at the same level as $j$.
For the example of $\langle s, v_1, v_2, v_3 \rangle$ in \Cref{fig:tdp} we have $\langle s, v_1, v_2, v_3' \rangle = 
w(\langle s, v_1, v_2 \rangle) \otimes 
\text{prefS}(v_{4}) \otimes
\solW_1(v_3')$
because $3+1$ is not at the same level as $3$.

\subsection{Capturing Other Ranking Functions}

\subsubsection{Lexicographic orders}
\label{sec:lex}

As discussed in \Cref{sec:known}, any lexicographic order is supported in our framework, although with an additional logarithmic factor compared to work on constant-delay enumeration~\cite{bagan07constenum,bakibayev13fordering}.
To be consistent with the rest of the paper that assumes weights on the input tuples, 
we assume that the lexicographic order is given in terms of the relations rather than the query variables.
For this to be well-defined, self-joins need to be de-duplicated and
there must exist a total order on the tuples
within each relation. 

\introparagraph{Reduction to sum-of-weights}
One way to handle the lexicographic order is to reduce it to the sum-of-weights ranking function by defining appropriate weights on the input.
Given two consecutive tuples $r_i, r_{i+1}$ of some relation $R_j$, we want the weight of $r_{i+1}$ to be sufficiently large,
such that any sum $w(r_{i+1}) + \sum_{j' > j} r_{j'}, r_{j'} \in R_{j'}$ 
is
larger than any such sum where $w(r_{i+1})$ is replaced by $w(r_{i})$.
Assuming that all relations have size $n$,
this is guaranteed by setting the weight of the $i^\textrm{th}$ tuple, $i \in [0,n-1]$, in relation $R_j, j \in [0, \ell-1]$
to $i \cdot n^{\ell-1-j}$.

\introparagraph{Dioid}
An alternative way to handle a lexicographic order is to define an appropriate commutative selective dioid; however, this approach incurs an additional factor $\bigO(\ell)$ in time and memory.
We set the domain of the dioid to $W = \N^\stages$, i.e., each
tuple weight is an $\stages$-dimensional integer vector.
Without loss of generality, assume that the total order within a relation is represented
by natural numbers such that input tuple $r$ is associated with $w'(r) \in \N$.
Input tuple $r_j \in R_j$ has
weight $w(r_j) = (0,\ldots, 0, w'(r_j), 0,\ldots, 0)$ 
with zeros except for position $j$ that stores the ``local'' weight value in $R_j$.
The operator $\otimes$ is standard element-wise vector addition; therefore, the weight
of a query answer with witness $(r_1,\ldots, r_\stages)$ is $(w'(r_1),\ldots, w'(r_\stages))$.
To order two such vectors, 
the dioid addition
$\oplus$ 
returns the operand that comes first according to the lexicographic order
e.g.,
for $\stages=2$, $(a,b) \oplus (c,d) = (a,b)$ 
if $w'(a) < w'(c)$, 
or $w'(a) = w'(c)$ and $w'(b) < w'(d)$,
and $(c,d)$ otherwise.
The $\0$ and $\1$ elements of the dioid are
$(\infty,\ldots, \infty)$ and $(0,\ldots, 0)$, respectively.

\subsubsection{Attribute vs tuple weights}
We now discuss how to handle the case where the input weights are assigned to the domain values instead of the tuples.
All algorithms and complexity results also apply to that case since there is a simple linear-time reduction from domain-value weights to tuple weights
so that the weights of the query answers remain the same.
The reduction assigns each variable to one of the atoms that it appears in
and then computes the weight of a tuple by aggregating the weights of the attribute values that have weights assigned to them.
For the lower bound of \Cref{theorem:cq_data_comp},
note that the hardness proof relies only on the fact that ranked enumeration is at least as hard as
unranked enumeration regardless of the ranking function.

\section{Experiments}
\label{sec:experiments}

Since asymptotic complexity only tells part of the story, we compare all algorithms
in terms of actual running time.
The code to reproduce the experiments can be found in our project page
\url{https://northeastern-datalab.github.io/anyk/}. 

\introparagraph{Algorithms}
We implement our proposed algorithms in the same Java 11 environment.
We compare:
(1) \ANYKPART,
(2) \ANYKREC,
(3) \ANYKPARTP,
and (4) \BATCH which computes the full result using the Yannakakis
algorithm~\cite{DBLP:conf/vldb/Yannakakis81} followed by sorting.
Both \ANYKPART and \ANYKPARTP use the \QUICK variant which relies on Incremental Quicksort~\cite{paredes06iqs} to evaluate the successor function.
We also evaluate the performance of two database systems:
(5) \PSQL is PostgreSQL 9.5.20,
and (6) \SYSX is a commercial database system. 

\introparagraph{Queries}
We conduct our study on \emph{acyclic} queries,
since handling cyclic queries is an issue orthogonal to our work (see \Cref{sec:cycles}).
We use three types of queries,
parameterized by their number of atoms $\ell$.
\begin{enumerate}
    \item A \emph{path} query $Q_{P\stages }(\vec x) \datarule R_1(x_0, x_1), R_2(x_1, x_2), \ldots, R_\ell(x_{\stages-1}, x_{\stages})$ joins the relations in a chain and is a case of serial DP (\Cref{sec:cq_to_dp}).
    \item A \emph{star} query $Q_{2S\ell}(\vec x) \datarule R_1(x_0, x_1), R_2(x_0, x_2), \ldots, R_\stages(x_0, x_{\stages})$,
    where all relations join on $x_0$.
    Even though we could construct a path-structured join tree for this query, 
    we instead use a join tree with minimal depth and maximal degree so that it is structurally the opposite of the path query
    and requires treatment as a tree (T-DP).
    \item A \emph{branch} query $Q_{1B\ell}(\vec x) \datarule R_1(x_0, x_1), R_2(x_1, x_2), \ldots, R_{\ell-1}(x_{\stages-2}, x_{\stages-1}), R_\ell(x_{\stages-2}, x_{\stages})$ 
    is similar to a path, but requires treatment as a tree (T-DP) due to a single branch in the join tree.
    It has a low serial-decomposition width $\serialw=2$ (by placing the last 2 atoms together in the serial decomposition).
\end{enumerate}
For real datasets, query atoms refer to the same relation.
We set the ranking function to sum.

\introparagraph{Synthetic data}
Our synthetic data generator
creates input with regular structure 
and tunable parameters.
We generate relations $R_i(A_{i_1}, A_{i_2}, W_i), \, i \!\geq\! 1$, where the columns $A_{i_1}, A_{i_2}$ are used for joins,
while $W_i$ contains tuple weights.
The join distribution is controlled by the sampling process of the values
that populate the $A_{i_1}, A_{i_2}$ columns.
For a \emph{Uniform} distribution, we draw integers from $[0, |\dom|)$ 
uniformly at random with replacement for a given value $|\dom|$, which defaults to $n / 10$. 
(This means that a tuple joins, in expectation, with $10$ others in a joining relation.)
For a \emph{Gaussian} distribution, we round to integers the values drawn with a mean of $0$
and a given standard deviation, which defaults to $n / 10$.
Tuple weights are real numbers uniformly drawn from $[0, 10000]$.

\introparagraph{Real Data}
We use real-world networks where the output size of the joins typically
exceeds the input size.
In \Bitcoin~\cite{Bitcoin_dataset2,Bitcoin_dataset1}, edges
have weights that represent the degree of trust between users.
\Twitter~\cite{Twitter_dataset} models followership among users as edges.
Edge weight is set to the sum of the PageRanks~\cite{brin98pagerank}
of both endpoints. 
To control input size, we retain edges between users whose IDs are below a certain
threshold.
We also use two smaller networks where computing the entire join output is feasible.
\Friendship~\cite{Friendship_dataset,konect,konect:moody} is created from a student survey in which each participant indicated
their best friends.
In \Foodweb~\cite{Foodweb_dataset,konect,konect:foodweb}, an edge indicates
that a taxon uses another taxon as food with a given trophic factor.
\Cref{tab:datasets} summarizes relevant statistics.
Note that the size of our relations $n$ is equal to the number of edges.

\begin{figure}[!tb]
\small
\renewcommand{\tabcolsep}{1.0mm}
\begin{center}
\begin{tabular}{|l|r|r|c|c|}
\hline
Dataset                                                 & Nodes & Edges & Max/Avg Degree & Weights \\ \hline
\Twitter \cite{Twitter_dataset}                          & 131,072 & 3,615,171 & 30,105 / 55.2 & PageRank \\
\Bitcoin \cite{Bitcoin_dataset2,Bitcoin_dataset1}        & 5,881 & 35,592 & \,\,1,298 / 12.1 & Trust  \\
\Friendship \cite{Friendship_dataset,konect,konect:moody}                          & 2,539 & 12,969 & 36 / 10.2 & Interaction strength \\
\Foodweb \cite{Foodweb_dataset,konect,konect:foodweb}                          & 128 & 2,137 & 110 / 33.3 & Trophic factor (feeding level) \\
\hline
\end{tabular} 
\caption{Datasets used for experiments with real data.}
\label{tab:datasets}
\end{center}
\vspace{-4mm}
\end{figure}

\introparagraph{Implementation details}
All experiments are conducted on a machine with Ubuntu Linux 20.04.2, an
Intel Xeon E5-2643 CPU, and 128 GB RAM,
from which 100GB are allocated to the JVM.
Each measurement is the median of at least 20 separate JVM invocations.
To avoid the non-deterministic nature of garbage collection, we try to stay below
the available memory limit in each experiment.
As an optimization to all our algorithms, we initialize their data structures lazily when they are accessed for
the first time. For example, in \ANYKREC, we do not create the priority queue $\Choices_1(v)$ for a node $v$ until this node is visited for the first time.
This can significantly reduce $\TT(k)$ for small $k$.
Notice that our complexity analysis in \cref{sec:complexity}
assumes constant-time inserts for priority queues, which is important for algorithms
that push more elements than they pop per iteration.
This bound is achieved by data structures that are well known to perform poorly
in practice~\cite{cherkassky96shortest,LarkinSenTarjan2004:PQs}. 
Instead, we use the standard Java library binary heaps.

\introparagraph{Tuning the Database Systems}
For \PSQL, following standard methodology~\cite{bakibayev12fdb}, we remove the system overhead
as much as possible and make sure that the input relations are cached in memory by timing the second of two runs.
We turn off fsync, synchronous\_commit, full\_page\_writes, we set bgwriter\_delay to the maximum (10 sec), bgwriter\_lru\_maxpages to 0, checkpoint\_timeout to 1 hour and max\_wal\_size to a large value (1000 GB). 
We also give shared\_buffers and work\_mem 32 GB and set the isolation level to the lowest possible (READ UNCOMMITED).
We follow a similar approach for \SYSX, and also tune it for in-memory computation.

\introparagraph{Methodology}
First, we evaluate the different approaches on top-$k$ queries
where the value of $k$ is fixed to a relatively small value ($k=10^3$).
In this setting, we vary different parameters such as data size, query size, and
join distribution to show the advantage of any-$k$ algorithms
even when enumeration is not required.
For these experiments, we only show \ANYKPARTP as its performance is similar to
\ANYKPART and \ANYKREC because the enumeration phase is dominated by the 
preprocessing (i.e., Dynamic Programming).
Then, we move on to a study of any-$k$ where we compare the different approaches on $\TT(k)$ as the value of $k$ changes.
We look into the regime where the enumeration continues until the last query answer
and also into a regime where relatively fewer answers are returned ($k=n$),
yet the enumeration cost is not dominated by the preprocessing.

\begin{figure*}[h]

    \centering
    \begin{subfigure}{\linewidth}
        \centering
        \includegraphics[height=0.7cm]{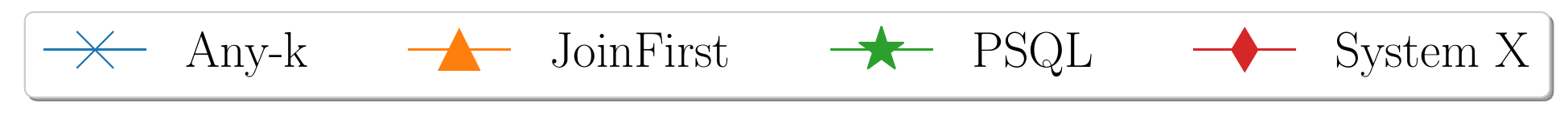}
    \end{subfigure}
    \vspace{-3mm}

    \begin{subfigure}{0.24\linewidth}
        \centering
        \includegraphics[width=\linewidth]{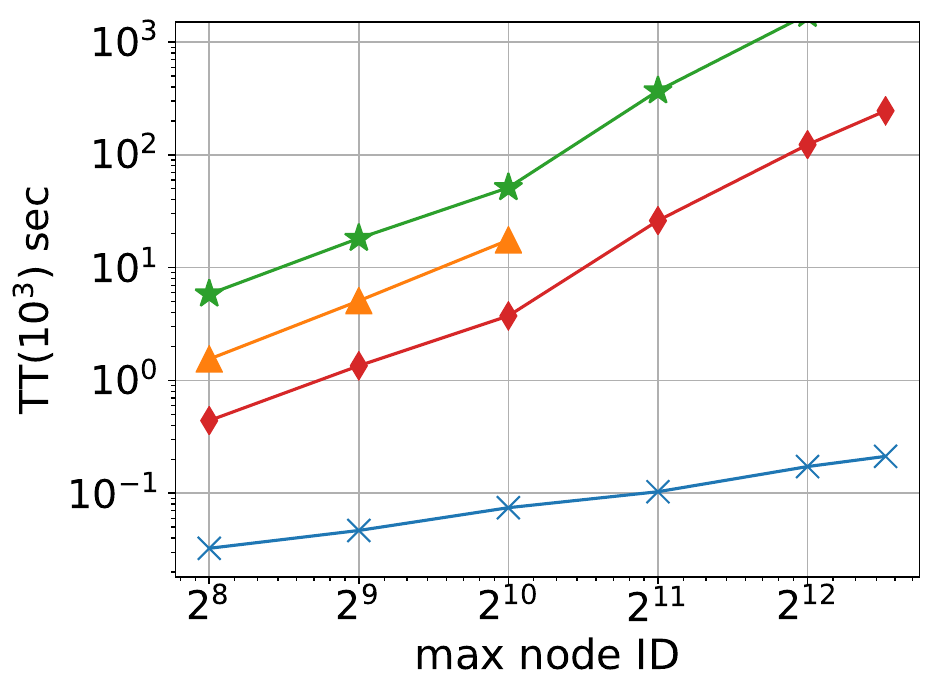}
        \caption{4-Path \Bitcoin\\(varying data size)}
		\label{exp:4path_bc}
    \end{subfigure}%
    \hfill
    \begin{subfigure}{0.24\linewidth}
        \centering
        \includegraphics[width=\linewidth]{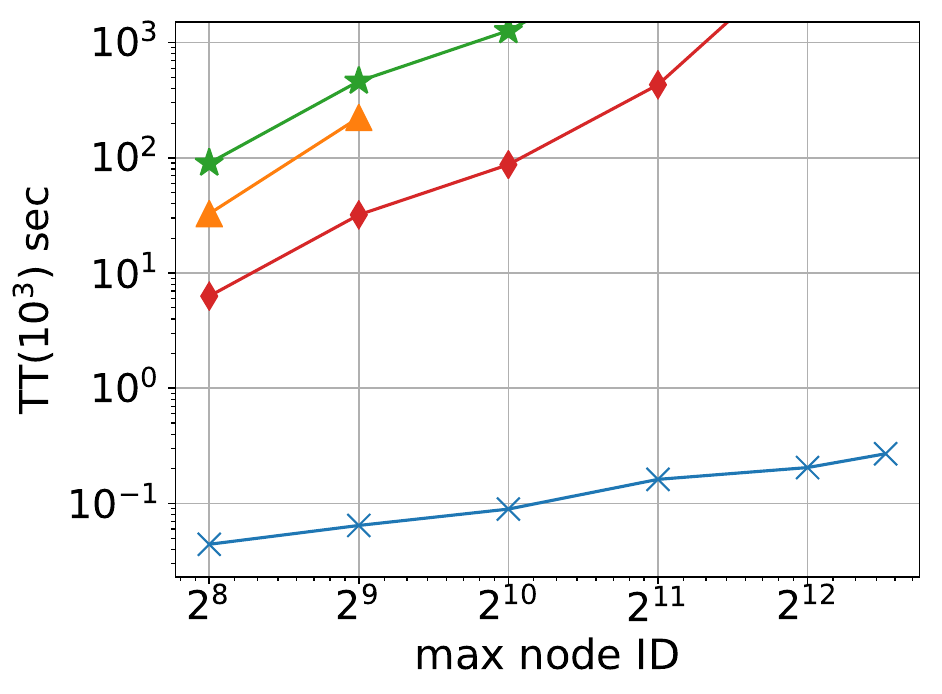}
        \caption{4-Star \Bitcoin\\(varying data size)}
		\label{exp:4star_bc}
    \end{subfigure}%
    \hfill
    \begin{subfigure}{0.24\linewidth}
        \centering
        \includegraphics[width=\linewidth]{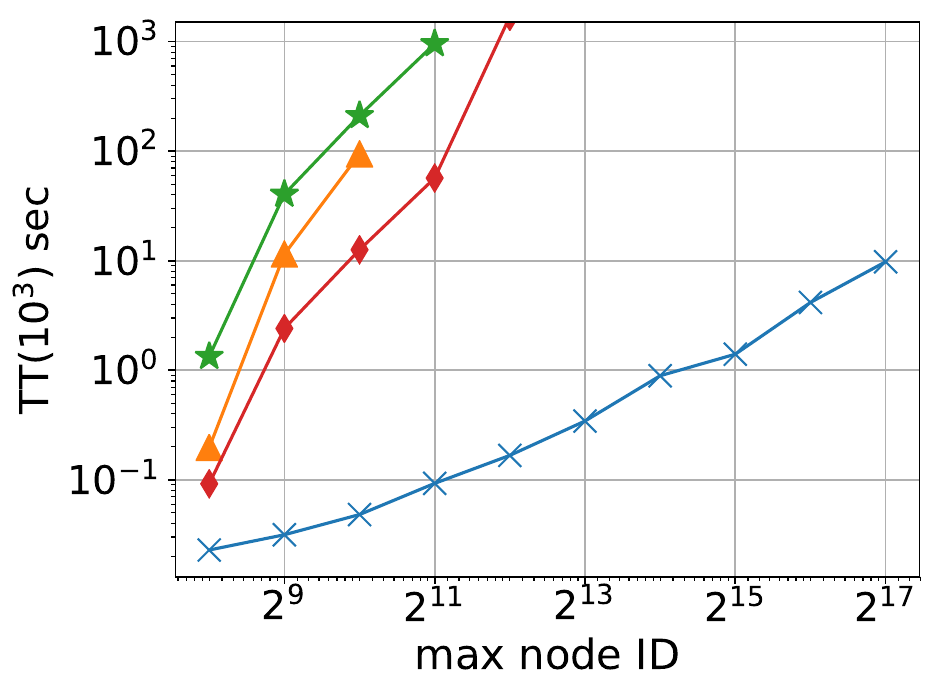}
        \caption{4-Path \Twitter\\(varying data size)}
		\label{exp:4path_tw}
    \end{subfigure}%
    \hfill
    \begin{subfigure}{0.24\linewidth}
        \centering
        \includegraphics[width=\linewidth]{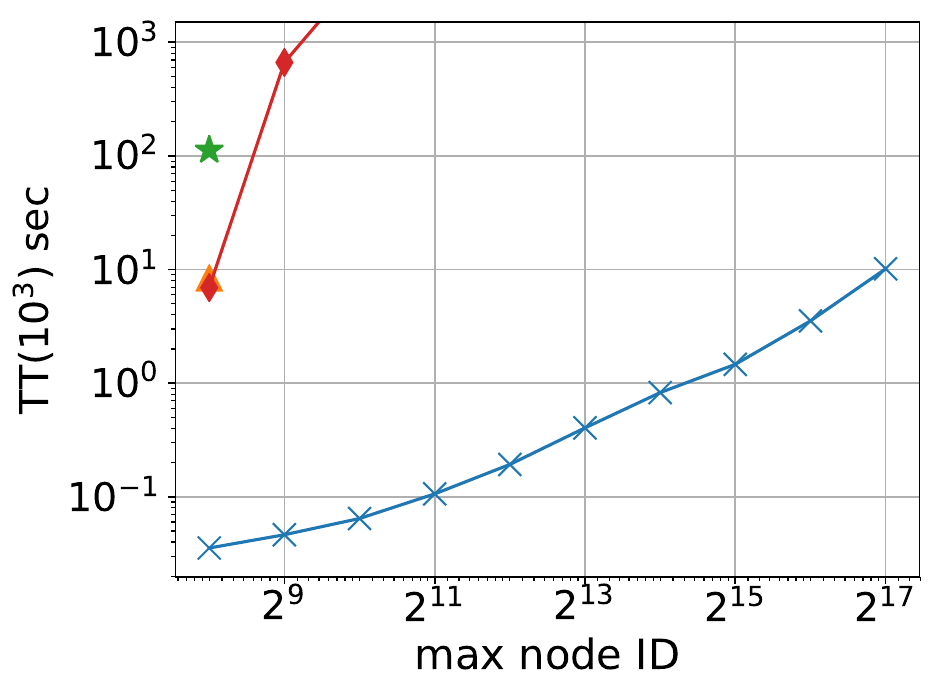}
        \caption{4-Star \Twitter\\(varying data size)}
		\label{exp:4star_tw}
    \end{subfigure}

    \begin{subfigure}{0.24\linewidth}
        \centering
        \includegraphics[width=\linewidth]{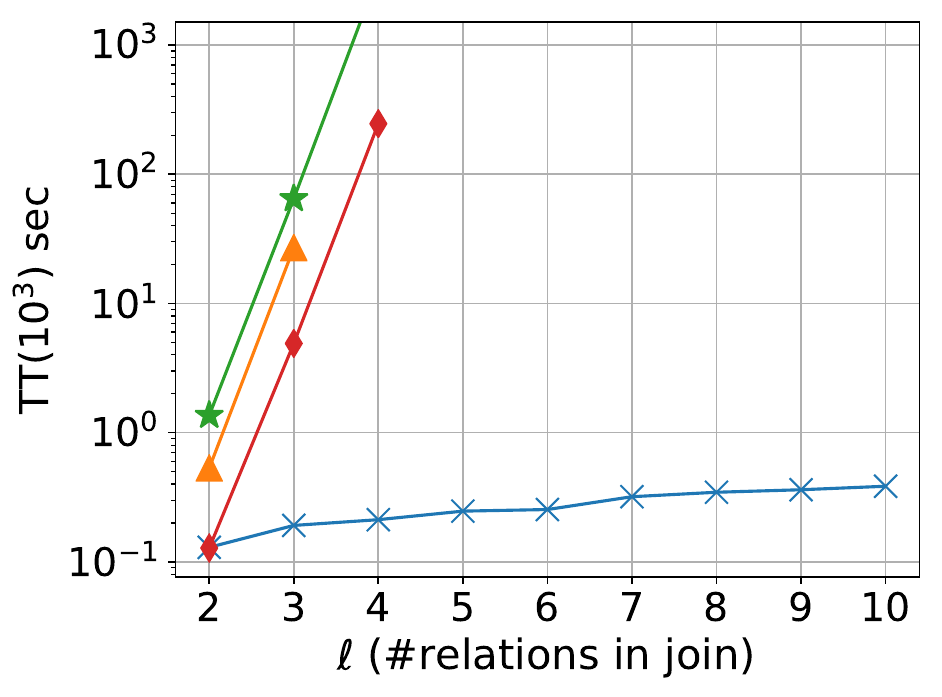}
        \caption{$\ell$-Path \Bitcoin\\(varying query size)}
		\label{exp:path_bc}
    \end{subfigure}%
    \hfill
    \begin{subfigure}{0.24\linewidth}
        \centering
        \includegraphics[width=\linewidth]{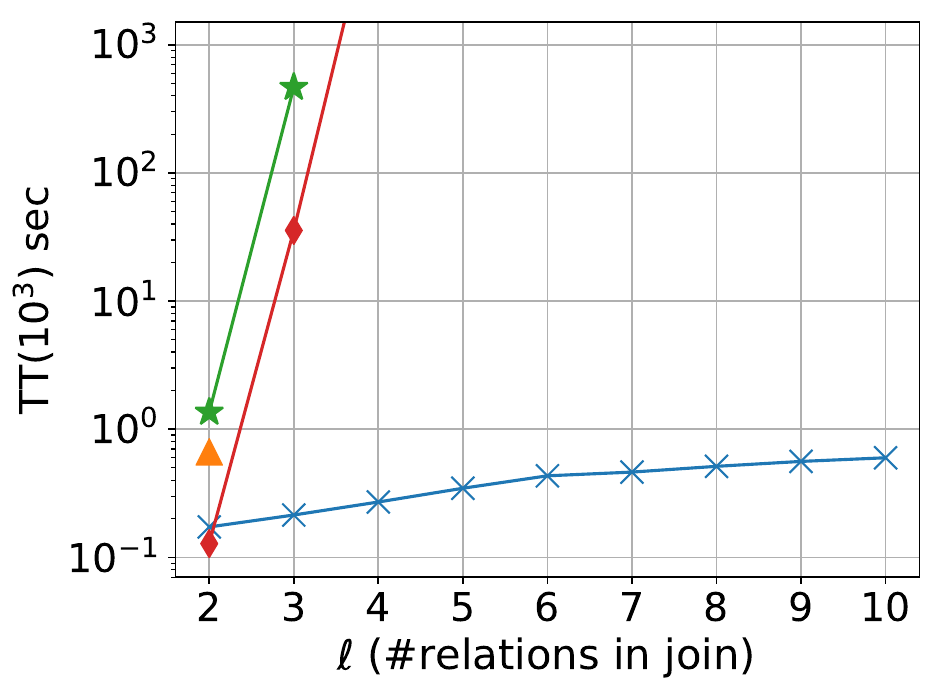}
        \caption{$\ell$-Star \Bitcoin\\(varying query size)}
		\label{exp:star_bc}
    \end{subfigure}%
    \hfill
    \begin{subfigure}{0.24\linewidth}
        \centering
        \includegraphics[width=\linewidth]{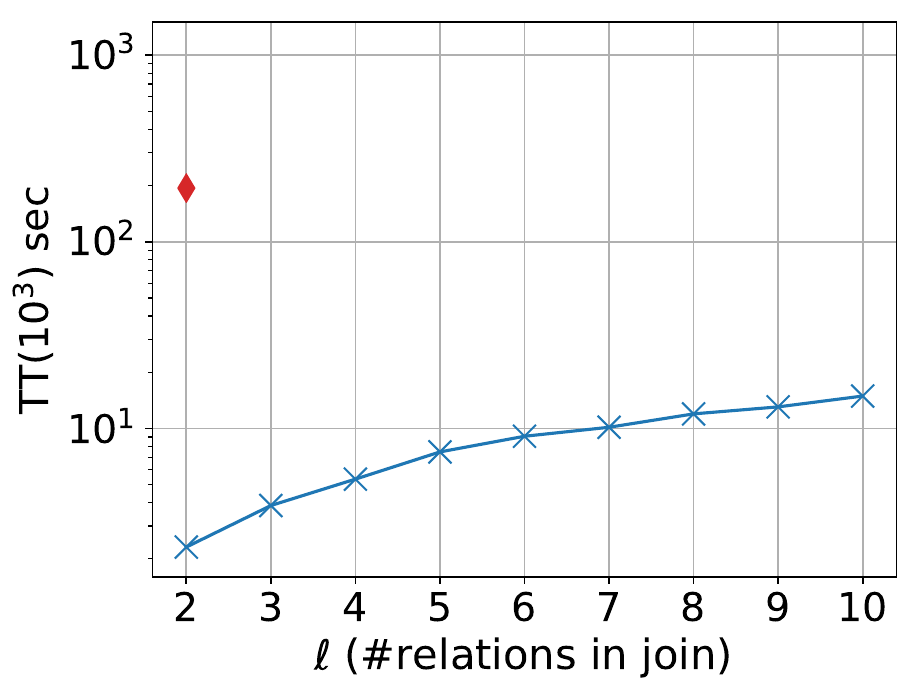}
        \caption{$\ell$-Path \Twitter\\(varying query size)}
		\label{exp:path_tw}
    \end{subfigure}%
    \hfill
    \begin{subfigure}{0.24\linewidth}
        \centering
        \includegraphics[width=\linewidth]{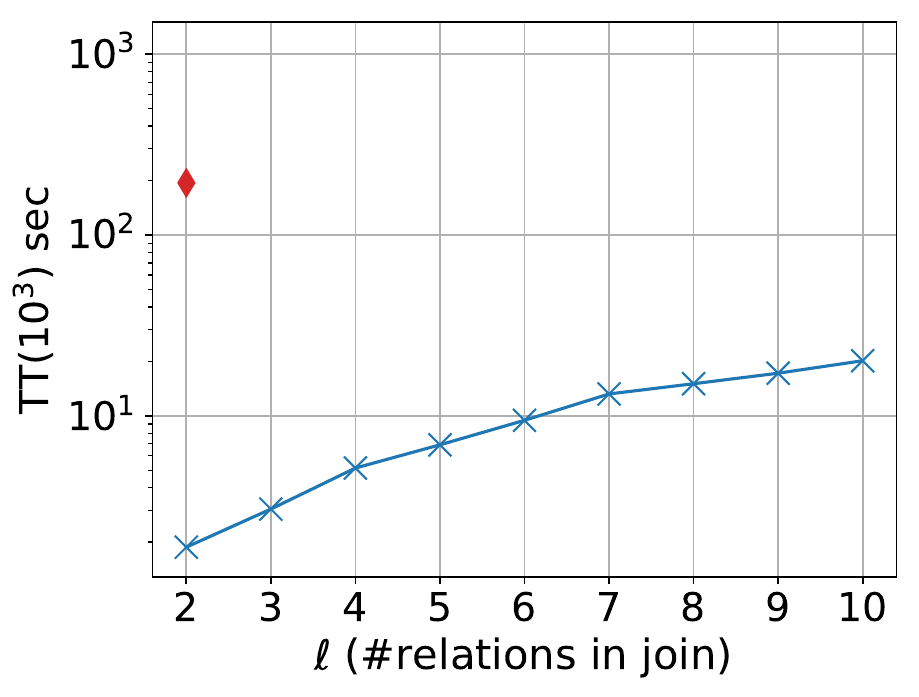}
        \caption{$\ell$-Star \Twitter\\(varying query size)}
		\label{exp:star_tw}
    \end{subfigure}

    \begin{subfigure}{0.24\linewidth}
        \centering
        \includegraphics[width=\linewidth]{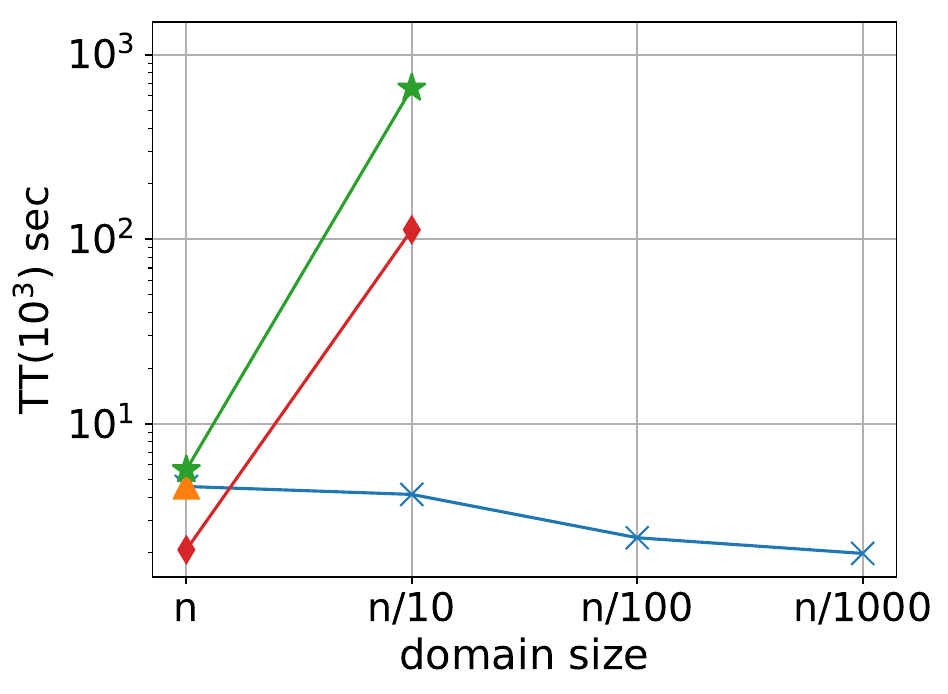}
        \caption{4-Path Synth $n=10^6$ Uniform (varying domain)}
		\label{exp:4path_synthetic_domain}
    \end{subfigure}%
    \hfill
    \begin{subfigure}{0.24\linewidth}
        \centering
        \includegraphics[width=\linewidth]{figs/experiments/path_n1000000_l4.pdf}
        \caption{4-Star Synth $n=10^6$ Uniform (varying domain)}
		\label{exp:4star_synthetic_domain}
    \end{subfigure}%
    \hfill
    \begin{subfigure}{0.24\linewidth}
        \centering
        \includegraphics[width=\linewidth]{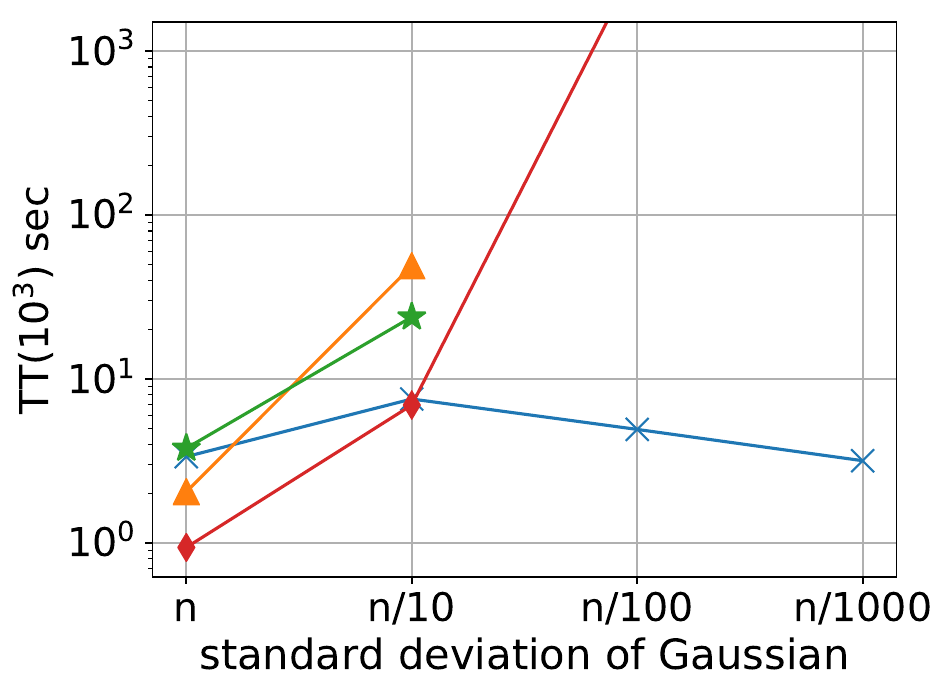}
        \caption{4-Path Synth $n=10^6$ Gaussian (varying std)}
		\label{exp:4path_synthetic_gauss}
    \end{subfigure}%
    \hfill
    \begin{subfigure}{0.24\linewidth}
        \centering
        \includegraphics[width=\linewidth]{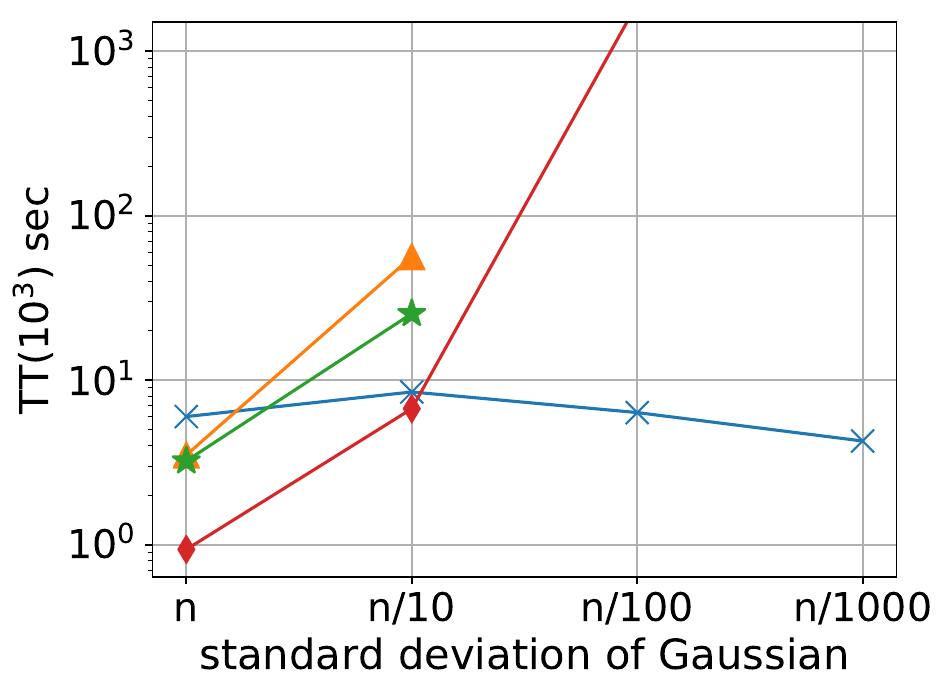}
        \caption{4-Star Synth $n=10^6$ Gaussian (varying std)}
		\label{exp:4star_synthetic_gauss}
    \end{subfigure}

    \caption{Experiments with fixed $k=10^3$.}
    \label{exp:topk}
\end{figure*}

\subsection{Results for Fixed \#Answers $k$}
\label{sec:experiments_topk}

\Cref{exp:topk} shows the time taken to retrieve the top $k=10^3$ answers for different settings.
In the real datasets \Bitcoin and \Twitter,
increasing the size of the data (\Cref{exp:4path_bc,exp:4star_bc,exp:4path_tw,exp:4star_tw})
or increasing the size of the query (\Cref{exp:path_bc,exp:star_bc,exp:path_tw,exp:star_tw})
results in the \BATCH approach becoming infeasible,
either running out of memory or exceeding the timeout of 1 hour.
\PSQL and \SYSX exhibit a similar behavior since they implement a similar evaluation strategy.

We observe a similar effect with synthetic data if we vary the join distribution (\Cref{exp:4path_synthetic_domain,exp:4star_synthetic_domain,exp:4path_synthetic_gauss,exp:4star_synthetic_gauss}).
Increasing the size of the join output,
either by decreasing the domain size for the Uniform distribution
or by decreasing the standard deviation for the Guassian distribution (while maintaining a fixed relation size $n=10^6$),
results in the other approaches faltering,
while any-$k$ is stable or even faster.
On the contrary, a relatively small join output size, e.g., when domain size equals relation size $n$,
favors the \BATCH strategy and the database systems that implement it.
The overhead of any-$k$ in these cases is not excessive; it remains within a factor of 2 from \BATCH in all cases.

\begin{figure*}[h]

    \centering
    \begin{subfigure}{\linewidth}
        \centering
        \includegraphics[height=0.6cm]{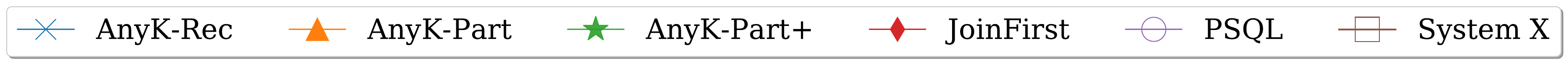}
    \end{subfigure}
    \vspace{-3mm}

    \begin{subfigure}{0.24\linewidth}
        \centering
        \includegraphics[width=\linewidth]{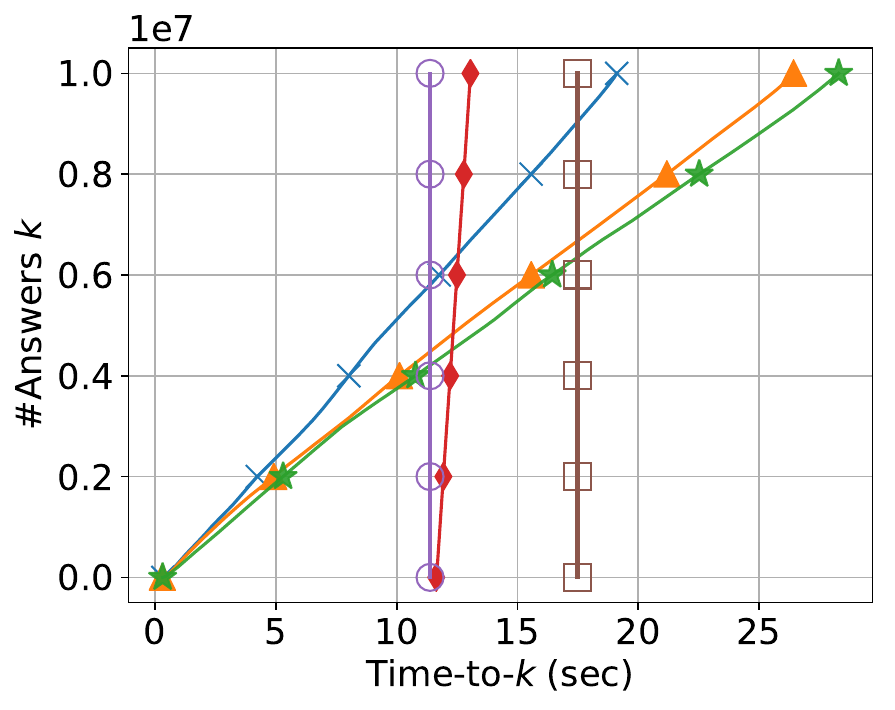}
        \caption{3-Path $n\!=\!10^5$}
		\label{exp:allk_3path}
    \end{subfigure}%
    \hfill
    \begin{subfigure}{0.24\linewidth}
        \centering
        \includegraphics[width=\linewidth]{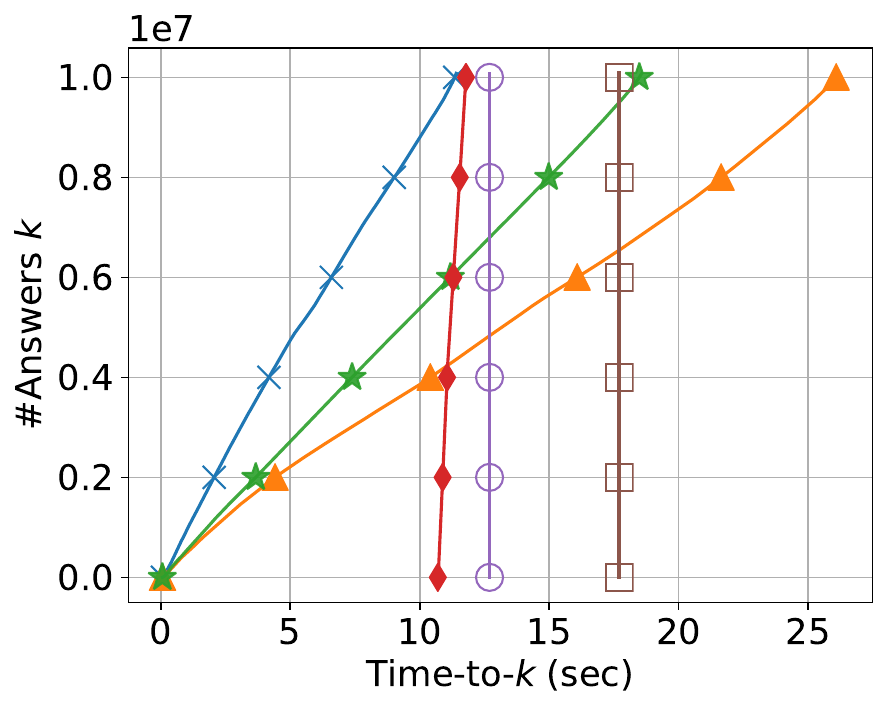}
        \caption{4-Path $n\!=\!10^4$}
		\label{exp:allk_4path}
    \end{subfigure}%
    \hfill
    \begin{subfigure}{0.24\linewidth}
        \centering
        \includegraphics[width=\linewidth]{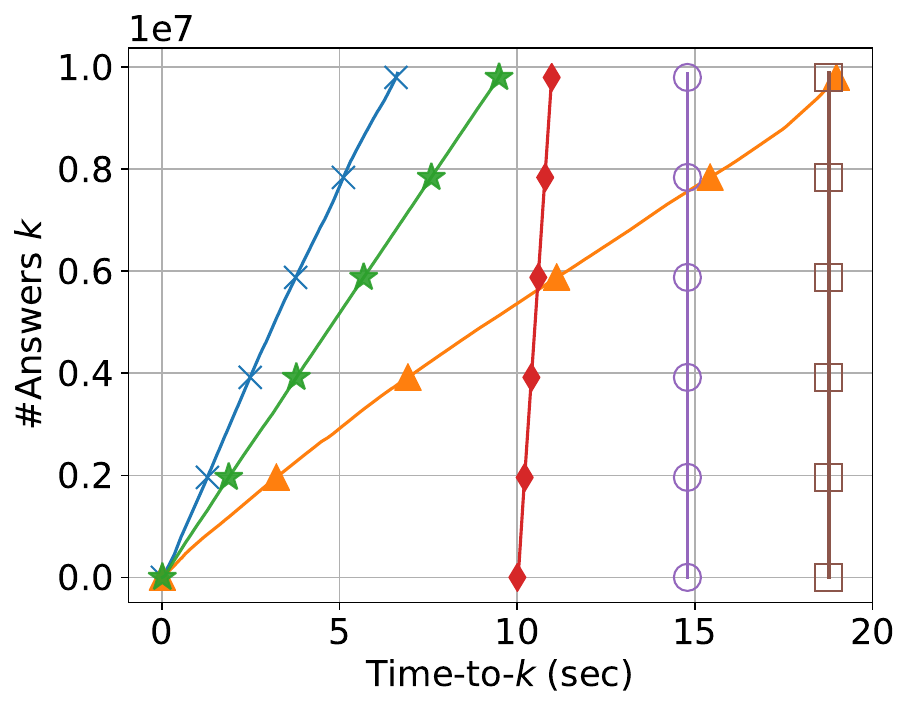}
        \caption{5-Path $n\!=\!10^3$}
		\label{exp:allk_5path}
    \end{subfigure}%
    \hfill
    \begin{subfigure}{0.24\linewidth}
        \centering
        \includegraphics[width=\linewidth]{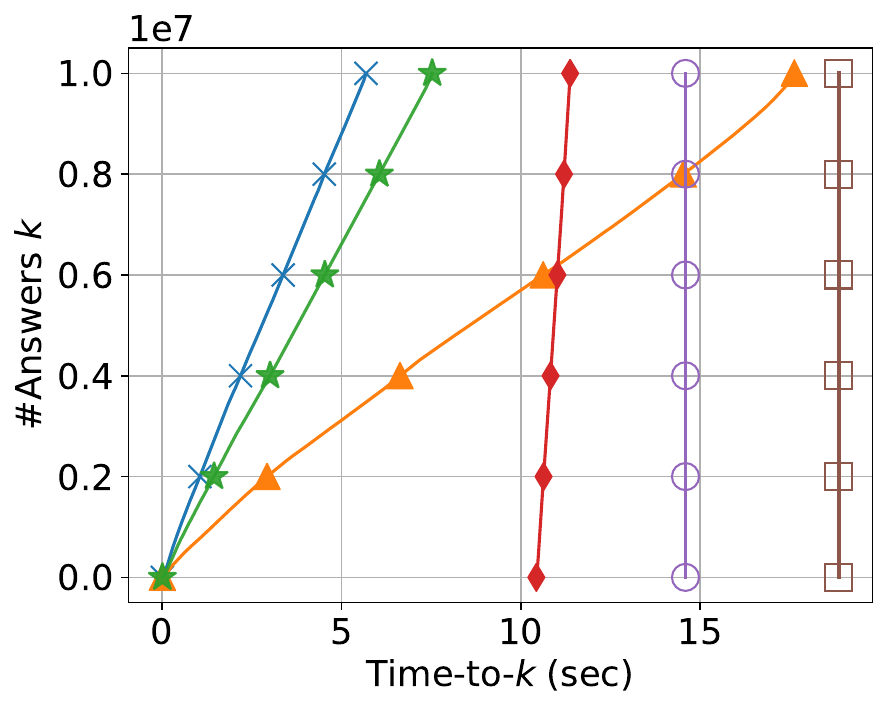}
        \caption{6-Path Synth $n\!=\!10^2$}
		\label{exp:allk_6path}
    \end{subfigure}

    \begin{subfigure}{0.24\linewidth}
        \centering
        \includegraphics[width=\linewidth]{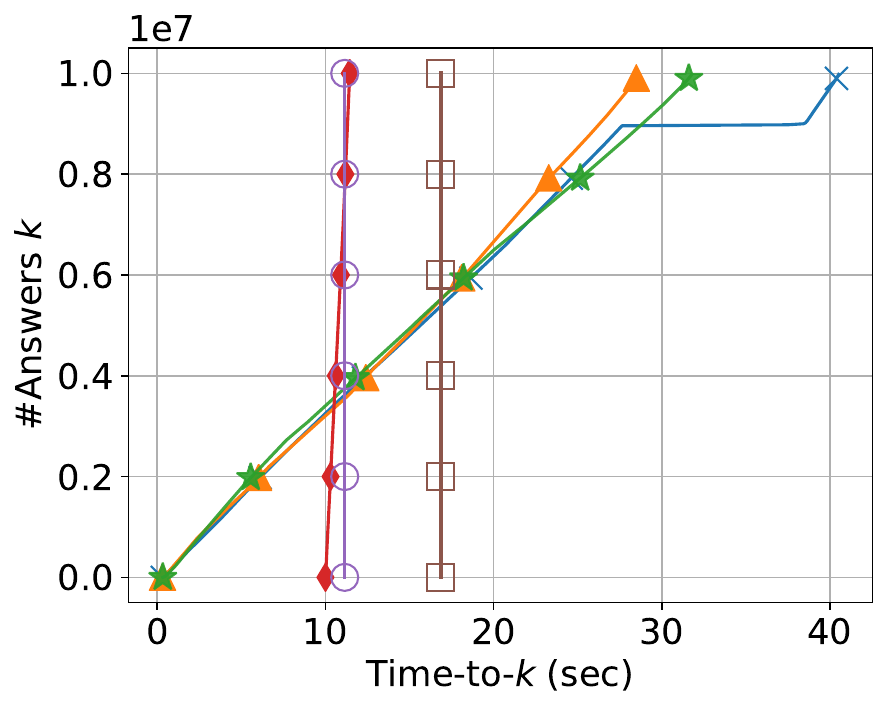}
        \caption{3-Star $n\!=\!10^5$}
		\label{exp:allk_3star}
    \end{subfigure}%
    \hfill
    \begin{subfigure}{0.24\linewidth}
        \centering
        \includegraphics[width=\linewidth]{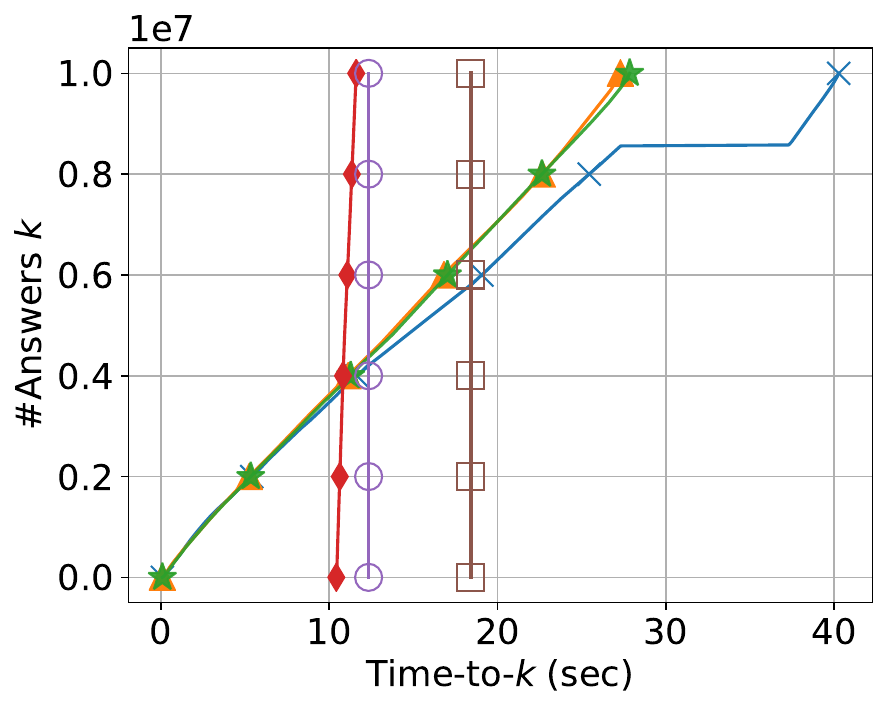}
        \caption{4-Star $n\!=\!10^4$}
		\label{exp:allk_4star}
    \end{subfigure}%
    \hfill
    \begin{subfigure}{0.24\linewidth}
        \centering
        \includegraphics[width=\linewidth]{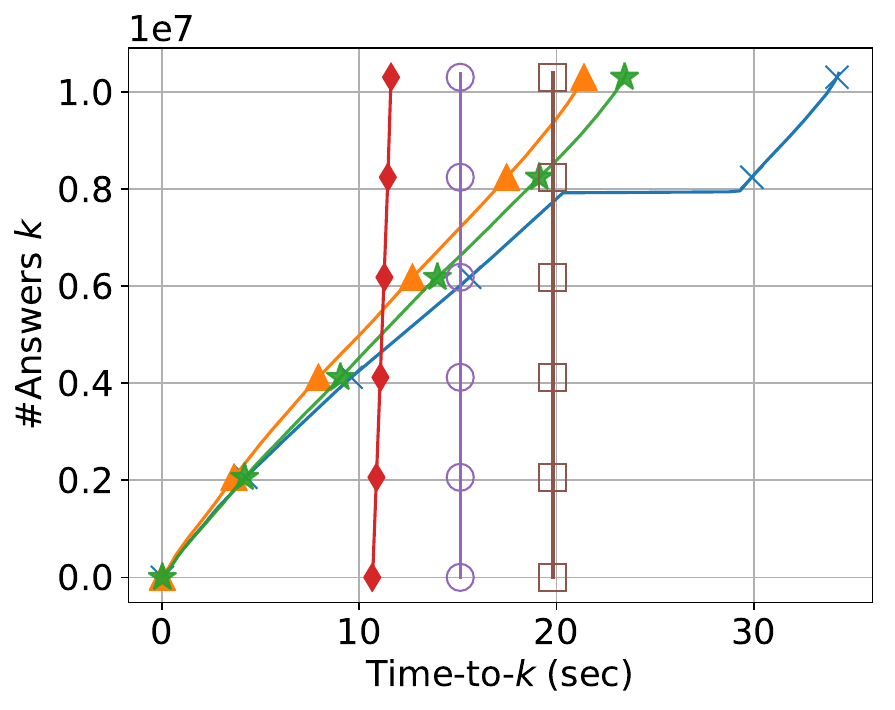}
        \caption{5-Star $n\!=\!10^3$}
		\label{exp:allk_5star}
    \end{subfigure}%
    \hfill
    \begin{subfigure}{0.24\linewidth}
        \centering
        \includegraphics[width=\linewidth]{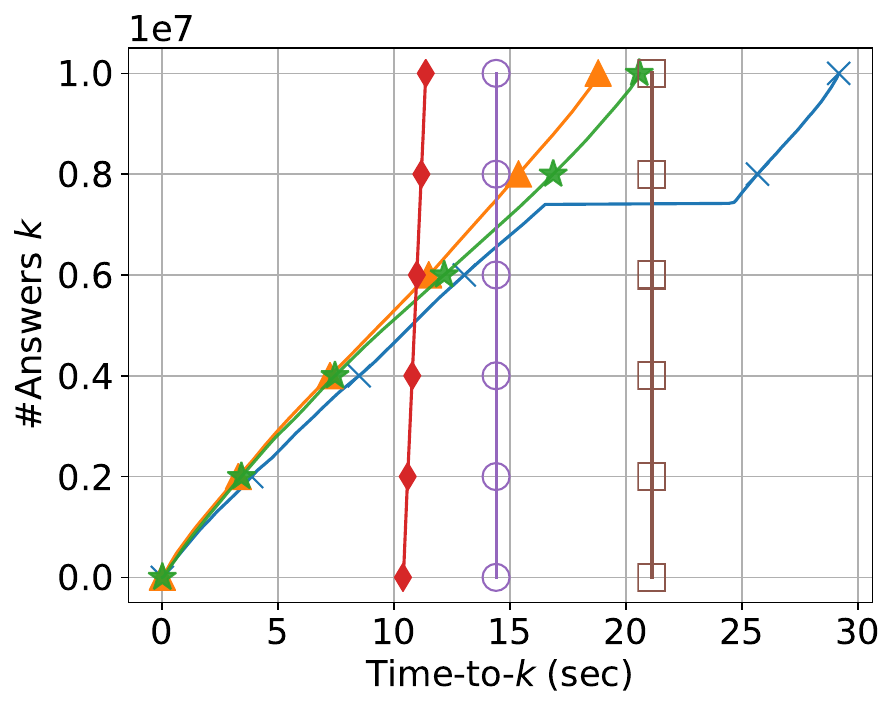}
        \caption{6-Star $n\!=\!10^2$}
		\label{exp:allk_6star}
    \end{subfigure}

    \begin{subfigure}{0.24\linewidth}
        \centering
        \includegraphics[width=\linewidth]{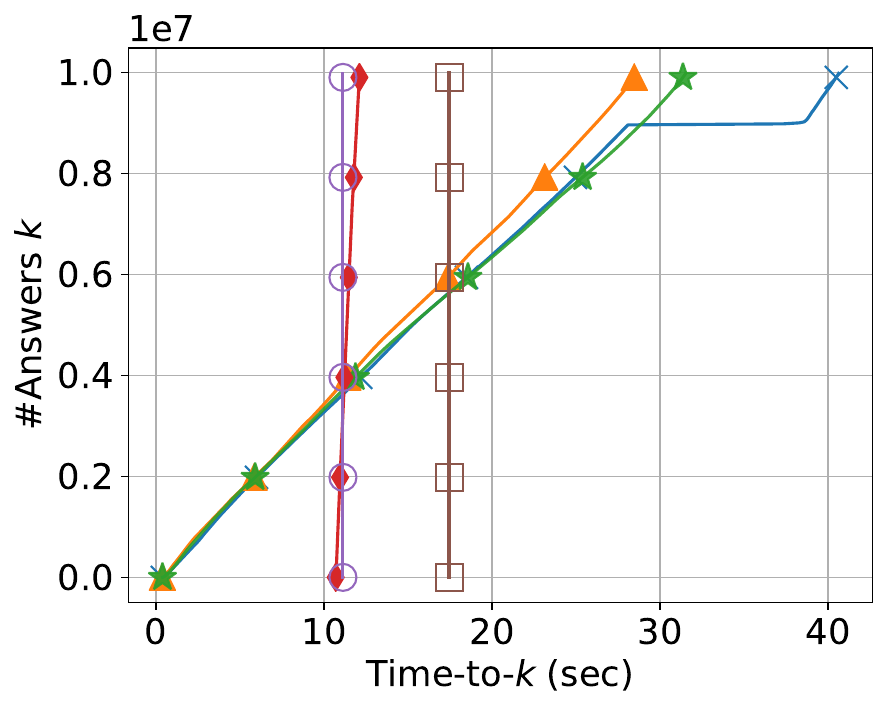}
        \caption{3-Branch $n\!=\!10^5$}
		\label{exp:allk_3branch}
    \end{subfigure}%
    \hfill
    \begin{subfigure}{0.24\linewidth}
        \centering
        \includegraphics[width=\linewidth]{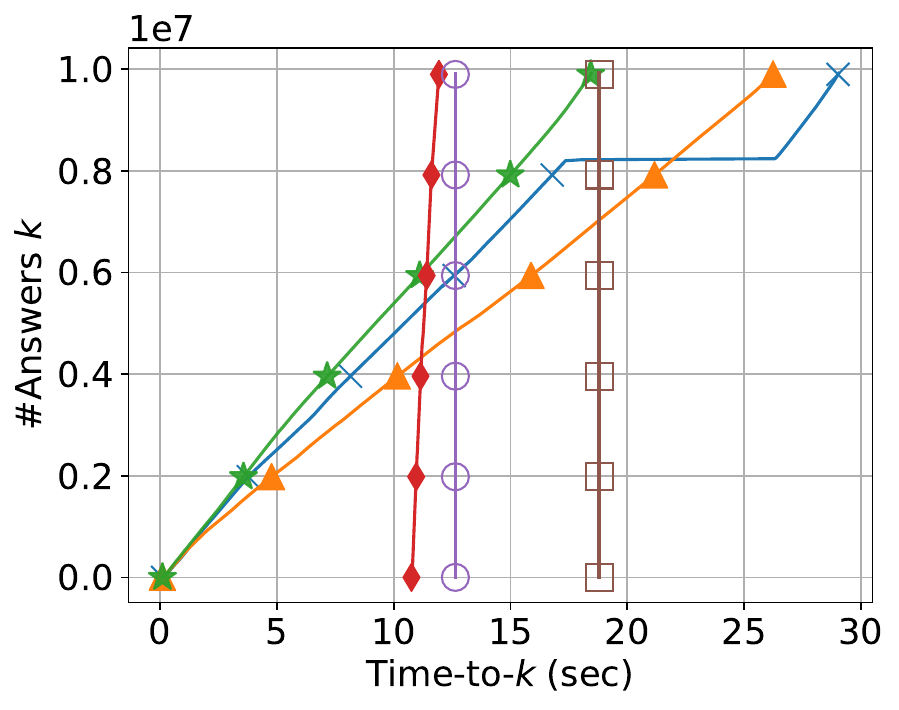}
        \caption{4-Branch $n\!=\!10^4$}
		\label{exp:allk_4branch}
    \end{subfigure}%
    \hfill
    \begin{subfigure}{0.24\linewidth}
        \centering
        \includegraphics[width=\linewidth]{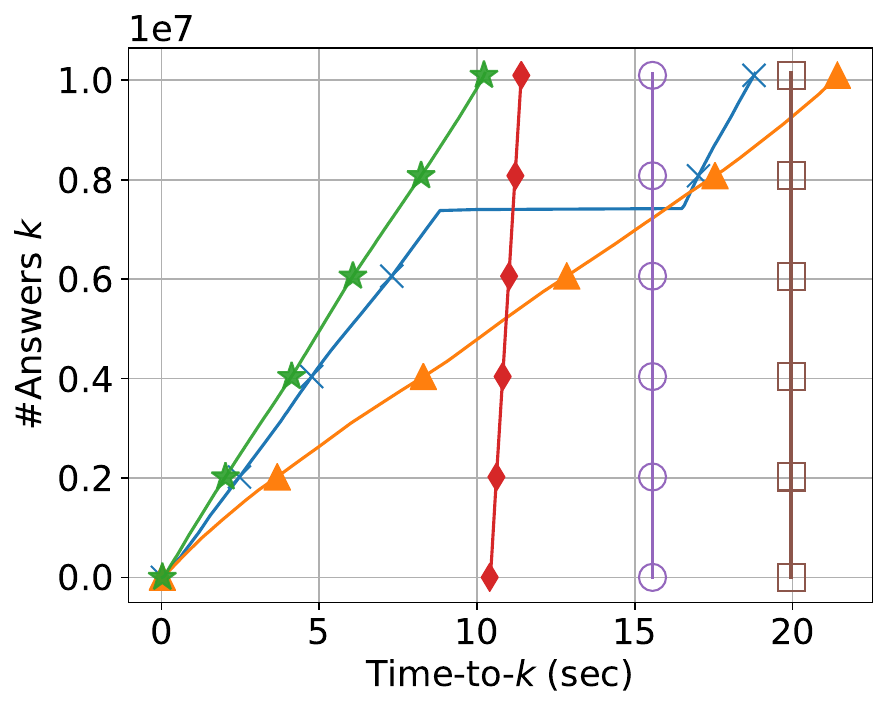}
        \caption{5-Branch $n\!=\!10^3$}
		\label{exp:allk_5branch}
    \end{subfigure}%
    \hfill
    \begin{subfigure}{0.24\linewidth}
        \centering
        \includegraphics[width=\linewidth]{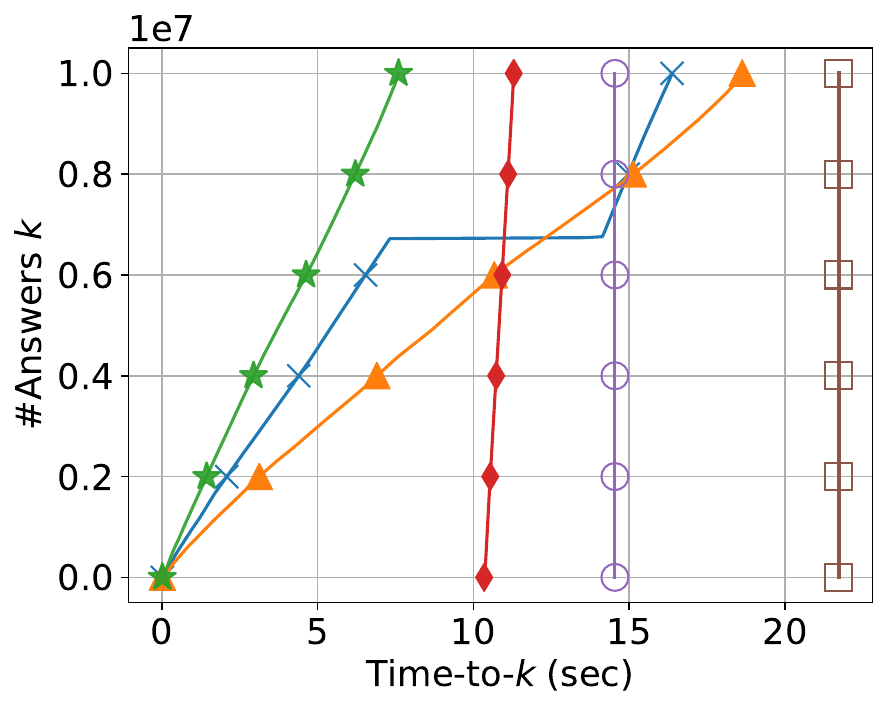}
        \caption{6-Branch $n\!=\!10^2$}
		\label{exp:allk_6branch}
    \end{subfigure}

    \caption{Experiments with varying $k$, enumerating \emph{all query answers} on Synthetic Uniform data ($|\dom|=n/10$).
    The value of $n$ is chosen so that output size is approximately the same ($\sim 10^7$) for different query sizes.}
    \label{exp:allk_synthetic}
\end{figure*}

\begin{figure*}[h]

    \centering
    \begin{subfigure}{\linewidth}
        \centering
        \includegraphics[height=0.6cm]{figs/experiments/legend2.pdf}
    \end{subfigure}
    \vspace{-3mm}
    
    \begin{subfigure}{0.24\linewidth}
        \centering
        \includegraphics[width=\linewidth]{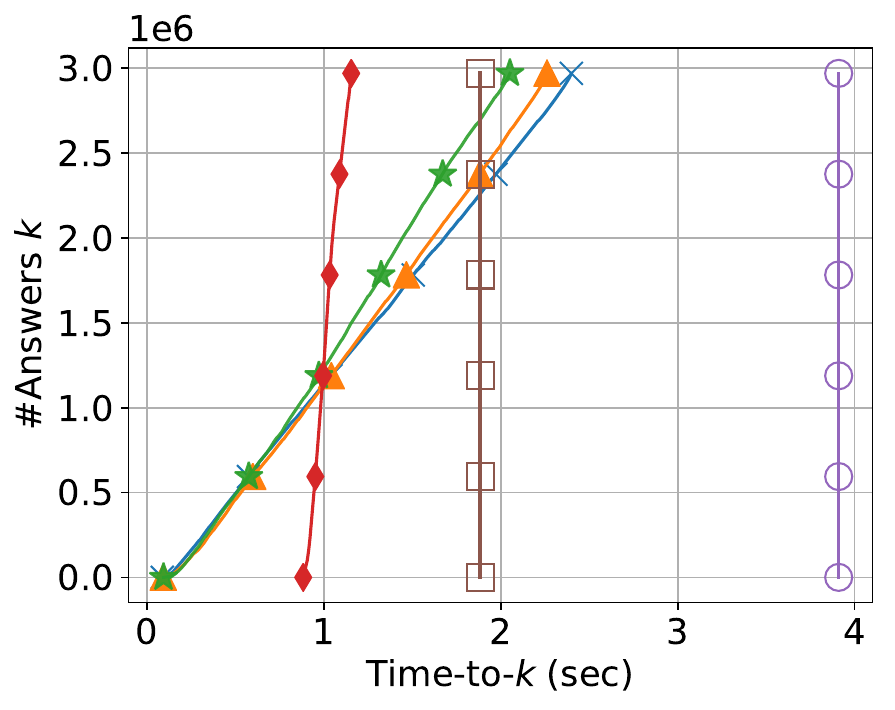}
        \caption{4-Path \Friendship}
		\label{exp:allk_4path_friendship}
    \end{subfigure}%
    \hfill
    \begin{subfigure}{0.24\linewidth}
        \centering
        \includegraphics[width=\linewidth]{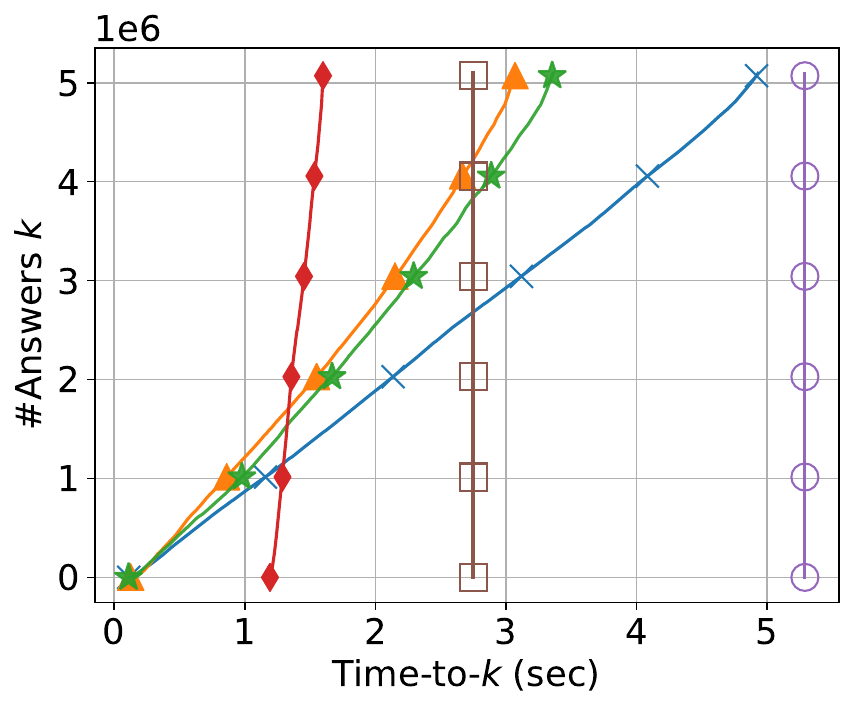}
        \caption{4-Star \Friendship}
		\label{exp:allk_4star_friendship}
    \end{subfigure}%
    \hfill
    \begin{subfigure}{0.24\linewidth}
        \centering
        \includegraphics[width=\linewidth]{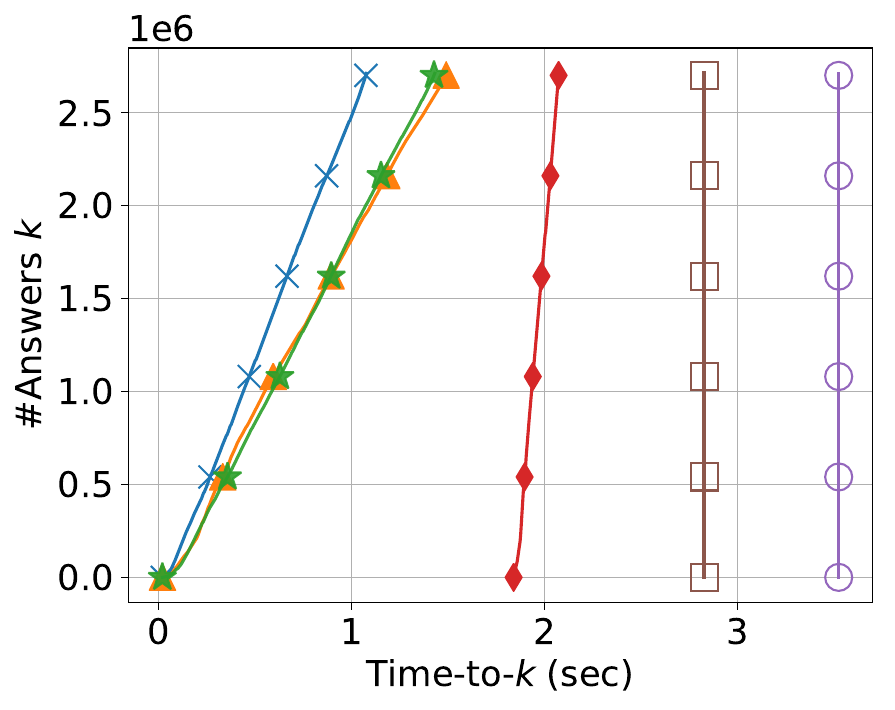}
        \caption{4-Path \Foodweb}
		\label{exp:allk_4path_foodweb}
    \end{subfigure}%
    \hfill
    \begin{subfigure}{0.24\linewidth}
        \centering
        \includegraphics[width=\linewidth]{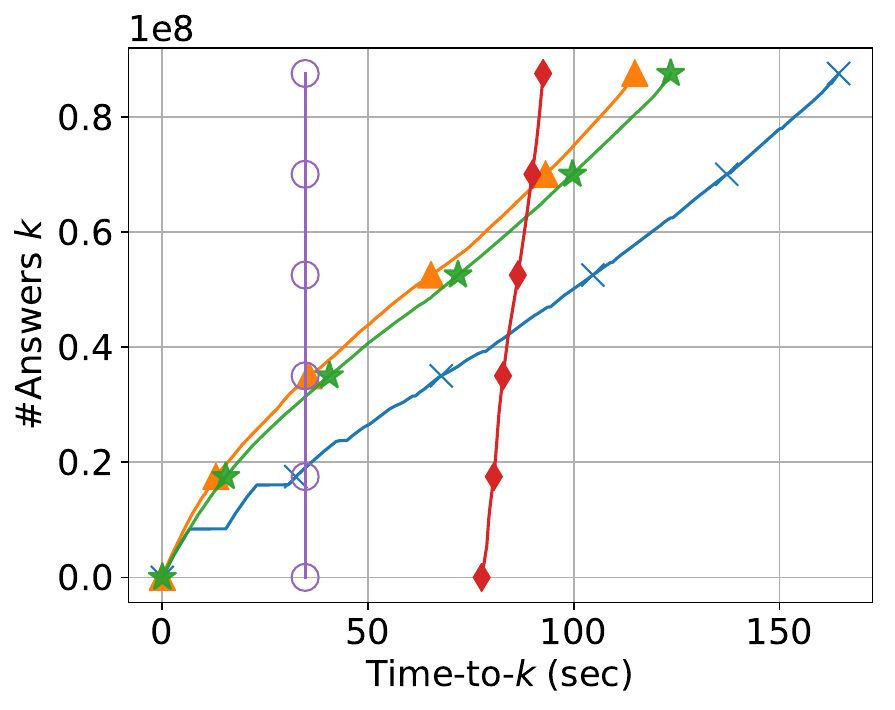}
        \caption{4-Star \Foodweb}
		\label{exp:allk_4star_foodweb}
    \end{subfigure}

    \caption{Experiments with varying $k$, enumerating \emph{all query answers} on real data.}
    \label{exp:allk_real}
\end{figure*}

\begin{figure*}[h]

    \centering
    \begin{subfigure}{\linewidth}
        \centering
        \includegraphics[height=0.6cm]{figs/experiments/legend2.pdf}
    \end{subfigure}
    \vspace{-3mm}
    
    \begin{subfigure}{0.24\linewidth}
        \centering
        \includegraphics[width=\linewidth]{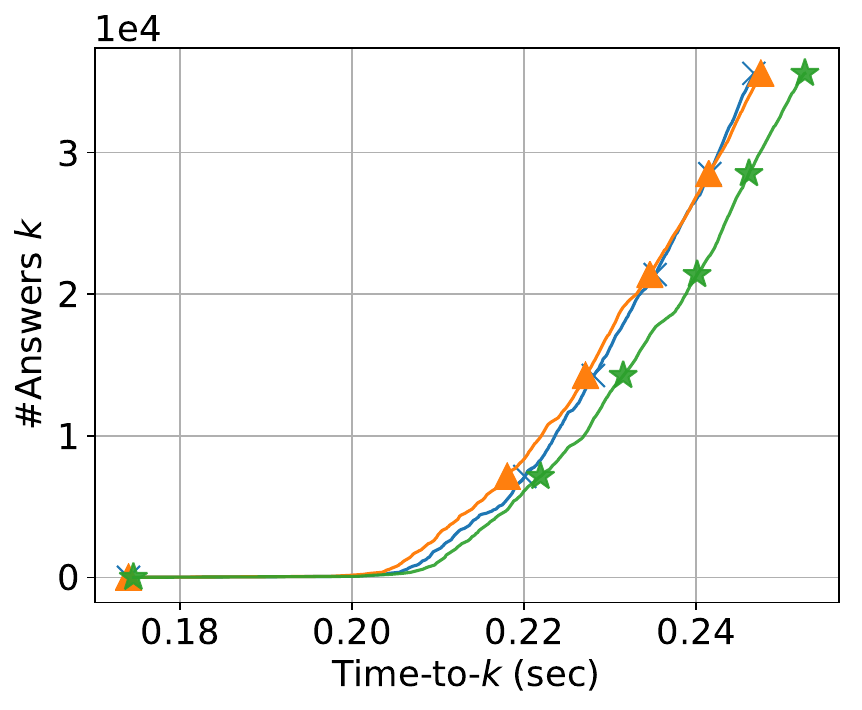}
        \caption{4-Path \Bitcoin \\output size $\sim 4 \!\cdot\! 10^{9}$}
		\label{exp:fewk_4path_bitcoin}
    \end{subfigure}%
    \hfill
    \begin{subfigure}{0.24\linewidth}
        \centering
        \includegraphics[width=\linewidth]{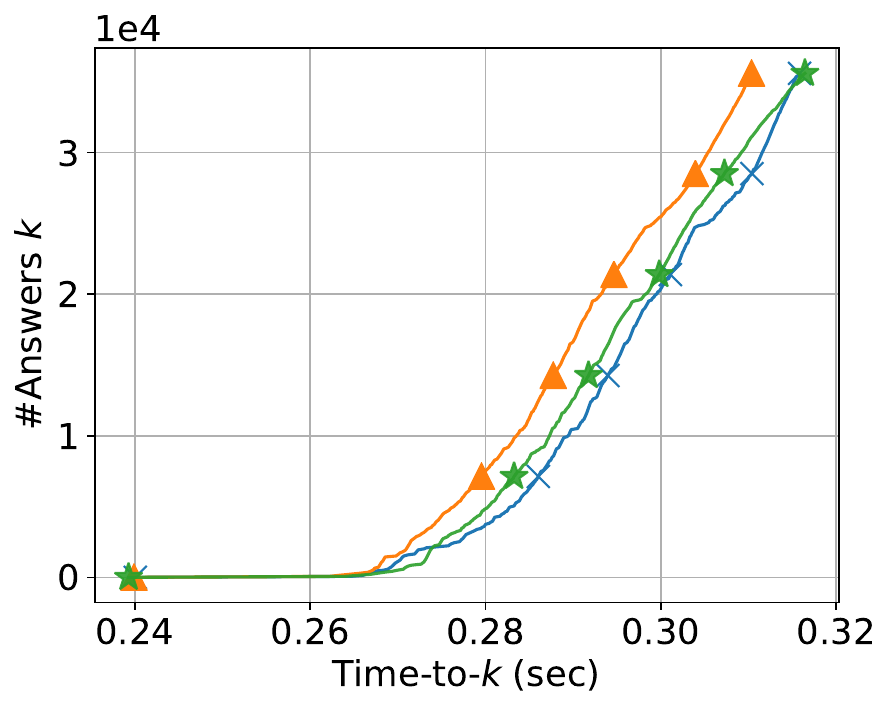}
        \caption{6-Path \Bitcoin \\output size $\sim 8 \!\cdot\! 10^{12}$}
		\label{exp:fewk_6path_bitcoin}
    \end{subfigure}%
    \hfill
    \begin{subfigure}{0.24\linewidth}
        \centering
        \includegraphics[width=\linewidth]{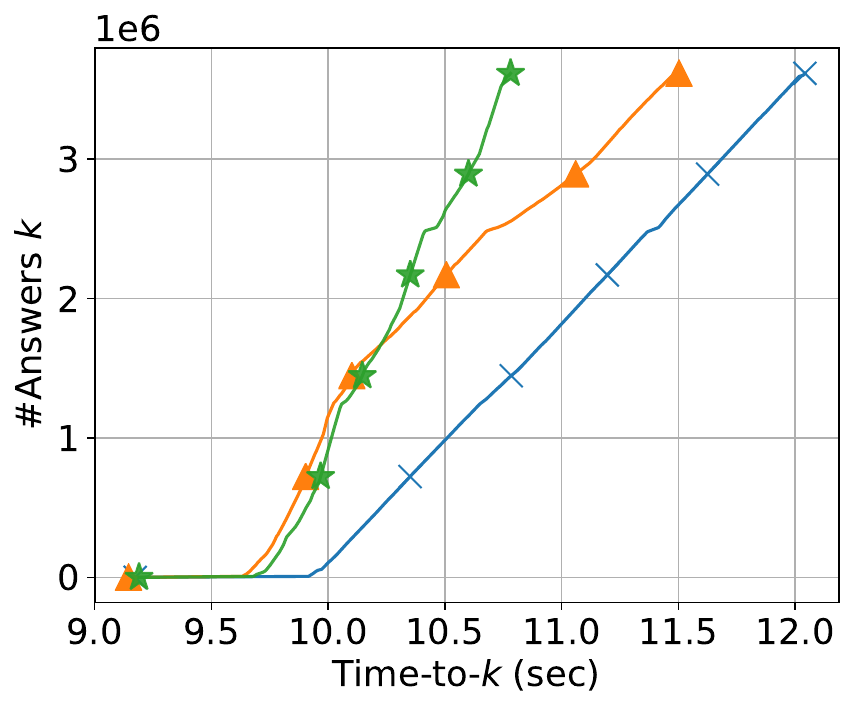}
        \caption{4-Path \Twitter \\output size $\sim 3 \!\cdot\! 10^{15}$}
		\label{exp:fewk_4path_twitter}
    \end{subfigure}%
    \hfill
    \begin{subfigure}{0.24\linewidth}
        \centering
        \includegraphics[width=\linewidth]{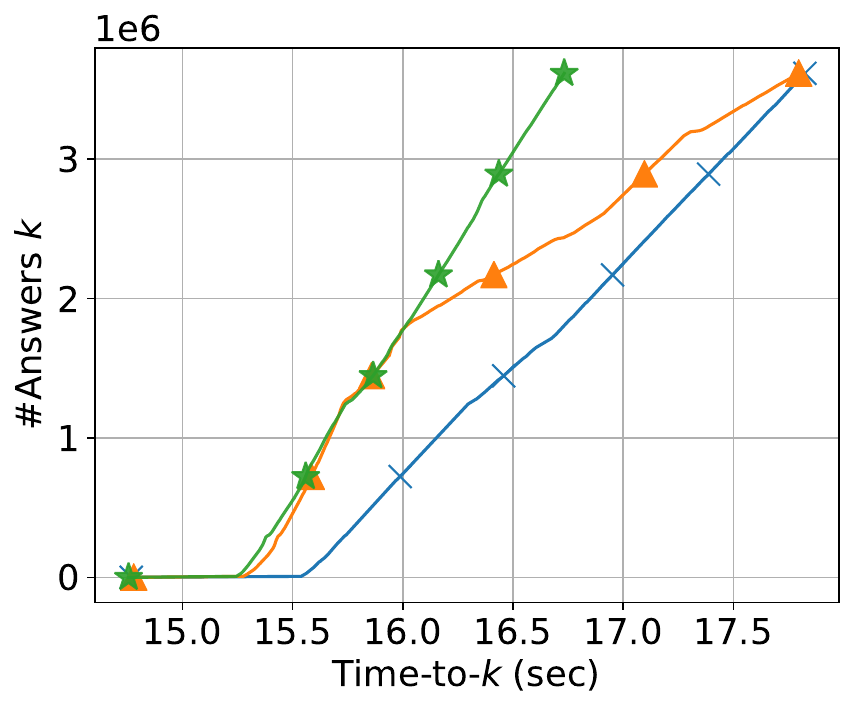}
        \caption{6-Path \Twitter \\output size $\sim 1 \!\cdot\! 10^{21}$}
		\label{exp:fewk_6path_twitter}
    \end{subfigure}

    \begin{subfigure}{0.24\linewidth}
        \centering
        \includegraphics[width=\linewidth]{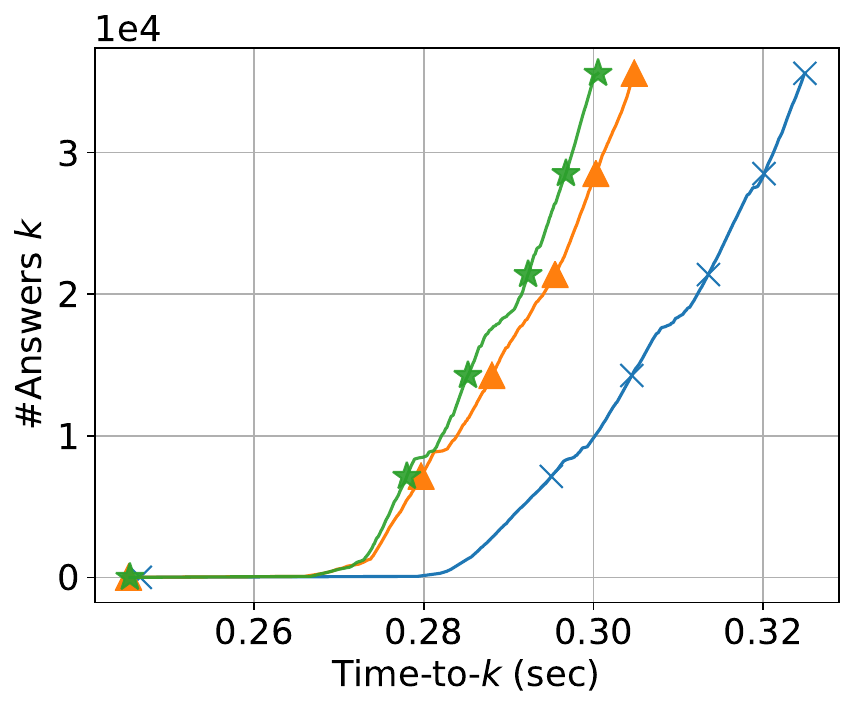}
        \caption{4-star \Bitcoin \\output size $\sim 5 \!\cdot\! 10^{11}$}
		\label{exp:fewk_4star_bitcoin}
    \end{subfigure}%
    \hfill
    \begin{subfigure}{0.24\linewidth}
        \centering
        \includegraphics[width=\linewidth]{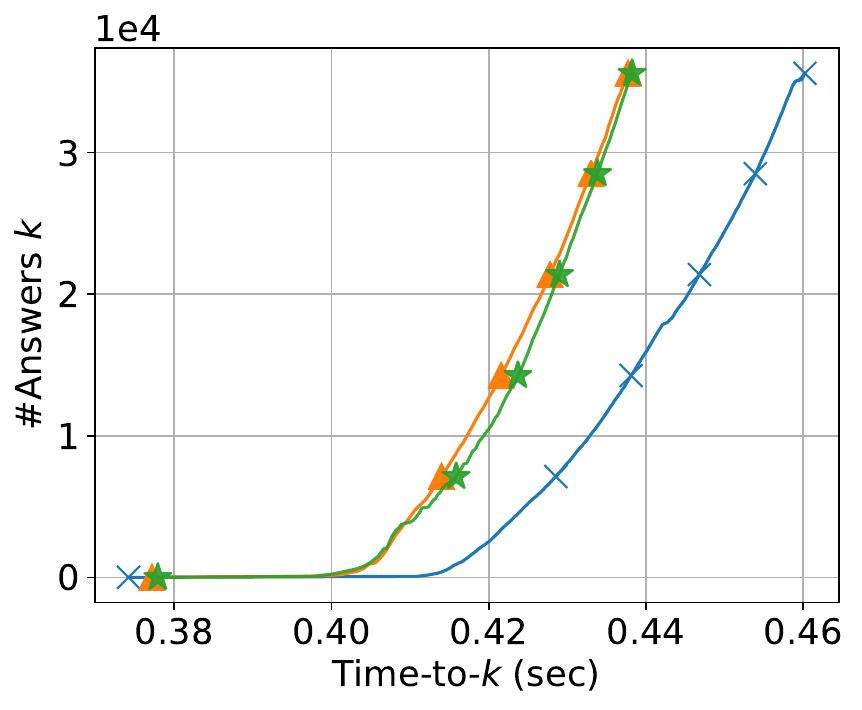}
        \caption{6-star \Bitcoin \\output size $\sim 2 \!\cdot\! 10^{17}$}
		\label{exp:fewk_6star_bitcoin}
    \end{subfigure}%
    \hfill
    \begin{subfigure}{0.24\linewidth}
        \centering
        \includegraphics[width=\linewidth]{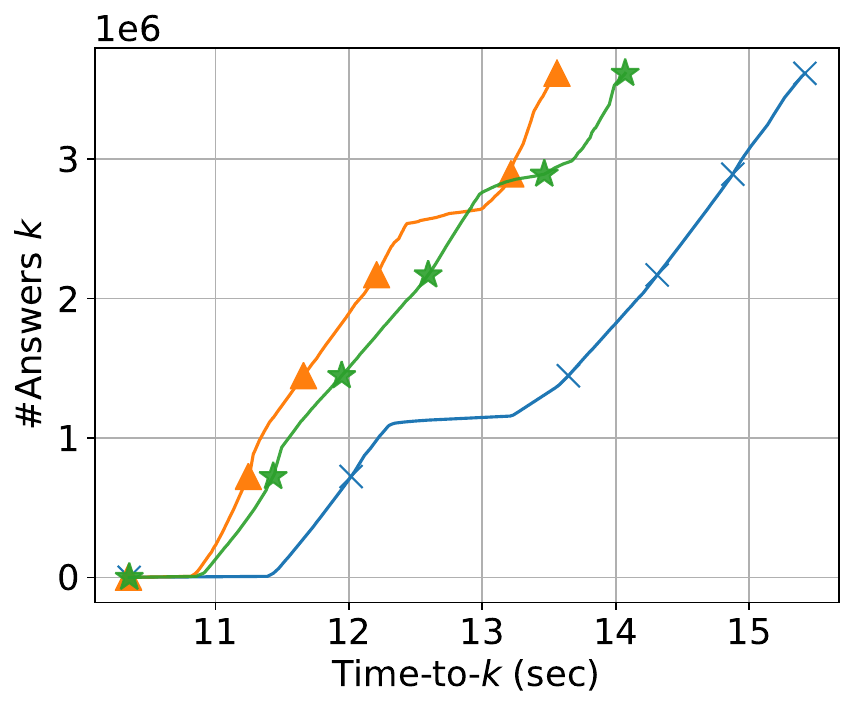}
        \caption{4-star \Twitter \\output size $\sim 5 \!\cdot\! 10^{17}$}
		\label{exp:fewk_4star_twitter}
    \end{subfigure}%
    \hfill
    \begin{subfigure}{0.24\linewidth}
        \centering
        \includegraphics[width=\linewidth]{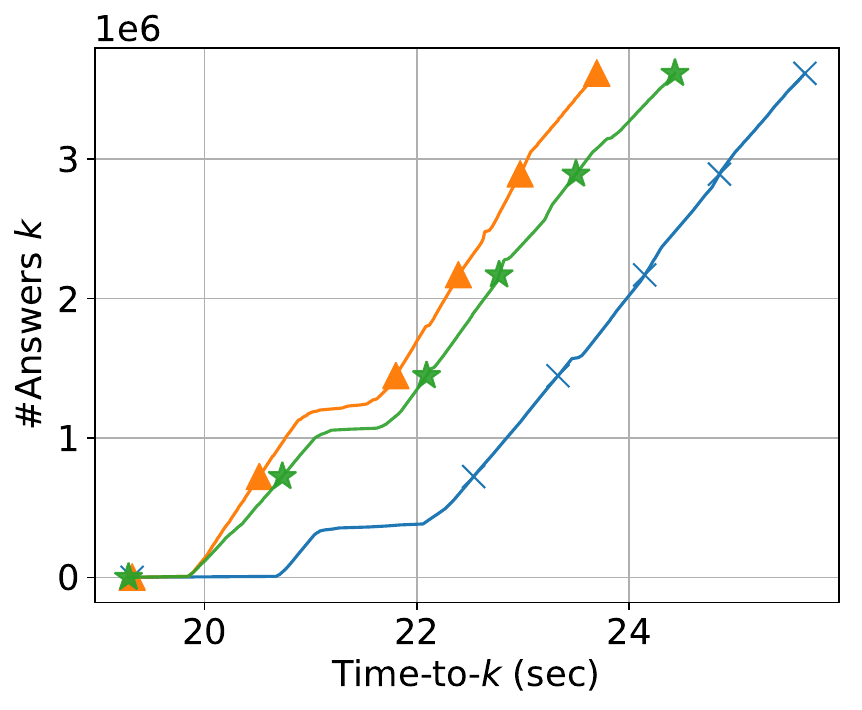}
        \caption{6-star \Twitter \\output size $\sim 6 \!\cdot\! 10^{25}$}
		\label{exp:fewk_6star_twitter}
    \end{subfigure}

    \caption{Experiments with varying $k$, enumerating \emph{few query answers} ($k=n$) on real data.}
    \label{exp:fewk_real}
\end{figure*}

\subsection{Results for Varying \#Answers $k$}
\label{sec:experiments_anyk}

We report the number of query answers returned in ranked order (y-axis) over time (x-axis).
In \Cref{exp:allk_synthetic}, we test the different approaches when all query answers are returned using synthetic data,
where we control the input to achieve a join output size in the order of $10^7$.
All three any-$k$ algorithms return the first few million answers
much faster in all cases.
For path queries with $\ell > 3$ (\Cref{exp:allk_4path,exp:allk_5path,exp:allk_6path}),
\ANYKREC not only returns the first answers quickly,
but also finishes the enumeration before \BATCH. 
The advantage becomes more evident as we increase the size of the query because we keep the output size fixed, which results in denser joins where the opportunities for memoization are greater.
\ANYKPARTP exhibits the same behavior.
The situation is different for star queries 
(\Cref{exp:allk_3star,exp:allk_4star,exp:allk_5star,exp:allk_6star}),
where we use a depth-2 join tree and these two algorithms
do not exploit the memoization of the suffix orders.
\ANYKREC is the slowest for \TTL out of the three any-$k$ options,
while the overhead of \ANYKPARTP compared to \ANYKPART is not significant. 
Interestingly, \ANYKPARTP retains the \TTL advantage over \BATCH for branch queries
(\Cref{exp:allk_5branch,exp:allk_6branch})
whose join tree is very close to a path.

We next test the entire output enumeration on the two smaller real datasets.
We observe that the \TTL advantage over \BATCH
only occurs for \Foodweb (\Cref{exp:allk_4path_foodweb}) and not for the less-dense \Friendship
(\Cref{exp:allk_4path_friendship}).
Similarly to the synthetic case,
\ANYKREC underperforms for star queries (\Cref{exp:allk_4star_friendship,exp:allk_4star_foodweb}).
\SYSX did not manage to run in-memory for the star query on \Foodweb
because of buffer size limits.

Finally, we conduct experiments where the join output is very large
and ask only for $k=n$ answers.
\Cref{exp:fewk_real} shows the results
for \Bitcoin and \Twitter,
noting the output size of the join in each case.
In this regime, \BATCH, \PSQL, and \SYSX
exceed the timeout and are not depicted.
\ANYKPART wins over \ANYKREC for these smaller values of $k$.
\ANYKPARTP is close to \ANYKPARTP
and in some cases overtakes it 
thanks to its memoization strategy(\Cref{exp:fewk_4path_twitter,exp:fewk_6path_twitter}).

\resultbox{%
\introparagraph{Summary}
\BATCH is impractical for joins with large output
and is outperformed by any-$k$ by orders of magnitude.
\PSQL and \SYSX follow a strategy similar to \BATCH.
\ANYKREC is usually the fastest for path queries when the enumeration is carried out to the end,
and may even beat \BATCH for \TTL,
but does not work as well for non-path-like queries.
\ANYKPART is marginally faster when fewer answers are returned.
\ANYKPARTP achieves the best of both worlds, 
having a competitive \TTL for all queries,
and a similar or better performance to \ANYKPART
for small $k$.
}

\section{Related Work}

\introparagraph{Top-$k$ joins}
Top-k queries have received significant attention in the database community
\cite{Agrawal:2009:CJA:1546683.1547478,akbarinia11topk,DBLP:conf/vldb/BastMSTW06,bruno02,chaudhuri99,ilyas08survey,issa20topk,rahul19topk,DBLP:journals/vldb/TheobaldBMSW08,tsaparas03topk}.
Much of that work relies on the value of $k$ given in advance in order to
\emph{prune the search space}. Besides, the cost model introduced by the seminal
Threshold Algorithm (TA)~\cite{fagin03} only accounts for the \emph{cost of fetching} input tuples
from external sources. Later approaches such as
J*~\cite{natsev01},
Rank-Join~\cite{ilyas04},
LARA-J* \cite{mamoulis07lara},
and a-FRPA \cite{finger09frpa}
generalize TA to complex joins like the ones we consider,
yet still focus on minimizing the number of accessed input tuples.
Thus, they are
\emph{sub-optimal when accounting for all steps of
the computation}, including intermediate result size
(see \Cref{appendix:topk_suboptimal}).
In a recent tutorial \cite{tziavelis20tutorial}, we examined
the relationship between top-k joins and the ranked enumeration paradigm discussed in this work.
Bonifati et al.~\cite{bonifati21threshold} study the impact of thresholds (i.e., LIMIT clauses) in query evaluation, but without ranking.

\introparagraph{Optimal join processing}
\emph{Acyclic} Boolean CQs can be evaluated optimally in $\O(n)$ data complexity by the Yannakakis algorithm \cite{DBLP:conf/vldb/Yannakakis81}.
The AGM bound~\cite{AGM}, a tight upper bound on the output size for full CQs,
motivated worst-case optimal join algorithms
\cite{arroyuelo22,
ngo18open,
ngo2018worst,
Ngo:2014:SSB:2590989.2590991,
veldhuizen14leapfrog}
and was generalized to take into account functional dependencies~\cite{gottlob12fds}
and degree constraints~\cite{abo2016degree,khamis17panda}. 
The complexity of answering \emph{cyclic} Boolean CQs has improved over the years thanks to decompositions
that achieve ever smaller width-measures~\cite{chekuri97querydecomp,
GottlobLS:2002,
RobertsonS:1986},
such as fractional hypertree width~\cite{grohe14fhtw}
and submodular width~\cite{Marx:2013:THP:2555516.2535926}.
Work has also been done in the direction of achieving instance-optimality~\cite{alway21optimal,Khamis:2016:JVG:3014437.2967101,ngo14mine}.
Our approach can leverage prior work on decompositions
to obtain the top-ranked answer in the same time bound as the Boolean CQ. 
We also show that it is possible to achieve better $\TTL$ complexity 
than sorting the output of a join algorithm if the ranking is integrated into the join.

\introparagraph{Factorization}
Our efficient encoding of the CQ answers as a (T-)DP instance with intermediate nodes 
in-between bicliques~\cite{feder91cliques} is a type of factorization,
which has been studied systematically in the context of factorized databases~\cite{bakibayev12fdb,DBLP:conf/icdt/KaraO18,
olteanu16record,olteanu12ftrees,olteanu15dtrees}.
The main idea is to represent the query answers compactly while supporting
different types of tasks,
such as enumeration~\cite{kara21triangles,olteanu15dtrees},
aggregation~\cite{AboKhamis:2019:FAQ:3294052.3319694,abo16faq},
training machine learning models~\cite{khamis17ml,schleich16ml,shaikhha21ml},
uniform sampling~\cite{carmeli20random}
and directly accessing ranked answers~\cite{carmeli23direct}.
We build on that body of work, showing that ranked enumeration can also be performed
efficiently for certain ranking functions.
Through factorization, CQs with ``short'' inequality predicates 
(i.e., between adjacent relations in the join tree)
can be reduced to CQs without inequalities over (poly)logarithmically larger relations~\cite{tziavelis21inequalities};
this extends the cases where our ranked enumeration techniques apply.

\introparagraph{Unranked enumeration}
Significant effort has been made to identify the classes
of queries whose answers can be enumerated
\emph{in arbitrary sequence}
with linear preprocessing and constant delay in data complexity~\cite{amarilli21spanners,
bagan07constenum, 
Berkholz:2017:ACQ:3034786.3034789,
carmeli21ucqs,
idris20dynamic_theta,
DBLP:journals/sigmod/Segoufin15,
qichen23change}.
Much of the focus is on the less-practically-relevant measure of delay (see \Cref{sec:complexity_measures}),
sometimes exploring its tradeoff with preprocessing~\cite{deep21projections,deep18compressed}
instead of focusing on $\TT(k)$.
Constant-delay enumeration is possible for cyclic CQs with a higher preprocessing cost by 
decomposing them into acyclic CQs~\cite{berkholz19submodular,olteanu15dtrees}.
Our work shows that adding ranking to enumeration of CQ answers (with an \smonotone ranking function) only incurs a logarithmic cost factor in data complexity.
On the other hand, unranked enumeration has been shown to be tractable for  
CQs with ``long'' inequality predicates~\cite{qichen22comparisons} (i.e., between non-adjacent relations in the join tree), 
where it is unclear whether ranked enumeration can be performed efficiently.

\introparagraph{Ranked enumeration}
The ranked enumeration paradigm can be seen as sorting done \emph{incrementally}.
For a set of elements that is given, Paredes and Navarro~\cite{paredes06iqs} propose a simple modification
to the well-known quicksort algorithm~\cite{hoare61qsort} for ranked enumeration,
which we leverage as a subprocedure for the \QUICK variant of \ANYKPART.
For graph-pattern queries that are less general than CQs,
Chang et al.~\cite{chang15enumeration} 
introduce the idea behind the \LAZY variant of \ANYKPART, while
Yang et al.~\cite{yang2018any} describe \MIN.
The idea of applying our any-$k$ algorithms to cyclic CQs using (multiple) hypertree decompositions
appeared in a very preliminary version of this work~\cite{YangRLG18:anyKexploreDB}.
The generalization of the REA algorithm~\cite{jimenez99shortest}
from paths to trees is achieved by \ANYKREC,
as well as the concurrent work of Deep and Koutris~\cite{deep21}.
In comparison with that work, we
(1) reveal the general DP structure of the problem and deep relationships to classic work on $k$-shortest 
paths,
(2) provide a more fine-grained analysis that shows important differences between the algorithms and surprising results such as asymptotically faster time-to-last than sorting,
(3) propose an asymptotically better algorithm that also applies to the underlying problem of path enumeration in a DAG,
(4) include CQs with projections in our study,
and (5) implement and experimentally compare different algorithms.
On the other hand, Deep and Koutris~\cite{deep21} define and support monotonicity properties that are sensitive to the structure of the join tree (or the tree decomposition);
they allow for example the ranking function $f(x,y) + f(z,u)$ for arbitrary $f$
by always placing $x,y$ together in one bag and $z,u$ together in another bag of the decomposition.
For CQs with arbitrary projections (i.e., not necessarily free-connex),
Bagan et al.~\cite{bagan07constenum} provide an algorithm with linear preprocessing and linear delay.
For arbitrary projections and \smonotone ranking functions, Kimelfeld and Sagiv \cite{KimelfeldS2006} give one with linear preprocessing and polynomial delay.
Interestingly, a closer look at these algorithms reveals that the former
can support any lexicographic order, while the latter has linear
(instead of polynomial) delay.
For lexicographic orders and sum-of-weights, these guarantees were later matched by Deep et al.~\cite{deep22ranked}.
The algorithm of Kimelfeld and Sagiv \cite{KimelfeldS2006} uses the Lawler-Murty procedure like \ANYKPART but does not exploit
the DP structure of the problem (i.e., the deviations) since that is not possible for arbitrary CQs.
ContourJoin~\cite{ding21progressive} is a ranked-enumeration algorithm for binary joins,
but, similarly in spirit to top-$k$ joins, does not offer any non-trivial guarantees in the RAM cost model.
Bourhis et al.~\cite{bourhis21ranked} study ranked enumeration for queries in Monadic Second Order (MSO) logic which is more general than CQs, 
but the word data model is simpler since it cannot describe arbitrary relations.
Ranked enumeration under updates has been considered by Berkholz et al.~\cite{berkholz21ranked} for probabilistic databases.

\introparagraph{$k$-shortest paths}
The literature is rich in algorithms for finding the $k$-shortest paths~\cite{eppstein16kbest}
with the sum-of-weights model (i.e., the tropical semiring)
where typically the graph can be cyclic.
Many variants focus on the loopless version where a path cannot visit a node twice~\cite{Chang15topkpaths,hershberger07paths,katoh82kshortest,yen1971finding}, 
a property always satisfied in DAGs. 
Hoffman and Pavley \cite{hoffman59shortest}
introduce the idea of deviations (see \Cref{sec:part}).
Building on that idea, Dreyfus~\cite{dreyfus69shortest}
proposes an algorithm that can be seen as a modification to the procedure of
Bellman and Kalaba~\cite{bellman60kbest}. The \emph{Recursive Enumeration Algorithm}
(REA)~\cite{jimenez99shortest} uses the same set of equations as Dreyfus,
but applies them in a top-down recursive manner. 
Our \ANYKREC builds upon REA.
To the best of our knowledge, prior work has ignored the fact that this algorithm
reuses computation in a way that can asymptotically outperform sorting in some cases.
In another line of research, Lawler \cite{lawler72} generalizes an earlier algorithm of 
Murty \cite{murty1968} and applies it to $k$-shortest paths. 
Aside from $k$-shortest paths, the Lawler-Murty procedure has been widely used for a variety of problems in the database community~\cite{golenberg11parallel}.
This procedure and the Hoffman-Pavley deviations are the main ingredients
of \ANYKPART.
Eppstein's algorithm~\cite{eppstein1998finding} has the
best-known complexity $\TT(k) = \O(N + k \log k)$, which does not depend on
the $\O(\ell)$ length of returned paths. 
To achieve that, it returns an $\O(1)$ implicit representation of a path, i.e., its weight
together with a pointer 
to the start of the path from which it can be reconstructed.
For the task of explicitly enumerating paths in a DAG, an $\O(k \ell)$ term is unavoidable and
our \ANYKPART achieves the same complexity as Eppstein's algorithm~\cite{eppstein1998finding}
with a much simpler construction;
\ANYKPARTP asymptotically improves upon this.
In \Cref{sec:map_related_work}, we further discuss the literature on $k$-shortest paths.

\introparagraph{X+Y}
The open X+Y sorting problem asks whether the pairwise sums of two $n$-size sets can be sorted faster than the naive $\O(n^2 \log n)$ algorithm~\cite{bremner06xy,demaine06sum,fredman76xy,harper75xy}.
This corresponds precisely to sorting the answers to $Q_{XY}(x, y) \datarule R_1(x), R_2(y)$.
The problem of selection where only the $k^{\textrm{th}}$ answer needs to be returned
has also been studied for X+Y~\cite{frederickson82selection} as well as
for more sets $X_1 + X_2 + \cdots + X_m$~\cite{johnson78xy}.
Our work shows that it is possible to improve the logarithmic factor in sorting
$X_1 + X_2 + \cdots + X_m$.

\introparagraph{CSPs and Homomorphisms}
The connections between conjunctive query evaluation, constraint satisfaction (CSP), 
and homomorphisms between relational structures are well known
\cite{DBLP:journals/siamcomp/FederV98,gyssens94csp,KOLAITIS2000302,vardi00csp}.
Ranked enumeration using the Lawler-Murty procedure in a
way similar to Kimelfeld and Sagiv~\cite{KimelfeldS2006} for CQs
has also been proposed for CSPs
\cite{DBLP:journals/jcss/GottlobGS18,DBLP:conf/cp/GrecoS11}.
Our results apply directly to these problems as well;
for example, they
generalize
minimum-cost homomorphism problems~\cite{gutin06homomorphism,hell12homomorphism}
to ranked enumeration of homomorphisms.

\section{Conclusions and Future Work}

We described a framework for ranked enumeration of answers to full acyclic CQs and, by extension, free-connex CQs and cyclic CQs.
More generally, it applies to a wide range of DP problems that can be expressed through semirings,
such as DNA sequence alignment~\cite{dpv08book}
or Viterbi decoding~\cite{seshadri94viterbi}.
Given that special cases of this problem had been studied in isolation in the past,
our general framework aims to avoid piecemeal rediscovery in the future.
The algorithms we proposed, such as \ANYKPARTP which asymptotically dominates all previously known alternatives,
directly apply to all these problems and improve the state-of-the-art upper bounds.

Although our work thoroughly resolves ranked enumeration for cases where a DP structure can be identified in a problem,
it remains to be seen whether ranked enumeration can be done efficiently
in cases where this structure is not clear, as is the case
in \emph{Unions of CQs}~\cite{carmeli21ucqs} or CQs with ``long'' inequality predicates~\cite{qichen22comparisons}.
Furthermore, our analysis could be extended by going beyond the worst-case and studying the \emph{average-case} behavior~\cite{SA0G18} of our algorithms.
On the practical side, we showed the advantage of any-$k$ over the traditional DBMS approach,
but we have left for future work to determine the best way to integrate our techniques with a DBMS
query processor and optimizer.
To that end, an external-memory adaptation and analysis of our algorithms might be necessary.

\begin{acks}
We are grateful to Hung Q. Ngo for reading drafts of this paper and providing valuable feedback. 
This work was supported in part by %
the Office of Naval Research
(Grant\#: N00014-21-C-1111),
the National Institutes of Health (NIH) under award number R01 NS091421, by
the National Science Foundation (NSF) under award numbers CAREER IIS-1762268
and IIS-1956096,
and by a Google PhD fellowship for author Nikolaos Tziavelis.
Any opinions, findings, and conclusions or recommendations expressed in this paper are those of the authors and do not necessarily reflect the views of the funding agencies.
\end{acks}

\balance

\printbibliography	%

\clearpage
\appendix

\section{Nomenclature}

\begin{table}[h]
\centering
\small
\begin{tabularx}{\linewidth}{@{\hspace{0pt}} >{$}l<{$}  @{\hspace{2mm}}  X @{}}
\hline
\textrm{Symbol}	& Definition 	\\
\hline
[m]             & Integers $\{ 1, \ldots, m \}$ \\
\phantom{}[m]_0           & Integers $\{ 0, \ldots, m \}$ \\
D               & Database instance \\
n               & Maximum size of a relation \\
Q               & Conjunctive Query $Q(\vec X) \datarule R_1(\vec{X_1}),\ldots, R_\stages(\vec{X_\stages})$ \\
\vec X, \vec Y, \vec Z & Lists of variables \\
x, y, z         & Variables \\
\free(Q)        & The free variables of CQ $Q$ \\
R_1, \ldots, R_\ell & Relations \\
\ell            & Number of atoms of a CQ \\
\alpha          & Maximum arity of an atom of a CQ \\
\dom            & The domain of the relations \\
Q(D)            & Set of query answers of $Q$ over $D$ \\
q \in Q(D)      & A query answer \\
\oplus, \otimes & Operators of a commutative selective dioid \\
\diam(Q)        & Diameter of the hypergraph of CQ $Q$ \\
G(V, E)         & Graph with nodes $V$ and edges $E$ \\
\nodenum        & Number of nodes in a graph \\
s, t            & Source node $s$, target node $t$ \\
S_1, \ldots, S_\ell & Stages of a node-partitioned graph \\
w               & Weight function for input tuples/query answers/graph edges/paths \\
\solW_k(v)      & Weight of $k^\textrm{th}$ shortest path from $v$ \\
\sol_k(v)       & $k^\textrm{th}$ shortest path from $v$ \\
\langle v_1, \ldots, v_\lambda \rangle & A path of length $\lambda$ \\
\suc(v_i, v_{i+1}) & Next best child of $v_i$ after $v_{i+1}$ according to the optimal weight needed to reach $t$ \\
\Choices_1(v)   & The paths compared by DP at node $v$: thy go to a child of $v$ and then optimally to $t$ \\
E_{pc}          & Edges (decisions) from stage $\Sset_p$ to stage $\Sset_c$ in T-DP \\
\Ch(\Sset_p), \Ch(v_p) & Indexes of children stages of $\Sset_p$ or $v_p \in \Sset_p$ in T-DP \\
\parent(\Sset_c), \parent(v_c) & Index of parent stages of $\Sset_c$ or $v_c \in \Sset_c$ in T-DP \\
\llceil \Sset_i \rrceil, \llceil v_i \rrceil & Index of all stages inn the subtree of $\Sset_i$ or $v_i \in \Sset_i$ excluding $i$ in T-DP \\
\serialw        & Width of serial decomposition of a T-DP problem \\
\hline
\end{tabularx}
\end{table}

\section{Order-By Queries in Factorized Databases}
\label{appendix:fdb_order}

Factorized databases (FDBs)~\cite{bakibayev12fdb,olteanu16record,olteanu12ftrees,olteanu15dtrees}
support constant-delay enumeration of query answers according to a desired lexicographic order on the attributes~\cite{bakibayev13fordering}.
Lexicographic orders are a special case of the ranking function considered in this paper and our approach supports them (see \Cref{sec:lex}) but with logarithmic delay.
Here we look closer at the differences between the two approaches for this special case of lexicographic orders and show that our approach can be asymptotically better in certain cases despite the logarithmic delay. 

First, we provide a very short description of the main idea behind factorized databases and we refer the reader to the original works for a deeper understanding. 
To achieve a succinct representation, factorized databases repeatedly apply the distributivity law in an order described by a tree structure whose nodes are the attributes \cite{olteanu12ftrees}. 
Intuitively, if $X$ is the attribute of a node of the tree and $\mathrm{anc}(X)$ are its ancestor attributes, then every value $x \in X$ is represented at most once for each combination of values of $\mathrm{anc}(X)$.
D-representations~\cite{olteanu15dtrees} provide further succinctness by making the dependencies of each attribute in the tree explicit. 
This means that some attributes in $\mathrm{anc}(X)$ might not actually determine what the possible $X$ values are.
Truly dependent ancestor attributes of a node are denoted as $\mathrm{key}(X)$.
Each value $x \in X$ is then represented at most once for each combination of values of $\mathrm{key}(X)$.

Such a factorized representation allows constant-delay unranked enumeration of the query answers.
Yet for specific lexicographic orders, there are two conditions that have to be met: 
($i$) the order-by attributes have to be ``at the top'' of the tree and 
($ii$) the tree order has to agree with the lexicographic order.
If the tree order is not in agreement 
(e.g., we want $A$ before $B$ but $A$ is a child of $B$ in the tree), 
then the whole representation has to be restructured. 
The restructuring operation takes an input representation and transforms it to an 
output representation consistent with the lexicographic order 
in time linear (ignoring log factors) in the input \emph{and output representation sizes}.
However, the output representation itself could be very large.
We next illustrate the simplest example where an ill-chosen lexicographic order results in a quadratic representation for a simple binary join.

\begin{example}[Lexicographic orders]\label{ex:fdbs_lexicographic}
Consider the 2-path query $Q_{P2}(A, B, C) \datarule R(A, B), S(B, C)$. 
As usual, $n$ is the maximum number of tuples in a relation.
Ideally, we would want to factorize it using a tree that has $B$ as the root and $A, C$ as its children.
That way, every $A$ and $C$ value in the query result would be represented independently for each $B$ value.
However, for the lexicographic order $A \rightarrow C \rightarrow B$ this factorization is not in agreement since $B$ comes after $A$ and $C$.
The \emph{only possible} tree that satisfies the condition ($ii$) above is a path from $A$ to $C$ to $B$. 
Note that the tree with $A$ as the root and $B$, $C$ as the children is not possible because of the ``path condition'' in factorized databases: attributes that belong to the same relation ($B$ and $C$ here) are dependent and have to lie in the same root-to-leaf path.
In the only valid tree, $\mathrm{key}(B) = \{ A, C \}$.
According to Lemma 7.20 in \cite{olteanu15dtrees}, there exist arbitrarily large databases such that the number of $B$ values in the representation is at least $n^{\rho^{*}(B \cup key(B))}$, where $\rho^{*}$ is the fractional edge cover of the query, thus $\Omega(n^2)$.

\Cref{fig:FDBs} presents a concrete instance where this happens.
For this database, the single $B$-value $1$ will be represented once for each combination of $A, C$ values and there are $n^2$ of them. 
In contrast, our approach begins the enumeration after only linear time preprocessing.
Thus in this case, the preprocessing step of FDBs takes $\O(n^2)$ after which results can be enumerated in constant time.
In contrast, our approach has $\TT(k) = \O(n + k \log k)$.
\end{example}

\begin{figure}[t]
	\small
\centering
	\centering
	\setlength{\tabcolsep}{0.4mm}
		\hspace{0mm}
				\mbox{
				\begin{tabular}[t]{ >{$}c<{$} | >{$}c<{$}  >{$}c<{$} }
	 			\mathbf{R}	&  A 	& B \\
				\hline
				& $1$ & \markZwicky(R1){$1$} 	\\
				& $2$ & \markZwicky(R2){$1$} 	\\
				& \ldots & \ldots \\
				& $n$ & \markZwicky(R3){$1$}
				\end{tabular}
		}
		\hspace{0mm}
		\mbox{
				\begin{tabular}[t]{ >{$}c<{$} | >{$}c<{$}  >{$}c<{$} }
	 			\mathbf{S}	&  B 	& C \\
				\hline
				& \markZwicky(S1){$1$} & $1$ 	\\
				& \markZwicky(S2){$1$} & $2$	\\
				& \ldots & \ldots \\
				& \markZwicky(S3){$1$} & $n$
				\end{tabular}
		}
\caption{\Cref{ex:fdbs_lexicographic}:
Example database showing sub-optimality of factorized databases for lexicographic order $A \rightarrow C \rightarrow B$.
}
\label{fig:FDBs}
\tikzZwicky[blue](R1.east)(S1.west)
\tikzZwicky[blue](R1.east)(S2.west)
\tikzZwicky[blue](R1.east)(S3.west)
\tikzZwicky[blue](R2.east)(S1.west)
\tikzZwicky[blue](R2.east)(S2.west)
\tikzZwicky[blue](R2.east)(S3.west)
\tikzZwicky[blue](R3.east)(S1.west)
\tikzZwicky[blue](R3.east)(S2.west)
\tikzZwicky[blue](R3.east)(S3.west)
\end{figure}

\section{Top-k joins in our cost model}
\label{appendix:topk_suboptimal}

Consider the database $I_2$ from \cref{fig:TA} with $\ell=3$ relations and $n=10$ tuples per relation.
The top output tuple is marked in blue; it consists of the lightest tuples from the first $\ell-1$ relations and the heaviest tuple from $R_\ell$.
J*~\cite{natsev01} and Rank-Join~\cite{ilyas04} access 
the tuples in the input relations by decreasing weight. 
Their cost model takes into account only the number of database accesses, 
hence they try to minimize the depth up to which the sorted relations have to be accessed in order to find the top-k results. 
In this case, 
\emph{both J* and Rank-Join will consider the $(n-1)^{\ell-1}$ combinations between $R$ and $S$
before getting to the 
the top-1 tuple $(r_0, s_0, t_0)$}. 
This happens because J* over-estimates their weight by using the large weight of $t_0$ to upper-bound them, 
while Rank-Join by default joins each newly encountered tuple with all the other ones seen so far.
In contrast, our approach achieves $O(n)$ for the top ranked result (in data complexity).

\begin{figure}[t]
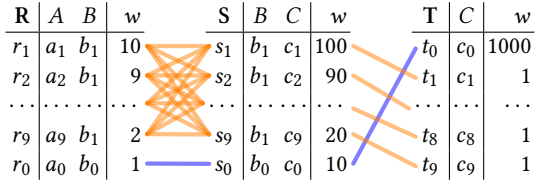

	\small
\centering
	\centering
	\setlength{\tabcolsep}{0.4mm}
		\hspace{0mm}
				\mbox{
				\begin{tabular}[t]{ >{$}c<{$} | >{$}c<{$}  >{$}c<{$} | >{$}r<{$} }
	 			\mathbf{R}	&  A 	& B 	& w\\
				\hline
				r_1		& a_1 	& b_1 	& \markZwicky(1-1){$10$}		\\			
				r_2		& a_2 	& b_1 	& \markZwicky(2-1){$9$}\\
				\cdots 	& \cdots & \cdots 	& \markZwicky(3-1){$\cdots$\vphantom{\large $0$}} \\
				r_9		& a_9 	& b_1 	& \markZwicky(4-1){$2$}	\\				
				r_{0}	& a_{0} & b_0 	& \markZwicky(0-1){$1$}								
				\end{tabular}			
		}
		\hspace{5mm}
		\mbox{
				\begin{tabular}[t]{ >{$}c<{$} | >{$}c<{$} >{$}c<{$} | >{$}r<{$}}
	 			\mathbf{S}	&  B 	& C 	& w\\
				\hline
				\markZwicky(1-2){\vphantom{\large $0$}$s_1$}	& b_1 & c_1 	& \markZwicky(1-3){$100$}	\\			
				\markZwicky(2-2){\vphantom{\large $0$}$s_2$}	& b_1 & c_2 	& \markZwicky(2-3){$90$}  	\\
				\markZwicky(3-2){\vphantom{\large $0$}$\cdots$} & \cdots  & \cdots 	& \markZwicky(3-3){$\cdots$} \\
				\markZwicky(4-2){\vphantom{\large $0$}$s_9$}	& b_1 & c_9 	& \markZwicky(4-3){$20$}	\\				
				\markZwicky(0-2){\vphantom{\large $0$}$s_0$}	& b_0 & c_0 	& \markZwicky(0-3){$10$}								
				\end{tabular}	
		}
		\hspace{5mm}
		\mbox{
				\begin{tabular}[t]{ >{$}c<{$} | >{$}c<{$} | >{$}r<{$}}
	 			\mathbf{T}	& C & w		\\
				\hline
				\markZwicky(0-4){$t_0$}	& c_0	& 1000\\	
				\markZwicky(1-4){$t_1$}	& c_1	& 1\\						
				\markZwicky(2-4){$\cdots$} & \cdots & \cdots \\				
				\markZwicky(3-4){$t_8$}	& c_8	& 1\\								
				\markZwicky(4-4){$t_9$}	& c_9	& 1
				\end{tabular}			
		}
\caption{
Database $I_2$ showing sub-optimality of J* and Rank-Join. (\Cref{appendix:topk_suboptimal})}
\label{fig:TA}
\tikzZwicky[blue](0-1.east)(0-2.west)
\tikzZwicky[blue](0-3.east)(0-4.west) %

\tikzZwicky[orange](1-1.east)(1-2.west)
\tikzZwicky[orange](1-1.east)(2-2.west)
\tikzZwicky[orange](1-1.east)(3-2.west)
\tikzZwicky[orange](1-1.east)(4-2.west)
\tikzZwicky[orange](2-1.east)(1-2.west)
\tikzZwicky[orange](2-1.east)(2-2.west)
\tikzZwicky[orange](2-1.east)(3-2.west)
\tikzZwicky[orange](2-1.east)(4-2.west)
\tikzZwicky[orange](3-1.east)(1-2.west)
\tikzZwicky[orange](3-1.east)(2-2.west)
\tikzZwicky[orange](3-1.east)(3-2.west)
\tikzZwicky[orange](3-1.east)(4-2.west)
\tikzZwicky[orange](4-1.east)(1-2.west)
\tikzZwicky[orange](4-1.east)(2-2.west)
\tikzZwicky[orange](4-1.east)(3-2.west)
\tikzZwicky[orange](4-1.east)(4-2.west)

\tikzZwicky[orange](1-3.east)(1-4.west)
\tikzZwicky[orange](2-3.east)(2-4.west)
\tikzZwicky[orange](3-3.east)(3-4.west)
\tikzZwicky[orange](4-3.east)(4-4.west)

\end{figure}

\section{Minimal example of a non-commutative selective dioid}
\label{appendix:dioid_example}

In \Cref{sec:algebra}, we mentioned that commutativity of ``multiplication'' in selective dioids
is an important algebraic property for us since non-commutative structures would not allow
any order of aggregation, restricting the space of possible join trees for CQs.
However, all typical examples of selective dioids on infinite domains are commutative,
and it can be difficult to find an example of a non-commutative structure.
\Cref{fig:nc_dioid} show such an example on a domain of 5 elements that we identified with
the Mace4 tool~\cite{mccune03mace4}.

\begin{figure*}[t]
    \begin{subfigure}{0.48\linewidth}
        \centering

        \begin{tabular}{| c | c | c | c | c | c |}
        \multicolumn{6}{c}{ \textbf{Operator $\oplus$} } \\ \hline
        & \textbf{0} & \textbf{1} & \textbf{2} & \textbf{3} & \textbf{4} \\ \hline
        \textbf{0} & 0 & 0 & 0 & 0 & 0 \\ \hline
        \textbf{1} & 0 & 1 & 1 & 1 & 1 \\ \hline
        \textbf{2} & 0 & 1 & 2 & 2 & 2 \\ \hline
        \textbf{3} & 0 & 1 & 2 & 3 & 3 \\ \hline
        \textbf{4} & 0 & 1 & 2 & 3 & 4 \\ \hline
        \end{tabular}
        \caption{Definition of $x \oplus y$.}
        \label{fig:nc_dioid-1}

    \end{subfigure}%
    \hfill
    \begin{subfigure}{0.48\linewidth}
        \centering

        \begin{tabular}{| c | c | c | c | c | c |}
        \multicolumn{6}{c}{ \textbf{Operator $\otimes$} } \\ \hline
        & \textbf{0} & \textbf{1} & \textbf{2} & \textbf{3} & \textbf{4} \\ \hline
        \textbf{0} & 0 & 1 & 2 & 3 & 4 \\ \hline
        \textbf{1} & 1 & 1 & 2 & 3 & 4 \\ \hline
        \textbf{2} & 2 & 2 & 2 & 3 & 4 \\ \hline
        \textbf{3} & 3 & 4 & 4 & 4 & 4 \\ \hline
        \textbf{4} & 4 & 4 & 4 & 4 & 4 \\ \hline
        \end{tabular}
        \caption{Definition of $x \otimes y$}
        \label{fig:nc_dioid-2}

    \end{subfigure}%
    \hfill
    \caption{Example of non-commutative selective dioid. Evaluation tables of the operators are shown with the first operand $x$ as the row and the second operand $y$ as the column. The order induced by $\oplus$ is 
    $0 \preceq 1 \preceq 2 \preceq 3 \preceq 4$. The operator $\otimes$ is non-commutative since 
    $1 \otimes 3 \neq 3 \otimes 1$.}
    \label{fig:nc_dioid}
\end{figure*}

\section{\ANYKPART Variants}
\label{sec:part_variants}

As we discussed in \Cref{sec:part}, the specific implementation of the successor function
$\suc(v_i, v_{i+1})$ for an edge $(v_i, v_{i+1})$
gives rise to a number of variants that have appeared in prior work.
In the conference version of this article~\cite{tziavelis20vldb},
we analyzed and compared these in more detail.
They can broadly be categorized as ``strict'', which follow the successor definition
adopted in the main body of our paper,
and ``relaxed'' which follow a more general definition.

\smallsection{Strict approaches}
A natural implementation of the successor function returns precisely the next-best choice.

Eager Sort (\EAGER):
Since a state might be reached repeatedly through different prefixes, it
may pay off to pre-sort all choice sets by weight and add pointers from each choice
to the next one in sort order. 
Then $\Suc(v_i, y)$ returns the next-best choice at $v_i$ in constant time
by following the next-pointer from $v_{i+1}$.

Lazy Sort (\LAZY):
To lower pre-processing cost, we can leverage the approach Chang et al.~\cite{chang15enumeration}
proposed in the context of graph-pattern search.
Instead of sorting a choice set, it constructs a binary heap in linear time.
Since all but one of the successor requests in a single repeat-loop execution
are looking for the second-best 
choice\footnote{During each execution of the repeat-loop, only the first iteration
of \Cref{line:for1} looks for a lower choice.}, 
the algorithm already pops the top two choices off the heap and moves them into a sorted list. 
For all other choices, the first access popping them from the heap will append them to the
sorted list that was initialized with the top-2 choices. 
As the algorithm progresses, the heap of choices gradually empties out, filling the sorted list
and thereby converging to \EAGER.

\QUICK: Using the Incremental Quicksort algorithm~\cite{paredes06iqs}, we can sort the children of
each node incrementally and obtain the same guarantees as \LAZY in expectation.

\smallsection{Relaxed approaches}
Instead of finding the \emph{single true successor}
of a choice,  what if the algorithm could return a set of \emph{potential successors}?
Correctness is guaranteed, as long as the true successor is contained in this set
or is already in $\Cand$. (Adding potential successors early to $\Cand$ does not
affect correctness, because they have higher weight and would not be popped
from $\Cand$ until it is ``their turn.'')
This relaxation may enable faster successor finding, but inserts
candidates earlier into $\Cand$.

All choices (\MIN):
This approach is based on a construction that Yang et al.~\cite{yang2018any} proposed
in the context of graph-pattern search.
Instead of trying to find the true successor of a choice, 
\emph{all} but the top choice are returned by $\Suc$. 
While this avoids any kind of pre-processing overhead, it inserts
$\O(n)$ potential successors into $\Cand$.

\HEAP:
This approach aims to keep pre-processing at a minimum (like \MIN),
but also return a small number of successors fast (like \EAGER). 
To achieve this, we organize each choice set
as a binary heap. In this tree structure, the root node is the minimum-weight choice and the
weight of a child is always 
greater than
its parent. Function $\Suc(v_i, v_{i+1})$ returns the two children
of $v_{i+1}$ in the tree.
Unlike \LAZY, we never perform a pop operation and the heap stays intact for the entire
operation of the algorithm; it only serves as a partial order on the choice set, pointing
to two successors every time it is accessed.
Also note that the true successor does not necessarily
have to be a child of node $v_{i+1}$. Overall, returning two successors is asymptotically
the same as returning one and heap construction time is linear~\cite{Cormen:2009dp},
hence this approach achieves lower delay.

In \Cref{tab:complexity_part}, we summarize the differences in complexity among these variants.
We remind the reader that in the analysis in the main body of this article, 
we assume the \LAZY variant which is simple and achieves the best $\TT(k)$.
As we argue in \Cref{sec:complexity_measures}, the lower delay of \HEAP may not be practically relevant.

\definecolor{colorbest}{RGB}{77, 175, 74}

\begin{figure*}[t]
\centering
\footnotesize
\renewcommand{\tabcolsep}{1.3pt}
\begin{tabular}{|l|l|l|l|l|l|}
\hline
Algorithm 	& $\TT(k)$ 	& $\Del(k)$	& $\MEM(k)$ \\ 
\hline

\EAGER          &$\bigO(|G| \log |G| + k (\log k + \ell))$
                &\cellcolor{colorbest!20}$\bigO(\log k + \stages)$ 
				&\cellcolor{colorbest!20}$\bigO(|G| + k \stages)$ \\

\LAZY           &\cellcolor{colorbest!20}$\bigO(|G| + k (\log k + \ell))$
                &\cellcolor{colorbest!20}$\bigO(\log k + \stages + \log \nodenum)$ 
				&\cellcolor{colorbest!20}$\bigO(|G| + k \stages)$ \\

\QUICK          &\cellcolor{colorbest!20}$\bigO(|G| + k (\log k + \ell))$
                &\cellcolor{colorbest!20}$\bigO(\log k + \stages + \log \nodenum)$ 
				&\cellcolor{colorbest!20}$\bigO(|G| + k \stages)$ \\

\MIN           &$\bigO(|G| + k (\log k + \ell \nodenum))$
                &$\bigO(\log k + \ell \nodenum)$ 
				& $\bigO(|G| + \min\{k \nodenum, \out\} \stages)$ \\
    
\HEAP           &\cellcolor{colorbest!20}$\bigO(|G| + k (\log k + \ell))$
                &\cellcolor{colorbest!20}$\bigO(\log k + \stages)$ 
				&\cellcolor{colorbest!20}$\bigO(|G| + k \stages)$ \\

\hline
\end{tabular} 
\caption{Complexity of \ANYKPART variants for ranked-enumeration of $s$-$t$ paths in a DAG.
Best performing algorithms are colored in green.
$|G|$ is the graph size, $\nodenum$ is the number of nodes, 
$\out$ is the total number of paths, and
$\ell$ is the maximum length of a path.
The guarantees of \QUICK are randomized and hold in expectation.
}
\label{tab:complexity_part}
\end{figure*}

\section{Map of related work for $k$-shortest paths}
\label{sec:map_related_work}

\begin{figure}[tb]
\centering
\includegraphics[height=6.5cm]{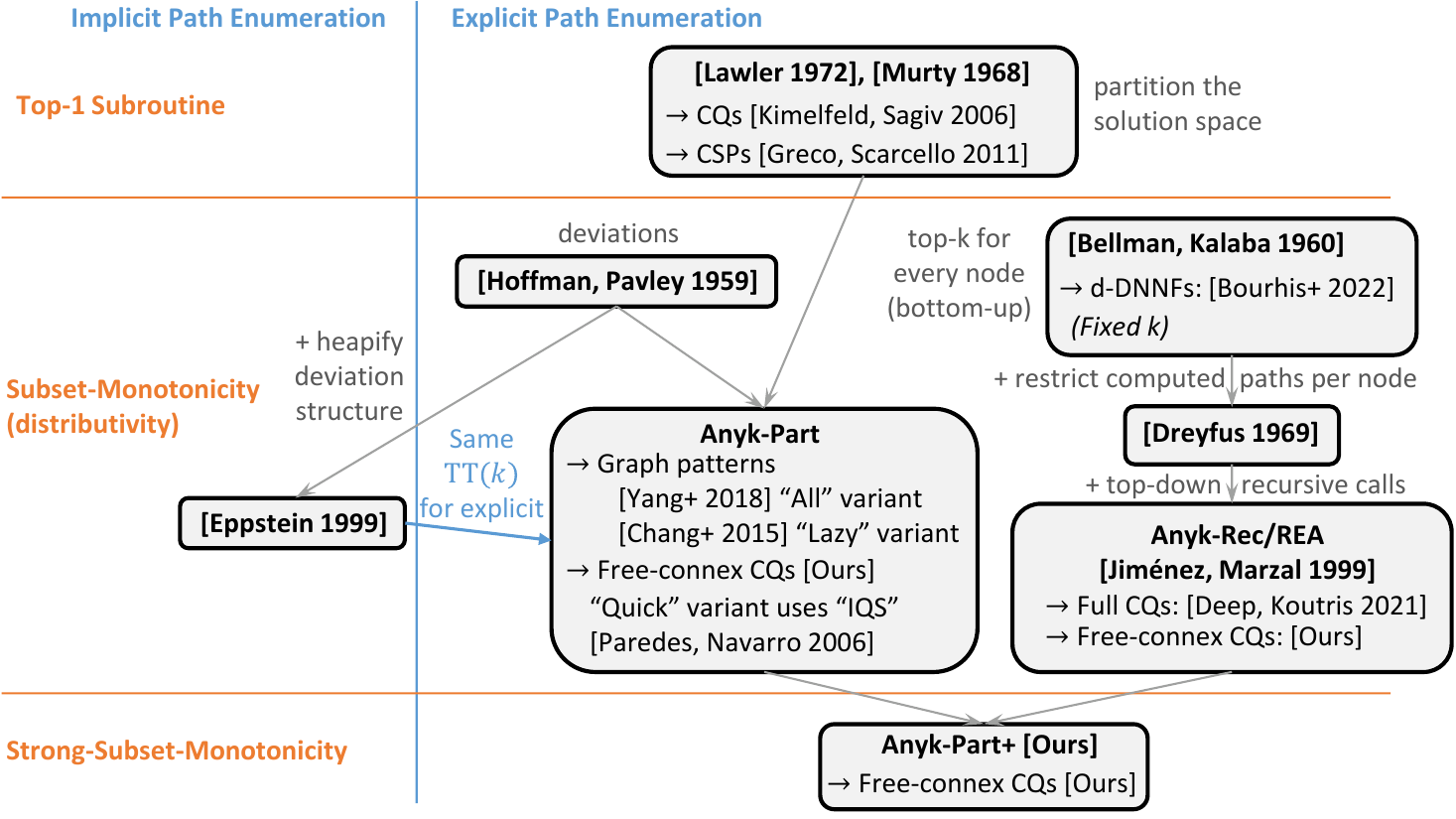}
\caption{Map of related work for the problem of $k$-shortest paths on a DAG.
Arrows indicate the progression of ideas historically.
Within each node, we note the application of these algorithms to different areas.
Implicit path enumeration~\cite{eppstein1998finding} (left) refers to the problem where an $\O(1)$ representation of each path is returned and the lower bound of $\Omega(k \ell)$ no longer applies.
In this work, we focus on explicit path enumeration (right) where each path has to be returned as a list of edges.
Most algorithms in the literature operate on a sum-of-weights model and can be generalized to \smonotone ranking functions (middle). 
The Lawler-Murty procedure~\cite{murty1968,lawler72} only requires an (efficient) sub-routine for solving the top-1 problem (top).
Our \ANYKPARTP algorithm requires the stronger \ssmonotone property (bottom).
}
\label{fig:map_related}
\end{figure}

\Cref{fig:map_related} depicts important ideas and the connections between them in
the history of the $k$-shortest paths problem on a DAG that we study in this work.

\end{document}